\documentclass{lmcs} 
\pdfoutput=1

\usepackage{lastpage}
\lmcsdoi{21}{3}{23}
\lmcsheading{}{\pageref{LastPage}}{}{}%
{Feb.~09,~2024}{Aug.~29,~2025}{}

\keywords{Regular Categories, Toposes, Bisimulations, Coalgebra, Power-Objects}

\usepackage{enumitem}
\usepackage[utf8]{inputenc}
\usepackage{tikz}

\newcommand\CC{\mathcal{C}}
\newcommand\mono[3]{#1\,:\,#2\succ\!\rightarrow#3}
\newcommand\map[3]{#1\,:\,#2\longrightarrow#3}
\newcommand\Set{\mathbf{Set}}
\newcommand\im[1]{\text{Im(}#1\text{)}}
\newcommand\Rel[1]{\mathbf{Rel}\text{(}#1\text{)}}
\newcommand\rcomp[2]{#1;#2}
\newcommand\pair[2]{\langle#1,#2\rangle}
\newcommand\id{\text{id}}
\newcommand\A{\mathcal{A}}
\newcommand\Map[1]{\mathbf{Map}\text{(}#1\text{)}}
\newcommand\Sha{\mathbf{Scha}}
\newcommand\Eff{\mathbf{Eff}}
\newcommand\pow[1]{\mathcal{P}#1}
\newcommand\Coal[1]{\mathbf{Coal}\text{(}#1\text{)}}
\newcommand\product[2]{#1\times#2}
\newcommand\splitg[1]{\pair{#1\pi_1}{#1\pi_2}}
\newcommand\splitf{\splitg{F}}
\newcommand\Bis[1]{\mathbf{Bis}\text{(}#1\text{)}}
\newcommand\epi[3]{#1\,:\,#2\longrightarrow\!\!\!\!\!\to#3}
\newcommand\belong[1]{\!\in_{#1}\!}
\newcommand\truth{\mathbb{T}}

\newcommand\terminal{\mathbf{1}}
\newcommand\distr[2]{\sigma_{#1,#2}}
\newcommand\natt[3]{#1\,:\,#2\Longrightarrow#3}
\newcommand\splitpf{\splitg{\pow{F}}}
\newcommand\nat{\mathbb{N}}
\newcommand\natb{\overline{\mathbb{N}}}
\newcommand\viet[1]{\mathcal{V}(#1)}
\newcommand\Stone{\mathbf{Stone}}
\newcommand\Vect[1]{#1\mathbf{Vect}}
\newcommand\belongsq[1]{\!\in_{#1}^2\!}
\newcommand\monop[1]{\text{mono}(#1)}
\newcommand\strength[2]{t_{#1,#2}}
\newcommand\costrength[2]{t'_{#1,#2}}
\newcommand\laxCoal[1]{\mathbf{Coal}_{\textbf{lax}}\text{(}#1\text{)}}
\newcommand\Sim[1]{\mathbf{Sim}\text{(}#1\text{)}}
\newcommand\leqp{\leq_{\pow{\!}}}
\newcommand\geqp{\geq_{\pow{\!}}}

\begin{document}

\title{Aczel-Mendler Bisimulations in a Regular Category}
\titlecomment{{\lsuper*}Extended version of \cite{dubut23}.
This work was partially done at the National Institute of Advanced Science 
and Technology, Tokyo, Japan}

\author[J.~Dubut]{J\'{e}r\'{e}my Dubut\lmcsorcid{0000-0002-2640-3065}}

\address{LIX, CNRS, École polytechnique, Institut Polytechnique de Paris, Palaiseau, France}	
\email{jeremy.dubut@polytechnique.edu}  

\begin{abstract}
	Aczel-Mendler bisimulations are a coalgebraic extension of a variety
	of computational relations between systems.
	It is usual to assume that the underlying category satisfies some
	form of the axiom of choice, so that the collection of bisimulations enjoys desirable
	properties, such as closure under composition.
	In this paper, we accommodate the definition in general regular
	categories and toposes.
	We show that this general definition:
	1) is closed under composition without using the axiom of choice,
	2) coincides with other types of coalgebraic formulations under milder
	conditions,
	3) coincides with the usual definition when the category satisfies the
	regular axiom of choice.
	In particular, the case of toposes heavily relies on power-objects,
	for which we recover some favourable properties along the way.
	Finally, we describe several examples in Stone spaces,
	toposes for name-passing, and modules over a ring.
\end{abstract}

\maketitle

\section*{Introduction}\label{S:one}

Bisimilarity is a way to describe that two states of two systems behave in
the same way. It formalises the fact that one can mimic any
execution starting from one state with an execution from the other state,
and vice versa.
In contrast to language equivalence, which requires one to consider entire
(possibly infinite) executions, bisimilarity is a local notion, focusing only on
the next step of the execution. As such, bisimilarity is often far more
tractable than the comparison of trace languages.

Since the seminal work by Park~\cite{park81}
on labelled transition systems,
a plethora of different notions of bisimilarity has arisen in various contexts:
for probabilistic~\cite{larsen91},
timed~\cite{wang90},
hybrid~\cite{girard05},
and truly concurrent~\cite{vanglabbeek91} systems, among others.
Although they deal with very different types of systems, these notions share
common ground: connections with logic, games, fixpoints, or even
some form of decidability that exhibits a similar flavour.
This has suggested that these theories could be abstracted into a
meta-theory that captures the essence of these shared foundations.

Categorical modelling is one such effort to abstract concrete theories
into purely mathematical ones, expressed in the language of category
theory. If an earlier success in computer science lies in the
denotational semantics of programming languages
(see, for example, the Curry-Howard-Lambek correspondence,
first published in~\cite{lambek88}),
a more recent achievement is the categorical modelling of bisimulations and
computational systems using coalgebras.
In this modelling, systems are represented as coalgebras—that is,
morphisms of the form $X\,\longrightarrow\,FX$, where $X$ is an object in some
category representing the state space of the system, and $F$ is an
endofunctor on this category, representing the type of allowed transitions.
By varying the underlying category and the functor $F$, one can capture
various (known and novel) types of systems. In this abstract view,
morphisms between coalgebras play an important role: they encompass the
intuition of bisimulation maps, that is, transformations of
systems that induce bisimulations. Building on this intuition, several
abstract notions of bisimilarity can be defined, all more or less equivalent
(see~\cite{jacobs16,staton11} for an overview).

In the present paper, we are particularly interested in Aczel-Mendler 
bisimilarity~\cite{aczel89}, 
which defines a bisimulation as an abstract relation
(that is, a subobject of a product) which
itself carries a coalgebra structure, from which the coalgebra structures
of the systems being compared can be recovered via projections.
This abstract notion has the advantage of being very close to the usual notions
of bisimulation in terms of relations, but this comes at the cost of being
overly set-flavoured. For instance, some basic properties
(such as closure under composition, or their relation to bisimulation maps)
only hold when the underlying category satisfies some form of the
axiom of choice.

These issues hinder the use of Aczel-Mendler bisimulations in
certain interesting categories. Regular categories-and in particular, toposes-
form a class of categories that
enjoy very desirable properties, notably a convenient
theory of relations, which is crucial for abstract bisimulations. However, they
do not satisfy the axiom of choice. This is the case, for example, with
the effective topos~\cite{hyland82}, which internalises concepts such as decidable sets and computable functions,
or the topos of nominal sets~\cite{lawvere89}, which models name-passing and,
more generally, infinite systems possessing some form of decidability.
Being able to abstract bisimulations in such categories thus becomes
essential, offering a potential route to general decidability results.

\subsection*{Outline}

The remainder of the paper is organised as follows. 
In Section~\ref{sec:relations}, we recall some necessary background 
on relations in a general category and allegories, with a particular 
focus on maps.
This includes the definition of relations, their basic constructions 
(diagonal, composition, converse, and intersection), the definition of 
an allegory, maps and tabulations, and
finally, the characterisations of relation maps and the tabularity of 
the allegory of relations.
In Section~\ref{sec:reg-cat}, we recall the definition of Aczel-Mendler 
bisimulations and some of their properties that
only hold under certain forms of the axiom of choice. 
We then extend them to regular AM-bisimulations,
which behave well in any regular category.
Section~\ref{sec:folklore} explores the power-object monad and 
some of its well-known properties that illuminate its
role in AM-bisimulations. We recover these properties in a purely 
relational way by observing that Kleisli
composition corresponds to the composition of relations.
In Section~\ref{sec:toposes}, we present a more elegant reformulation 
of regular AM-bisimulations in toposes, enabled by
the power-object monad.
Section~\ref{sec:simulations} extends this refined formulation to 
simulations.
Finally, in Section~\ref{sec:examples}, we explore examples of 
regular AM-bisimulations for Stone spaces,
toposes modelling name-passing, and linear weighted systems.

\subsection*{Contributions}

Our contributions may be summarised as follows:
\begin{enumerate} 
	\item An extension of the theory of Aczel-Mendler bisimulations 
	that works in any regular category,
	without relying on the axiom of choice. 
	In particular, we prove that closure under
	composition (Proposition~\ref{prop:composition-reg}) and
	coincidence with other notions of coalgebraic bisimulations
	(Theorem~\ref{theo:reg-equivalences}) do not require the axiom of 
	choice.
	\item An elementary and relational account of folklore properties 
	of power-objects, including  
	the fact that they yield a commutative  
	monad whose Kleisli category is isomorphic to  
	the category of relations (Theorem~\ref{thm:pow-monad}),  
	and that there are simple conditions for the existence of (weak)  
	distributive laws  
	with respect to it (Corollary~\ref{coro:distr}).
	\item A more elegant formulation of regular AM-bisimulations  
	in the case of toposes, enabled by the power-object monad,  
	with a connection to tabulations of coalgebra homomorphisms that 
	can be established  
	(Corollary~\ref{coro:tobi-spans}), again without assuming the 
	axiom of choice.
	\item An extension of this more refined formulation to simulations 
	in a topos (Section~\ref{sec:simulations}).
\end{enumerate}

\subsection*{Related work}

Section~\ref{sec:relations} provides a summary of the material 
required from the textbook~\cite{freyd90} on allegories, 
with a particular focus on allegories of relations.
Applications of allegories, and their extensions, 
to computer science include
fuzzy logic~\cite{winter07},
logic programme compilation~\cite{arias12},
and generic programming~\cite{backhouse99}.
Topos theory has a well-established literature covering a 
variety of aspects.
For a comprehensive reference on the subject, 
we recommend~\cite{johnstone02}.
Coalgebra theory—-particularly bisimulations for coalgebras—-has 
also seen substantial recent development.
Most of the results in this paper concerning bisimulations are grounded in concepts
discussed in the textbook~\cite{jacobs16}.
A detailed comparison of various notions of coalgebraic 
bisimilarity can be found in \cite{staton11}.
Aczel-Mendler bisimulations originate from~\cite{aczel89}.
Their connection to bisimulation and simulation maps within 
a categorical
framework lies at the heart of the theory of open 
maps~\cite{joyal96,wissmann19}.

\subsection*{Comparison with the CALCO 2023 paper}

In addition to the numerous complete proofs, 
this version adds Section~\ref{sec:folklore} about the 
power-object monad.

\subsection*{Notations}

Given two morphisms $\map{f}{X}{Y}$ and $\map{g}{X'}{Y'}$ in a category with binary product, 
we denote the pairing by $\map{\pair{f}{g}}{X}{Y\times Y'}$ (if $X = X'$), and the product by 
$\map{\product{f}{g}}{X\times X'}{Y\times Y'}$.

\section{Allegory of Relations}
\label{sec:relations}

In this section, we present the general notion of relations in a category,
focusing in particular on the fact that they form a tabular allegory.
Definitions, propositions, and proofs may be found in~\cite{freyd90}.
Our main motivations for introducing allegories in this paper are:
1) to highlight that regular categories provide the appropriate level of
abstraction for studying bisimulations; and
2) to introduce maps—that is, left adjoints in allegories—which we aim to
relate to coalgebra homomorphisms, in order to provide an abstract
justification for the idea that “coalgebra homomorphisms are bisimulation maps”.

\subsection{Subobjects and Factorisations}

In this paper, subobjects will play a crucial role throughout. 
Let us then
spend some time on their definition. Fix an object $A$ of $\CC$. 
There is a
preorder on the class of monos of the form $\mono{m}{X}{A}$ defined by
$\mono{m}{X}{A} \sqsubseteq \mono{m'}{X'}{A}$ if and only if there is a
morphism $\map{u}{X}{X'}$ such that $m'\cdot u =m$. In this case, $u$ is
unique and is a mono. A \emph{subobject of $A$} is then an equivalence
class of monos with $\mono{m}{X}{A} \equiv \mono{m'}{X'}{A}$ if
$m \sqsubseteq m'$ and
$m \sqsupseteq m'$, that is, there are $u$
and $u'$
such that $m'\cdot u =m$ and $m\cdot u' =m'$. In this case, $u$ and $u'$ are
inverses of each other. The preorder on the monos becomes a partial order
on subobjects, also denoted by $\sqsubseteq$.
Throughout the paper, when reasoning about subobjects, we will instead
reason using a representing mono. This is harmless when dealing with
notions such as pullbacks and factorisations that are unique only
up to isomorphism.

\begin{exa}
	In $\Set$, since monos are injective functions, 
	subobjects of a set are in
	bijection with its subsets. 
	The order $\sqsubseteq$ then corresponds to
	the usual inclusion $\subseteq$ of sets.
\end{exa}

Given a morphism $\map{f}{A}{B}$, there is a particular subobject of $B$
called the \emph{image of $f$}. In general, it is defined as the smallest
(for $\sqsubseteq$) subobject $\im{f}$ of $B$ such that $f$ can be
factorised as
$m\cdot e$, where $m$ is any representing mono.
The existence of the image is not guaranteed in general. It is, however, when
the category $\CC$ has a nice (epi, mono)-factorisation system, as is
the case for regular categories (and so for toposes). In a regular category,
every morphism $f$ can be uniquely (up to unique isomorphism) factorised as $m\cdot e$,
where $m$ is a mono and $e$ is a regular epi, and furthermore, this
factorisation is the image factorisation. In addition, this factorisation is
functorial and is preserved by pullbacks, meaning that if we have a
commutative diagram of the following form (outer rectangle):
\begin{center}
\begin{tikzpicture}[scale=1.5]
		
	\node (A) at (0,-0.5) {\scriptsize{$A$}};
	\node (imf) at (3,-0.5) {\scriptsize{\im{f}}};
	\node (B) at (6,-0.5) {\scriptsize{$B$}};
	\node (Ap) at (0,0.5) {\scriptsize{$A'$}};
	\node (imfp) at (3,0.5) {\scriptsize{\im{f'}}};
	\node (Bp) at (6,0.5) {\scriptsize{$B'$}};
	
	\path[->,font=\scriptsize]
		(A) edge node[left]{$g$} (Ap)
		(B) edge node[right]{$h$} (Bp);
		
	\path[->,bend right=15,font=\scriptsize]
		(A) edge node[below]{$f$} (B);
		
	\path[->,bend left=15,font=\scriptsize]
		(Ap) edge node[above]{$f'$} (Bp);
		
	\path[->>,font=\scriptsize]
		(A) edge node[below]{$e$} (imf)
		(Ap) edge node[above]{$e'$} (imfp);
		
	\path[>->,font=\scriptsize]
		(imf) edge node[below]{$m$} (B)
		(imfp) edge node[above]{$m'$} (Bp);
		
	\path[->,dotted,font=\scriptsize]
		(imf) edge node[right]{$k$} (imfp);
			
\end{tikzpicture}
\end{center}
there is a (dotted) morphism that makes the two squares commute, and if 
the outer rectangle is a pullback, then the rightmost square is also a pullback.
For a gentle overview of regular categories, an interested reader can 
look into~\cite{butz98}.
\begin{exa}
	In $\Set$, the image of a function is the usual notion of image, 
	that is, the subset ${f(a) \mid a \in A}$ of $B$. Since $\Set$ 
	is regular, and regular epis are surjective functions, the image 
	factorisation is given by the 
	(surjection, injection)-factorisation of the function $f$.
\end{exa}
\begin{rem}[Pullbacks vs. weak pullbacks]
	\label{rem:weak-pullbacks}
	In many places in this paper, where pullbacks would naturally
	play a role, they can be replaced by weak pullbacks, leading to
	laxer conditions. A \emph{weak pullback} of a cospan $\map{f}{X}{Z}$ 
	and $\map{g}{Y}{Z}$ is given by a commutative square 
	(as on the left):
	\begin{center}
		\begin{tikzpicture}[scale=1.5]
				
			\node (r1r2) at (2,1) {\scriptsize{$V$}};
			\node (r1) at (2,0) {\scriptsize{$X$}};
			\node (r2) at (4,1) {\scriptsize{$Y$}};
			\node (y) at (4,0) {\scriptsize{$Z$}};
			
			\path[->,font=\scriptsize]
				(r1r2) edge node[left]{$\alpha_1$} (r1)
				(r1r2) edge node[above]{$\alpha_2$} (r2)
				(r1) edge node[below]{$f$} (y)
				(r2) edge node[right]{$g$} (y);
				
			\draw[dashed] (2.2,0.55) -- (2.5,0.55) -- (2.5,0.8);
					
		\end{tikzpicture}
		\quad
		\begin{tikzpicture}[scale=1.5]
				
			\node (r1r2) at (2,1) {\scriptsize{$U$}};
			\node (r1) at (2,0) {\scriptsize{$X$}};
			\node (r2) at (4,1) {\scriptsize{$Y$}};
			\node (y) at (4,0) {\scriptsize{$Z$}};
			
			\path[->,font=\scriptsize]
				(r1r2) edge node[left]{$\beta_1$} (r1)
				(r1r2) edge node[above]{$\beta_2$} (r2)
				(r1) edge node[below]{$f$} (y)
				(r2) edge node[right]{$g$} (y);
					
		\end{tikzpicture}
		\quad
		\begin{tikzpicture}[scale=1.5]
				
			\node (r1r2) at (2,1) {\scriptsize{$W$}};
			\node (r1) at (2,0) {\scriptsize{$X$}};
			\node (r2) at (4,1) {\scriptsize{$Y$}};
			\node (y) at (4,0) {\scriptsize{$Z$}};
			
			\path[->,font=\scriptsize]
				(r1r2) edge node[left]{$\gamma_1$} (r1)
				(r1r2) edge node[above]{$\gamma_2$} (r2)
				(r1) edge node[below]{$f$} (y)
				(r2) edge node[right]{$g$} (y);
				
			\draw (2.2,0.55) -- (2.5,0.55) -- (2.5,0.8);
					
		\end{tikzpicture}
	\end{center}
	such that, for every other commutative square as in the middle,
	there is (not necessarily a unique) morphism $\map{\phi}{U}{V}$
	with $\alpha_i\cdot\phi = \beta_i$.
	We denote them by a dashed corner (while proper pullbacks are
	denoted by plain corners).
	If we are in a category
	where pullbacks exist, weak pullbacks can be equivalently 
	reformulated as the commutative squares as on the left, 
	such that, if the pullback of $f$ and $g$ is given as on the right, 
	then the unique morphism $\map{\psi}{V}{W}$ with 
	$\gamma_i\cdot\psi = \alpha_i$ is a split epi.

	As a first example of replacement of pullbacks by weak pullbacks,
	the preservation of images by pullbacks and the functoriality 
	also imply the preservation of images by weak pullbacks, 
	in the sense that, if the outer rectangle is a weak pullback, 
	then the rightmost square is also a weak pullback.
\end{rem}
 
\subsection{Relations in a Regular Category}

From now on, let us assume that the category $\CC$ is regular, 
that is, it has finite limits and a pullback-stable
(regular epi, mono)-factorisation as described in the previous section.
Everything in this section can be done in a locally regular category, 
but less conveniently. In general:

\begin{defi}
A \emph{relation from $X$ to $Y$} is a subobject of $X\times Y$.
\end{defi}
\noindent 
Objects of $\CC$ and relations between them form a category, denoted by
$\Rel{\CC}$. The composition is defined as follows. Let
$\mono{m_r}{R}{X\times Y}$ and $\mono{m_s}{S}{Y\times Z}$ be two monos,
representing two relations, $r$ from $X$ to $Y$ and $s$ from $Y$ to $Z$.
Form the following pullback and (regular epi, mono)-factorisation:
\begin{center}
\begin{tikzpicture}[scale=1.5]
		
	\node (r1r2) at (3,1) {\scriptsize{$R\star S$}};
	\node (r1) at (3,0) {\scriptsize{$R$}};
	\node (r2) at (6,1) {\scriptsize{$S$}};
	\node (y) at (6,0) {\scriptsize{$Y$}};
	
	\path[->,font=\scriptsize]
		(r1r2) edge node[left]{$\mu_1$} (r1)
		(r1r2) edge node[above]{$\mu_2$} (r2)
		(r1) edge node[below]{$\pi_2\cdot m_r$} (y)
		(r2) edge node[right]{$\pi_1\cdot m_s$} (y);
		
	\draw (3.2,0.55) -- (3.7,0.55) -- (3.7,0.8);
			
\end{tikzpicture}
\quad
\begin{tikzpicture}[scale=1.5]
		
	\node (r1r2) at (0,0) {\scriptsize{$R\star S$}};
	\node (xz) at (4,0) {\scriptsize{$X\times Z$}};
	\node (r1sqr2) at (2,-1) {\scriptsize{$\rcomp{R}{S}$}};
	
	\path[->,font=\scriptsize]
		(r1r2) edge node[above]{$\pair{\pi_1\cdot m_r \cdot \mu_1}{\pi_2\cdot m_s \cdot \mu_2}$} (xz);
		
	\path[->>,font=\scriptsize]
		(r1r2) edge node[below]{$e_{\rcomp{r}{s}}$} (r1sqr2);
		
	\path[>->,font=\scriptsize]
		(r1sqr2) edge node[below]{~$m_{\rcomp{r}{s}}$} (xz);
			
\end{tikzpicture}
\end{center}
The composition $\rcomp{r}{s}$ from $X$ to $Z$ is then the subobject represented 
by the mono part $m_{\rcomp{r}{s}}$.

\begin{rem}[Pullbacks vs. weak pullbacks, continued]
In the definition of the composition, we chose to form a pullback, because 
we know it exists. However, the definition is unchanged if we take 
any weak pullback instead. 
\end{rem}
The \emph{identity relation $\Delta_X$} is represented by the diagonal $\mono{\pair{\id}{\id}}{X}{X\times X}$.

\begin{prop}
$\Rel{\CC}$ is a category.
\end{prop}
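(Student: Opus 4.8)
The plan is to check the category axioms separately: well\-definedness of composition, the unit laws for the diagonal, and associativity. Well\-definedness is the observation that the pullback and the (regular epi,mono)\-factorisation used in the definition are unique up to (unique) isomorphism, so replacing the representing monos $m_r,m_s$ by equivalent ones produces an isomorphic $R\star S$ and the same subobject $\rcomp{r}{s}$; this is exactly the "harmless" passage between subobjects and representatives noted earlier. For the unit laws, given a relation $r$ from $X$ to $Y$ with $\mono{m_r}{R}{\product{X}{Y}}$, computing $\rcomp{\Delta_X}{r}$ amounts to the pullback of $\map{\id}{X}{X}$ (the second leg of the diagonal) against $\map{\pi_1\cdot m_r}{R}{X}$. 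Pulling back along an identity returns $R$ unchanged, and since $\pi_1\cdot\pair{\id}{\id}=\id$ the induced map to $\product{X}{Y}$ is exactly $m_r$; as $m_r$ is already a mono, its factorisation has identity regular\-epi part, so $\rcomp{\Delta_X}{r}=r$. The identity $\rcomp{r}{\Delta_Y}=r$ is symmetric.

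The substance is associativity. For relations $r,s,t$ from $X$ to $Y$, from $Y$ to $Z$, and from $Z$ to $W$, I would show that both $\rcomp{(\rcomp{r}{s})}{t}$ and $\rcomp{r}{(\rcomp{s}{t})}$ coincide with the image of a single canonical morphism out of the ``ternary pullback'' $P$, namely the limit of the diagram $R\to Y\leftarrow S\to Z\leftarrow T$ whose legs are the appropriate projections of $m_r,m_s,m_t$. Writing $\map{w}{P}{\product{X}{W}}$ for the evident map recording the $X$\-coordinate coming from $R$ and the $W$\-coordinate coming from $T$, the claim is that $\im{w}$ equals each bracketing, whence the two are equal.

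To identify $\im{w}$ with $\rcomp{(\rcomp{r}{s})}{t}$, I would first use the pasting lemma for pullbacks to recognise $P$ as the pullback of $R\star S\to Z$ against $\map{\pi_1\cdot m_t}{T}{Z}$, where $R\star S$ is the binary pullback defining $\rcomp{r}{s}$. Since $\rcomp{r}{s}$ is the image of $R\star S\to\product{X}{Z}$, the map $R\star S\to Z$ factors through the regular epi $\epi{e_{\rcomp{r}{s}}}{R\star S}{\rcomp{R}{S}}$. Now pull $e_{\rcomp{r}{s}}$ back along the projection $(\rcomp{R}{S})\star T\to\rcomp{R}{S}$: by pullback\-stability of regular epis this pullback is again a regular epi $\epi{q}{P'}{(\rcomp{R}{S})\star T}$, and by a second application of the pasting lemma its domain $P'$ is precisely $P$. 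Moreover $w$ factors as $q$ followed by the map $(\rcomp{R}{S})\star T\to\product{X}{W}$ whose image is by definition $\rcomp{(\rcomp{r}{s})}{t}$. The final ingredient is that precomposing any morphism by a regular epi leaves its image unchanged: if $g=m\cdot e_g$ is the image factorisation, then $g\cdot q=m\cdot(e_g\cdot q)$, and since regular epis are closed under composition in a regular category, $e_g\cdot q$ is a regular epi, so this is again the image factorisation and the mono $m$ is unchanged. Hence $\im{w}=\rcomp{(\rcomp{r}{s})}{t}$, and the argument for $\rcomp{r}{(\rcomp{s}{t})}$ is entirely symmetric.

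The main obstacle, and the only place where regularity is genuinely used, is this last identification: one must arrange the pullback pasting so that the two intermediate factorisations $e_{\rcomp{r}{s}}$ and $e_{\rcomp{s}{t}}$ are absorbed, and this hinges precisely on pullback\-stability of regular epis together with the invariance of images under precomposition by a regular epi. Once those facts are in place the two bracketings collapse onto the common image $\im{w}$ and associativity follows formally; the remark allowing weak pullbacks in the definition of composition is available but not needed here, since $\CC$ having finite limits lets me work with genuine pullbacks throughout.
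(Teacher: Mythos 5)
Your proof is correct. The paper itself gives no argument for this proposition (it defers to the cited textbook of Freyd and Scedrov), and your route --- identifying both bracketings of a triple composite with the image of the canonical map out of the ternary limit of $R\to Y\leftarrow S\to Z\leftarrow T$, using pasting of pullbacks, pullback-stability of regular epis, and the fact that precomposition by a regular epi does not change the image --- is exactly the standard one; the unit-law computation and the well-definedness remark are also right.
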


\begin{exa}
In $\Set$, the composition of relations is the usual one:
\[
	R;S = \{(x,z) \in X\times Z \mid \exists y \in Y.\, 
		(x,y) \in R \wedge (y,z) \in S\},
\]
while the identity relation is the usual diagonal $\Delta_X = \{(x,x) \mid x \in X\}$.
\end{exa}

Of course, $\Rel{\CC}$ has much more structure. First, since subobjects are
naturally ordered by $\sqsubseteq$, and since this order is compatible with
the composition, $\Rel{\CC}$ has a structure of a locally ordered 2-category.
Furthermore, it comes equipped with an anti-involution
$\map{(\_)^\dagger}{\Rel{\CC}^{op}}{\Rel{\CC}}$ which makes it an I-category
in the sense of \cite{freyd90}. This involution is given by the converse of a
relation, as follows. If the relation $r$ is represented by the mono
$\mono{m_r}{R}{X\times Y}$, then $r^\dagger$ is represented by
$\mono{m_{r^\dagger} = \pair{\pi_2}{\pi_1}\cdot m_r}{R}{Y\times X}$.
Finally, the meet of two relations for the partial order $\sqsubseteq$ is defined
and is called the intersection. Given $\mono{m_r}{R}{X\times Y}$ and
$\mono{m_s}{S}{X\times Y}$ representing $r$ and $s$ respectively, the
intersection $r \cap s$ is then represented by the pullback of $m_r$ and
$m_s$. Altogether:

\begin{thm}
$\Rel{\CC}$ is an allegory, meaning that all this data satisfies the modular law:
\[
(\rcomp{R}{S})\cap T \sqsubseteq \rcomp{(R\cap (\rcomp{T}{S^\dagger}))}{S}.
\]
\end{thm}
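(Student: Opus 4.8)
The plan is to prove this single inclusion of subobjects of $X\times Z$ by presenting the left-hand side as an image and then factoring the generating map of that image through a mono representing the right-hand side. Throughout, fix representing monos $\mono{m_R}{R}{X\times Y}$, $\mono{m_S}{S}{Y\times Z}$ and $\mono{m_T}{T}{X\times Z}$, and refer to their legs as the composites $\pi_1\cdot m_R$, $\pi_2\cdot m_R$, and so on. The one general fact I will use repeatedly is the defining property of images recalled above: for any $\map{p}{P}{X\times Z}$ and any subobject $B$ of $X\times Z$, we have $\im{p}\sqsubseteq B$ if and only if $p$ factors through the mono representing $B$. Hence it suffices to (i) exhibit $(\rcomp{R}{S})\cap T$ as $\im{p}$ for a conveniently chosen $p$, and (ii) factor that same $p$ through the mono representing $\rcomp{(R\cap(\rcomp{T}{S^\dagger}))}{S}$.

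For (i), let $P$ be the pullback of the generating map $\pair{\pi_1\cdot m_R\cdot\mu_1}{\pi_2\cdot m_S\cdot\mu_2}:R\star S\longrightarrow X\times Z$ of $\rcomp{R}{S}$ along $m_T$, and let $\map{p}{P}{X\times Z}$ be the composite into $X\times Z$. The key observation is that this generating map is by definition $m_{\rcomp{R}{S}}\cdot e_{\rcomp{R}{S}}$, so its pullback along the mono $m_T$ can be computed in two stages: first pull back the two monos $m_{\rcomp{R}{S}}$ and $m_T$, which gives exactly the intersection $(\rcomp{R}{S})\cap T$; then pull back the regular epi $e_{\rcomp{R}{S}}$ along the leg $(\rcomp{R}{S})\cap T\to\rcomp{R}{S}$ of that intersection. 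Since regular epis are stable under pullback in a regular category, the latter leg remains a regular epi, so $p$ factors as a regular epi followed by the intersection mono into $X\times Z$; by uniqueness of the factorisation this identifies $\im{p}=(\rcomp{R}{S})\cap T$. Crucially, unlike the bare intersection mono, the object $P$ still carries maps $P\to R$, $P\to S$ and $P\to T$, and it is these witnesses that make step (ii) possible.

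For (ii), I construct the factorisation by climbing through the limits and factorisations that define the right-hand side, feeding in the three legs of $P$. First, the $Z$-legs of $P\to T$ and $P\to S$ agree, so the universal property of the pullback defining $\rcomp{T}{S^\dagger}$ yields a map $P\to T\star S^\dagger$ and hence, after its factorisation, a map $P\to\rcomp{T}{S^\dagger}$ over $X\times Y$. This map and $P\to R$ induce the same map into $X\times Y$, so by the universal property of the pullback they together factor through the intersection, giving $P\to R\cap(\rcomp{T}{S^\dagger})$. Pairing this with $P\to S$, whose $Y$-legs again coincide, produces $P\to(R\cap(\rcomp{T}{S^\dagger}))\star S$, and composing with the canonical regular epi yields $P\to\rcomp{(R\cap(\rcomp{T}{S^\dagger}))}{S}$. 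A final verification shows that composing this with the representing mono into $X\times Z$ returns $p$; by (i) and the image criterion, this is exactly the desired inclusion.

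I expect the only delicate part to be the commutativity bookkeeping in (ii): at each use of a universal property one must check that the two competing legs into the relevant base object ($Z$, then $X\times Y$, then $Y$, and finally $X\times Z$) genuinely coincide, and that the last composite recovers $p$ and not some other map into $X\times Z$. The generalised-element reading — a point of $P$ is a compatible triple in $R$, $S$, $T$ sharing the same $x,y,z$, with the single $S$-witness reused both inside $\rcomp{T}{S^\dagger}$ and in the outer composition — is a reliable guide for which legs to feed into each pullback, but every resulting equation must be justified from the defining diagrams rather than by chasing elements. It is worth stressing that the whole argument uses only finite limits and the pullback-stable (regular epi, mono)-factorisation, so no form of the axiom of choice intervenes.
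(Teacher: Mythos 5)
Your proof is correct, and it is essentially the standard argument: the paper itself gives no proof of this theorem, deferring entirely to the allegories textbook it cites, and that reference proceeds just as you do --- realise the left-hand side as the image of the pullback $P$ of the generating map of $R\star S$ along $m_T$ (using pasting of pullbacks and pullback-stability of regular epis to identify $\im{p}$ with the intersection), then route the three legs of $P$ through the limits and factorisations defining the right-hand side. The commutativity checks you defer all hold for exactly the reasons you anticipate (the two components of the composite $P\to T\to X\times Z$ match the $X$-leg of $R$ and the $Z$-leg of $S$, and the shared $Y$-coordinate in $R\star S$ supplies the remaining equations), so there is no gap.
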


\begin{exa}
\label{ex:dagger-modular-sets}
In $\Set$, $R^\dagger$ is the usual converse of the relation R:
$
	R^\dagger = \{(y,x) \mid (x,y) \in R\}.
$
The intersection $\cap$ is the intersection of relations as sets.
Let us show what the modular law means in $\Rel{\Set}$. 
The relation $(\rcomp{R}{S})\cap T$ is given by the set
\[
	\left\{(x,z)\in T \mid \exists y.\, (x,y)\in R \wedge (y,z) \in S\right\}.
\]	
Let $(x,z)$ be in this set and fix a witness $y$ as in the definition above.
This means in particular that $(y,z)\in S$ and so $(z,y)\in S^\dagger$.
Since $(x,z)\in T$, then $(x,y) \in \rcomp{T}{S^\dagger}$.
In summary,
$(x,y) \in R\cap (\rcomp{T}{S^\dagger})$ and 
$(x,z) \in \rcomp{(R\cap (\rcomp{T}{S^\dagger}))}{S}$.
Intuitively, the modular law is an algebraic law expressing how 
composition preserves intersection in a weak way.
More generally, this law is 
crucial to make adjoints in an allegory behave like direct/inverse images, 
(see the next section, and the Frobenius reciprocity \cite{lawvere70}).
\end{exa}

\subsection{Maps in Allegories}
\label{sec:maps-allegory}

From an allegory (intuitively of relations), it is possible to recover the 
morphisms of the original category through the notion of \emph{maps}.
In a general allegory $\A$, a map is a morphism which is a left adjoint 
(in the 2-categorical sense). 
Maps form 
a subcategory of $\A$ denoted by $\Map{\A}$. In the case of an allegory of 
relations:
\begin{thm}
\label{th:maps}
$\Map{\Rel{\CC}}$ is isomorphic to $\CC$.
\end{thm}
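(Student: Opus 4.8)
The plan is to exhibit an isomorphism of categories $\map{\Gamma}{\CC}{\Map{\Rel{\CC}}}$ that is the identity on objects and sends a morphism $\map{f}{X}{Y}$ to its \emph{graph} $\Gamma(f)$, namely the relation represented by $\mono{\pair{\id}{f}}{X}{\product{X}{Y}}$ (this is a mono, since its first component is $\id$). As $\Gamma$ is bijective on objects by construction, it remains to check that each graph is genuinely a map, that $\Gamma$ is a functor, and that it is full and faithful.

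First I would verify that $\Gamma(f)$ is a map, by showing that $\Gamma(f)^\dagger$ is its right adjoint, i.e. $\Delta_X \sqsubseteq \rcomp{\Gamma(f)}{\Gamma(f)^\dagger}$ and $\rcomp{\Gamma(f)^\dagger}{\Gamma(f)} \sqsubseteq \Delta_Y$. Both are instances of the composition recipe in which the relevant pullback degenerates: computing $\rcomp{\Gamma(f)}{\Gamma(f)^\dagger}$ amounts to forming the kernel pair of $f$, through which the diagonal $\pair{\id}{\id}$ factors, giving totality; dually $\rcomp{\Gamma(f)^\dagger}{\Gamma(f)}$ is the image of $\pair{f}{f}$, which factors through $\pair{\id}{\id}$, giving single-valuedness. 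Functoriality is just as direct: $\Gamma(\id_X) = \Delta_X$, and for composable $f,g$ the pullback computing $\rcomp{\Gamma(f)}{\Gamma(g)}$ has an identity leg, so no image factorisation is needed and the composite is represented by $\pair{\id}{g\cdot f}$, whence $\rcomp{\Gamma(f)}{\Gamma(g)} = \Gamma(g\cdot f)$. Faithfulness is immediate: if $\pair{\id}{f}$ and $\pair{\id}{g}$ represent the same subobject, composing the mediating isomorphism with $\pi_1$ forces it to be $\id$, and then $\pi_2$ forces $f = g$.

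The core of the argument is fullness. Let $R$ be a map, represented by $\mono{m_R = \pair{p}{q}}{R}{\product{X}{Y}}$ with $p := \pi_1\cdot m_R$ and $q := \pi_2\cdot m_R$. I would use the standard characterisation from \cite{freyd90} that a map is exactly a total single-valued relation, its right adjoint being the inverse $R^\dagger$, so that $\Delta_X \sqsubseteq \rcomp{R}{R^\dagger}$ and $\rcomp{R^\dagger}{R} \sqsubseteq \Delta_Y$. The aim is to prove that $p$ is an isomorphism; granting this, with $f := q\cdot p^{-1}$ the identity $m_R = \pair{\id}{f}\cdot p$ shows $R = \Gamma(f)$. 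Single-valuedness yields that $p$ is a mono: writing $(\rho_1,\rho_2)$ for the kernel pair of $p$, the inequality $\rcomp{R^\dagger}{R}\sqsubseteq \Delta_Y$ makes $\pair{q\rho_1}{q\rho_2}$ factor through the diagonal, so $q\rho_1 = q\rho_2$; together with $p\rho_1 = p\rho_2$ and the monicity of $\pair{p}{q}$ this forces $\rho_1 = \rho_2$, hence $p$ is mono. Totality yields that $p$ is a regular epi: writing $(\sigma_1,\sigma_2)$ for the kernel pair of $q$, $\Delta_X \sqsubseteq \rcomp{R}{R^\dagger}$ makes the diagonal $\pair{\id}{\id}$ factor through $m_{\rcomp{R}{R^\dagger}}$, providing a section of $\pi_1\cdot m_{\rcomp{R}{R^\dagger}}$; this exhibits $p\sigma_1$ as a composite of regular epis, and since $p\sigma_1$ factors through $p$, the image of $p$ is all of $X$.

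Finally, in a regular category a morphism that is both a mono and a regular epi is an isomorphism, so $p$ is invertible and fullness holds; being identity on objects and fully faithful, $\Gamma$ is an isomorphism of categories. I expect fullness to be the main obstacle, since it is where the abstract adjunction inequalities defining a map must be unfolded through the concrete pullback-and-image formula for composition, and where the pullback-stable (regular epi, mono)-factorisation of the ambient regular category is genuinely used.
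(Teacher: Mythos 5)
Your proposal is correct and follows exactly the route the paper indicates (and defers to \cite{freyd90} for): identifying maps with graphs $\pair{\id}{f}$, with the adjunction inequalities unpacked through the pullback-and-image formula for relational composition. Your fullness argument --- single-valuedness kills the kernel pair of $p$ so $p$ is mono, totality makes $p$ a cover via the split section of $\pi_1\cdot m_{\rcomp{R}{R^\dagger}}$, hence $p$ is iso --- is the standard tabular-allegory proof and is sound.
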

\noindent 
The reason for it is that maps (left adjoints) in $\Rel{\CC}$ are precisely 
the relations 
represented by a mono of the form $\pair{\id}{f}$ for some morphism $f$ of 
$\CC$, justifying the remark from Example~\ref{ex:dagger-modular-sets} that 
left adjoints in an allegory behave like direct images. 
Similarly, their right adjoints are relations represented by  $\pair{f}{\id}$, 
corresponding to inverse images.
This also implies that $\Rel{\CC}$ is \emph{tabular}, that is, 
it is generated by maps in the following sense. A \emph{tabulation} of a 
morphism $\map{\phi}{X}{Y}$ in an allegory is a pair of maps 
$\map{\psi}{Z}{X}$ and $\map{\xi}{Z}{Y}$ such that 
$\phi = \xi\cdot\psi^\dagger$ and 
$\psi^\dagger\cdot\psi \cap \xi^\dagger\cdot\xi = \id_Z$.
\begin{thm}
In an allegory of relations, the tabulations of a relation $R$ are 
exactly those 
pairs of relations $(S,T)$ represented by monos of the form $\pair{\id}{f}$ 
and $\pair{\id}{g}$ respectively, with $f$ and $g$ jointly monic, and such that 
$R = \rcomp{T^\dagger}{S}$. In particular, every relation has a tabulation, 
that is, $\Rel{\CC}$ is tabular.
\end{thm}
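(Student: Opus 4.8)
The plan is to reduce everything to the characterisation of maps recalled just above, namely that a map in $\Rel{\CC}$ is precisely a relation represented by a mono of the form $\pair{\id}{h}$, whose dagger is then represented by $\pair{h}{\id}$ (directly from the formula $m_{r^\dagger}=\pair{\pi_2}{\pi_1}\cdot m_r$). Unfolding the definition of a tabulation of $R$ (a relation from $X$ to $Y$), I would write the two constituent maps as $S$ from $Z$ to $Y$, represented by $\pair{\id}{f}$ with $\map{f}{Z}{Y}$, and $T$ from $Z$ to $X$, represented by $\pair{\id}{g}$ with $\map{g}{Z}{X}$. With this identification the first tabulation equation $R=\rcomp{T^\dagger}{S}$ is \emph{verbatim} the condition appearing in the statement, so the whole content lies in translating the second equation $\rcomp{T}{T^\dagger}\cap\rcomp{S}{S^\dagger}=\id_Z$ into the requirement that $f$ and $g$ be jointly monic.

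For that translation I would compute each composite directly from the pullback-plus-factorisation definition of composition in $\Rel{\CC}$. Composing $T$ with $T^\dagger$ amounts to pulling back $\pi_2\cdot\pair{\id}{g}=g$ along $\pi_1\cdot\pair{g}{\id}=g$, i.e. to forming the kernel pair of $g$; since a kernel pair is jointly monic, the ensuing image factorisation is trivial and $\rcomp{T}{T^\dagger}$ is exactly the kernel relation of $g$ as a subobject of $Z\times Z$. Symmetrically $\rcomp{S}{S^\dagger}$ is the kernel relation of $f$. The intersection of these two subobjects is their pullback over $Z\times Z$, which is canonically the kernel pair of the pairing $\map{\pair{g}{f}}{Z}{X\times Y}$. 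The key step is then the standard fact that a morphism is a mono if and only if its kernel pair is the diagonal $\Delta_Z=\id_Z$; applying it to $\pair{g}{f}$ yields precisely the equivalence between the second tabulation equation and the joint monicity of $f$ and $g$. This computation, and in particular the identification of the intersection of the two kernel relations with the kernel pair of the pairing, is the main obstacle, and the only place where the regular structure (existence of kernel pairs, stable image factorisations) is really used.

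Finally, for the \emph{in particular} clause I would exhibit a tabulation of an arbitrary relation $R$ represented by a mono $\mono{m}{R}{X\times Y}$. Taking $Z=R$, the two legs $g=\pi_1\cdot m$ and $f=\pi_2\cdot m$, and the corresponding maps $S,T$ represented by $\pair{\id}{f}$ and $\pair{\id}{g}$, one has $\pair{g}{f}=m$, a mono, so $f$ and $g$ are jointly monic. It remains to check $R=\rcomp{T^\dagger}{S}$: by the same composition recipe $\rcomp{T^\dagger}{S}$ is the image of $\pair{g}{f}=m$, and since $m$ is already a mono this image is $R$ itself. By the equivalence established above, $(S,T)$ is then a tabulation of $R$, which shows that every relation is tabulated and hence that $\Rel{\CC}$ is tabular.
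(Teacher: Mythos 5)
Your proof is correct. Note that the paper does not actually prove this theorem itself --- it is quoted from Freyd--Scedrov \cite{freyd90} --- so there is no in-paper argument to compare against; your route (reduce to the characterisation of maps as graphs $\pair{\id}{h}$, compute $\rcomp{T}{T^\dagger}$ and $\rcomp{S}{S^\dagger}$ from the pullback-plus-image recipe as the kernel relations of $g$ and $f$, identify their intersection with the kernel pair of $\pair{g}{f}$, and invoke the fact that a morphism is monic iff its kernel pair is the diagonal) is the standard one, and every step checks out, including the observation that the image factorisations are trivial because kernel-pair projections are jointly monic and that taking $Z=R$ with the two projections of a representing mono yields the tabulation witnessing tabularity.
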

\noindent 
The intuition of this theorem is that relations are precisely  jointly 
monic spans.

\begin{exa}
In $\Set$, maps are graphs of functions, that is, relations of the form
$
	\{(x,f(x)) \mid x \in X\}
$
for some function $\map{f}{X}{Y}$.
Consequently, every relation $R$ is the same as the span of
$\map{f}{R}{X}~(x,y)\mapsto x$ and $\map{g}{R}{Y}~(x,y)\mapsto y$, that is,
$
	R = \{(f(r),g(r)) \mid r \in R\}.
$
\end{exa}

\section{Aczel-Mendler Bisimulations, in Regular Categories}
\label{sec:reg-cat}

We now start investigating our original problem: a nice general 
theory of bisimulations in terms of relations. The development of this 
section will start with the notion of Aczel-Mendler bisimulations \cite{aczel89}, where 
systems are described as coalgebras. We will witness that one bottleneck 
of this theory is the role of the axiom of choice that is necessary to prove 
even some basic properties of this notion of bisimulations. 
This prevents the use of this notion in most regular categories.
We will then 
show that we can fix this issue by a careful usage of relations.

\subsection{Systems as Coalgebras}

In this section, we will briefly recall coalgebras, and how to model 
systems with them. For a more complete introduction, see for example
\cite{jacobs16}.

Coalgebras require two ingredients:
\begin{itemize}
	\item 
	a category $\CC$ that describes the type of state spaces of 
	our systems; and
	\item 
	an endofunctor $F$ on $\CC$ that describes the type of allowed 
	transitions.
\end{itemize}
\noindent 
A \emph{coalgebra} is then a morphism of type $\map{\alpha}{X}{FX}$.
Intuitively, $X$ is the state space of the system and $\alpha$ maps a state to 
the collection of transitions from this state.

\begin{exa}
For example, deterministic transition systems labelled in the alphabet 
$\Sigma$ can be modelled with the $\Set$-functor 
$X~\mapsto~\Sigma\Rightarrow\,X$. A coalgebra for this functor is a 
function $X~\to~\Sigma\Rightarrow\,X$. It maps a state to a function from
$\Sigma$ to $X$, describing what the next state is after reading a 
particular letter. 
Non-deterministic labelled transition systems can be described using 
the functor $X\mapsto\pow{(\Sigma\times X)}$. A coalgebra then maps a state to a 
set of transitions, given by a letter and a state, describing the states 
we can reach from another state reading a particular letter. 
Another typical example is a probabilistic system, that can be described 
using the distribution functor $\mathcal{D}$. A transition for those systems 
is then a distribution on the states, describing what is the probability of 
reaching a given state in the next step.
\end{exa}

A \emph{morphism of coalgebras} from $\map{\alpha}{X}{FX}$ to 
$\map{\beta}{Y}{FY}$
is a morphism $\map{f}{X}{Y}$ of $\CC$ such that 
 the following 
diagram commutes:
\begin{center}
\begin{tikzpicture}[scale=1]
		
	\node (r1r2) at (3,1) {\scriptsize{$X$}};
	\node (r1) at (8,1) {\scriptsize{$Y$}};
	\node (r2) at (3,0) {\scriptsize{$FX$}};
	\node (y) at (8,0) {\scriptsize{$FY$}};
	
	\path[->,font=\scriptsize]
		(r1r2) edge node[above]{$f$} (r1)
		(r1r2) edge node[left]{$\alpha$} (r2)
		(r1) edge node[right]{$\beta$} (y)
		(r2) edge node[below]{$Ff$} (y);
			
\end{tikzpicture}
\end{center}
Coalgebras on $F$ and homomorphisms of coalgebras form a category, 
which we denote by $\Coal{F}$.

\subsection{Aczel-Mendler Bisimulations of Coalgebras}
\label{sec:Aczel-Mendler-bisimulations}

In this section, we follow closely the development of 
\cite{jacobs16}. We recall the definition of Aczel-Mendler bisimulations and give some 
of their properties.

\subsubsection{AM-Bisimulations}

\begin{defi}
We say that a relation is an \emph{Aczel-Mendler bisimulation} 
(AM-bisimulation for short) from the coalgebra $\map{\alpha}{X}{FX}$ 
to $\map{\beta}{Y}{FY}$, if for any mono $\mono{r}{R}{X\times Y}$ 
representing it, there is a morphism $\map{W}{R}{FR}$, \emph{a witness}, 
such that:
\begin{center}
\begin{tikzpicture}[scale=1.5]
		
	\node (r) at (0,0) {\scriptsize{$R$}};
	\node (xy) at (3,0.5) {\scriptsize{$X\times Y$}};
	\node (cr) at (3,-0.5) {\scriptsize{$FR$}};
	\node (fxfy) at (6,0.5) {\scriptsize{$F(X)\times F(Y)$}};
	\node (fxy) at (6,-0.5) {\scriptsize{$F(X\times Y)$}};
	
	\path[->,font=\scriptsize]
		(xy) edge node[above]{$\product{\alpha}{\beta}$} (fxfy)
		(r) edge node[below]{$W$} (cr)
		(fxy) edge node[right]{$\splitf$} (fxfy)
		(r) edge node[above]{$r$} (xy)
		(cr) edge node[below]{$Fr$} (fxy);
			
\end{tikzpicture}
\end{center}
\end{defi}

\begin{exa}
In the case of non-deterministic labelled transition systems, AM-bisimulations 
correspond to the usual strong bisimulations. The function 
$W$ maps a pair $(x,y)$ of states of $\alpha$ and $\beta$ to a subset of triples
$(a,x',y')$ such that $(x',y') \in R$. 
The commutation condition means that the set $c(x)$ of transitions from 
$x$ corresponds exactly to the set 
$\{(a,x') \mid \exists y'.\, (a,x',y')\in W(x,y)\}$, and similarly for $y$. 
This implies the characteristic property of a bisimulation: 
if there is a transition $(a,x')$ from 
$x$, then there exists a transition $(a,y')$ from $y$ 
such that $(x',y') \in R$; 
and vice versa.
\end{exa}

\subsubsection{I-Category of Bisimulations, under the Axiom of Choice}

We show now that AM-bisimulations behave well 
under the regular axiom of choice.
\begin{defi}
	A category has the regular axiom of choice if every 
	regular epi is split.
\end{defi}
\begin{prop}
\label{prop:bisim-i-cat}
Assume that $\CC$ has the regular axiom of choice and that $F$ 
preserves weak pullbacks.
Then the following is an I-category in the sense of \cite{freyd90}, denoted by 
$\Bis{F}$:
\begin{itemize}
	\item 
	objects are coalgebras on $F$,
	\item 
	morphisms are AM-bisimulations,
	\item 
	$\sqsubseteq$, identities, composition, and 
	$(\_)^\dagger$ are defined as in $\Rel{\CC}$.
\end{itemize}
\noindent 
That is, diagonals are AM-bisimulations, and AM-bisimulations are 
closed under composition and converse.
\end{prop}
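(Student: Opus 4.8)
The plan is to verify each of the three I-category conditions (diagonals are bisimulations, and bisimulations are closed under $(\_)^\dagger$ and under composition) directly from the definition, using the two hypotheses—that $F$ preserves weak pullbacks and that every regular epi splits—exactly where they are needed. I would take the three conditions in increasing order of difficulty, expecting inverse to be trivial, diagonals to be a short calculation, and closure under composition to be the genuine obstacle.

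\emph{Diagonals.} First I would show that $\Delta_X$, represented by the diagonal $\mono{\pair{\id}{\id}}{X}{X\times X}$, is an AM-bisimulation from $\map{\alpha}{X}{FX}$ to itself. The required witness is $\map{F\alpha}{X}{FX}$ (reading $R = X$ and $r = \pair{\id}{\id}$), and the square to check asks that $\splitf\cdot F\pair{\id}{\id}\cdot F\alpha = (\product{\alpha}{\alpha})\cdot\pair{\id}{\id}$. The right-hand side is $\pair{\alpha}{\alpha}$; the left-hand side simplifies because $\splitf\cdot F\pair{\id}{\id} = \pair{F(\pi_1\cdot\pair{\id}{\id})}{F(\pi_2\cdot\pair{\id}{\id})} = \pair{\id_{FX}}{\id_{FX}}$ by functoriality, so both sides equal $\pair{\alpha}{\alpha}$. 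No choice is needed here.

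\emph{Inverse.} Next, if $r$ with witness $\map{W}{R}{FR}$ is an AM-bisimulation from $\alpha$ to $\beta$, then $r^\dagger$, represented by $\pair{\pi_2}{\pi_1}\cdot m_r$, is a bisimulation from $\beta$ to $\alpha$ with the same witness $W$. This is purely formal: swapping the two product projections turns the defining square for $r$ into the defining square for $r^\dagger$, using that $\splitf$ and $\product{\alpha}{\beta}$ are compatible with the symmetry isomorphism of the product. Again no hypothesis on $F$ or $\CC$ is used.

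\emph{Composition.} The hard part is closure under composition, and this is where both hypotheses enter. Given AM-bisimulations $r$ from $\alpha$ to $\beta$ (witness $W_r$ on $R$) and $s$ from $\beta$ to $\gamma$ (witness $W_s$ on $S$), I must produce a witness on the composite $\rcomp{R}{S}$. The plan is to work first on the pullback object $R\star S$ from the composition definition: form the pullback of $\pi_2\cdot m_r$ and $\pi_1\cdot m_s$ over $Y$, and use that $F$ preserves weak pullbacks so that the images $FR$ and $FS$ of $W_r$ and $W_s$ land compatibly over $FY$, yielding a morphism $R\star S\to F(R\star S)$ making $R\star S$ into a coalgebra that projects correctly to $\alpha$ and $\gamma$. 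This gives that $R\star S$, as a (not necessarily monic) span, carries a bisimulation-like structure. The remaining step is to transport this structure along the regular epi $e_{\rcomp{r}{s}}\colon R\star S \longrightarrow \rcomp{R}{S}$ to obtain a witness $\map{W}{\rcomp{R}{S}}{F(\rcomp{R}{S})}$ on the actual subobject. This is precisely where the \emph{regular axiom of choice} is invoked: I would split the regular epi $e_{\rcomp{r}{s}}$ by a section $t$, define $W = F(e_{\rcomp{r}{s}})\cdot(\text{coalgebra on }R\star S)\cdot t$, and then check that the defining square commutes, the commutation of the two outer squares for $r$ and $s$ combining through the pullback to give the square for the composite. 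I expect the bookkeeping here—verifying that the chosen section is compatible with the factorisation and that the square genuinely commutes rather than merely commuting after post-composition with the mono $m_{\rcomp{r}{s}}$—to be the main obstacle, and the reason the axiom of choice cannot be dispensed with in this formulation.
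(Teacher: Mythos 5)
Your proposal is correct and follows essentially the same route as the paper: witness $\alpha$ for the diagonal, the same witness $W$ for the inverse, and for composition the combination of $F$ preserving the weak pullback defining $R\star S$ with a section of the regular epi $e_{\rcomp{r}{s}}$ supplied by the regular axiom of choice (the paper applies the section before invoking the weak-pullback universal property, you apply it after, which is an immaterial reordering). One small slip: in the diagonal case the witness you name as $F\alpha$ should be $\alpha$ itself, as your own typing $X\longrightarrow FX$ and the verification $\pair{W}{W}=\pair{\alpha}{\alpha}$ already indicate.
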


\begin{proof}
It boils down to proving the following three facts:
\begin{itemize}
	\item \textbf{Diagonals are Aczel-Mendler bisimulations:} We have the following commutative diagram:
	\begin{center}
\begin{tikzpicture}[scale=1.5]
		
	\node (r) at (0,0) {\scriptsize{$X$}};
	\node (xy) at (3,0.5) {\scriptsize{$X\times X$}};
	\node (cr) at (3,-0.5) {\scriptsize{$FX$}};
	\node (fxfy) at (6,0.5) {\scriptsize{$F(X)\times F(X)$}};
	\node (fxy) at (6,-0.5) {\scriptsize{$F(X\times X)$}};
	
	\path[->,font=\scriptsize]
		(xy) edge node[above]{$\product{\alpha}{\alpha}$} (fxfy)
		(r) edge node[below]{$\alpha$} (cr)
		(fxy) edge node[right]{$\splitf$} (fxfy)
		(r) edge node[above]{$\pair{\id}{\id}$} (xy)
		(cr) edge node[below]{$F\pair{\id}{\id}$} (fxy);
			
\end{tikzpicture}
\end{center}

	\item \textbf{Aczel-Mendler bisimulations are closed under converse:} Assume 
	given a witness for $r$:
	\begin{center}
\begin{tikzpicture}[scale=1.5]
		
	\node (r) at (0,0) {\scriptsize{$R$}};
	\node (xy) at (3,0.5) {\scriptsize{$X\times Y$}};
	\node (cr) at (3,-0.5) {\scriptsize{$FR$}};
	\node (fxfy) at (6,0.5) {\scriptsize{$F(X)\times F(Y)$}};
	\node (fxy) at (6,-0.5) {\scriptsize{$F(X\times Y)$}};
	
	\path[->,font=\scriptsize]
		(xy) edge node[above]{$\product{\alpha}{\beta}$} (fxfy)
		(r) edge node[below]{$W$} (cr)
		(fxy) edge node[right]{$\splitf$} (fxfy)
		(r) edge node[above]{$r$} (xy)
		(cr) edge node[below]{$Fr$} (fxy);
			
\end{tikzpicture}
\end{center}
	Then it is also a witness for $r^\dagger=\pair{\pi_2}{\pi_1}\cdot r$:
	\begin{center}
\begin{tikzpicture}[scale=1.5]
		
	\node (r) at (0,0) {\scriptsize{$R$}};
	\node (xy) at (3,0.5) {\scriptsize{$Y\times X$}};
	\node (cr) at (3,-0.5) {\scriptsize{$FR$}};
	\node (fxfy) at (6,0.5) {\scriptsize{$F(Y)\times F(X)$}};
	\node (fxy) at (6,-0.5) {\scriptsize{$F(Y\times X)$}};
	
	\path[->,font=\scriptsize]
		(xy) edge node[above]{$\product{\beta}{\alpha}$} (fxfy)
		(r) edge node[below]{$W$} (cr)
		(fxy) edge node[right]{$\splitf$} (fxfy)
		(r) edge node[above]{$r^\dagger$} (xy)
		(cr) edge node[below]{$F(r^\dagger)$} (fxy);
			
\end{tikzpicture}
\end{center}
	\item \textbf{Aczel-Mendler bisimulations are closed under composition:}
We then have two witnesses:
\begin{center}
\begin{tikzpicture}[scale=1.1]
		
	\node (r) at (0,0) {\scriptsize{$R_1$}};
	\node (xy) at (1.5,0.5) {\scriptsize{$X\times Y$}};
	\node (cr) at (1.5,-0.5) {\scriptsize{$FR_1$}};
	\node (fxfy) at (4,0.5) {\scriptsize{$F(X)\times F(Y)$}};
	\node (fxy) at (4,-0.5) {\scriptsize{$F(X\times Y)$}};
	
	\path[->,font=\scriptsize]
		(xy) edge node[above]{$\product{\alpha}{\beta}$} (fxfy)
		(r) edge node[below]{$W_1$} (cr)
		(fxy) edge node[right]{$\splitf$} (fxfy)
		(r) edge node[above]{$r_1$} (xy)
		(cr) edge node[below]{$Fr_1$} (fxy);

	\node (r) at (7.3,0) {\scriptsize{$R_2$}};
	\node (xy) at (8.8,0.5) {\scriptsize{$Y\times Z$}};
	\node (cr) at (8.8,-0.5) {\scriptsize{$FR_2$}};
	\node (fxfy) at (11.3,0.5) {\scriptsize{$F(Y)\times F(Z)$}};
	\node (fxy) at (11.3,-0.5) {\scriptsize{$F(Y\times Z)$}};
	
	\path[->,font=\scriptsize]
		(xy) edge node[above]{$\product{\beta}{\gamma}$} (fxfy)
		(r) edge node[below]{$W_2$} (cr)
		(fxy) edge node[right]{$\splitf$} (fxfy)
		(r) edge node[above]{$r_2$} (xy)
		(cr) edge node[below]{$Fr_2$} (fxy);
			
\end{tikzpicture}
\end{center}
We then want to construct a morphism $\map{W}{\rcomp{R_1}{R_2}}{F(\rcomp{R_1}{R_2})}$ such that
\begin{center}
\begin{tikzpicture}[scale=1.5]
		
	\node (r) at (0,0) {\scriptsize{$\rcomp{R_1}{R_2}$}};
	\node (xy) at (3,0.5) {\scriptsize{$X\times Z$}};
	\node (cr) at (3,-0.5) {\scriptsize{$F(\rcomp{R_1}{R_2})$}};
	\node (fxfy) at (6,0.5) {\scriptsize{$F(X)\times F(Z)$}};
	\node (fxy) at (6,-0.5) {\scriptsize{$F(X\times Z)$}};
	
	\path[->,font=\scriptsize]
		(xy) edge node[above]{$\product{\alpha}{\gamma}$} (fxfy)
		(r) edge node[below]{$W$} (cr)
		(fxy) edge node[right]{$\splitf$} (fxfy)
		(r) edge node[above]{$\rcomp{r_1}{r_2}$} (xy)
		(cr) edge node[below]{$F(\rcomp{r_1}{r_2})$} (fxy);
			
\end{tikzpicture}
\end{center}
Since $F$ preserves weak pullbacks and by definition of the composition, we
have the following weak pullback and (regular epi, mono)-factorisation:
\begin{center}
\begin{tikzpicture}[scale=1.1]
		
	\node (r1r2) at (3,1) {\scriptsize{$F(R_1\star R_2)$}};
	\node (r1) at (3,0) {\scriptsize{$FR_1$}};
	\node (r2) at (5.5,1) {\scriptsize{$FR_2$}};
	\node (y) at (5.5,0) {\scriptsize{$FY$}};
	
	\path[->,font=\scriptsize]
		(r1r2) edge node[left]{$F\mu_1$} (r1)
		(r1r2) edge node[above]{$F\mu_2$} (r2)
		(r1) edge node[below]{$F(\pi_2\cdot r_1)$} (y)
		(r2) edge node[right]{$F(\pi_1\cdot r_2)$} (y);

	\draw[dashed] (3.2,0.55) -- (3.45,0.55) -- (3.45,0.8);

	\node (r1r2) at (9,1) {\scriptsize{$R_1\star R_2$}};
	\node (xz) at (13,1) {\scriptsize{$X\times Z$}};
	\node (r1sqr2) at (11,0) {\scriptsize{$\rcomp{R_1}{R_2}$}};
	
	\path[->,font=\scriptsize]
		(r1r2) edge node[above]{$\pair{\pi_1\cdot r_1 \cdot \mu_1}{\pi_2\cdot r_2 \cdot \mu_2}$} (xz);
		
	\path[->>,font=\scriptsize]
		(r1r2) edge node[below]{$e_{\rcomp{r_1}{r_2}}$} (r1sqr2);

	\path[>->,font=\scriptsize,bend left=50]
		(r1sqr2) edge node[left]{$s$} (r1r2);
		
	\path[>->,font=\scriptsize]
		(r1sqr2) edge node[below]{$\rcomp{r_1}{r_2}$} (xz);
			
\end{tikzpicture}
\end{center}
Denote by $s$ a section of $e_{\rcomp{r_1}{r_2}}$, which exists by the 
regular axiom of choice. Then we have the following:
\begin{center}
	\begin{tabular}{rclr}
		$F(\pi_1\cdot\,r_2)\cdot\,W_2\cdot\mu_2\cdot\,s$
		& $=$ & $\beta\cdot\pi_1\cdot\,r_2\cdot\mu_2\cdot\,s$
		& \hfill ($r_2$ is AM-bisimulation)\\
		& $=$ & $\beta\cdot\pi_2\cdot\,r_1\cdot\mu_1\cdot\,s$
		& \hfill  (definition of $\mu_i$)\\
		& $=$ & $F(\pi_2\cdot\,r_1)\cdot\,W_1\cdot\mu_1\cdot\,s$
		& \hfill ($r_1$ is AM-bisimulation)
	\end{tabular}
	\end{center}
By the universal property of weak pullbacks, 
we have $\map{\phi}{\rcomp{R_1}{R_2}}{F(R_1\star R_2)}$, such that
\begin{center}
\begin{tikzpicture}[scale=1.5]
		
	\node (r1r2) at (3,1) {\scriptsize{$F(R_1\star R_2)$}};
	\node (r1) at (3,0) {\scriptsize{$FR_1$}};
	\node (r2) at (6,1) {\scriptsize{$FR_2$}};
	\node (y) at (6,0) {\scriptsize{$FY$}};
	\node (z) at (1.8,1.7) {\scriptsize{$\rcomp{R_1}{R_2}$}};
	
	\path[->,font=\scriptsize]
		(r1r2) edge node[left]{$F\mu_1$} (r1)
		(r1r2) edge node[above]{$F\mu_2$} (r2)
		(r1) edge node[below]{$F(\pi_2\cdot r_1)$} (y)
		(r2) edge node[right]{$F(\pi_1\cdot r_2)$} (y);
		
	\path[->,font=\scriptsize, bend right =20]
		(z) edge node[left]{$W_1\cdot\mu_1\cdot s$} (r1);
		
	\path[->,font=\scriptsize, bend left =20]
		(z) edge node[right]{$W_2\cdot\mu_2\cdot s$} (r2);
		
	\path[->,font=\scriptsize, dotted]
		(z) edge node[above]{$\phi$} (r1r2);
			
\end{tikzpicture}
\end{center}
Now $W = Fe_{\rcomp{r_1}{r_2}}\cdot\phi$ is the expected witness:
	\begin{center}
\begin{align*} 
     & \splitf\cdot F(\rcomp{r_1}{r_2})\cdot W &\\
     = \quad & \splitf\cdot F(\rcomp{r_1}{r_2})\cdot Fe_{\rcomp{r_1}{r_2}}\cdot\phi
    & \hfill \text{(definition of $W$)}&\\
     = \quad & \splitf\cdot F\pair{\pi_1\cdot r_1\cdot\mu_1}{\pi_2\cdot r_2\cdot\mu_2}\cdot\phi
    & \hfill  \text{(definition of $\rcomp{r_1}{r_2}$)}&\\
     = \quad & \pair{F(\pi_1\cdot r_1\cdot \mu_1)}{F(\pi_2\cdot r_2\cdot\mu_2)}\cdot \phi
    & \hfill \text{(computation on products)}&\\
     = \quad & \product{F(\pi_1\cdot r_1)}{F(\pi_2\cdot r_2)}\cdot\pair{F(\mu_1)\cdot\phi}{F(\mu_2)\cdot\phi}
    & \hfill  \text{(computation on products)}&\\
     = \quad & \product{F(\pi_1\cdot r_1)}{F(\pi_2\cdot r_2)}\cdot\pair{W_1\cdot\mu_1\cdot s}{W_2\cdot\mu_2\cdot s}
    & \hfill  \text{(definition of $\phi$)}&\\
     = \quad & \pair{\alpha\cdot\pi_1\cdot r_1\cdot\mu_1\cdot s}{\gamma\cdot\pi_2\cdot r_2 \cdot\mu_2\cdot s}
    & \hfill  \text{(definition of the $W_i$)}&\\
     = \quad & \product{\alpha}{\gamma}\cdot\pair{\pi_1\cdot r_1\cdot\mu_1}{\pi_2\cdot r_2\cdot\mu_2}\cdot s
    & \hfill  \text{(computation on products)}&\\
     = \quad & \product{\alpha}{\gamma}\cdot(\rcomp{r_1}{r_2})
    & \hfill  \text{(definition of $s$)} &\qedhere
\end{align*}
\end{center}
\end{itemize}
\end{proof}

\begin{rem}
	As we have already seen, the preservation of weak pullbacks is a crucial property
	for a functor related to relations. More surprisingly, the reliance on the
	axiom of choice is necessary to prove closure under composition.
	This was already noted in~\cite{jacobs16,staton11}.
	Sometimes, this proposition is stated under the assumption that $F$ preserves
	pullbacks. When pullbacks exist, since any functor preserves split
	epis, it follows from Remark~\ref{rem:weak-pullbacks}
	that if a functor preserves pullbacks, then it also preserves weak pullbacks.
\end{rem}

In the proof, we rely on the regular axiom of choice in the following way:
we require that the epi part
$\epi{e_{\rcomp{r_1}{r_2}}}{R_1 \star R_2}{\rcomp{R_1}{R_2}}$
of a (regular epi, mono)-factorisation be split, that is,
that there exists a section $\mono{s}{\rcomp{R_1}{R_2}}{R_1 \star R_2}$.
In $\Set$, $R_1 \star R_2$ consists of triples $(x, y, z)$ such that $(x, y) \in R_1$ and $(y, z) \in R_2$.
The section then corresponds to making a choice of such an intermediate $y$ for every
pair $(x, z)$ in the composite relation.
This kind of choice is common, for instance, in the proof that strong bisimulations
are closed under composition: given a transition $(a, x')$ from $x$,
to show that a similar transition exists from $z$,
one picks an intermediate $y$, uses the assumption that $R_1$ is a bisimulation
to obtain a transition from $y$, and finally uses that $R_2$ is a bisimulation
to conclude the argument.

\subsubsection{Bisimulation Maps are Coalgebra Homomorphisms}

In this $I$-category of bisimulations, 
we can also discuss maps and tabulations, 
as we did in the context of relations. 
Moreover, since the 2-categorical structure of $\Bis{F}$ is 
inherited from that of $\Rel{\CC}$—specifically, 
because the local posets of bisimulations embed into the 
corresponding local posets of relations—we may apply 
results from Section~\ref{sec:maps-allegory} within this setting. 
In particular, we can establish the following:

\begin{thm}
\label{th:maps-coalgebra}
Under the assumptions of Proposition~\ref{prop:bisim-i-cat}, 
$\Map{\Bis{F}}$ is isomorphic to $\Coal{F}$.
\end{thm}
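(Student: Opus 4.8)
The plan is to transfer the description of maps in $\Rel{\CC}$ (Theorem~\ref{th:maps}) to $\Bis{F}$. First I would introduce the forgetful assignment $\map{U}{\Bis{F}}{\Rel{\CC}}$ sending a coalgebra to its carrier and a bisimulation to its underlying relation. It is a faithful 2-functor, and since (as remarked before the statement) the local posets of bisimulations are \emph{full} sub-posets of the corresponding local posets of relations, an inequality between bisimulations holds in $\Bis{F}$ exactly when it holds in $\Rel{\CC}$. Two consequences drive the proof: $U$ preserves adjunctions, as any 2-functor does; and, conversely, any adjunction in $\Rel{\CC}$ whose two 1-cells and whose unit and counit are all inequalities between bisimulations is already an adjunction in $\Bis{F}$. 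Here I would stress that both are \emph{locally posetal} 2-categories, so the triangle identities are automatic and an adjunction reduces to a unit and a counit inequality.

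The computational heart of the argument is an observation about the relations $\pair{\id}{f}$ that represent maps in $\Rel{\CC}$. I claim that for $\map{f}{X}{Y}$, the relation $\pair{\id}{f}$ from $X$ to $Y$ is an AM-bisimulation from $\alpha$ to $\beta$ if and only if $f$ is a morphism of coalgebras, in which case the witness is forced to be $\alpha$. Indeed, a short computation on products gives $\splitf\cdot F\pair{\id}{f} = \pair{\id}{Ff}$, so for a candidate witness $W$ the defining square becomes $\pair{W}{Ff\cdot W} = \pair{\alpha}{\beta\cdot f}$; comparing first components forces $W=\alpha$, and comparing second components yields exactly $Ff\cdot\alpha = \beta\cdot f$.

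With this in hand the two inclusions are short. If $\phi$ is a map in $\Bis{F}$, then $U\phi$ is a map in $\Rel{\CC}$ by preservation of adjunctions, so $\phi = \pair{\id}{f}$ for a unique $f$ by Theorem~\ref{th:maps}, and the observation above makes $f$ a coalgebra morphism. Conversely, if $f$ is a coalgebra morphism, then $\pair{\id}{f}$ is a bisimulation by the observation, its dagger $\pair{f}{\id}$ is a bisimulation by closure under inverse, and the unit $\Delta_X \sqsubseteq \rcomp{\pair{\id}{f}}{\pair{f}{\id}}$ and counit $\rcomp{\pair{f}{\id}}{\pair{\id}{f}} \sqsubseteq \Delta_Y$ are inequalities between bisimulations, since diagonals are bisimulations and bisimulations are closed under composition (Proposition~\ref{prop:bisim-i-cat}). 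As these inequalities already hold in $\Rel{\CC}$, where $\pair{\id}{f}\dashv\pair{f}{\id}$, and the order is inherited, they witness the same adjunction in $\Bis{F}$, so $\pair{\id}{f}$ is a map.

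Finally I would check functoriality: the assignment $f \mapsto \pair{\id}{f}$ is a bijection between coalgebra morphisms and maps of $\Bis{F}$, it is the identity on objects, it sends $\id_X$ to $\Delta_X$, and it preserves composition because composition in $\Bis{F}$ is that of $\Rel{\CC}$ and $\rcomp{\pair{\id}{f}}{\pair{\id}{g}} = \pair{\id}{g\cdot f}$ under the isomorphism $\Map{\Rel{\CC}} \cong \CC$; this delivers $\Map{\Bis{F}} \cong \Coal{F}$. The main obstacle I anticipate is not the computation but organising the transfer of adjunctions across $U$ cleanly: one must verify that every relation occurring in the unit and counit — the two diagonals and the two composites — is genuinely a bisimulation, which is exactly the point where the I-category structure of Proposition~\ref{prop:bisim-i-cat}, and hence the regular axiom of choice and weak-pullback preservation, enter.
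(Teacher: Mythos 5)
Your proposal is correct and follows essentially the same route as the paper: the paper reduces the theorem to showing that bisimulations represented by $\pair{\id}{f}$ are exactly graphs of coalgebra morphisms (Proposition~\ref{prop:bisim-maps}), and your ``computational heart'' is precisely that proposition with the same forced witness $W=\alpha$. The extra detail you supply on transferring adjunctions along the inclusion of local posets is exactly what the paper leaves implicit in the phrase ``using results from Section~\ref{sec:maps-allegory}'', and your appeal to Proposition~\ref{prop:bisim-i-cat} for the closure properties is consistent with the standing hypotheses of that subsection.
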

\noindent 
Using results from Section~\ref{sec:maps-allegory}, proving this theorem 
boils down to
proving that bisimulations that are maps are precisely graphs of 
coalgebra homomorphisms:
\begin{prop}
\label{prop:bisim-maps}
A morphism $\map{h}{X}{Y}$ of $\CC$ is a coalgebra homomorphism from 
$\alpha$ to $\beta$ if and only if the mono 
$\mono{\pair{\id}{h}}{X}{X\times Y}$ represents 
an AM-bisimulation from $\alpha$ to $\beta$.
\end{prop}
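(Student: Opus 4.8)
The plan is to exploit the fact that the canonical representing mono $\pair{\id}{h}$ has domain $X$ itself, so that the whole defining diagram of an AM-bisimulation degenerates into a single equation between two maps $\map{}{X}{\product{FX}{FY}}$, both of which are pairings; joint monicity of the product projections will then split this equation into exactly the two conditions we need. First I would record the routine observation that being an AM-bisimulation does not depend on the chosen representing mono: if $W$ is a witness for $r$ and $r' = r\cdot u$ for an isomorphism $u$, then $Fu^{-1}\cdot W\cdot u$ is a witness for $r'$. This licenses us to test the condition on the specific mono $\mono{\pair{\id}{h}}{X}{\product{X}{Y}}$, for which $R = X$, $r = \pair{\id}{h}$, and a witness is simply a morphism $\map{W}{X}{FX}$.

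Next I would compute the two legs of the bisimulation square for this mono. On the outer leg, a direct product computation gives $\product{\alpha}{\beta}\cdot\pair{\id}{h} = \pair{\alpha}{\beta\cdot h}$. On the inner leg, functoriality of $F$ applied to the projection identities $\pi_1\cdot\pair{\id}{h} = \id$ and $\pi_2\cdot\pair{\id}{h} = h$ yields
\[
\splitf\cdot F\pair{\id}{h} = \pair{F(\pi_1\cdot\pair{\id}{h})}{F(\pi_2\cdot\pair{\id}{h})} = \pair{\id}{Fh},
\]
so that the inner leg is $\pair{\id}{Fh}\cdot W = \pair{W}{Fh\cdot W}$. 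The square thus reads $\pair{W}{Fh\cdot W} = \pair{\alpha}{\beta\cdot h}$, which, since a morphism into a product is determined by its two components, is equivalent to the conjunction $W = \alpha$ and $Fh\cdot W = \beta\cdot h$.

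From here both implications fall out at once. For the forward direction, if $h$ is a coalgebra morphism then $Fh\cdot\alpha = \beta\cdot h$, so taking the witness $W = \alpha$ makes the square commute and $\pair{\id}{h}$ represents an AM-bisimulation. For the converse, any witness $W$ is forced by the first component to equal $\alpha$, whereupon the second component reads $Fh\cdot\alpha = \beta\cdot h$, i.e.\ $h$ is a coalgebra morphism from $\alpha$ to $\beta$. I do not expect a genuine obstacle: the only point demanding a little care is the identity $\splitf\cdot F\pair{\id}{h} = \pair{\id}{Fh}$, which must be justified through functoriality rather than by manipulating the $\splitf$ notation formally. It is worth emphasising that, in contrast with the closure under composition (Proposition~\ref{prop:bisim-i-cat}), this argument invokes neither preservation of weak pullbacks nor the axiom of choice, the witness being canonically $\alpha$ itself.
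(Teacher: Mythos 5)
Your proof is correct and follows essentially the same route as the paper's: both reduce the bisimulation square for the mono $\pair{\id}{h}$ to the two component equations $W=\alpha$ and $Fh\cdot W=\beta\cdot h$ via the identity $\splitf\cdot F\pair{\id}{h}\cdot W=\pair{W}{Fh\cdot W}$, the paper obtaining them by postcomposing with $\pi_1$ and $\pi_2$ where you invoke joint monicity of the projections. Your preliminary remark on independence of the representing mono is a harmless (and slightly more careful) addition that the paper leaves implicit.
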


\begin{proof}
Let us prove both implications:
\begin{itemize}[leftmargin=1.5em]
\item[$\Rightarrow$] Assume given a coalgebra homomorphism 
$\map{h}{X}{Y}$ from the coalgebra $\map{\alpha}{X}{FX}$ to 
$\map{\beta}{Y}{FY}$, that is, with 
\[Fh\cdot\alpha = \beta\cdot h.\]
We then want $\map{W}{X}{FX}$ such that 
\[\product{\alpha}{\beta}\cdot\pair{\id}{h} = \splitf\cdot F\pair{\id}{h}\cdot W.\] 
Using $W = \alpha$ does the job:
\begin{center}
\begin{tabular}{rclcr}
    $\product{\alpha}{\beta}\cdot\pair{\id}{h}$ & $=$ & $\pair{\alpha}{\beta\cdot h}$
    & & \hfill (computation on products)\\
    & $=$ & $\pair{\alpha}{Fh\cdot\alpha}$
    & & \hfill ($h$ homomorphism)\\
    & $=$ & $\splitf\cdot F\pair{\id}{h}\cdot\alpha$
    & & \hfill (computation on products)
\end{tabular}
\end{center}

\item[$\Leftarrow$] Let us assume that we have a morphism $\map{W}{X}{FX}$ such that 
\[\splitf\cdot F\pair{\id}{h}\cdot W = \product{\alpha}{\beta}\cdot\pair{\id}{h}.\]
Then:
\begin{center}
\begin{tabular}{rclcr}
    $\alpha$ & $=$ & $\pi_1\cdot\product{\alpha}{\beta}\cdot \pair{\id}{h}$
    & & \hfill (computation on products)\\
    & $=$ & $F\pi_1\cdot F\pair{\id}{h}\cdot W$
    & & \hfill (definition of $W$) \\
    & $=$ & $W$
    & & \hfill (computation on products)\\
    $\beta\cdot h$ & $=$ & $\pi_2\cdot\product{\alpha}{\beta}\cdot\pair{\id}{h}$
    & & \hfill (computation on products)\\
    & $=$ & $F\pi_2\cdot F\pair{\id}{h}\cdot W$
    & & \hfill (definition of $W$) \\
    & $=$ & $Fh\cdot W$
    & & \hfill (computation on products)
\end{tabular}
\end{center}
Consequently, \[Fh\cdot\alpha = \beta\cdot h\] and $h$ is a coalgebra homomorphism.\qedhere
\end{itemize}
\end{proof}
\noindent 
Using this characterisation of maps for AM-bisimulations, 
and using the tabularity of the allegory of relations, 
we can prove that an AM-bisimulation can be described as a span of homomorphisms 
of coalgebras, under some form of the axiom of choice (see \cite{jacobs16}). 
We can formulate this in terms of 
tabulations:
\begin{prop}
\label{prop:lobi-tab}
If $U$ is an 
AM-bisimulation from $\alpha$ to $\beta$, 
and if $\map{f}{Z}{X}$, $\map{g}{Z}{Y}$ is a tabulation of $U$, 
then there is a coalgebra structure $\gamma$ on $Z$ such that $f$ is a 
coalgebra homomorphism from $\gamma$ to $\alpha$ and 
$g$ is a coalgebra homomorphism from $\gamma$ to $\beta$.
\end{prop}

\begin{proof}
The fact that $f$, $g$ is a tabulation of $U$ means that 
$\map{\pair{f}{g}}{Z}{X\times Y}$ is a 
mono and represents $U$. The fact that $U$ is a AM-bisimulation 
gives a witness which is a $F$-coalgebra structure on $Z$.
The commutativity of the diagram defining this witness implies that $f$ and 
$g$ are coalgebra homomorphisms.
\end{proof}

\begin{cor}
\label{coro:lobi-spans}
Assume $\CC$ has the regular axiom of choice.
Assume given two coalgebras $\map{\alpha}{X}{F(X)}$ and $\map{\beta}{Y}{F(Y)}$, 
and two points $\map{p}{\ast}{X}$ and $\map{q}{\ast}{Y}$. 
Then the following 
two statements are equivalent:
\begin{enumerate}
	\item 
	There is an AM-bisimulation $\mono{r}{R}{X\times Y}$ from 
	$\alpha$ to $\beta$, and a point $\map{c}{\ast}{R}$ such that 
	$r\cdot c = \pair{p}{q}$.
	\item 
	There is a span $X\,\xleftarrow{~f~}\,Z\,\xrightarrow{~g~}\,Y$ , 
	an $F$-coalgebra structure $\gamma$ 
	on $Z$ such that $f$ is a coalgebra homomorphism from $\gamma$ to 
	$\alpha$ and $g$ from $\gamma$ to $\beta$, 
	and a point $\map{w}{\ast}{Z}$ such that $f\cdot w = p$ and 
	$g\cdot w = q$.
\end{enumerate}
\end{cor}

\begin{proof}
Let us prove both implications:
\begin{itemize}
	\item $1 \Rightarrow 2)$ By Proposition~\ref{prop:lobi-tab},
	we obtain a tabulation $\map{f}{Z}{X}$ and $\map{g}{Z}{Y}$ 
	together with $\map{\gamma}{Z}{FZ}$ that makes $f$ and $g$ 
	coalgebra homomorphisms. In particular, $U$ is represented by 
	$\pair{f}{g}$. Since $r$ also represents $U$, there is an 
	iso $\phi$ such that $\pair{f}{g}\cdot\phi = r$.
	By taking $w = \phi\cdot\,c$, we have 
	\[f\cdot\,w = \pi_1\cdot\pair{f}{g}\cdot\phi\cdot\,c = 
		\pi_1\cdot\,r\cdot\,c = \pi_1\cdot\pair{p}{q} = p,\]
	and similarly $g\cdot\,w=q$.
	\item $2 \Rightarrow 1)$ Let us assume that we have a span of
	homomorphisms. Then, since $f$ and $g$ are coalgebra homomorphisms,
	$\pair{\id}{f}$ and $\pair{\id}{g}$ represent bisimulations
	by Proposition~\ref{prop:bisim-maps}. Since
	bisimulations are closed under converse,
	$\pair{\id}{f}^\dagger$ also represents a bisimulation.
	To conclude, we would like to prove that
	$\rcomp{\pair{\id}{f}^\dagger}{\pair{\id}{g}}$ is a bisimulation by using
	the closure under composition. However, the general closure under
	composition requires both the regular axiom of choice and that $F$
	preserves weak pullbacks. But since the relations we are
	composing are of special forms, namely that
	$\pair{\id}{f}^\dagger = \pair{f}{\id}$ is a right adjoint
	and $\pair{\id}{g}$ is a map, the construction
	in the proof of Proposition~\ref{prop:bisim-i-cat} does not need
	the preservation of weak pullbacks,
	and we can conclude with just the regular axiom of choice that
	$\rcomp{\pair{\id}{f}^\dagger}{\pair{\id}{g}}$ is an AM-bisimulation.
	By definition of the composition, this bisimulation is
	represented by the mono part $r$ of the (regular epi, mono)-factorisation:
	\begin{center}		
	\begin{tikzpicture}[scale=1.5]
			
		\node (r1r2) at (0,0) {\scriptsize{$Z$}};
		\node (xz) at (4,0) {\scriptsize{$X\times Y$}};
		\node (r1sqr2) at (2,-1) {\scriptsize{$R$}};
		
		\path[->,font=\scriptsize]
			(r1r2) edge node[above]{$\pair{f}{g}$} (xz);
			
		\path[->>,font=\scriptsize]
			(r1r2) edge node[below]{$e$} (r1sqr2);
			
		\path[>->,font=\scriptsize]
			(r1sqr2) edge node[below]{$r$} (xz);
				
	\end{tikzpicture}
	\end{center}
	Now, if we have $w$ as in 2), define $c = e\cdot\,w$.
	We have 
	\[r\cdot\,c = r\cdot\,e\cdot\,w=\pair{f}{g}\cdot\,w=\pair{p}{q}.\qedhere\]
\end{itemize}
\end{proof}

\begin{rem}
Here $\ast$ is usually the terminal object (since we 
are talking about points), but it can really be any object.
\end{rem}

\subsection{Picking vs. Collecting: AM-Bisimulations for Regular Categories}

We have seen that several results about AM-bisimulations  
depend on the regular axiom of choice, preventing its usage in more 
exotic toposes and regular categories. Actually, the only occurrences are of similar flavour: 
one wants to prove some property of elements $(x,z)$ in a composition of 
relations, and for that, one has to pick a witness $y$ in between. The main 
idea of our proposal is that, instead of picking a witness (which would 
require the axiom of choice), it is enough to collect all the witnesses, prove 
properties about all of them, and make sure that there is enough of them.
This can be done in any regular category as follows:

\begin{defi}
\label{def:regAMbi}
We say that a relation $R$ is a \emph{regular AM-bisimulation} from the coalgebra 
$\map{\alpha}{X}{FX}$ to $\map{\beta}{Y}{FY}$, if for any mono 
$\mono{r}{R}{X\times Y}$ representing it, there is another relation 
represented by  
$\mono{w}{W}{FR\times R}$ such that $\pi_2\cdot w$ is a regular epi and:
\begin{center}
\begin{tikzpicture}[scale=2]
	
	\node (frr) at (0,0) {\scriptsize{$W$}};	
	\node (r) at (1.5,0.5) {\scriptsize{$R$}};
	\node (xy) at (3.5,0.5) {\scriptsize{$X\times Y$}};
	\node (pfr) at (1.5,-0.5) {\scriptsize{$FR$}};
	\node (fxfy) at (6,0) {\scriptsize{$F(X)\times F(Y)$}};
	\node (pfxy) at (3.5,-0.5) {\scriptsize{$F(X\times Y)$}};
	\node (ai) at (0.6,0.45) {\scriptsize{$\pi_2\cdot w$}};
	\node (bai) at (0.6,-0.45) {\scriptsize{$\pi_1\cdot w$}};
	\node (oi) at (5.2,-0.5) {\scriptsize{$\splitf$}};
	\node (boi) at (4.8,0.5) {\scriptsize{$\product{\alpha}{\beta}$}};
	
	\path[->,font=\scriptsize]
		(frr) edge (r)
		(xy) edge (fxfy)
		(frr) edge (pfr) 
		(pfxy) edge (fxfy)
		(pfxy) edge (fxfy)
		(r) edge node[above]{$r$} (xy)
		(pfr) edge node[below]{$Fr$} (pfxy);
			
\end{tikzpicture}
\end{center}
\end{defi}

The intuition is as follows: $W$ collects witnesses that $R$ is a bisimulation.
In particular, for a given pair $(x,y)$ in $R$, there might be several witnesses.
The fact $\pi_2\cdot w$ is a regular epi guarantees that every pair in $R$ has at least one 
witness.
Of course, we have to prove that this extends plain AM-bisimulations:

\begin{prop}
\label{prop:reg-bisim-instances}
If $\CC$ is a regular category, then
a AM-bisimulation is a regular AM-bisimulation.
If additionally $\CC$ satisfies the regular axiom of choice, then 
a regular AM-bisimulation is a AM-bisimulation.
\end{prop}

\begin{proof}
    \begin{itemize}

	\item Assume that we have a AM-bisimulation
	
	\begin{center}
        \begin{tikzpicture}[scale=1.5]
        		
        	\node (r) at (0,0) {\scriptsize{$R$}};
        	\node (xy) at (3,0.5) {\scriptsize{$X\times Y$}};
        	\node (cr) at (3,-0.5) {\scriptsize{$FR$}};
        	\node (fxfy) at (6,0.5) {\scriptsize{$F(X)\times F(Y)$}};
        	\node (fxy) at (6,-0.5) {\scriptsize{$F(X\times Y)$}};
        	
        	\path[->,font=\scriptsize]
        		(xy) edge node[above]{$\product{\alpha}{\beta}$} (fxfy)
        		(r) edge node[below]{$w$} (cr)
        		(fxy) edge node[right]{$\splitf$} (fxfy)
        		(r) edge node[above]{$r$} (xy)
        		(cr) edge node[below]{$Fr$} (fxy);
        			
        \end{tikzpicture}
        \end{center}
        Then 
    	\begin{center}
    \begin{tikzpicture}[scale=1.5]
    	
    	\node (frr) at (0,0) {\scriptsize{$R$}};	
    	\node (r) at (1.5,0.7) {\scriptsize{$R$}};
    	\node (xy) at (3.5,0.7) {\scriptsize{$X\times Y$}};
    	\node (pfr) at (1.5,-0.7) {\scriptsize{$FR$}};
    	\node (fxfy) at (6,0) {\scriptsize{$F(X)\times F(Y)$}};
    	\node (pfxy) at (3.5,-0.7) {\scriptsize{$F(X\times Y)$}};
    	\node (ai) at (0.3,0.55) {\scriptsize{$\pi_2\cdot \langle w,\id\rangle$}};
    	\node (bai) at (0.3,-0.55) {\scriptsize{$\pi_1\cdot \langle w,\id\rangle$}};
    	\node (oi) at (5.2,-0.6) {\scriptsize{$\splitf$}};
    	\node (boi) at (4.8,0.6) {\scriptsize{$\product{\alpha}{\beta}$}};
    	
    	\path[->,font=\scriptsize]
    		(frr) edge (r)
    		(xy) edge (fxfy)
    		(frr) edge (pfr) 
    		(pfxy) edge (fxfy)
    		(pfxy) edge (fxfy)
    		(r) edge node[above]{$r$} (xy)
    		(pfr) edge node[below]{$Fr$} (pfxy);
    			
    \end{tikzpicture}
    \end{center}
    witnesses $r$ as a regular AM-bisimulation.
    	\item Assume that $\CC$ has the regular axiom of choice
		and that we have a regular AM-bisimulation
    	\begin{center}
    \begin{tikzpicture}[scale=1.5]
    	
    	\node (frr) at (0,0) {\scriptsize{$W$}};	
    	\node (r) at (1.5,0.7) {\scriptsize{$R$}};
    	\node (xy) at (3.5,0.7) {\scriptsize{$X\times Y$}};
    	\node (pfr) at (1.5,-0.7) {\scriptsize{$FR$}};
    	\node (fxfy) at (6,0) {\scriptsize{$F(X)\times F(Y)$}};
    	\node (pfxy) at (3.5,-0.7) {\scriptsize{$F(X\times Y)$}};
    	\node (ai) at (0.6,0.55) {\scriptsize{$\pi_2\cdot w$}};
    	\node (bai) at (0.6,-0.55) {\scriptsize{$\pi_1\cdot w$}};
    	\node (oi) at (5.2,-0.6) {\scriptsize{$\splitf$}};
    	\node (boi) at (4.8,0.6) {\scriptsize{$\product{\alpha}{\beta}$}};
    	
    	\path[->,font=\scriptsize]
    		(frr) edge (r)
    		(xy) edge (fxfy)
    		(frr) edge (pfr) 
    		(pfxy) edge (fxfy)
    		(pfxy) edge (fxfy)
    		(r) edge node[above]{$r$} (xy)
    		(pfr) edge node[below]{$Fr$} (pfxy);
		\path[->, bend left, font=\scriptsize, dashed]
			(r) edge node[below]{$s$} (frr);
    			
    \end{tikzpicture}
    \end{center}
Since $\pi_2\cdot w$ is regular epi so is a split epi by the regular axiom of choice, 
	there is $\map{s}{R}{W}$ such that $\pi_2\cdot w\cdot s = \id$. Now let us prove that
		\begin{center}
			\begin{tikzpicture}[scale=1.5]

				\node (r) at (0,0) {\scriptsize{$R$}};
				\node (xy) at (3,0.5) {\scriptsize{$X\times Y$}};
				\node (cr) at (3,-0.5) {\scriptsize{$FR$}};
				\node (fxfy) at (6,0.5) {\scriptsize{$F(X)\times F(Y)$}};
				\node (fxy) at (6,-0.5) {\scriptsize{$F(X\times Y)$}};

				\path[->,font=\scriptsize]
					(xy) edge node[above]{$\product{\alpha}{\beta}$} (fxfy)
					(r) edge node[below]{$\pi_1\cdot w\cdot s$} (cr)
					(fxy) edge node[right]{$\splitf$} (fxfy)
					(r) edge node[above]{$r$} (xy)
					(cr) edge node[below]{$Fr$} (fxy);

			\end{tikzpicture}
			\end{center}
			witnesses $r$ as an AM-bisimulation.
		\begin{align*} 
		\splitf\cdot Fr\cdot\pi_1\cdot w\cdot s &\mkern9mu =\mkern9mu
		(\product{\alpha}{\beta})\cdot r\cdot\pi_2\cdot w\cdot s
		  \hfill &\text{($r$ regular AM-bisimulation)}\\
		& \mkern9mu=\mkern9mu  (\product{\alpha}{\beta})\cdot r
		  \hfill &\text{(definition of $s$)} \tag*{\qedhere}
	\end{align*}
	\end{itemize}
	\end{proof}
\noindent 
Also, regular bisimulations are closed under composition. This requires a mild condition on $F$
as already observed in \cite{staton11}.

\begin{defi}
We say that $F$ \emph{covers pullbacks} if for every pair of pullbacks:\\
\begin{center}
\begin{tikzpicture}[scale=1.5]
		
	\node (r1r2) at (3,0.8) {\scriptsize{$R$}};
	\node (r1) at (3,0) {\scriptsize{$X$}};
	\node (r2) at (6,0.8) {\scriptsize{$Y$}};
	\node (y) at (6,0) {\scriptsize{$Z$}};
	
	\path[->,font=\scriptsize]
		(r1r2) edge node[left]{$u$} (r1)
		(r1r2) edge node[above]{$v$} (r2)
		(r1) edge node[below]{$f$} (y)
		(r2) edge node[right]{$g$} (y);
		
	\draw (3.2,0.45) -- (3.7,0.45) -- (3.7,0.7);
			
\end{tikzpicture}
\qquad\qquad
\begin{tikzpicture}[scale=1.5]
		
	\node (r1r2) at (3,0.8) {\scriptsize{$R'$}};
	\node (r1) at (3,0) {\scriptsize{$FX$}};
	\node (r2) at (6,0.8) {\scriptsize{$FY$}};
	\node (y) at (6,0) {\scriptsize{$FZ$}};
	
	\path[->,font=\scriptsize]
		(r1r2) edge node[left]{$u'$} (r1)
		(r1r2) edge node[above]{$v'$} (r2)
		(r1) edge node[below]{$Ff$} (y)
		(r2) edge node[right]{$Fg$} (y);
		
	\draw (3.2,0.45) -- (3.7,0.45) -- (3.7,0.7);
			
\end{tikzpicture}
\end{center}
the unique morphism $\map{\gamma}{FR}{R'}$ such that $u'\cdot\gamma=F u$ and $v'\cdot\gamma=F v$
is a regular epi.
\end{defi}

\begin{rem}
When $F$ preserves weak pullbacks, then $F$ covers pullbacks.
When $\CC$ has the regular axiom of choice, then both notions coincide.
\end{rem}

\begin{prop}
\label{prop:composition-reg}
When $F$ covers pullbacks, regular AM-bisimulations are closed under compositions.
\end{prop}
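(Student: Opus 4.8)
The plan is to follow the classical composition proof from Proposition~\ref{prop:bisim-i-cat}, but to replace both the section supplied by the axiom of choice and the weak-pullback lift by a single pullback along the comparison map $\gamma$ coming from the covering hypothesis: the idea is to \emph{collect} all witnesses rather than \emph{pick} one. Fix coalgebras $\map{\alpha}{X}{FX}$, $\map{\beta}{Y}{FY}$, $\map{\delta}{Z}{FZ}$ (using $\delta$ for the third coalgebra so as not to clash with the comparison map $\gamma$ of the covering definition), and witnesses $\mono{w_1}{W_1}{FR_1\times R_1}$, $\mono{w_2}{W_2}{FR_2\times R_2}$ of the regular AM-bisimulations $r_1$ and $r_2$. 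As a preliminary reduction, I would note that it suffices to construct \emph{any} morphism $\map{\pair{a}{b}}{V}{F(\rcomp{R_1}{R_2})\times(\rcomp{R_1}{R_2})}$, not necessarily monic, with $b$ a regular epi and satisfying the regular-AM square with $\pi_1\circ w$, $\pi_2\circ w$ replaced by $a$, $b$. Indeed, taking the (regular epi, mono)-factorisation $\pair{a}{b}=w\circ e$, the map $\pi_2\circ w$ is a regular epi because $(\pi_2\circ w)\circ e=b$ is one (a morphism through which a regular epi factors is again a regular epi), and the square for $w$ follows from that for $\pair{a}{b}$ by cancelling the epi $e$.

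For the domain, recall that $\rcomp{R_1}{R_2}$ is the image, along $e_{\rcomp{r_1}{r_2}}$, of the pullback $R_1\star R_2$ of $\pi_2\circ r_1$ and $\pi_1\circ r_2$, with projections $\mu_1,\mu_2$. I would first form the wide pullback $W=W_1\times_{R_1}(R_1\star R_2)\times_{R_2}W_2$ of $\pi_2\circ w_1,\mu_1$ on one side and $\mu_2,\pi_2\circ w_2$ on the other, with projections $q_1,q_s,q_2$. Since each $\pi_2\circ w_i$ is a regular epi and regular epis in a regular category are pullback-stable and closed under composition, $q_s$ is a regular epi, and hence so is $b:=e_{\rcomp{r_1}{r_2}}\circ q_s$.

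The $F$-component is where the covering hypothesis enters. Projecting the witness square of $w_1$ onto the $FY$-factor and that of $w_2$ onto the $FY$-factor yields $F(\pi_2 r_1)\circ\pi_1 w_1=\beta\circ\pi_2 r_1\circ\pi_2 w_1$ and $F(\pi_1 r_2)\circ\pi_1 w_2=\beta\circ\pi_1 r_2\circ\pi_2 w_2$; together with the pullback equation $\pi_2 r_1\circ\mu_1=\pi_1 r_2\circ\mu_2$ and the defining equations of $W$, this shows that $\pair{\pi_1 w_1\circ q_1}{\pi_1 w_2\circ q_2}$ equalises $F(\pi_2 r_1)$ and $F(\pi_1 r_2)$, hence factors as $\map{\tilde a}{W}{R'}$ through the pullback $R'$ of $F(\pi_2 r_1),F(\pi_1 r_2)$ (with projections $u',v'$). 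Applying that $F$ covers pullbacks to the square defining $R_1\star R_2$, the comparison $\map{\gamma}{F(R_1\star R_2)}{R'}$ (with $u'\circ\gamma=F\mu_1$ and $v'\circ\gamma=F\mu_2$) is a regular epi. Rather than splitting $\gamma$, I would pull it back along $\tilde a$, obtaining $V:=W\times_{R'}F(R_1\star R_2)$ with a regular epi $\map{p}{V}{W}$ and $\map{\theta}{V}{F(R_1\star R_2)}$ satisfying $\gamma\circ\theta=\tilde a\circ p$; then I set $a:=Fe_{\rcomp{r_1}{r_2}}\circ\theta$ and $b':=b\circ p$, which is still a regular epi.

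It then remains to verify the square for $\pair{a}{b'}$ and invoke the preliminary reduction. Using $(\rcomp{r_1}{r_2})\circ e_{\rcomp{r_1}{r_2}}=\pair{\pi_1 r_1\mu_1}{\pi_2 r_2\mu_2}$, the same product computations as in Proposition~\ref{prop:bisim-i-cat}, the identities $F\mu_1\circ\theta=\pi_1 w_1\circ q_1\circ p$ and $F\mu_2\circ\theta=\pi_1 w_2\circ q_2\circ p$ (from $u'\circ\gamma=F\mu_1$, $v'\circ\gamma=F\mu_2$ and the definition of $\tilde a$), and the $FX$- and $FZ$-projections of the two witness squares, I would check the two product components separately; each collapses to $\alpha\circ\pi_1 r_1\circ\mu_1\circ q_s\circ p$, respectively $\delta\circ\pi_2 r_2\circ\mu_2\circ q_s\circ p$, matching both sides. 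The only genuine obstacle is this $F$-component: the covering hypothesis provides no lift of $\tilde a$ through $\gamma$, only that $\gamma$ is a regular epi, so the pullback forming $V$ is precisely the choice-free substitute for both the section of $e_{\rcomp{r_1}{r_2}}$ and the weak-pullback lift of the classical argument, at the cost of enlarging the witness object. Everything else is the same diagram chase, carried out one product-component at a time.
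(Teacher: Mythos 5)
Your proof is correct and takes essentially the same route as the paper's: your wide pullback $W$, the pullback $R'$ with comparison $\gamma$, the map $\tilde a$, and the pullback $V$ of $\gamma$ along $\tilde a$ correspond exactly to the paper's $Q$, $P$ with $e$, $u$, and $W'$, and your preliminary reduction via the (regular epi, mono)-factorisation is precisely how the paper finishes. The only differences are notational.
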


\begin{proof}
Assume that we have two regular AM-bisimulations
    	\begin{center}
    \begin{tikzpicture}[scale=1.5]
    	
    	\node (frr) at (0,0) {\scriptsize{$W_1$}};	
    	\node (r) at (1.5,0.7) {\scriptsize{$R_1$}};
    	\node (xy) at (3.5,0.7) {\scriptsize{$X\times Y$}};
    	\node (pfr) at (1.5,-0.7) {\scriptsize{$FR_1$}};
    	\node (fxfy) at (6,0) {\scriptsize{$F(X)\times F(Y)$}};
    	\node (pfxy) at (3.5,-0.7) {\scriptsize{$F(X\times Y)$}};
    	\node (ai) at (0.4,0.55) {\scriptsize{$\pi_2\cdot w_1$}};
    	\node (bai) at (0.4,-0.55) {\scriptsize{$\pi_1\cdot w_1$}};
    	\node (oi) at (5.2,-0.6) {\scriptsize{$\splitf$}};
    	\node (boi) at (4.8,0.6) {\scriptsize{$\product{\alpha}{\beta}$}};
    	
    	\path[->,font=\scriptsize]
    		(frr) edge (r)
    		(xy) edge (fxfy)
    		(frr) edge (pfr) 
    		(pfxy) edge (fxfy)
    		(pfxy) edge (fxfy)
    		(r) edge node[above]{$r_1$} (xy)
    		(pfr) edge node[below]{$Fr_1$} (pfxy);
    			
    \end{tikzpicture}
    \end{center}
        	\begin{center}
    \begin{tikzpicture}[scale=1.5]
    	
    	\node (frr) at (0,0) {\scriptsize{$W_2$}};	
    	\node (r) at (1.5,0.7) {\scriptsize{$R_2$}};
    	\node (xy) at (3.5,0.7) {\scriptsize{$Y\times Z$}};
    	\node (pfr) at (1.5,-0.7) {\scriptsize{$FR_2$}};
    	\node (fxfy) at (6,0) {\scriptsize{$F(Y)\times F(Z)$}};
    	\node (pfxy) at (3.5,-0.7) {\scriptsize{$F(Y\times Z)$}};
    	\node (ai) at (0.4,0.55) {\scriptsize{$\pi_2\cdot w_2$}};
    	\node (bai) at (0.4,-0.55) {\scriptsize{$\pi_1\cdot w_2$}};
    	\node (oi) at (5.2,-0.6) {\scriptsize{$\splitf$}};
    	\node (boi) at (4.8,0.6) {\scriptsize{$\product{\beta}{\gamma}$}};
    	
    	\path[->,font=\scriptsize]
    		(frr) edge (r)
    		(xy) edge (fxfy)
    		(frr) edge (pfr) 
    		(pfxy) edge (fxfy)
    		(pfxy) edge (fxfy)
    		(r) edge node[above]{$r_2$} (xy)
    		(pfr) edge node[below]{$Fr_2$} (pfxy);
    			
    \end{tikzpicture}
    \end{center}
    and we want to prove that the composition $r_1;r_2$ is also a regular AM-bisimulation.
    This composition is defined by the following pullback and (regular epi, mono)-factorisation:
\begin{center}
\begin{tikzpicture}[scale=1.5]
		
	\node (r1r2) at (3,1) {\scriptsize{$R_1\star R_2$}};
	\node (r1) at (3,0) {\scriptsize{$R_1$}};
	\node (r2) at (6,1) {\scriptsize{$R_2$}};
	\node (y) at (6,0) {\scriptsize{$Y$}};
	
	\path[->,font=\scriptsize]
		(r1r2) edge node[left]{$\mu_1$} (r1)
		(r1r2) edge node[above]{$\mu_2$} (r2)
		(r1) edge node[below]{$\pi_2\cdot r_1$} (y)
		(r2) edge node[right]{$\pi_1\cdot r_2$} (y);
		
	\draw (3.2,0.55) -- (3.7,0.55) -- (3.7,0.8);
			
\end{tikzpicture}
\qquad
\begin{tikzpicture}[scale=1.5]
		
	\node (r1r2) at (0,0) {\scriptsize{$R_1\star R_2$}};
	\node (xz) at (4,0) {\scriptsize{$X\times Z$}};
	\node (r1sqr2) at (2,-1) {\scriptsize{$\rcomp{R_1}{R_2}$}};
	
	\path[->,font=\scriptsize]
		(r1r2) edge node[above]{$\pair{\pi_1\cdot r_1 \cdot \mu_1}{\pi_2\cdot r_2 \cdot \mu_2}$} (xz);
		
	\path[->>,font=\scriptsize]
		(r1r2) edge node[below]{$e_{\rcomp{r_1}{r_2}}$} (r1sqr2);
		
	\path[>->,font=\scriptsize]
		(r1sqr2) edge node[below]{$\rcomp{r_1}{r_2}$} (xz);
			
\end{tikzpicture}
\end{center}
Let us form the following pullback
\begin{center}
\begin{tikzpicture}[scale=1.5]
		
	\node (r1r2) at (3,1) {\scriptsize{$P$}};
	\node (r1) at (3,0) {\scriptsize{$FR_1$}};
	\node (r2) at (6,1) {\scriptsize{$FR_2$}};
	\node (y) at (6,0) {\scriptsize{$FY$}};
	
	\path[->,font=\scriptsize]
		(r1r2) edge node[left]{$\nu_1$} (r1)
		(r1r2) edge node[above]{$\nu_2$} (r2)
		(r1) edge node[below]{$F(\pi_2\cdot r_1)$} (y)
		(r2) edge node[right]{$F(\pi_1\cdot r_2)$} (y);
		
	\draw (3.2,0.55) -- (3.7,0.55) -- (3.7,0.8);
			
\end{tikzpicture}
\end{center}
and since $F$ covers pullbacks, there is a regular epi 
$\epi{e}{F(R_1\star R_2)}{P}$ such that
\[
	\nu_1\cdot e = F\mu_1 \quad \text{and} \quad \nu_2\cdot e = F\mu_2.
\]
Now, form the following three pullbacks:
\begin{center}
\begin{tikzpicture}[scale=1.5]
		
	\node (efx) at (5.5,0) {\scriptsize{$R_1$}};
	\node (fex) at (3,0) {\scriptsize{$W_1$}};
	\node (efpx) at (8,1) {\scriptsize{$R_2$}};
	\node (fepx) at (8,2) {\scriptsize{$W_2$}};
	\node (u) at (5.5,1) {\scriptsize{$R_1\star R_2$}};
	\node (ana) at (5.5,2) {\scriptsize{$\bullet$}};
	\node (anl) at (3,1) {\scriptsize{$\bullet$}};
	\node (v) at (3,2) {\scriptsize{$Q$}};
	
	\path[->>,font=\scriptsize]
		(fex) edge node[below]{$\pi_2\cdot w_1$} (efx)
		(fepx) edge node[right]{$\pi_2\cdot w_2$} (efpx)
		(v) edge (ana)
		(v) edge (anl)
		(ana) edge (u)
		(anl) edge (u);
		
	\path[->,font=\scriptsize]
		(u) edge node[left]{$\mu_1$} (efx)
		(u) edge node[above]{$\mu_2$} (efpx)
		(ana) edge (fepx)
		(anl) edge (fex);
		
	\path[->,font=\scriptsize,bend right = 30]
		(v) edge node[left]{$\kappa_1$} (fex);
		
	\path[->,font=\scriptsize,bend left = 20]
		(v) edge node[above]{$\kappa_2$} (fepx);
		
	\path[->,dashed, font=\scriptsize]
		(v) edge node[above]{$e'$} (u);

	\draw (3.2,0.55) -- (3.7,0.55) -- (3.7,0.8);
	\draw (3.2,1.55) -- (3.7,1.55) -- (3.7,1.8);
	\draw (5.7,1.55) -- (5.95,1.55) -- (5.95,1.8);
			
\end{tikzpicture}
\end{center}
Since $\pi_2\cdot w_1$ and $\pi_2\cdot w_2$ are regular epis, and 
regular epis are closed under pullbacks in a regular category,
the four morphisms forming the top-left pullback are regular epis.
Let us call $\epi{e'}{Q}{R_1\star R_2}$ the composition of those regular epis, 
so that $e'$ is also a regular epi.
Now, the following square commutes:
\begin{center}
\begin{tikzpicture}[scale=1.5]
		
	\node (r1r2) at (3,1) {\scriptsize{$Q$}};
	\node (r1) at (3,0) {\scriptsize{$F(R_1)$}};
	\node (r2) at (6,1) {\scriptsize{$F(R_2)$}};
	\node (y) at (6,0) {\scriptsize{$FY$}};
	
	\path[->,font=\scriptsize]
		(r1r2) edge node[left]{$\pi_1\cdot w_1\cdot \kappa_1$} (r1)
		(r1r2) edge node[above]{$\pi_1\cdot w_2\cdot \kappa_2$} (r2)
		(r1) edge node[below]{$F(\pi_2\cdot r_1)$} (y)
		(r2) edge node[right]{$F(\pi_1\cdot r_2)$} (y);
			
\end{tikzpicture}
\end{center}

Indeed, 
\begin{center}
\begin{tabular}{rclcr}
    $F(\pi_2\cdot r_1)\cdot\pi_1\cdot w_1\cdot\kappa_1$ & $=$ & $\beta\cdot\pi_2\cdot r_1\cdot\pi_2\cdot w_1\cdot\kappa_1$\
    & & \hfill ($r_1$ regular AM-bisimulation)\\
    & $=$ & $\beta\cdot\pi_1\cdot r_2\cdot\pi_2\cdot w_2\cdot\kappa_2$
    & & \hfill (various pullbacks)\\
    & $=$ & $F(\pi_1\cdot r_2)\cdot\pi_1\cdot w_2\cdot\kappa_2$
    & & \hfill ($r_2$ regular AM-bisimulation)
\end{tabular}
\end{center}

Then, by universality of $P$, there is a unique morphism
$\map{u}{Q}{P}$ such that 
\[
	\pi_1\cdot w_1\cdot \kappa_1 = \nu_1\cdot u \quad \text{and} \quad \pi_1\cdot w_2\cdot \kappa_2 = \nu_2\cdot u.
\]

Finally, form the following pullback
\begin{center}
\begin{tikzpicture}[scale=1.5]
		
	\node (r1r2) at (3,1) {\scriptsize{$W'$}};
	\node (r1) at (3,0) {\scriptsize{$F(R_1\star R_2)$}};
	\node (r2) at (6,1) {\scriptsize{$Q$}};
	\node (y) at (6,0) {\scriptsize{$P$}};
	
	\path[->,font=\scriptsize]
		(r1r2) edge node[left]{$v$} (r1)
		(r2) edge node[right]{$u$} (y);
	
	\path[->>,font=\scriptsize]
		(r1r2) edge node[above]{$e''$} (r2)
		(r1) edge node[below]{$e$} (y);
		
	\draw (3.2,0.55) -- (3.7,0.55) -- (3.7,0.8);
			
\end{tikzpicture}
\end{center}
Since $e$ is a regular epi and regular epi are closed under pullbacks in a regular category, 
then $e''$ is also a regular epi.

Now define 
$\map{w'=\langle F(e_{\rcomp{r_1}{r_2}})\cdot v,e_{\rcomp{r_1}{r_2}}\cdot e'\cdot e''\rangle}{W'}{F(\rcomp{R_1}{R_2})\times(\rcomp{R_1}{R_2})}$.
Observe in particular that $\pi_2\cdot w'$ is a regular epi as the composition of regular epis.
Now, take the (regular epi, mono)-factorisation of $w'$, that is, we have $\map{\rho}{W'}{W}$ regular epi and 
$\map{w}{W}{F(\rcomp{R_1}{R_2})\times(\rcomp{R_1}{R_2})}$ mono, such that, $w\cdot\rho = w'$.
Observe that $\pi_2\cdot w\cdot \rho = \pi_2\cdot w'$, and since $\pi_2\cdot w'$ is regular epi, then 
$\pi_2\cdot w$ is regular epi (this is a usual property of regular epis). It remains to prove that the following 
diagram commutes:
        	\begin{center}
    \begin{tikzpicture}[scale=1.5]
    	
    	\node (frr) at (-1,0) {\scriptsize{$W$}};	
    	\node (r) at (0.5,0.7) {\scriptsize{$\rcomp{R_1}{R_2}$}};
    	\node (xy) at (3.5,0.7) {\scriptsize{$X\times Z$}};
    	\node (pfr) at (0.5,-0.7) {\scriptsize{$F(\rcomp{R_1}{R_2})$}};
    	\node (fxfy) at (6,0) {\scriptsize{$F(X)\times F(Z)$}};
    	\node (pfxy) at (3.5,-0.7) {\scriptsize{$F(X\times Z)$}};
    	\node (ai) at (-0.6,0.45) {\scriptsize{$\pi_2\cdot w$}};
    	\node (bai) at (-0.6,-0.45) {\scriptsize{$\pi_1\cdot w$}};
    	\node (oi) at (5.2,-0.6) {\scriptsize{$\splitf$}};
    	\node (boi) at (4.8,0.6) {\scriptsize{$\product{\alpha}{\gamma}$}};
    	
    	\path[->,font=\scriptsize]
    		(frr) edge (r)
    		(xy) edge (fxfy)
    		(frr) edge (pfr) 
    		(pfxy) edge (fxfy)
    		(pfxy) edge (fxfy)
    		(r) edge node[above]{$\rcomp{r_1}{r_2}$} (xy)
    		(pfr) edge node[below]{$F(\rcomp{r_1}{r_2})$} (pfxy);
    			
    \end{tikzpicture}
    \end{center}
   Let us prove 
   \[
   	\alpha\cdot\pi_1\cdot\rcomp{r_1}{r_2}\cdot\pi_2\cdot w = 
	F(\pi_1\cdot\rcomp{r_1}{r_2})\cdot\pi_1\cdot w,
   \]
   the other side is similar.
Since $\rho$ is epi, it is then enough to prove that
\[
	\alpha\cdot\pi_1\cdot\rcomp{r_1}{r_2}\cdot\pi_2\cdot w' =
	\alpha\cdot\pi_1\cdot\rcomp{r_1}{r_2}\cdot\pi_2\cdot w\cdot\rho = 
	F(\pi_1\cdot\rcomp{r_1}{r_2})\cdot\pi_1\cdot w\cdot\rho =
	F(\pi_1\cdot\rcomp{r_1}{r_2})\cdot\pi_1\cdot w'.
\]
Indeed,
\begin{center}
\begin{align*} 
     & \alpha\cdot\pi_1\cdot\rcomp{r_1}{r_2}\cdot\pi_2\cdot w' \\
     =\quad & \alpha\cdot\pi_1\cdot\rcomp{r_1}{r_2}\cdot e_{\rcomp{r_1}{r_2}}\cdot e'\cdot e''
     & \hfill \text{(definition of $w'$)}&\\
     =\quad & \alpha\cdot\pi_1\cdot\pair{\pi_1\cdot r_1 \cdot \mu_1}{\pi_2\cdot r_2 \cdot \mu_2}\cdot e'\cdot e''
     & \hfill \text{(definition of $\rcomp{r_1}{r_2}$)}&\\
     =\quad & \alpha\cdot\pi_1\cdot r_1 \cdot \mu_1\cdot e'\cdot e''
     & \hfill \text{(computation)}&\\
     = \quad & \alpha\cdot\pi_1\cdot r_1 \cdot \pi_2\cdot w_1\cdot\kappa_1\cdot e''
     & \hfill \text{(definition of $e'$ and $\kappa_1$)}&\\
     = \quad & F(\pi_1\cdot r_1)\cdot\pi_1\cdot w_1\cdot\kappa_1\cdot e''
     & \hfill \text{($r_1$ is regular AM-bisimulation)}&\\
     = \quad& F(\pi_1\cdot r_1)\cdot\nu_1\cdot u\cdot e''
     & \hfill \text{(definition of $u$)}&\\
     = \quad& F(\pi_1\cdot r_1)\cdot\nu_1\cdot e\cdot v
     & \hfill \text{(definition of $e''$ and $v$)}&\\
     = \quad& F(\pi_1\cdot r_1\cdot\mu_1)\cdot v
     & \hfill \text{(definition of $e$)}&\\
     = \quad& F(\pi_1\cdot \rcomp{r_1}{r_2}\cdot e_{\rcomp{r_1}{r_2}})\cdot v
     & \hfill \text{(definition of $\rcomp{r_1}{r_2}$)}&\\
     = \quad& F(\pi_1\cdot \rcomp{r_1}{r_2})\cdot\pi_1\cdot w'
     & \hfill \text{(definition of $w'$)} &\qedhere
\end{align*}
\end{center}
\end{proof}

In \cite{staton11}, Staton described conditions for several coalgebraic notions of bisimulations to coincide.
In this picture, AM-bisimulations were quite weak, as they would coincide with other notions
only under some form of the axiom of choice (again). Here, we will show that the picture is much nicer 
with regular AM-bisimulations.

Let us recall two notions with which we will compare regular AM-bisimulations.

\begin{defi}
A relation from $X$ to $Y$ is a \emph{Hermida-Jacobs bisimulation} (HJ-bisimulation for short)
from $\map{\alpha}{X}{FX}$ to $\map{\beta}{Y}{FY}$ if if there is a mono $\mono{r}{R}{X\times Y}$
representing it and a morphism $\map{w}{R}{\overline{F}R}$ where 
$\overline{F}R$ is obtained by the (epi, mono)-factorisation on the left, and such that the square on the 
right commutes:
\begin{center}
\begin{tikzpicture}[scale=1.3]
		
	\node (r1r2) at (0,0) {\scriptsize{$FR$}};
	\node (xz) at (6,0) {\scriptsize{$FX\times FY$}};
	\node (r1sqr2) at (3,-0.8) {\scriptsize{$\overline{F}R$}};
	
	\path[->,font=\scriptsize]
		(r1r2) edge node[above]{$\splitf\cdot Fr$} (xz);
		
	\path[->>,font=\scriptsize]
		(r1r2) edge node[below]{$e_r$} (r1sqr2);
		
	\path[>->,font=\scriptsize]
		(r1sqr2) edge node[below]{$m_r$} (xz);
			
\end{tikzpicture}
\quad\quad
\begin{tikzpicture}[scale=1.3]
		
	\node (r1r2) at (3,0.8) {\scriptsize{$R$}};
	\node (r1) at (3,0) {\scriptsize{$X\times Y$}};
	\node (r2) at (6,0.8) {\scriptsize{$\overline{F}R$}};
	\node (y) at (6,0) {\scriptsize{$FX\times FY$}};
	
	\path[->,font=\scriptsize]
		(r1r2) edge node[left]{$r$} (r1)
		(r1r2) edge node[above]{$w$} (r2)
		(r1) edge node[below]{$\product{\alpha}{\beta}$} (y)
		(r2) edge node[right]{$m_r$} (y);
			
\end{tikzpicture}
\end{center}

A relation is a \emph{behavioural equivalence} from 
$\map{\alpha}{X}{FX}$ to $\map{\beta}{Y}{FY}$ 
if it is represented by a pullback of 
coalgebra homomorphisms, that is, if there are a coalgebra 
$\map{\gamma}{Z}{FZ}$ and two coalgebra homomorphisms
$\map{f}{\alpha}{\gamma}$ and $\map{g}{\beta}{\gamma}$ such that 
the mono $\mono{\langle u,v\rangle}{R}{X\times Y}$ obtained from 
their pullback in $\CC$ represents it.
\begin{center}
\begin{tikzpicture}[scale=1.5]
		
	\node (r1r2) at (3,0.8) {\scriptsize{$R$}};
	\node (r1) at (3,0) {\scriptsize{$X$}};
	\node (r2) at (6,0.8) {\scriptsize{$Y$}};
	\node (y) at (6,0) {\scriptsize{$Z$}};
	
	\path[->,font=\scriptsize]
		(r1r2) edge node[left]{$u$} (r1)
		(r1r2) edge node[above]{$v$} (r2)
		(r1) edge node[below]{$f$} (y)
		(r2) edge node[right]{$g$} (y);
		
	\draw (3.2,0.45) -- (3.7,0.45) -- (3.7,0.7);
			
\end{tikzpicture}
\end{center}
\end{defi}

\begin{thm}
\label{theo:reg-equivalences}
Assume that $\CC$ is a regular category. Then:
\begin{itemize}
	\item a relation is a regular AM-bisimulation if and only if it is a HJ-bisimulation,
	\item if $\CC$ has pushouts, then a regular AM-bisimulation is included in 
		a behavioural equivalence,
	\item if $F$ covers pullbacks, then a behavioural equivalence is a regular AM-bisimulation.
\end{itemize}
\end{thm}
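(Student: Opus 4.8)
The plan is to treat the three items separately, each time either exploiting a (regular epi, mono)-factorisation or forming a suitable pullback or pushout, and relying throughout on the two standard facts about regular categories that regular epis are stable under pullback and that a regular epi is the coequaliser of its own kernel pair. Everything below avoids any choice principle. For the first item, I write $\splitf\circ Fr = m_r\circ e_r$ for the image factorisation defining $\overline{F}R$. Given a regular AM-bisimulation with witness $\mono{w}{W}{FR\times R}$, the defining square reads $(\product{\alpha}{\beta})\circ r\circ(\pi_2\circ w) = m_r\circ e_r\circ(\pi_1\circ w)$. I would show that $e_r\circ\pi_1\circ w:W\to\overline{F}R$ coequalises the kernel pair $k_1,k_2$ of the regular epi $\pi_2\circ w$: precomposing with each $k_i$ and then with the mono $m_r$ gives $(\product{\alpha}{\beta})\circ r\circ(\pi_2\circ w)\circ k_i$, and these agree since $\pi_2\circ w$ already equalises $k_1,k_2$. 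Hence $e_r\circ\pi_1\circ w$ factors as $\overline{w}\circ(\pi_2\circ w)$, and $\overline{w}:R\to\overline{F}R$ is the HJ-witness, because $m_r\circ\overline{w} = (\product{\alpha}{\beta})\circ r$ after cancelling the epi $\pi_2\circ w$. Conversely, given an HJ-witness $\map{\overline{w}}{R}{\overline{F}R}$, I pull $\overline{w}$ back along $e_r$: the projections $p,q$ form a jointly monic span, hence a mono $\pair{p}{q}$ into $FR\times R$, the leg $q$ is a regular epi by stability, and the defining square follows at once from $e_r\circ p=\overline{w}\circ q$, $m_r\circ\overline{w}=(\product{\alpha}{\beta})\circ r$, and $m_r\circ e_r=\splitf\circ Fr$.

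For the second item, I form the pushout $Z$ of $\pi_1\circ r:R\to X$ and $\pi_2\circ r:R\to Y$, with coprojections $f,g$. The coalgebra structure on $Z$ is produced from $Ff\circ\alpha:X\to FZ$ and $Fg\circ\beta:Y\to FZ$ by the universal property of the pushout, which requires their equality on $R$, namely $Ff\circ\alpha\circ\pi_1\circ r = Fg\circ\beta\circ\pi_2\circ r$. Since $\pi_2\circ w$ is epi, it is enough to check this after precomposing with it; using the two component equations read off from the AM-square, the two sides become $F(f\circ\pi_1\circ r)\circ(\pi_1\circ w)$ and $F(g\circ\pi_2\circ r)\circ(\pi_1\circ w)$, which coincide because $f\circ\pi_1\circ r = g\circ\pi_2\circ r$ is exactly the commuting pushout square. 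The universal property then yields $\gamma$ making $f,g$ coalgebra homomorphisms, so the pullback of $f,g$ is a behavioural equivalence; and $R$ is contained in it, since $r=\pair{\pi_1\circ r}{\pi_2\circ r}$ factors through the pullback mono.

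For the third item, I start from coalgebra homomorphisms $f,g$ into $\gamma$ with the relation presented as the pullback $\mono{\pair{u}{v}}{P}{X\times Y}$, and let $P'$ be the pullback of $Ff$ and $Fg$, with projections $u',v'$ and mono $\pair{u'}{v'}$ into $FX\times FY$. The homomorphism squares give $Ff\circ\alpha\circ u = Fg\circ\beta\circ v$, so $(\product{\alpha}{\beta})\circ\pair{u}{v}$ factors through $P'$ via some $\map{h}{P}{P'}$. Here covering of pullbacks enters: it supplies a regular epi $\epi{e}{FP}{P'}$ with $u'\circ e=Fu$ and $v'\circ e=Fv$. Pulling $h$ back along $e$ gives $W$ with projections $p,q$, where $q$ is a regular epi by stability and $\pair{p}{q}$ is a mono into $FP\times P$. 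The AM-square reduces, after cancelling the mono $\pair{u'}{v'}$, to $h\circ q = e\circ p$, which is precisely the defining pullback equation of $W$; hence $\pair{u}{v}$ is a regular AM-bisimulation.

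I expect the main obstacle to be the second item. Unlike items one and three, which are essentially formal transports along a single factorisation, item two genuinely constructs a new coalgebra, and the only leverage available to force the two candidate structure maps on $X$ and $Y$ to agree on $R$ is that $\pi_2\circ w$ is a regular epi. This is exactly the point where ``collecting all witnesses'' replaces ``picking one'', and where the absence of the axiom of choice would otherwise bite; getting the pushout and the epi cancellation to line up correctly is the delicate step.
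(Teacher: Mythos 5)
Your proposal is correct and follows essentially the same route as the paper: item one is the diagonal fill-in between the regular epi $\pi_2\circ w$ and the mono $m_r$ (you unpack the fill-in via the kernel-pair/coequaliser argument where the paper simply invokes functoriality of the factorisation, and your converse is the same pullback of the HJ-witness along $e_r$), item two is the same pushout construction with epi-cancellation along $\pi_2\circ w$, and item three is the same use of pullback covering followed by pulling the comparison map back along the supplied regular epi. The only nit is a terminological slip in item one ("precomposing with the mono $m_r$" should be post-composing), which does not affect the argument.
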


\begin{proof}
Let us prove that regular AM-bisimulations coincide with HJ-bisimulations.
\begin{itemize}
	\item Let us assume that we have a regular AM-bisimulation
    	\begin{center}
    \begin{tikzpicture}[scale=1.5]
    	
    	\node (frr) at (0,0) {\scriptsize{$W$}};	
    	\node (r) at (1.5,0.7) {\scriptsize{$R$}};
    	\node (xy) at (3.5,0.7) {\scriptsize{$X\times Y$}};
    	\node (pfr) at (1.5,-0.7) {\scriptsize{$FR$}};
    	\node (fxfy) at (6,0) {\scriptsize{$F(X)\times F(Y)$}};
    	\node (pfxy) at (3.5,-0.7) {\scriptsize{$F(X\times Y)$}};
    	\node (ai) at (0.4,0.55) {\scriptsize{$\pi_2\cdot w$}};
    	\node (bai) at (0.4,-0.55) {\scriptsize{$\pi_1\cdot w$}};
    	\node (oi) at (5.2,-0.6) {\scriptsize{$\splitf$}};
    	\node (boi) at (4.8,0.6) {\scriptsize{$\product{\alpha}{\beta}$}};
    	
    	\path[->,font=\scriptsize]
    		(frr) edge (r)
    		(xy) edge (fxfy)
    		(frr) edge (pfr) 
    		(pfxy) edge (fxfy)
    		(pfxy) edge (fxfy)
    		(r) edge node[above]{$r$} (xy)
    		(pfr) edge node[below]{$Fr$} (pfxy);
    			
    \end{tikzpicture}
    \end{center}
	Then the following diagram (outer rectangle) commutes:
\begin{center}
\begin{tikzpicture}[scale=1.5]
		
	\node (w) at (3,1) {\scriptsize{$W$}};
	\node (r) at (3,0) {\scriptsize{$R$}};
	\node (xy) at (3,-1) {\scriptsize{$X\times Y$}};
	\node (fr) at (6,1) {\scriptsize{$FR$}};
	\node (bfr) at (6,0) {\scriptsize{$\overline{F}R$}};
	\node (fxfy) at (6,-1) {\scriptsize{$FX\times FY$}};
	
	\path[->,font=\scriptsize]
		(w) edge node[above]{$\pi_1\cdot w$} (fr)
		(xy) edge node[below]{$\product{\alpha}{\beta}$} (fxfy);
	
	\path[->>,font=\scriptsize]
		(w) edge node[left]{$\pi_2\cdot w$} (r)
		(fr) edge node[right]{$e_r$} (bfr);
		
	\path[>->,font=\scriptsize]
		(r) edge node[left]{$r$} (xy)
		(bfr) edge node[right]{$m_r$} (fxfy);
		
	\path[->,dashed,font=\scriptsize]
		(r) edge node[above]{$w'$} (bfr);
		
	\path[->,bend left = 60,font=\scriptsize]
		(fr) edge node[right]{$\splitf\cdot Fr$} (fxfy);
			
\end{tikzpicture}
\end{center}
	since $r$ is a regular AM-bisimulation.
	Furthermore, by definition $\pi_2\cdot w$ and $e_r$ are regular epis, and $r$ and $m_r$ are monos.
	So by functoriality of the (regular epi, mono)-factorisation, 
	there is $\map{w'}{R}{\overline{F}R}$ as above (dashed).
	The lower square witnesses that $r$ is an HJ-bisimulation.
	\item Assume that $r$ is an HJ-bisimulation
\begin{center}
\begin{tikzpicture}[scale=1.5]
		
	\node (r1r2) at (3,1) {\scriptsize{$R$}};
	\node (r1) at (3,0) {\scriptsize{$X\times Y$}};
	\node (r2) at (8,1) {\scriptsize{$\overline{F}R$}};
	\node (y) at (8,0) {\scriptsize{$FX\times FY$}};
	
	\path[->,font=\scriptsize]
		(r1r2) edge node[left]{$r$} (r1)
		(r1r2) edge node[above]{$w$} (r2)
		(r1) edge node[below]{$\product{\alpha}{\beta}$} (y)
		(r2) edge node[right]{$m_r$} (y);
			
\end{tikzpicture}
\end{center}
Form the following pullback:
\begin{center}
\begin{tikzpicture}[scale=1.5]
		
	\node (r1r2) at (3,1) {\scriptsize{$W$}};
	\node (r1) at (3,0) {\scriptsize{$R$}};
	\node (r2) at (6,1) {\scriptsize{$FR$}};
	\node (y) at (6,0) {\scriptsize{$\overline{F}R$}};
	
	\path[->,font=\scriptsize]
		(r1r2) edge node[above]{$u$} (r2)
		(r1) edge node[below]{$w$} (y);
	
	\path[->>,font=\scriptsize]
		(r1r2) edge node[left]{$e$} (r1)
		(r2) edge node[right]{$e_r$} (y);
		
	\draw (3.2,0.55) -- (3.7,0.55) -- (3.7,0.8);
			
\end{tikzpicture}
\end{center}
Since $e_r$ is a regular epi and regular epis are closed under pullbacks in a regular category
then $e$ is a regular epi.
If we define $\map{w' = \langle u,e\rangle}{W}{FR\times R}$, then 
$w'$ is a mono. Indeed, if we fix $\map{\phi,\psi}{Z}{W}$, such 
that $w'\cdot\phi = w'\cdot\psi$, then $\phi$ and $\psi$ are 
morphisms of cones from 
$(Z,u\cdot\phi=u\cdot\psi,e\cdot\phi=e\cdot\psi)$ to 
$(W,u,e)$. By universality of the pullback, such a morphism of cones
is unique, so $\phi = \psi$.
It remains to prove that the following diagram commutes
    	\begin{center}
    \begin{tikzpicture}[scale=1.5]
    	
    	\node (frr) at (0,0) {\scriptsize{$W$}};	
    	\node (r) at (1.5,0.7) {\scriptsize{$R$}};
    	\node (xy) at (3.5,0.7) {\scriptsize{$X\times Y$}};
    	\node (pfr) at (1.5,-0.7) {\scriptsize{$FR$}};
    	\node (fxfy) at (6,0) {\scriptsize{$F(X)\times F(Y)$}};
    	\node (pfxy) at (3.5,-0.7) {\scriptsize{$F(X\times Y)$}};
    	\node (ai) at (0.6,0.55) {\scriptsize{$e$}};
    	\node (bai) at (0.6,-0.55) {\scriptsize{$u$}};
    	\node (oi) at (5.2,-0.6) {\scriptsize{$\splitf$}};
    	\node (boi) at (4.8,0.6) {\scriptsize{$\product{\alpha}{\beta}$}};
    	
    	\path[->,font=\scriptsize]
    		(frr) edge (r)
    		(xy) edge (fxfy)
    		(frr) edge (pfr) 
    		(pfxy) edge (fxfy)
    		(pfxy) edge (fxfy)
    		(r) edge node[above]{$r$} (xy)
    		(pfr) edge node[below]{$Fr$} (pfxy);
    			
    \end{tikzpicture}
    \end{center}
Let us do it for $\alpha$, 
\begin{center}
\begin{tabular}{rclcr}
    $\alpha\cdot\pi_1\cdot r\cdot e$ 
    & $=$ & $\pi_1\cdot m_r\cdot w\cdot e$
    & & \hfill ($r$ is HJ bisimulation)\\
    & $=$ & $\pi_1\cdot m_r\cdot e_r\cdot u$
    & & \hfill (definition of $e$ and $u$)\\
    & $=$ & $\pi_1\cdot \splitf\cdot Fr\cdot u$
    & & \hfill (definition of $e_r$ and $m_r$)\\
    & $=$ & $F(\pi_1\cdot r) \cdot Fr\cdot u$
    & & \hfill (computation)
\end{tabular}
\end{center}
\end{itemize}
\noindent 
At this point we could just invoke \cite{staton11} to conclude, 
but we provide dedicated proofs here.

Let us assume that $\CC$ has pushouts and assume that we have 
a regular AM-bisimulation
\begin{center}
\begin{tikzpicture}[scale=1.5]
	
	\node (frr) at (0,0) {\scriptsize{$W$}};	
	\node (r) at (1.5,0.7) {\scriptsize{$R$}};
	\node (xy) at (3.5,0.7) {\scriptsize{$X\times Y$}};
	\node (pfr) at (1.5,-0.7) {\scriptsize{$FR$}};
	\node (fxfy) at (6,0) {\scriptsize{$F(X)\times F(Y)$}};
	\node (pfxy) at (3.5,-0.7) {\scriptsize{$F(X\times Y)$}};
	\node (ai) at (0.6,0.55) {\scriptsize{$\pi_2\cdot w$}};
	\node (bai) at (0.6,-0.55) {\scriptsize{$\pi_1\cdot w$}};
	\node (oi) at (5.2,-0.6) {\scriptsize{$\splitf$}};
	\node (boi) at (4.8,0.6) {\scriptsize{$\product{\alpha}{\beta}$}};
	
	\path[->,font=\scriptsize]
		(frr) edge (r)
		(xy) edge (fxfy)
		(frr) edge (pfr) 
		(pfxy) edge (fxfy)
		(pfxy) edge (fxfy)
		(r) edge node[above]{$r$} (xy)
		(pfr) edge node[below]{$Fr$} (pfxy);
			
\end{tikzpicture}
\end{center}
Form the following pushout:
\begin{center}
\begin{tikzpicture}[scale=1.5]
		
	\node (r1r2) at (3,1) {\scriptsize{$R$}};
	\node (r1) at (3,0) {\scriptsize{$X$}};
	\node (r2) at (6,1) {\scriptsize{$Y$}};
	\node (y) at (6,0) {\scriptsize{$Z$}};
	
	\path[->,font=\scriptsize]
		(r1r2) edge node[above]{$\pi_2\cdot r$} (r2)
		(r1) edge node[below]{$f$} (y)
		(r1r2) edge node[left]{$\pi_1\cdot r$} (r1)
		(r2) edge node[right]{$g$} (y);
		
	\draw (5.8,0.45) -- (5.3,0.45) -- (5.3,0.2);
			
\end{tikzpicture}
\end{center}
Now, forming the pullback 
\begin{center}
\begin{tikzpicture}[scale=1.5]
		
	\node (r1r2) at (3,1) {\scriptsize{$R'$}};
	\node (r1) at (3,0) {\scriptsize{$X$}};
	\node (r2) at (6,1) {\scriptsize{$Y$}};
	\node (y) at (6,0) {\scriptsize{$Z$}};
	
	\path[->,font=\scriptsize]
		(r1r2) edge node[above]{$u$} (r2)
		(r1) edge node[below]{$f$} (y)
		(r1r2) edge node[left]{$v$} (r1)
		(r2) edge node[right]{$g$} (y);
		
	\draw (3.2,0.55) -- (3.7,0.55) -- (3.7,0.8);
			
\end{tikzpicture}
\end{center}
by universality of this pullback, there is a unique morphism $\map{\kappa}{R}{R'}$ such that
\[
	r = \langle u,v\rangle\cdot\kappa,
\]
witnessing that $r \leq \langle u,v\rangle$ as monos, that is, the relation represented by $r$
is included in the relation represented by $\langle u,v\rangle$. To conclude, 
it remains to prove that $\langle u,v\rangle$ represents a behavioural equivalence, that is, there exists 
a coalgebra structure $\map{\gamma}{Z}{FZ}$ making $f$ and $g$ coalgebra homomorphisms.
Let us prove that the following square commutes
\begin{center}
\begin{tikzpicture}[scale=1.5]
		
	\node (r1r2) at (3,1) {\scriptsize{$R$}};
	\node (r1) at (3,0) {\scriptsize{$X$}};
	\node (r2) at (6,1) {\scriptsize{$Y$}};
	\node (y) at (6,0) {\scriptsize{$FZ$}};
	
	\path[->,font=\scriptsize]
		(r1r2) edge node[above]{$\pi_2\cdot r$} (r2)
		(r1) edge node[below]{$Ff\cdot\alpha$} (y)
		(r1r2) edge node[left]{$\pi_1\cdot r$} (r1)
		(r2) edge node[right]{$Fg\cdot\beta$} (y);
			
\end{tikzpicture}
\end{center}
Since $\pi_2\cdot w$ is epi it is enough to prove that
\[
	Ff\cdot\alpha\cdot\pi_1\cdot r\cdot\pi_2\cdot w = 
	Fg\cdot\beta\cdot\pi_2\cdot r\cdot\pi_2\cdot w.
\]
Indeed,
\begin{center}
\begin{tabular}{rclcr}
    $Ff\cdot\alpha\cdot\pi_1\cdot r\cdot\pi_2\cdot w$ 
    & $=$ & $F(f\cdot\pi_1\cdot r)\cdot\pi_1\cdot w$
    & & \hfill ($r$ is AM-bisimulation)\\
    & $=$ & $F(g\cdot\pi_2\cdot r)\cdot\pi_1\cdot w$
    & & \hfill (definition of $f$ and $g$)\\
    & $=$ & $Fg\cdot\beta\cdot\pi_2\cdot r\cdot\pi_2\cdot w$
    & & \hfill ($r$ is AM-bisimulation)
\end{tabular}
\end{center}
By universality of $Z$ as a pushout, there is a unique 
$\map{\gamma}{Z}{FZ}$ such that
\[
	\gamma\cdot f = Ff\cdot \alpha \quad \text{and} \quad
	\gamma\cdot g = Fg\cdot\beta,
\]
that is $f$ and $g$ are coalgebra homomorphisms.

Finally, let us assume that $F$ covers pullbacks and that we have a behavioural equivalence
\begin{center}
\begin{tikzpicture}[scale=1.5]
		
	\node (r1r2) at (3,1) {\scriptsize{$R$}};
	\node (r1) at (3,0) {\scriptsize{$X$}};
	\node (r2) at (6,1) {\scriptsize{$Y$}};
	\node (y) at (6,0) {\scriptsize{$Z$}};
	
	\path[->,font=\scriptsize]
		(r1r2) edge node[left]{$u$} (r1)
		(r1r2) edge node[above]{$v$} (r2)
		(r1) edge node[below]{$f$} (y)
		(r2) edge node[right]{$g$} (y);
		
	\draw (3.2,0.55) -- (3.7,0.55) -- (3.7,0.8);
			
\end{tikzpicture}
\end{center}
with $\map{f}{\alpha}{\gamma}$ and $\map{g}{\beta}{\gamma}$ coalgebra homomorphisms.
Form the following pullback
\begin{center}
\begin{tikzpicture}[scale=1.5]
		
	\node (r1r2) at (3,1) {\scriptsize{$P$}};
	\node (r1) at (3,0) {\scriptsize{$FX$}};
	\node (r2) at (6,1) {\scriptsize{$FY$}};
	\node (y) at (6,0) {\scriptsize{$FZ$}};
	
	\path[->,font=\scriptsize]
		(r1r2) edge node[left]{$\mu$} (r1)
		(r1r2) edge node[above]{$\nu$} (r2)
		(r1) edge node[below]{$Ff$} (y)
		(r2) edge node[right]{$Fg$} (y);
		
	\draw (3.2,0.55) -- (3.7,0.55) -- (3.7,0.8);
			
\end{tikzpicture}
\end{center}
Since $F$ covers pullbacks, there is a regular epi $\epi{e}{FR}{P}$ such that
\[
	Fu = \mu\cdot e \quad \text{and} \quad Fv = \nu\cdot e.
\]
Now the following square commutes
\begin{center}
\begin{tikzpicture}[scale=1.5]
		
	\node (r1r2) at (3,1) {\scriptsize{$R$}};
	\node (r1) at (3,0) {\scriptsize{$FX$}};
	\node (r2) at (6,1) {\scriptsize{$FY$}};
	\node (y) at (6,0) {\scriptsize{$FZ$}};
	
	\path[->,font=\scriptsize]
		(r1r2) edge node[left]{$\alpha\cdot u$} (r1)
		(r1r2) edge node[above]{$\beta\cdot v$} (r2)
		(r1) edge node[below]{$Ff$} (y)
		(r2) edge node[right]{$Fg$} (y);
			
\end{tikzpicture}
\end{center}
Indeed,
\begin{center}
\begin{tabular}{rclcr}
    $Ff\cdot\alpha\cdot u$ 
    & $=$ & $\gamma\cdot f\cdot u$
    & & \hfill ($f$ is coalgebra homomorphism)\\
    & $=$ & $\gamma\cdot g\cdot v$
    & & \hfill (definition of $u$ and $v$)\\
    & $=$ & $Fg\cdot\beta\cdot v$
    & & \hfill ($g$ is coalgebra homomorphism)
\end{tabular}
\end{center}
By universality of $P$ as a pullback, there is a unique $\map{\theta}{R}{P}$ such that
\[
	\alpha\cdot u = \mu \cdot \theta \quad \text{and} \quad
	\beta\cdot v = \nu\cdot\theta.
\]
Then form the following pullback:
\begin{center}
\begin{tikzpicture}[scale=1.5]
		
	\node (r1r2) at (3,1) {\scriptsize{$W$}};
	\node (r1) at (3,0) {\scriptsize{$FR$}};
	\node (r2) at (6,1) {\scriptsize{$P$}};
	\node (y) at (6,0) {\scriptsize{$R$}};
	
	\path[->,font=\scriptsize]
		(r1r2) edge node[left]{$\theta'$} (r1)
		(r2) edge node[right]{$\theta$} (y);
		
	\path[->>,font=\scriptsize]
		(r1r2) edge node[above]{$e'$} (r2)
		(r1) edge node[below]{$e$} (y);
		
	\draw (3.2,0.55) -- (3.7,0.55) -- (3.7,0.8);
			
\end{tikzpicture}
\end{center}
Since $e$ is a regular epi, and regular epis are closed under pullbacks in a regular category,
$e'$ is also a regular epi.
So it remains to prove that the following diagram commutes
\begin{center}
\begin{tikzpicture}[scale=1.5]
	
	\node (frr) at (0,0) {\scriptsize{$W$}};	
	\node (r) at (1.5,0.7) {\scriptsize{$R$}};
	\node (xy) at (3.5,0.7) {\scriptsize{$X\times Y$}};
	\node (pfr) at (1.5,-0.7) {\scriptsize{$FR$}};
	\node (fxfy) at (6,0) {\scriptsize{$F(X)\times F(Y)$}};
	\node (pfxy) at (3.5,-0.7) {\scriptsize{$F(X\times Y)$}};
	\node (ai) at (0.6,0.55) {\scriptsize{$e'$}};
	\node (bai) at (0.6,-0.55) {\scriptsize{$\theta'$}};
	\node (oi) at (5.2,-0.6) {\scriptsize{$\splitf$}};
	\node (boi) at (4.8,0.6) {\scriptsize{$\product{\alpha}{\beta}$}};
	
	\path[->,font=\scriptsize]
		(frr) edge (r)
		(xy) edge (fxfy)
		(frr) edge (pfr) 
		(pfxy) edge (fxfy)
		(pfxy) edge (fxfy)
		(r) edge node[above]{$\langle u,v \rangle$} (xy)
		(pfr) edge node[below]{$F\langle u,v \rangle$} (pfxy);
			
\end{tikzpicture}
\end{center}
Let us prove it for $\alpha$ (the other side is similar):
\begin{center}
\begin{align*} 
    \alpha\cdot u\cdot e'
    & \quad=\quad  \mu\cdot\theta\cdot e'
     & \hfill \text{(definition of $\theta$)}\\
    & \quad=\quad  \mu\cdot e\cdot \theta'
     & \hfill \text{(definition of $\theta'$ and $e'$)}\\
    & \quad=\quad  Fu\cdot\theta'
     & \hfill \text{(definition of $e$)} \tag*{\qedhere}
\end{align*}
\end{center}
\end{proof}

\noindent 
In Section~\ref{sec:Aczel-Mendler-bisimulations}, we described that AM-bisimilarity coincides with 
the existence of a span of coalgebra homomorphisms. This can also be formulated in 
the context of regular AM-bisimulations. The witness $\mono{w}{W}{FR\times R}$ 
can be seen as a coalgebra in $\Rel{\CC}$ (although $F$ is technically not a functor on it). 
The coalgebra $\map{\alpha}{X}{FX}$ can also 
be seen as a coalgebra in $\Rel{\CC}$ as $\mono{\langle \alpha,\id\rangle}{X}{FX\times X}$. 
Then $\pi_1\cdot r$ can be seen as 
a coalgebra homomorphism from $w$ to $\alpha$, since the following diagram commutes
\begin{center}
\begin{tikzpicture}[scale=1.5]
		
	\node (r1r2) at (3,0.8) {\scriptsize{$W$}};
	\node (r1) at (3,0) {\scriptsize{$X$}};
	\node (r2) at (6,0.8) {\scriptsize{$FR\times R$}};
	\node (y) at (6,0) {\scriptsize{$FX\times X$}};
	
	\path[->,font=\scriptsize]
		(r1r2) edge node[left]{$\pi_1\cdot r \cdot\pi_2\cdot w$} (r1)
		(r2) edge node[right]{$\product{F(\pi_1\cdot r)}{\pi_1\cdot r}$} (y);
		
	\path[>->,font=\scriptsize]
		(r1) edge node[below]{$\langle\alpha,\id\rangle$} (y)
		(r1r2) edge node[above]{$w$} (r2);
			
\end{tikzpicture}
\end{center}
Regular AM-bisimulations can be interpreted as spans of coalgebra 
homomorphisms in $\Rel{\CC}$.

\section{The Relational Essence of Power-Objects in a Topos}
\label{sec:folklore}

In this section, we investigate toposes and their power-objects in a purely relational way.
The gain is that some ingredients of the proof, particularly the precise correspondence 
between composition of relations and Kleisli composition, will be used later on.
From this observation, we (re)prove that 1) power-objects form a commutative monad 
whose Kleisli category is isomorphic to the category of relations, 
2) power-objects behave well with epis, 
3) under some mild conditions on a monad in terms of weak pullbacks and epis, there is a (weak) distributive
law with respect to the power-object monad.
During the proofs, we will denote by $\monop{f}$ the mono part of 
the (epi, mono)-factorisation of $f$.

This section is mostly directed at coalgebraists who are not very familiar 
with toposes. The results here are known (sometimes folklore) but 
scattered in the rich literature. However, the proofs of the statements 
as presented in this section, which we call ``relational'' as they 
only rely on properties of relations, could not be found anywhere.
In total, this section should be seen an an advertisement that 
1) many things that are done in coalgebra in $\Set$ with the powerset
functor can be done automatically in any topos with the power-object 
functor, and 2) anyone intersted in toposes should invest in learning 
about the internal logic of a topos, as this makes the rather 
technical relational proofs much more concise.

\subsection{Toposes, as Relation Classifiers}

\begin{defi}
A topos is a finitely complete category with power-objects. The latter 
condition means that for every object $X$, there is a mono 
$\mono{\belong{X}}{E_X}{X\times\pow{X}}$ such that for every mono of the form 
$\mono{m}{R}{X\times Y}$ there is a unique morphism 
$\map{\xi_m}{Y}{\pow{X}}$ such that there is a pullback diagram of the form:
\begin{center}
\begin{tikzpicture}[scale=2]
		
	\node (r1r2) at (3,1) {\scriptsize{$R$}};
	\node (r1) at (3,0) {\scriptsize{$X\times Y$}};
	\node (r2) at (6,1) {\scriptsize{$E_X$}};
	\node (y) at (6,0) {\scriptsize{$X\times\pow{X}$}};
	
	\path[>->,font=\scriptsize]
		(r1r2) edge node[left]{$m$} (r1)
		(r2) edge node[right]{$\belong{X}$} (y);
		
	\path[->,font=\scriptsize]
		(r1r2) edge node[above]{$\theta_m$} (r2)
		(r1) edge node[below]{$\product{\id}{\xi_m}$} (y);
		
	\draw (3.2,0.55) -- (3.7,0.55) -- (3.7,0.8);
			
\end{tikzpicture}
\end{center}
\end{defi}
Here $\theta_m$ is not required to be unique, only $\xi_m$ is.
This formulation passes to relations since $\xi_m=\xi_{m'}$ if and only if $m$ and $m'$ represent the same 
relation $r$. In that case, we will write $\xi_r$ for $\xi_m=\xi_{m'}$.
Another formulation of toposes uses sub-object classifiers
which can be recovered as 
$\truth=~\belong{\terminal}\,:\,\terminal\simeq E_\terminal\to\terminal\times\pow{\terminal}\simeq\pow{\terminal}$. 
The formulation by power-objects implies that a topos is closed, which is not the case for the one by 
sub-object classifiers.
Conversely, $\pow{X}$ is equal to $\Omega^X$ and 
$\belong{X}$ is any mono corresponding to the evaluation morphism 
$X\times\Omega^X\to\Omega$ of the cartesian-closed structure.

\begin{exa}
In $\Set$, $\pow{X}$ is given by the usual power-set and $E_X$ is the 
subset of $X\times\pow{X}$ consisting of pairs $(x,U)$ such that $x \in U$.
In $\Sha$-the Schanuel topos $\Sha$ \cite{lawvere89}, equivalent to the 
category of nominal sets and equivariant 
functions-$\pow{X}$ is the nominal set of finitely supported subsets of $X$.
In $\Eff$-the effective topos~\cite{hyland82}, intuitively, the category of effective sets and computable 
functions-$\pow{X}$ is intuitively given by the set of decidable subsets of $X$ 
(although the formal description is much more abstract).
\end{exa}

\subsection{The Power-Object Monad}

The following is a folklore result about power-objects that can be proved, for example, by noticing 
that the proof in $\Set$ does not use either the law of excluded-middle nor the axiom of choice,
and the fact that any such statement is true in any topos:
\begin{thm} 
\label{thm:pow-monad}
In a topos $\CC$,
$\pow{\!}$ extends to a commutative monad whose Kleisli category is isomorphic to 
the category of relations $\Rel{\CC}$.
\end{thm}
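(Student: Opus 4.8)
The plan is to make the naming bijection furnished by the power-object axiom the central device, and to transport the already-established category structure of $\Rel{\CC}$ through it, so that the monad laws and the Kleisli isomorphism come out together. For objects $X$ and $Y$ the axiom gives a bijection $r\mapsto\xi_r$ between subobjects of $\product{X}{Y}$ and morphisms $\map{\xi_r}{Y}{\pow{X}}$. Composing with the converse (dagger) involution of $\Rel{\CC}$ orients this as a bijection $\Phi_{X,Y}\,:\,\Rel{\CC}(X,Y)\cong\CC(X,\pow{Y})$, $r\mapsto\xi_{r^\dagger}$, which matches the hom-sets of $\Rel{\CC}$ with those of the Kleisli category $\mathbf{Kl}(\pow{\!})$. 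The first thing I would prove is the substitution lemma: $\Phi$ is natural in the parameter object, i.e. pulling a subobject of $\product{X}{Y}$ back along $\product{\id}{g}$ corresponds to postcomposing its name with $g$,
\[
\xi_{(\product{\id}{g})^{*}r}=\xi_r\cdot g .
\]
This is a routine pullback-pasting argument together with the uniqueness clause of the power-object axiom, and it is the fact on which everything else rests.

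Next I would read off the monad data as names of canonical relations built from the membership relation $\varepsilon_X$ represented by $\mono{\belong{X}}{E_X}{\product{X}{\pow{X}}}$ and from graphs of morphisms. The unit $\map{\eta_X}{X}{\pow{X}}$ (singleton) is the name of the identity relation $\Delta_X$; the functorial action $\map{\pow{f}}{\pow{X}}{\pow{Y}}$ on $\map{f}{X}{Y}$ is the name of the direct-image relation $\rcomp{\varepsilon_X^\dagger}{\pair{\id}{f}}$, read with the appropriate variance; and the multiplication $\map{\mu_X}{\pow{\pow{X}}}{\pow{X}}$ (union) is the name of the two-step membership relation $\rcomp{\varepsilon_X}{\varepsilon_{\pow{X}}}$. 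Using Theorem~\ref{th:maps}, which identifies the morphisms of $\CC$ with the maps of $\Rel{\CC}$, functoriality of $\pow{\!}$ (that $\pow{\id}=\id$ and that $\pow{\!}$ respects composition) reduces to elementary identities for the composition of graphs in the allegory.

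The heart of the argument, and the part the paper wishes to isolate, is that $\Phi$ carries relation composition to Kleisli composition. I would prove
\[
\Phi(\rcomp{r}{s})\;=\;\mu\cdot\pow{\Phi(s)}\cdot\Phi(r),
\]
by unfolding the left-hand side through the pullback-then-image definition of composition in $\Rel{\CC}$ and the right-hand side through the definitions of $\pow{(-)}$ and $\mu$ as names; the substitution lemma is exactly what lets the pullback defining $\rcomp{r}{s}$ be pushed through the names. Granting this, $\Phi$ is a bijective-on-objects, fully faithful functor $\Rel{\CC}\to\mathbf{Kl}(\pow{\!})$, and since $\Rel{\CC}$ is a category its associativity and unit laws transport verbatim into the monad laws for $(\pow{\!},\eta,\mu)$. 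Thus the monad and the Kleisli isomorphism fall out simultaneously, with no appeal to the internal logic.

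Finally, for commutativity I would exhibit the strength $\map{\strength{X}{Y}}{\product{X}{\pow{Y}}}{\pow{(\product{X}{Y})}}$ relationally, as the name of the relation pairing a point of $X$ with a subset of $Y$, and likewise the costrength $\costrength{X}{Y}$, and then identify the two composites $\product{\pow{X}}{\pow{Y}}\to\pow{(\product{X}{Y})}$ they induce. Both are names of the product relation $r\times s$, so their equality follows from the symmetry of the product construction on relations rather than from any pointwise argument. I expect the main obstacle to be bookkeeping: keeping the variance and dagger conventions consistent so that the correspondence with Kleisli composition holds on the nose rather than up to a twist, and—relatedly—verifying the commutativity diagram purely relationally, since that is the one genuinely computational step and the temptation to revert to reasoning in $\Set$ is strongest there.
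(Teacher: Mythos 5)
Your proposal follows essentially the same route as the paper: the monad data are defined as names of the same canonical relations (diagonal for $\eta$, direct image for $\pow{f}$, the composite $\rcomp{\belong{X}}{\belong{\pow{X}}}$ for $\mu$), the crux is the same correspondence between relation composition and Kleisli composition (the paper's Proposition~\ref{prop:composition}), the monad laws are transported from $\Rel{\CC}$ exactly as in Lemma~\ref{lem:pow-mon-naturality}, and commutativity is handled relationally via the strength. Be aware only that the composition correspondence, which you rightly identify as the heart of the matter, needs more than the substitution lemma — the paper's proof leans essentially on pullback-stability of the (regular epi, mono)-factorisation and a nontrivial comparison of two image factorisations — but this is a matter of filling in the sketch, not of a different approach.
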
 
\noindent 
During the course of this section, we will give an elementary and relational proof of this statement.

Let us describe some parts of this statement that will be useful in the following discussion.
First, the structure of a \emph{covariant} functor (not to be confused with the 
contravariant structure that is also sometimes used) is given as follows.
Given a morphism $\map{f}{X}{Y}$, $\map{\pow{f}}{\pow{X}}{\pow{Y}}$ is 
defined as follows. Consider first the following (epi, mono)-factorisation:
\begin{center}
\begin{tikzpicture}[scale=1.5]
		
	\node (r1r2) at (0,0) {\scriptsize{$E_X$}};
	\node (xz) at (5,0) {\scriptsize{$Y\times\pow{X}$}};
	\node (r1sqr2) at (2.5,-1) {\scriptsize{$E_f$}};
	
	\path[->,font=\scriptsize]
		(r1r2) edge node[above]{$(\product{f}{\id})\cdot\belong{X}$} (xz);
		
	\path[->>,font=\scriptsize]
		(r1r2) edge node[below]{$e_f$} (r1sqr2);
		
	\path[>->,font=\scriptsize]
		(r1sqr2) edge node[below]{$m_f$} (xz);
			
\end{tikzpicture}
\end{center}
Then $\map{\pow{f}}{\pow{X}}{\pow{Y}}$ is the unique morphism corresponding to $m_f$.

The unit $\map{\eta_X}{X}{\pow{X}}$ is defined as $\xi_{\Delta_X}$, that is, the unique morphism such that there is a pullback of the form:
\begin{center}
\begin{tikzpicture}[scale=1.5]
		
	\node (r1r2) at (3,1) {\scriptsize{$X$}};
	\node (r1) at (3,0) {\scriptsize{$X\times X$}};
	\node (r2) at (8,1) {\scriptsize{$X$}};
	\node (y) at (8,0) {\scriptsize{$X\times\pow{X}$}};
	
	\path[->,font=\scriptsize]
		(r1r2) edge node[left]{$\pair{\id}{\id}$} (r1)
		(r1r2) edge node[above]{$\theta_X$} (r2)
		(r1) edge node[below]{$\product{\id}{\eta_X}$} (y)
		(r2) edge node[right]{$\belong{X}$} (y);
		
	\draw (3.2,0.55) -- (3.7,0.55) -- (3.7,0.8);
			
\end{tikzpicture}
\end{center}
for some $\theta_X$.
The multiplication $\map{\mu_X}{\pow{\pow{X}}}{\pow{X}}$ is  
defined as the unique morphism associated with the composition of 
relations $\rcomp{\belong{X}}{\belong{\pow{X}}}$. 
In diagrams, this means that 
we form a similar pattern of pullback followed by (epi, mono)-factorisation:
\begin{center}
\begin{tikzpicture}[scale=1.1]
		
	\node (r1r2) at (3,1) {\scriptsize{$E^3_X$}};
	\node (r1) at (3,0) {\scriptsize{$E_X$}};
	\node (r2) at (7,1) {\scriptsize{$E_{\pow{X}}$}};
	\node (y) at (7,0) {\scriptsize{$\pow{X}$}};
	
	\path[->,font=\scriptsize]
		(r1r2) edge node[left]{$\kappa_{1,X}$} (r1)
		(r1r2) edge node[above]{$\kappa_{2,X}$} (r2)
		(r1) edge node[below]{$\pi_2\cdot\belong{X}$} (y)
		(r2) edge node[right]{$\pi_1\cdot\belong{\pow{X}}$} (y);
		
	\draw (3.2,0.55) -- (3.7,0.55) -- (3.7,0.8);
			
\end{tikzpicture}
\qquad
\begin{tikzpicture}[scale=1.1]
		
	\node (r1r2) at (0,0) {\scriptsize{$E^3_X$}};
	\node (xz) at (6,0) {\scriptsize{$X\times\pow{\pow{X}}$}};
	\node (r1sqr2) at (3,-1) {\scriptsize{$E^2_X$}};
	
	\path[->,font=\scriptsize]
		(r1r2) edge node[above]{$\langle \pi_1\cdot\belong{X}\cdot\kappa_{1,X}, \pi_2\cdot\belong{\pow{X}}\cdot\kappa_{2,X}\rangle$} (xz);
		
	\path[->>,font=\scriptsize]
		(r1r2) edge node[below]{$\rho_X$} (r1sqr2);
		
	\path[>->,font=\scriptsize]
		(r1sqr2) edge node[below]{$\in^2_X$} (xz);
			
\end{tikzpicture}
\end{center}
and define $\mu_X$ as the unique morphism $\xi_{\in^2_X}$.

\subsection{The Kleisli Category is the Allegory of Relations}

The operator $\xi$ obtained from the definition connects a topos 
with the opposite of its category of relations. It maps a relation from $X$ to $Y$ 
to a morphism of the form $Y\to \pow{X}$, that is, a Kleisli morphism for 
$\pow{\!}$. The definition of a topos means that this is a 
one-to-one correspondence. To show that the Kleisli category and the opposite of the 
category of relations coincide, it is then enough that the composition and 
the identities are preserved by the operator $\xi$. For the identities, it is 
by design: the identities of the Kleisli category are given by the units, which 
are \emph{defined} as $\xi_{\Delta_X}$, and the diagonals are the identity 
relations. 

The only remaining part is then about compositions.
This is the main technical result of this section. In plain words, the following proposition 
means that $\xi$ maps the opposite of the composition of relations to the Kleisli 
composition:
\begin{prop}
\label{prop:composition}
Given two relations, $r$ from $X$ to $Y$ and $s$ from $Y$ to $Z$, 
$
	\xi_{\rcomp{r}{s}} = \mu_X\cdot\pow{\xi_r}\cdot\xi_s.
$
\end{prop}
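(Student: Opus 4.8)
The statement is an identity between two morphisms $Y \times Z \longrightarrow \pow{X}$ — wait, let me recompute the type. $\xi_{\rcomp{r}{s}}$ classifies a relation from $X$ to $Z$, so it has type $Z \to \pow{X}$. On the right, $\xi_s : Z \to \pow{Y}$, then $\pow{\xi_r} : \pow{Y} \to \pow{\pow{X}}$ (using $\xi_r : Y \to \pow{X}$ and covariant functoriality), and $\mu_X : \pow{\pow{X}} \to \pow{X}$, so the composite indeed lands in $\pow{X}$. Good. Since the defining data of a topos gives a bijection between monos (up to the relation they represent) into $X \times Z$ and morphisms $Z \to \pow{X}$ via $\xi$, two such morphisms are equal if and only if they classify the same subobject of $X \times Z$. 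So the cleanest strategy is \emph{not} to manipulate $\mu_X$, $\pow{\xi_r}$, $\xi_s$ directly, but to identify the relation that the right-hand side classifies and check it is $\rcomp{r}{s}$.

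$$\text{Concretely, I would compute the pullback of } \belong{X} \text{ along } \product{\id}{(\mu_X \cdot \pow{\xi_r} \cdot \xi_s)}$$

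and show it is (the subobject represented by) $\rcomp{r}{s}$.

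\textbf{Key steps.}
First I would unfold the definition of $\mu_X$ as the classifier of $\rcomp{\belong{X}}{\belong{\pow{X}}}$, so that precomposing its classified relation with the map $\pow{\xi_r}\cdot\xi_s$ pulls back $\in^2_X$ along $\product{\id}{(\pow{\xi_r}\cdot\xi_s)}$. The heart of the argument is then a Beck--Chevalley-style manipulation: pulling back the composite relation $\rcomp{\belong{X}}{\belong{\pow{X}}}$ along a morphism into $\pow{\pow{X}}$ should agree with composing the individual pullbacks of $\belong{X}$ and $\belong{\pow{X}}$, because relation composition is defined by a pullback followed by image factorisation, and in a regular category this factorisation is pullback-stable (the image-preservation property recalled in Section~\ref{sec:relations}). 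Next I would use that $\pow{\xi_r}$ is itself defined via an image factorisation of $(\product{\xi_r}{\id})\cdot\belong{Y}$, so that pulling $\belong{\pow{X}}$ back along $\product{\id}{\pow{\xi_r}}$ recovers the relation classified by $\xi_r$, namely $r$ again up to the change of carrier along $\xi_r$. Chaining these, the right-hand side classifies the composite $\rcomp{r}{s}$, which is exactly $\xi_{\rcomp{r}{s}}$ by uniqueness of the classifier.

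\textbf{Main obstacle.}
The delicate point is the interaction between the covariant functorial action $\pow{f}$ (which is defined through an \emph{image}, i.e.\ an epi-mono factorisation, not a pullback) and the pullback that defines the classifier. Establishing that pulling $\belong{Y}$ back along $\product{\id}{\xi_r}$ and then pushing forward along $\pow{\xi_r}$ behaves as relational composition requires exactly the pullback-stability of the regular factorisation, together with the fact that $\xi$ is a \emph{bijection} so that equalities of classified subobjects transfer to equalities of morphisms. I expect most of the work to consist in carefully tracking which squares are pullbacks and invoking pullback-stability of images at each image factorisation, rather than in any single hard construction; the payoff is that no choice principle is ever needed, since every step is a pullback or a canonical factorisation. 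The cleanest write-up will likely phrase the whole computation at the level of subobjects of $X \times Z$ and only invoke uniqueness of $\xi$ once, at the very end.
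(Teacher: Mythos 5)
Your plan is correct and follows essentially the same route as the paper: reduce the identity to showing that $\mu_X\cdot\pow{\xi_r}\cdot\xi_s$ classifies the subobject $\rcomp{r}{s}$ of $X\times Z$, by pasting the pullback squares defining $\mu_X$, $\pow{\xi_r}$ and $\xi_s$ and invoking pullback-stability of the (regular epi, mono)-factorisation at each image, then concluding by uniqueness of the classifier. The paper carries out exactly this, with the "suitable pullback" you allude to constructed explicitly (via intermediate objects obtained from the pullbacks defining $\widehat{S}$ and $R\square S$) and the image comparison done by exhibiting an epimorphism between the two candidate covers of $\rcomp{R}{S}$; your Beck--Chevalley-style step is the allegorical repackaging of that diagram chase.
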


The proof is quite technical and relies on a lot of diagram chasing.

\begin{proof}
The main trick is to prove that we have a pullback of the form
\begin{center}
\begin{tikzpicture}[scale=1.5]
		
	\node (r1r2) at (3,1) {\scriptsize{$\rcomp{R}{S}$}};
	\node (r1) at (3,0) {\scriptsize{$X\times Z$}};
	\node (r2) at (8,1) {\scriptsize{$E_X^2$}};
	\node (y) at (8,0) {\scriptsize{$X\times\pow{\pow{X}}$}};
	
	\path[->,font=\scriptsize]
		(r1r2) edge node[left]{$\rcomp{r}{s}$} (r1)
		(r1r2) edge node[above]{} (r2)
		(r1) edge node[below]{$\product{\id}{(\pow{\xi_r}\cdot\xi_s)}$} (y)
		(r2) edge node[right]{$\belongsq{X}$} (y);
		
	\draw (3.2,0.55) -- (3.7,0.55) -- (3.7,0.8);
			
\end{tikzpicture}
\end{center}
by using the preservation of the image by pullback on a suitable pullback.
 Then considering the following composition of pullbacks
\begin{center}
\begin{tikzpicture}[scale=1.5]
		
	\node (r1r2) at (2,1) {\scriptsize{$\rcomp{R}{S}$}};
	\node (r1) at (2,0) {\scriptsize{$X\times Z$}};
	\node (r2) at (5.5,1) {\scriptsize{$E_X^2$}};
	\node (y) at (5.5,0) {\scriptsize{$X\times\pow{\pow{X}}$}};
	\node (ez) at (9,1) {\scriptsize{$E_X$}};
	\node (zpz) at (9,0) {\scriptsize{$X\times\pow{X}$}};
	
	\path[->,font=\scriptsize]
		(r1r2) edge node[left]{$\rcomp{r}{s}$} (r1)
		(r1r2) edge  (r2)
		(r1) edge node[below]{$\product{\id}{(\pow{\xi_r}\cdot\xi_s)}$} (y)
		(r2) edge node[left]{$\belongsq{X}$} (y)
		(r2) edge (ez)
		(y) edge node[below]{$\product{\id}{\mu_X}$} (zpz)
		(ez) edge node[right]{$\belong{X}$} (zpz);
		
	\draw (2.2,0.55) -- (2.45,0.55) -- (2.45,0.8);
	\draw (5.7,0.55) -- (5.95,0.55) -- (5.95,0.8);
			
\end{tikzpicture}
\end{center}
does the job.

First, let us describe the pullbacks and the factorisations we have 
by assumption, to introduce notations. 
By definition of $\xi_r$ and $\xi_s$, we have the following two pullbacks:
\begin{center}
\begin{tikzpicture}[scale=1.5]
		
	\node (r1r2) at (3,1) {\scriptsize{$R$}};
	\node (r1) at (3,0) {\scriptsize{$X\times Y$}};
	\node (r2) at (5.5,1) {\scriptsize{$E_X$}};
	\node (y) at (5.5,0) {\scriptsize{$X\times\pow{X}$}};
	
	\path[->,font=\scriptsize]
		(r1r2) edge node[left]{$r$} (r1)
		(r1r2) edge node[above]{$\theta_r$} (r2)
		(r1) edge node[below]{$\product{\id}{\xi_r}$} (y)
		(r2) edge node[right]{$\belong{X}$} (y);
		
	\draw (3.2,0.55) -- (3.45,0.55) -- (3.45,0.8);
			
\end{tikzpicture}
\qquad\qquad
\begin{tikzpicture}[scale=1.5]
		
	\node (r1r2) at (3,1) {\scriptsize{$S$}};
	\node (r1) at (3,0) {\scriptsize{$Y\times Z$}};
	\node (r2) at (5.5,1) {\scriptsize{$E_Y$}};
	\node (y) at (5.5,0) {\scriptsize{$Y\times\pow{Y}$}};
	
	\path[->,font=\scriptsize]
		(r1r2) edge node[left]{$s$} (r1)
		(r1r2) edge node[above]{$\theta_s$} (r2)
		(r1) edge node[below]{$\product{\id}{\xi_s}$} (y)
		(r2) edge node[right]{$\belong{Y}$} (y);
		
	\draw (3.2,0.55) -- (3.45,0.55) -- (3.45,0.8);
			
\end{tikzpicture}
\end{center}
By definition of $\belongsq{X}$ we have the following pullback 
and factorisation:
\begin{center}
\begin{tikzpicture}[scale=1.5]
		
	\node (r1r2) at (3,1) {\scriptsize{$E^3_X$}};
	\node (r1) at (3,0) {\scriptsize{$E_X$}};
	\node (r2) at (5,1) {\scriptsize{$E_{\pow{X}}$}};
	\node (y) at (5,0) {\scriptsize{$\pow{X}$}};
	
	\path[->,font=\scriptsize]
		(r1r2) edge node[left]{$\kappa_1$} (r1)
		(r1r2) edge node[above]{$\kappa_2$} (r2)
		(r1) edge node[below]{$\pi_2\cdot\belong{X}$} (y)
		(r2) edge node[right]{$\pi_1\cdot\belong{\pow{X}}$} (y);
		
	\draw (3.2,0.55) -- (3.45,0.55) -- (3.45,0.8);
			
\end{tikzpicture}
\qquad\qquad
\begin{tikzpicture}[scale=1.5]
		
	\node (r1r2) at (0,0) {\scriptsize{$E^3_X$}};
	\node (xz) at (3.4,0) {\scriptsize{$X\times\pow{\pow{X}}$}};
	\node (r1sqr2) at (1.7,-1) {\scriptsize{$E^2_X$}};
	
	\path[->,font=\scriptsize]
		(r1r2) edge node[above]{$\pair{\pi_1\cdot\belong{X}\cdot\kappa_1}{\pi_2\cdot\belong{\pow{X}}\cdot\kappa_2}$} (xz);
		
	\path[->>,font=\scriptsize]
		(r1r2) edge node[below]{$\rho_X$} (r1sqr2);
		
	\path[>->,font=\scriptsize]
		(r1sqr2) edge node[below]{$\belongsq{X}$} (xz);
			
\end{tikzpicture}
\end{center}
By definition of $\pow{\xi_r}$, we have the following factorisation 
and the pullback:
\begin{center}
\begin{tikzpicture}[scale=1.5]
	
	\node (r1r2) at (0,0) {\scriptsize{$E_Y$}};
	\node (xz) at (3,0) {\scriptsize{$\pow{X}\times\pow{Y}$}};
	\node (r1sqr2) at (1.5,-1) {\scriptsize{$E_{\xi_r}$}};
	
	\path[->,font=\scriptsize]
		(r1r2) edge node[above]{$\product{\xi_r}{\id}\cdot\belong{Y}$} (xz);
		
	\path[->>,font=\scriptsize]
		(r1r2) edge node[below]{$e_{\xi_r}$} (r1sqr2);
		
	\path[>->,font=\scriptsize]
		(r1sqr2) edge node[below]{$m_{\xi_r}$} (xz);
			
\end{tikzpicture}
\qquad\qquad
\begin{tikzpicture}[scale=1.5]
		
	\node (r1r2) at (3,1) {\scriptsize{$E_{\xi_r}$}};
	\node (r1) at (3,0) {\scriptsize{$\pow{X}\times\pow{Y}$}};
	\node (r2) at (5.5,1) {\scriptsize{$E_{\pow{X}}$}};
	\node (y) at (5.5,0) {\scriptsize{$\pow{X}\times\pow{\pow{X}}$}};
	
	\path[->,font=\scriptsize]
		(r1r2) edge node[left]{$m_{\xi_r}$} (r1)
		(r1r2) edge node[above]{$\theta_{\xi_r}$} (r2)
		(r1) edge node[below]{$\product{\id}{\pow{\xi_r}}$} (y)
		(r2) edge node[right]{$\belong{\pow{X}}$} (y);
		
	\draw (3.2,0.55) -- (3.45,0.55) -- (3.45,0.8);
			
\end{tikzpicture}
\end{center}
Finally, by definition of $\rcomp{r}{s}$ we have the following pullback 
and factorisation:
\begin{center}
\begin{tikzpicture}[scale=1.5]
		
	\node (r1r2) at (3,1) {\scriptsize{$R\star S$}};
	\node (r1) at (3,0) {\scriptsize{$R$}};
	\node (r2) at (5,1) {\scriptsize{$S$}};
	\node (y) at (5,0) {\scriptsize{$Y$}};
	
	\path[->,font=\scriptsize]
		(r1r2) edge node[left]{$\mu_1$} (r1)
		(r1r2) edge node[above]{$\mu_2$} (r2)
		(r1) edge node[below]{$\pi_2\cdot r$} (y)
		(r2) edge node[right]{$\pi_1\cdot s$} (y);
		
	\draw (3.2,0.55) -- (3.45,0.55) -- (3.45,0.8);
			
\end{tikzpicture}
\qquad\qquad
\begin{tikzpicture}[scale=1.5]
	
	\node (r1r2) at (0,0) {\scriptsize{$R\star S$}};
	\node (xz) at (3,0) {\scriptsize{$X\times Z$}};
	\node (r1sqr2) at (1.5,-1) {\scriptsize{$\rcomp{R}{S}$}};
	
	\path[->,font=\scriptsize]
		(r1r2) edge node[above]{$\pair{\pi_1\cdot r\cdot\mu_1}{\pi_2\cdot s\cdot\mu_2}$} (xz);
		
	\path[->>,font=\scriptsize]
		(r1r2) edge node[below]{$\rho$} (r1sqr2);
		
	\path[>->,font=\scriptsize]
		(r1sqr2) edge node[below]{$\rcomp{r}{s}$} (xz);
			
\end{tikzpicture}
\end{center}

Now, let us describe the suitable pullback we want to look at. It is defined in several steps. First, form the following two pullbacks:
\begin{center}
\begin{tikzpicture}[scale=1.5]
		
	\node (r1r2) at (3,1) {\scriptsize{$\widehat{S}$}};
	\node (r1) at (3,0) {\scriptsize{$\pow{X}\times Z$}};
	\node (r2) at (5.5,1) {\scriptsize{$E_{\xi_r}$}};
	\node (y) at (5.5,0) {\scriptsize{$\pow{X}\times\pow{Y}$}};
	
	\path[->,font=\scriptsize]
		(r1r2) edge node[left]{$\widehat{s}$} (r1)
		(r1r2) edge node[above]{$\widehat{\theta_s}$} (r2)
		(r1) edge node[below]{$\product{\id}{\xi_s}$} (y)
		(r2) edge node[right]{$m_{\xi_r}$} (y);
		
	\draw (3.2,0.55) -- (3.45,0.55) -- (3.45,0.8);
			
\end{tikzpicture}
\qquad\qquad
\begin{tikzpicture}[scale=1.5]
		
	\node (r1r2) at (3,1) {\scriptsize{$R\square S$}};
	\node (r1) at (3,0) {\scriptsize{$E_X$}};
	\node (r2) at (5.5,1) {\scriptsize{$\widehat{S}$}};
	\node (y) at (5.5,0) {\scriptsize{$\pow{X}$}};
	
	\path[->,font=\scriptsize]
		(r1r2) edge node[left]{$\epsilon_1$} (r1)
		(r1r2) edge node[above]{$\epsilon_2$} (r2)
		(r1) edge node[below]{$\pi_2\cdot\belong{X}$} (y)
		(r2) edge node[right]{$\pi_1\cdot\widehat{s}$} (y);
		
	\draw (3.2,0.55) -- (3.45,0.55) -- (3.45,0.8);
			
\end{tikzpicture}
\end{center}
Our suitable pullback will have the following form:
\begin{center}
\begin{tikzpicture}[scale=1.5]
		
	\node (r1r2) at (2,1) {\scriptsize{$R\square S$}};
	\node (r1) at (2,0) {\scriptsize{$X\times Z$}};
	\node (r2) at (5.3,1) {\scriptsize{$E_3^X$}};
	\node (y) at (5.3,0) {\scriptsize{$X\times\pow{\pow{X}}$}};
	
	\path[->,font=\scriptsize]
		(r1r2) edge node[left]{$\pair{\pi_1\cdot\belong{X}\cdot\epsilon_1}{\pi_2\cdot\widehat{s}\cdot\epsilon_2}$} (r1)
		(r1r2) edge node[above]{$w$} (r2)
		(r1) edge node[below]{$\product{\id}{(\pow{\xi_r}\cdot\xi_s)}$} (y)
		(r2) edge node[right]{$\pair{\pi_1\cdot\belong{X}\cdot\kappa_1}{\pi_2\cdot\belong{\pow{X}}\cdot\kappa_2}$} (y);
		
	\draw (2.2,0.55) -- (2.45,0.55) -- (2.45,0.8);
			
\end{tikzpicture}
\end{center}
for some $w$ we describe now. 
We have the following commutative diagram:
\begin{center}
\begin{tikzpicture}[scale=1.5]
		
	\node (r1r2) at (3,1) {\scriptsize{$R\square S$}};
	\node (r1) at (3,0) {\scriptsize{$E_X$}};
	\node (r2) at (8,1) {\scriptsize{$E_{\pow{X}}$}};
	\node (y) at (8,0) {\scriptsize{$\pow{X}$}};
	
	\path[->,font=\scriptsize]
		(r1r2) edge node[left]{$\epsilon_1$} (r1)
		(r1r2) edge node[above]{$\theta_{\xi_r}\cdot\widehat{\theta_s}\cdot\epsilon_2$} (r2)
		(r1) edge node[below]{$\pi_2\cdot\belong{X}$} (y)
		(r2) edge node[right]{$\pi_1\cdot\belong{\pow{X}}$} (y);
			
\end{tikzpicture}
\end{center}
Indeed,
\begin{center}
\begin{tabular}{rclcr}
    $\pi_1\cdot\belong{\pow{X}}\cdot\theta_{\xi_r}\cdot\widehat{\theta_s}\cdot\epsilon_2$ & $=$ & $\pi_1\cdot m_{\xi_r}\cdot\widehat{\theta_s}\cdot\epsilon_2$
    & & \hfill (definition of $\pow{\xi_r}$)\\
    & $=$ & $\pi_1\cdot \widehat{s}\cdot\epsilon_2$
    & & \hfill (definition of $\widehat{S}$)\\
    & $=$ & $\pi_2\cdot\belong{X}\cdot\epsilon_1$
    & & \hfill (definition of $R\square S$)
\end{tabular}
\end{center}
So by the universal property of $E_X^3$, there is a unique morphism 
$\map{w}{R\square S}{E_X^3}$ such that 
\[\kappa_1\cdot w = \epsilon_1 
	\quad\text{and}\quad 
	\kappa_2\cdot w = \theta_{\xi_r}\cdot\widehat{\theta_s}\cdot\epsilon_2.\]

Let us prove that the suitable pullback is indeed a pullback. First it is a commutative diagram:
\begin{center}
\begin{tabular}{rclcr}
    $\pi_1\cdot\belong{X}\cdot\kappa_1\cdot w$ & $=$ & $\pi_1\cdot\belong{X}\cdot\epsilon_1$
    & & \hfill (definition of $w$)\\
    $\pi_2\cdot\belong{\pow{X}}\cdot\kappa_2\cdot w$ & $=$ & $\pi_2\cdot\belong{\pow{X}}\cdot \theta_{\xi_r}\cdot\widehat{\theta_s}
    \cdot\epsilon_2$
    & & \hfill (definition of $w$)\\
    & $=$ & $\pow{\xi_r}\cdot\pi_2\cdot m_{\xi_r}\cdot\widehat{\theta_s}\cdot\epsilon_2$
    & & \hfill (definition of $\pow{\xi_r}$)\\
    & $=$ & $\pow{\xi_r}\cdot\xi_s\cdot\pi_2\cdot\widehat{s}\cdot\epsilon_2$
    & & \hfill (definition of $\widehat{S}$)
\end{tabular}
\end{center}
Now, assume given another commutative diagram of the form:
\begin{center}
\begin{tikzpicture}[scale=1.5]
		
	\node (r1r2) at (3,1) {\scriptsize{$W$}};
	\node (r1) at (3,0) {\scriptsize{$X\times Z$}};
	\node (r2) at (7,1) {\scriptsize{$E_3^X$}};
	\node (y) at (7,0) {\scriptsize{$X\times\pow{\pow{X}}$}};
	
	\path[->,font=\scriptsize]
		(r1r2) edge node[left]{$\psi$} (r1)
		(r1r2) edge node[above]{$\phi$} (r2)
		(r1) edge node[below]{$\product{\id}{(\pow{\xi_r}\cdot\xi_s)}$} (y)
		(r2) edge node[right]{$\pair{\pi_1\cdot\belong{X}\cdot\kappa_1}{\pi_2\cdot\belong{\pow{X}}\cdot\kappa_2}$} (y);
			
\end{tikzpicture}
\end{center}
We construct a morphism $\map{\gamma}{W}{R\square S}$ 
using three universal properties of pullbacks as follows. 
First we have the following commutative diagram:
\begin{center}
\begin{tikzpicture}[scale=1.5]
		
	\node (r1r2) at (3,1) {\scriptsize{$W$}};
	\node (r1) at (3,0) {\scriptsize{$\pow{X}\times\pow{Y}$}};
	\node (r2) at (6.5,1) {\scriptsize{$E_{\pow{X}}$}};
	\node (y) at (6.5,0) {\scriptsize{$\pow{X}\times\pow{\pow{X}}$}};
	
	\path[->,font=\scriptsize]
		(r1r2) edge node[left]{$\pair{\pi_2\cdot\belong{X}\cdot\kappa_1\cdot\phi}{\xi_s\cdot\pi_2\cdot\psi}$} (r1)
		(r1r2) edge node[above]{$\kappa_2\cdot\phi$} (r2)
		(r1) edge node[below]{$\product{\id}{\pow{\xi_r}}$} (y)
		(r2) edge node[right]{$\belong{\pow{X}}$} (y);
			
\end{tikzpicture}
\end{center}
Indeed,
\begin{center}
\begin{tabular}{rclcr}
    $\pi_1\cdot\belong{\pow{X}}\cdot\kappa_2\cdot\phi$ & $=$ & $\pi_2\cdot \belong{X}\cdot\kappa_1\cdot\phi$
    & & \hfill (definition of $E^3_X$)\\
    $\pi_2\cdot\belong{\pow{X}}\cdot\kappa_2\cdot\phi$ & $=$ & $\pow{\xi_r}\cdot\xi_s\cdot\pi_2\cdot\psi$
    & & \hfill (assumption on $W$)
\end{tabular}
\end{center}
So by the universal property of $E_{\xi_r}$, there is a unique morphism 
$\map{\alpha}{W}{E_{\xi_r}}$ such that 
\[m_{\xi_r}\cdot\alpha = \pair{\pi_2\cdot\belong{X}\cdot\kappa_1\cdot\phi}{\xi_s\cdot\pi_2\cdot\psi}
\]
and
\[\theta_{\xi_r}\cdot\alpha = \kappa_2\cdot\phi.\]
Secondly, we have the following commutative diagram, by definition of $\alpha$:
\begin{center}
\begin{tikzpicture}[scale=1.5]
		
	\node (r1r2) at (3,1) {\scriptsize{$W$}};
	\node (r1) at (3,0) {\scriptsize{$\pow{X}\times Z$}};
	\node (r2) at (7,1) {\scriptsize{$E_{\xi_r}$}};
	\node (y) at (7,0) {\scriptsize{$\pow{X}\times\pow{Y}$}};
	
	\path[->,font=\scriptsize]
		(r1r2) edge node[left]{$\pair{\pi_2\cdot\belong{X}\cdot\kappa_1\cdot\phi}{\pi_2\cdot\psi}$} (r1)
		(r1r2) edge node[above]{$\alpha$} (r2)
		(r1) edge node[below]{$\product{\id}{\xi_s}$} (y)
		(r2) edge node[right]{$m_{\xi_r}$} (y);
			
\end{tikzpicture}
\end{center}
So by the universal property of $\widehat{S}$, there is a unique morphism 
$\map{\beta}{W}{\widehat{S}}$ such that 
\[\widehat{s}\cdot\beta = \pair{\pi_2\cdot\belong{X}\cdot\kappa_1\cdot\phi}{\pi_2\cdot\psi}\] 
and
\[\widehat{\theta_s}\cdot\beta = \alpha.\]
Finally, we have the following commutative diagram, by definition of 
$\beta$:
\begin{center}
\begin{tikzpicture}[scale=1.5]
		
	\node (r1r2) at (3,1) {\scriptsize{$W$}};
	\node (r1) at (3,0) {\scriptsize{$E_X$}};
	\node (r2) at (8,1) {\scriptsize{$\widehat{S}$}};
	\node (y) at (8,0) {\scriptsize{$\pow{X}$}};
	
	\path[->,font=\scriptsize]
		(r1r2) edge node[left]{$\kappa_1\cdot\phi$} (r1)
		(r1r2) edge node[above]{$\beta$} (r2)
		(r1) edge node[below]{$\pi_2\cdot\belong{X}$} (y)
		(r2) edge node[right]{$\pi_1\cdot\widehat{s}$} (y);
			
\end{tikzpicture}
\end{center}
So by the universal property of $R\square S$, there is a unique morphism 
$\map{\gamma}{W}{R\square S}$ such that 
\[\epsilon_1\cdot\gamma = \kappa_1\cdot\phi 
\quad\text{and}\quad 
\epsilon_2\cdot\gamma = \beta.\]

Let us prove that $\gamma$ is the unique morphism from $W$ to $R\square S$ such that 
\[w\cdot\gamma = \phi 
\quad\text{and}\quad 
\pair{\pi_1\cdot\belong{X}\cdot\epsilon_1}
	{\pi_2\cdot\widehat{s}\cdot\epsilon_2}\cdot\gamma = \psi.\]
First, it satisfies those conditions. For the first one, by the unicity of the 
pullback property of $E^3_X$, it is enough to prove the following
\begin{center}
\begin{tabular}{rclcl}
    $\kappa_1\cdot w\cdot\gamma$ & $=$ & $\epsilon_1\cdot\gamma$ & ~~~~~ & (definition of $w$)\\
    & $=$ & $\kappa_1\cdot\phi$ & ~~~~~ & (definition of $\gamma$)\\
    $\kappa_2\cdot w\cdot\gamma$ & $=$ & $\theta_{\xi_r}\cdot\widehat{\theta_s}\cdot\epsilon_2\cdot\gamma$ & ~~~~~ & (definition of $w$)\\
    & $=$ & $\theta_{\xi_r}\cdot\widehat{\theta_s}\cdot\beta$ & ~~~~~ & (definition of $\gamma$)\\
    & $=$ & $\theta_{\xi_r}\cdot\alpha$ & ~~~~~ & (definition of $\beta$)\\
    & $=$ & $\kappa_2\cdot\phi$ & ~~~~~ & (definition of $\alpha$)
\end{tabular}
\end{center}
For the second one:
\begin{center}
\begin{tabular}{rclcr}
    $\pi_1\cdot\belong{X}\cdot\epsilon_1\cdot\gamma$ & $=$ & $\pi_1\cdot\belong{X}\cdot\kappa_1\cdot\phi$
    & & \hfill (definition of $\gamma$)\\
    & $=$ & $\pi_1\cdot\psi$
    & & \hfill (assumption on $W$)\\
    $\pi_2\cdot\widehat{s}\cdot\epsilon_2\cdot\gamma$ & $=$ & $\pi_2\cdot\widehat{s}\cdot\beta$
    & & \hfill (definition of $\gamma$)\\
    & $=$ & $\pi_2\cdot\psi$
    & & \hfill (definition of $\beta$)
\end{tabular}
\end{center}

Now assume that there is another $\gamma'$ from $W$ to $R\square S$ such that 
\[w\cdot\gamma' = \phi 
\quad\text{and}\quad
\pair{\pi_1\cdot\belong{X}\cdot\epsilon_1}
	{\pi_2\cdot\widehat{s}\cdot\epsilon_2}\cdot \gamma' = \psi.\]
By the unicity properties of $\alpha$, $\beta$ and $\gamma$, 
it is enough to prove the following five equations:
\begin{center}
\begin{tabular}{rclcr}
    $\kappa_1\cdot\phi$ & $=$ & $\kappa_1\cdot w\cdot\gamma'$
    & & \hfill (assumption on $\gamma'$)\\
    & $=$ & $\epsilon_1\cdot\gamma'$
    & &\hfill (definition of $w$)\\
    $\pi_1\cdot\widehat{s}\cdot\epsilon_2\cdot\gamma'$ & $=$ & $\pi_2\cdot\belong{X}\cdot\epsilon_1\cdot\gamma'$
    & & \hfill (definition of $R\square S$)\\
    & $=$ & $\pi_2\cdot\belong{X}\cdot\kappa_1\cdot\phi$
    & & \hfill (assumption on $\gamma'$)\\
    $\pi_2\cdot\widehat{s}\cdot\epsilon_2\cdot\gamma'$ & $=$ & $\pi_2\cdot\psi$
    & & \hfill (assumption on $\gamma'$)\\
    $\kappa_2\cdot\phi$ & $=$ & $\kappa_2\cdot w\cdot\gamma'$
    & &  (assumption on $\gamma'$)\\
    & $=$ & $\theta_{\xi_r}\cdot\widehat{\theta_s}\cdot\epsilon_2\cdot\gamma'$
    & & (definition of $w$)\\
    $m_{\xi_r}\cdot\widehat{\theta_s}\cdot\epsilon_2\cdot\gamma'$ & $=$ & $\product{\id}{\xi_s}\cdot\widehat{s}\cdot\epsilon_2\cdot\gamma'$
    & & \hfill (definition of $\widehat{S}$)\\
    & $=$ & $\pair{pi_1\cdot\widehat{s}\cdot\epsilon_2\cdot\gamma'}{\xi_s\cdot\pi_2\cdot\psi}$
    & & \hfill (assumption on $\gamma'$)\\
    & $=$ & $\pair{\pi_2\cdot\belong{X}\cdot\epsilon_1\cdot\gamma'}{\xi_s\cdot\pi_2\cdot\psi}$
    & & \hfill (definition of $R\square S$)\\
    & $=$ & $\pair{\pi_2\cdot\belong{X}\cdot\kappa_1\cdot w\cdot\gamma'}{\xi_s\cdot\pi_2\cdot\psi}$
    & & \hfill (definition of $w$)\\
    & $=$ & $\pair{\pi_2\cdot\belong{X}\cdot\kappa_1\cdot \phi}{\xi_s\cdot\pi_2\cdot\psi}$
    & & \hfill (assumption on $\gamma'$)
\end{tabular}
\end{center}
from which we deduce that 
$\alpha = \widehat{\theta_s}\cdot\epsilon_2\cdot\gamma'$, 
then $\beta = \epsilon_2\cdot\gamma'$, 
and finally $\gamma = \gamma'$.

So we have our suitable pullback:
\begin{center}
\begin{tikzpicture}[scale=1.5]
		
	\node (r1r2) at (3,1) {\scriptsize{$R\square S$}};
	\node (r1) at (3,0) {\scriptsize{$X\times Z$}};
	\node (r2) at (5.3,1) {\scriptsize{$E_3^X$}};
	\node (y) at (5.3,0) {\scriptsize{$X\times\pow{\pow{X}}$}};
	
	\path[->,font=\scriptsize]
		(r1r2) edge node[left]{$\pair{\pi_1\cdot\belong{X}\cdot\epsilon_1}{\pi_2\cdot\widehat{s}\cdot\epsilon_2}$} (r1)
		(r1r2) edge node[above]{$w$} (r2)
		(r1) edge node[below]{$\product{\id}{(\pow{\xi_r}\cdot\xi_s)}$} (y)
		(r2) edge node[right]{$\pair{\pi_1\cdot\belong{X}\cdot\kappa_1}{\pi_2\cdot\belong{\pow{X}}\cdot\kappa_2}$} (y);
		
	\draw (3.2,0.55) -- (3.45,0.55) -- (3.45,0.8);
			
\end{tikzpicture}
\end{center}
To conclude with the preservation of the image by pullback, 
we have to prove that we have the correct 
(epi, mono)-factorisations, that is:
\begin{itemize}
	\item $\monop{\pair{\pi_1\cdot\belong{X}\cdot\kappa_1}{\pi_2\cdot\belong{\pow{X}}\cdot\kappa_2}} \equiv~\belongsq{X}$: this is the case by definition of $\belongsq{X}$.
	\item $\monop{\pair{\pi_1\cdot\belong{X}\cdot\epsilon_1}{\pi_2\cdot\widehat{s}\cdot\epsilon_2}} \equiv \rcomp{r}{s}$: 
	this part is much more complicated. We know, by construction, 
	that $\rcomp{r}{s} \equiv \monop{\pair{\pi_1\cdot r\cdot\mu_1}{\pi_2\cdot s\cdot\mu_2}}$, 
	so we need to compare those two morphisms. 
	We start by constructing a morphism 
	$\map{v}{R\star S}{R\square S}$, 
	by using two pullbacks properties as follows.
	
	First we have the following commutative diagram:
\begin{center}
\begin{tikzpicture}[scale=1.5]
		
	\node (r1r2) at (3,1) {\scriptsize{$R\star S$}};
	\node (r1) at (3,0) {\scriptsize{$\pow{X}\times Z$}};
	\node (r2) at (6.5,1) {\scriptsize{$E_{\xi_r}$}};
	\node (y) at (6.5,0) {\scriptsize{$\pow{X}\times\pow{Y}$}};
	
	\path[->,font=\scriptsize]
		(r1r2) edge node[left]{$\pair{\xi_r\cdot\pi_2\cdot r\cdot\mu_1}{\pi_2\cdot s\cdot\mu_2}$} (r1)
		(r1r2) edge node[above]{$e_{\xi_r}\cdot\theta_s\cdot\mu_2$} (r2)
		(r1) edge node[below]{$\product{\id}{\xi_s}$} (y)
		(r2) edge node[right]{$m_{\xi_r}$} (y);
			
\end{tikzpicture}
\end{center}
Indeed,
\begin{center}
\begin{tabular}{rclcr}
    $m_{\xi_r}\cdot e_{\xi_r}\cdot\theta_s\cdot\mu_2$ & $=$ & $\product{\xi_r}{\id}\cdot\belong{Y}\cdot\theta_s\cdot\mu_2$
    & & \hfill (definition of $E_{\xi_r}$)\\
    & $=$ & $\product{\xi_r}{\xi_s}\cdot s\cdot\mu_2$
    & & \hfill (definition of $\xi_s$)\\
    & $=$ & $\product{\id}{\xi_s}\cdot\pair{\xi_r\cdot\pi_1\cdot s\cdot\mu_2}{\pi_2\cdot s \cdot\mu_2}$
    & & \hfill (computation on products)\\
    & $=$ & $\product{\id}{\xi_s}\cdot\pair{\xi_r\cdot\pi_2\cdot r\cdot\mu_1}{\pi_2\cdot s \cdot\mu_2}$
    & & \hfill (definition of $R\star S$)
\end{tabular}
\end{center}
So by the universal property of $\widehat{S}$, there is a unique morphism 
$\map{u}{R\star S}{\widehat{S}}$ such that 
\[\widehat{\theta_s}\cdot u = e_{\xi_r}\cdot\theta_s\cdot\mu_2\] 
and
\[\widehat{s}\cdot u = \pair{\xi_r\cdot\pi_2\cdot r\cdot\mu_1}
	{\pi_2\cdot s\cdot\mu_2}.\]
	
	Next we have the following commutative diagram:
\begin{center}
\begin{tikzpicture}[scale=1.5]
		
	\node (r1r2) at (3,1) {\scriptsize{$R\star S$}};
	\node (r1) at (3,0) {\scriptsize{$E_X$}};
	\node (r2) at (8,1) {\scriptsize{$\widehat{S}$}};
	\node (y) at (8,0) {\scriptsize{$\pow{X}$}};
	
	\path[->,font=\scriptsize]
		(r1r2) edge node[left]{$\theta_r\cdot\mu_1$} (r1)
		(r1r2) edge node[above]{$u$} (r2)
		(r1) edge node[below]{$\pi_2\cdot\belong{X}$} (y)
		(r2) edge node[right]{$\pi_1\cdot\widehat{s}$} (y);
			
\end{tikzpicture}
\end{center}
Indeed,
\begin{center}
\begin{tabular}{rclcr}
    $\pi_2\cdot\belong{X}\cdot\theta_r\cdot\mu_1$ & $=$ & $\xi_r\cdot\pi_2\cdot r\cdot\mu_1$
    & & \hfill (definition of $\xi_r$)\\
    & $=$ & $\xi_r\cdot\pi_1\cdot s\cdot\mu_2$
    & & \hfill (definition of $R\star S$)\\
    & $=$ & $\xi_r\cdot\pi_1\cdot \belong{Y}\cdot\theta_s\cdot\mu_2$
    & & \hfill (definition of $\xi_s$)\\
    & $=$ & $\pi_1\cdot m_{\xi_r}\cdot e_{\xi_r}\cdot\theta_s\cdot\mu_2$
    & & \hfill (definition of $E_{\xi_r}$)\\
    & $=$ & $\pi_1\cdot m_{\xi_r}\cdot\widehat{\theta_s}\cdot u$
    & & \hfill (definition of $u$)\\
    & $=$ & $\pi_1\cdot \widehat{s}\cdot u$
    & & \hfill (definition of $\widehat{S}$)
\end{tabular}
\end{center}
So by the universal property of $R\square S$, there is a unique 
morphism $\map{v}{R\star S}{R\square S}$ such that 
\[\epsilon_1\cdot v = \theta_r\cdot\mu_1 \quad\text{and}\quad 
	\epsilon_2\cdot v = u.\]

Now, we can compare the two morphisms and their (epi, mono)-factorisations, since we have the following commutative diagram:
\begin{center}
\begin{tikzpicture}[scale=1.5]
		
	\node (ex) at (3,1) {\scriptsize{$R\star S$}};
	\node (exp) at (3,0) {\scriptsize{$R\square S$}};
	\node (ef) at (5.5,1) {\scriptsize{$\rcomp{R}{S}$}};
	\node (egf) at (5.5, 0) {\scriptsize{$T$}};
	\node (ypx) at (8,1) {\scriptsize{$X\times Z$}};
	\node (zpx) at (8,0) {\scriptsize{$X\times Z$}};
	
	\path[->,font=\scriptsize]
		(ex) edge node[left]{$v$} (exp)
		(ypx) edge node[right]{id} (zpx);
	
	\path[->>,font=\scriptsize]
		(ex) edge node[above]{$\rho$} (ef)
		(exp) edge (egf);
		
	\path[>->,font=\scriptsize]
		(ef) edge node[above]{$\rcomp{r}{s}$} (ypx)
		(egf) edge (zpx);
		
	\path[->,dotted,font=\scriptsize]
		(ef) edge (egf);
		
	\path[->,bend left = 30,font=\scriptsize]
		(ex) edge node[above]{$\pair{\pi_1\cdot r\cdot\mu_1}{\pi_2\cdot s\cdot\mu_2}$} (ypx);
		
	\path[->,bend right = 30,font=\scriptsize]
		(exp) edge node[below]{$\pair{\pi_1\cdot\belong{X}\cdot\epsilon_1}{\pi_2\cdot\widehat{s}\cdot\epsilon_2}$} (zpx);
			
\end{tikzpicture}
\end{center}
Indeed,
\begin{center}
\begin{tabular}{rclr}
    $\pi_1\cdot\belong{X}\cdot\epsilon_1\cdot v$ & $=$ & $\pi_1\cdot\belong{X}\cdot\theta_r\cdot\mu_1$ &   (definition of $v$)\\
    & $=$ & $\pi_1\cdot r\cdot\mu_1$ &   (definition of $\xi_r$)\\
    $\pi_2\cdot\widehat{s}\cdot\epsilon_2\cdot v$ & $=$ & $\pi_2\cdot\widehat{s}\cdot u$ &  (definition of $v$)\\
    & $=$ & $\pi_2\cdot s\cdot\mu_2$ &   (definition of $u$)
\end{tabular}
\end{center}
So by functoriality of the (epi, mono)-factorisation, we have the dotted morphism as above. To conclude, we need to prove that this is an iso. The right square tells us this is a mono. If we can prove that $v$ is an epi, then this dotted morphism would also be an epi, and since since we are in a topos, this would be an iso.

To prove that $v$ is an epi, we will use the fact that epis are closed under pullback in a topos. To this end, let us prove that the following square is a pullback:
\begin{center}
\begin{tikzpicture}[scale=1.5]
		
	\node (r1r2) at (3,1) {\scriptsize{$R\star S$}};
	\node (r1) at (3,0) {\scriptsize{$R\square S$}};
	\node (r2) at (8,1) {\scriptsize{$E_Y$}};
	\node (y) at (8,0) {\scriptsize{$E_{\xi_r}$}};
	
	\path[->,font=\scriptsize]
		(r1r2) edge node[left]{$v$} (r1)
		(r1r2) edge node[above]{$\theta_s\cdot\mu_2$} (r2)
		(r1) edge node[below]{$\widehat{\theta_s}\cdot\epsilon_2$} (y)
		(r2) edge node[right]{$e_{\xi_r}$} (y);
			
\end{tikzpicture}
\end{center}

First, it is a commutative square:
\begin{center}
\begin{tabular}{rclr}
    $\widehat{\theta_s}\cdot\epsilon_2\cdot v$ & $=$ & $\widehat{\theta_s}\cdot u$ & (definition of $v$)\\
    & $=$ & $e_{\xi_r}\cdot\theta_s\cdot\mu_2$ & (definition of $u$)
\end{tabular}
\end{center}

Now assume given another commutative diagram:
\begin{center}
\begin{tikzpicture}[scale=1.5]
		
	\node (r1r2) at (3,1) {\scriptsize{$W$}};
	\node (r1) at (3,0) {\scriptsize{$R\square S$}};
	\node (r2) at (8,1) {\scriptsize{$E_Y$}};
	\node (y) at (8,0) {\scriptsize{$E_{\xi_r}$}};
	
	\path[->,font=\scriptsize]
		(r1r2) edge node[left]{$\psi$} (r1)
		(r1r2) edge node[above]{$\phi$} (r2)
		(r1) edge node[below]{$\widehat{\theta_s}\cdot\epsilon_2$} (y)
		(r2) edge node[right]{$e_{\xi_r}$} (y);
			
\end{tikzpicture}
\end{center}
We want to construct a morphism $\map{\gamma}{W}{R\star S}$. 
This is done by using three pullback properties as follows.
First we have the following commutative diagram:
\begin{center}
\begin{tikzpicture}[scale=1.5]
		
	\node (r1r2) at (3,1) {\scriptsize{$W$}};
	\node (r1) at (3,0) {\scriptsize{$X\times Y$}};
	\node (r2) at (6.5,1) {\scriptsize{$E_X$}};
	\node (y) at (6.5,0) {\scriptsize{$X\times\pow{X}$}};
	
	\path[->,font=\scriptsize]
		(r1r2) edge node[left]{$\pair{\pi_1\cdot\belong{X}\cdot\epsilon_1\cdot\psi}{\pi_1\cdot\belong{Y}\cdot\phi}$} (r1)
		(r1r2) edge node[above]{$\epsilon_1\cdot\psi$} (r2)
		(r1) edge node[below]{$\product{\id}{\xi_r}$} (y)
		(r2) edge node[right]{$\in_{X}$} (y);
			
\end{tikzpicture}
\end{center}
Indeed, 
\begin{center}
\begin{tabular}{rclcr}
    $\pi_2\cdot\belong{X}\cdot\epsilon_1\cdot\psi$ & $=$ & $\pi_1\cdot\widehat{s}\cdot\epsilon_2\cdot\psi$
    & & \hfill (definition of $R\square S$)\\
    & $=$ & $\pi_1\cdot m_{\xi_r}\cdot\widehat{\theta_s}\cdot\epsilon_2\cdot\phi$
    & & \hfill (definition of $\widehat{S}$)\\
    & $=$ & $\pi_1\cdot m_{\xi_r}\cdot e_{\xi_r}\cdot\phi$
    & & \hfill (assumption on $W$)\\
    & $=$ & $\xi_r\cdot\pi_1\cdot\belong{Y}\cdot\phi$
    & & \hfill (definition of $E_{\xi_r}$)
\end{tabular}
\end{center}
So by the universal property of $R$, there is a unique morphism 
$\map{\alpha}{W}{R}$ such that 
\[r\cdot\alpha = \pair{\pi_1\cdot\belong{X}\cdot\epsilon_1\cdot\psi}
	{\pi_1\cdot\belong{Y}\cdot\phi}\]
and
\[\theta_r\cdot\alpha = \epsilon_1\cdot\psi.\]
Next we have the following commutative diagram:
\begin{center}
\begin{tikzpicture}[scale=1.5]
		
	\node (r1r2) at (3,1) {\scriptsize{$W$}};
	\node (r1) at (3,0) {\scriptsize{$Y\times Z$}};
	\node (r2) at (6.5,1) {\scriptsize{$E_Y$}};
	\node (y) at (6.5,0) {\scriptsize{$Y\times\pow{Y}$}};
	
	\path[->,font=\scriptsize]
		(r1r2) edge node[left]{$\pair{\pi_1\cdot\belong{Y}\cdot\phi}{\pi_2\cdot\widehat{s}\cdot\epsilon_2\cdot\psi}$} (r1)
		(r1r2) edge node[above]{$\phi$} (r2)
		(r1) edge node[below]{$\product{\id}{\xi_s}$} (y)
		(r2) edge node[right]{$\belong{Y}$} (y);
			
\end{tikzpicture}
\end{center}
Indeed, 
\begin{center}
\begin{tabular}{rclcr}
    $\xi_s\cdot\pi_2\cdot\widehat{s}\cdot\epsilon_2\cdot\psi$ & $=$ & $\pi_2\cdot m_{\xi_r}\cdot\widehat{\theta_s}\cdot\epsilon_2\cdot\psi$
    & & \hfill (definition of $\widehat{S}$)\\
    & $=$ & $\pi_2\cdot m_{\xi_r}\cdot e_{\xi_r}\cdot\phi$
    & & \hfill (assumption on $W$)\\
    & $=$ & $\pi_2\cdot \belong{Y}\cdot\phi$
    & & \hfill (definition of $E_{\xi_r}$)
\end{tabular}
\end{center}
So by the universal property of $S$, there is a unique morphism 
$\map{\beta}{W}{S}$ such that 
\[s\cdot\beta = \pair{\pi_1\cdot\belong{Y}\cdot\phi}{\pi_2\cdot\widehat{s}\cdot\epsilon_2\cdot\psi}\]
and
\[\theta_s\cdot\beta = \phi.\]
Finally we have the following commutative diagram:
\begin{center}
\begin{tikzpicture}[scale=1.5]
		
	\node (r1r2) at (3,1) {\scriptsize{$W$}};
	\node (r1) at (3,0) {\scriptsize{$R$}};
	\node (r2) at (8,1) {\scriptsize{$S$}};
	\node (y) at (8,0) {\scriptsize{$Y$}};
	
	\path[->,font=\scriptsize]
		(r1r2) edge node[left]{$\alpha$} (r1)
		(r1r2) edge node[above]{$\beta$} (r2)
		(r1) edge node[below]{$\pi_2\cdot r$} (y)
		(r2) edge node[right]{$\pi_1\cdot s$} (y);
			
\end{tikzpicture}
\end{center}
Indeed, 
\begin{center}
\begin{tabular}{rclcl}
    $\pi_2\cdot r\cdot\alpha$ & $=$ & $\pi_1\cdot\belong{Y}\cdot\phi$ & ~~~~~ & (definition of $\alpha$)\\
    & $=$ & $\pi_1\cdot s\cdot\beta$ & ~~~~~ & (definition of $\beta$)
\end{tabular}
\end{center}
So by the universal property of $R\star S$, there is a unique morphism 
$\map{\gamma}{W}{R\star S}$ such that 
\[\mu_1\cdot\gamma = \alpha \quad\text{and}\quad 
	\mu_2\cdot\gamma = \beta.\]

Let us prove that $\gamma$ is the unique morphism from 
$W$ to $R\star S$ such that 
\[v\cdot\gamma = \psi \quad\text{and}\quad 
	\theta_s\cdot\mu_2\cdot\gamma = \phi.\]
First, it satisfies those properties. For the first one, by unicity in the 
pullback property of $R\square S$ and the fact that $\widehat{s}$ 
is a mono, it is enough to prove:
\begin{center}
\begin{tabular}{rclcr}
    $\epsilon_1\cdot v\cdot\gamma$ & $=$ & $\theta_r\cdot \mu_1\cdot \gamma$
    & & \hfill (definition of $v$)\\
    & $=$ & $\theta_r\cdot\alpha$
    & & \hfill (definition of $\gamma$)\\
    & $=$ & $\epsilon_1\cdot\psi$
    & & \hfill (definition of $\alpha$)\\
    $\widehat{s}\cdot\epsilon_2\cdot v\cdot\gamma$ & $=$ & $\widehat{s}\cdot u \cdot \gamma$
    & & \hfill (definition of $v$)\\
    & $=$ & $\pair{\xi_r\cdot\pi_2\cdot r\cdot\mu_1\cdot\gamma}{\pi_2\cdot s\cdot\mu_2\cdot\gamma}$
    & & \hfill (definition of $u$)\\
    & $=$ & $\pair{\xi_r\cdot\pi_2\cdot r\cdot\alpha}{\pi_2\cdot s\cdot\beta}$
    & & \hfill (definition of $\gamma$)\\
    & $=$ & $\pair{\xi_r\cdot\pi_1\cdot\belong{Y}\cdot\phi}{\pi_2\cdot s\cdot\beta}$
    & & \hfill (definition of $\alpha$)\\
    & $=$ & $\pair{\xi_r\cdot\pi_1\cdot\belong{Y}\cdot\phi}{\pi_2\cdot\widehat{s}\cdot\epsilon_2\cdot\psi}$
    & & \hfill (definition of $\beta$)\\
    & $=$ & $\pair{\pi_1\cdot m_{\xi_r}\cdot e_{\xi_r}\cdot\phi}{\pi_2\cdot \widehat{s}\cdot\epsilon_2\cdot\psi}$
    & & \hfill (definition of $E_{\xi_r}$)\\
    & $=$ & $\pair{\pi_1\cdot m_{\xi_r}\cdot \widehat{\theta_s}\cdot\epsilon_2\cdot\psi}{\pi_2\cdot \widehat{s}\cdot\epsilon_2\cdot\psi}$
    & & \hfill (assumption on $W$)\\
    & $=$ & ${\pi_1\cdot \widehat{s}\cdot\epsilon_2\cdot\psi}{\pi_2\cdot \widehat{s}\cdot\epsilon_2\cdot\psi}$
    & & \hfill (definition of $\widehat{S}$)\\
    & $=$ & $\widehat{s}\cdot\epsilon_2\cdot\psi$
    & & \hfill (easy)
\end{tabular}
\end{center}
For the second one,
\[\theta_s\cdot\mu_2\cdot\gamma = \theta_s\cdot\beta = \phi.\]

Now assume that there is another $\gamma'$ from $W$ to $R\star S$ 
such that 
\[v\cdot\gamma' = \psi \quad\text{and}\quad
	 \theta_s\cdot\mu_2\cdot\gamma' = \phi.\]
Using the unicity of $\gamma$ it is enough to prove that 
$\mu_1\cdot\gamma' =\alpha$ and $\mu_2\cdot\gamma' = \beta$. 
For the first one, by unicity of $\alpha$ it enough to prove the following:
\begin{center}
\begin{tabular}{rclcr}
    $\theta_r\cdot\mu_1\cdot\gamma'$ & $=$ & $\epsilon_1\cdot v \cdot\gamma'$
    & & \hfill (definition of $v$)\\
    & $=$ & $\epsilon_1\cdot\psi$
    & & \hfill (assumption on $\gamma'$)\\
    $\pi_1\cdot r \cdot\mu_1\cdot\gamma'$ & $=$ & $\pi_1\cdot\belong{X}\cdot\theta_r\cdot\mu_1\cdot\gamma'$
    & & \hfill (definition of $\xi_r$)\\
    & $=$ & $\pi_1\cdot\belong{X}\cdot\epsilon_1\cdot\psi$
    & & \hfill (similar to the previous case)\\
    $\pi_2\cdot r \cdot\mu_1\cdot\gamma'$ & $=$ & $\pi_1\cdot s \cdot\mu_2\cdot\gamma'$
    & & \hfill (definition of $R\star S$)\\
    & $=$ & $\pi_1\cdot \belong{Y}\cdot\theta_s \cdot\mu_2\cdot\gamma'$
    & & \hfill (definition of $\xi_s$)\\
    & $=$ & $\pi_1\cdot \belong{Y}\cdot\phi$
    & & \hfill (assumption on $\gamma'$)
\end{tabular}
\end{center}

For the second one, by unicity of $\beta$, it is enough to prove 
the following:
\begin{center}
\begin{align*} 
    \theta_s\cdot\mu_2\cdot\gamma' & \mkern9mu=\mkern9mu  \phi
     & \hfill \text{(assumption on $\gamma'$)}&\\
    \pi_1\cdot s\cdot\mu_2\cdot\gamma'& \mkern9mu=\mkern9mu  \pi_1\cdot\belong{Y}\cdot\theta_s\cdot\mu_2\cdot\gamma'
     & \hfill \text{(definition of $\xi_s$)}&\\
    & \mkern9mu=\mkern9mu \pi_1\cdot\belong{Y}\cdot\phi
    & \hfill \text{(assumption on $\gamma'$)}&\\
    \pi_2\cdot s\cdot\mu_2\cdot\gamma'& \mkern9mu=\mkern9mu  \pi_2\cdot\widehat{s}\cdot u\cdot\gamma'
     & \hfill \text{(definition of $u$)}&\\
    & \mkern9mu=\mkern9mu  \pi_2\cdot\widehat{s}\cdot \epsilon_2\cdot v\cdot\gamma'
    &  \hfill \text{(definition of $v$)}&\\
    & \mkern9mu=\mkern9mu  \pi_2\cdot\widehat{s}\cdot \epsilon_2\cdot \psi
    &  \hfill \text{(assumption on $\gamma'$)}&\qedhere
\end{align*}
\end{center}\qedhere
\end{itemize}
\end{proof}
\noindent 
In addition, given a morphism $\map{f}{X}{Y}$ of the topos, we have a 
corresponding morphism in the Kleisli category with $\eta_Y\cdot f$. 
Through $\xi$, this morphism corresponds to the right adjoint $\pair{f}{\id}$.
Using Theorem~\ref{th:maps}, we obtain that this functor from the topos to 
the Kleisli category is in reality an 
embedding. In particular, this means:
\begin{lem}
For all $X$, $\eta_X$ is a mono.
\end{lem}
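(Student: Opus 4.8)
The plan is to recognise the statement as the faithfulness of the canonical functor into the Kleisli category. Write $\mathrm{Kl}(\pow{\!})$ for the Kleisli category of the monad $\pow{\!}$, and let $\map{J}{\CC}{\mathrm{Kl}(\pow{\!})}$ be the usual free functor: it is the identity on objects and sends $\map{f}{X}{Y}$ to the Kleisli morphism $\eta_Y\cdot f$. I first record the elementary observation that $J$ is faithful if and only if every $\eta_X$ is a mono. Indeed, two Kleisli morphisms from $X$ to $Y$ are equal exactly when the underlying morphisms $X\to\pow{Y}$ of $\CC$ are equal, so $J$ being faithful means that $\eta_Y\cdot f = \eta_Y\cdot g$ implies $f=g$ for all $\map{f,g}{X}{Y}$, which is precisely the statement that each $\eta_Y$ is monic. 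It therefore suffices to prove that $J$ is faithful.

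To do so, I would transport $J$ across the isomorphism established in Theorem~\ref{thm:pow-monad}. By Proposition~\ref{prop:composition} (together with the preservation of identities, which holds by design since $\eta_X=\xi_{\Delta_X}$ and diagonals are the identity relations), the operator $\xi$ is an isomorphism between $\mathrm{Kl}(\pow{\!})$ and $\Rel{\CC}^{op}$, the opposite appearing because $\xi$ turns composition of relations into the reverse of Kleisli composition. As already observed just before the statement, under this identification the Kleisli morphism $\eta_Y\cdot f$ corresponds to the relation represented by $\pair{f}{\id}$, that is, to the right adjoint of the map $\pair{\id}{f}$. Hence, up to the bijection $\xi$, the action of $J$ on hom-sets is the assignment $f\mapsto\pair{f}{\id}$.

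It remains to check that this assignment is injective, for then the hom-set maps of $J$, being injective after composition with the bijection $\xi$, are themselves injective, i.e. $J$ is faithful. By Theorem~\ref{th:maps}, $\Map{\Rel{\CC}}$ is isomorphic to $\CC$ via $f\mapsto\pair{\id}{f}$, so this map is injective on each hom-set; composing with the anti-involution $(\_)^\dagger$, which is a bijection on relations, the map $f\mapsto\pair{f}{\id}=(\pair{\id}{f})^\dagger$ is injective as well. Concretely, if $\pair{f}{\id}$ and $\pair{g}{\id}$ represent the same subobject of $Y\times X$, the comparison isomorphism is the identity in the second coordinate, which forces $f=g$. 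This establishes the faithfulness of $J$, and by the first paragraph every $\eta_X$ is a mono.

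Because the substantial diagram chase has already been carried out in Proposition~\ref{prop:composition}, there is no real obstacle left here; the only points requiring a little care are tracking that $\xi$ reverses the direction of composition (so that the Kleisli category matches $\Rel{\CC}^{op}$ rather than $\Rel{\CC}$) and confirming that $\eta_Y\cdot f$ is matched with the right adjoint $\pair{f}{\id}$ rather than with the map $\pair{\id}{f}$.
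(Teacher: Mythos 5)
Your argument is correct and is essentially the paper's own: the paper derives the lemma from the observation that, via $\xi$ and Theorem~\ref{th:maps}, the functor $f\mapsto\eta_Y\cdot f$ into the Kleisli category is an embedding, which is exactly the faithfulness argument you spell out. You have merely made explicit the details (the equivalence of faithfulness with $\eta_Y$ being monic, and the injectivity of $f\mapsto\pair{f}{\id}$ on subobjects) that the paper leaves implicit.
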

\noindent 
In this explanation, we can get rid of the ``opposite'', since an allegory is 
self-dual.

\subsection{Naturality and Coherence Axioms}

We are now all set to prove the first part of Theorem~\ref{thm:pow-monad}:
Both naturalities are also easy or consequence of 
Proposition~\ref{prop:composition}:
\begin{lem}
\label{lem:pow-mon-naturality}
$\eta$ and $\mu$ are natural. Furthermore, we have $\mu_X\cdot\eta_{\pow{X}} = \id_{\pow{X}}$,
	$\mu_X\cdot\pow{\eta_X} = \id_{\pow{X}}$, and
	$\mu_X\cdot\pow{\mu_X} = \mu_X\cdot\mu_{\pow{X}}$.
Consequently, $\pow{\!}$ is a monad whose Kleisli category is
the allegory of relations.
\end{lem}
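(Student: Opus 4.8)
The plan is to transport the category structure of $\Rel{\CC}$ across the bijection $\xi$, using Proposition~\ref{prop:composition} as the engine and organising the bookkeeping through the extension (Kleisli-triple) presentation of a monad. I would rely on three facts already at hand: $\xi$ is a bijection between relations from $X$ to $Y$ and morphisms $Y\to\pow{X}$; $\eta_X=\xi_{\Delta_X}$ by definition; and $\xi_{\belong{X}}=\id_{\pow{X}}$ directly from the universal property (take $m=\belong{X}$). For $\map{g}{Y}{\pow{X}}$ I write $g^\sharp:=\mu_X\cdot\pow{g}$, so that Proposition~\ref{prop:composition} reads $\xi_{\rcomp{r}{s}}=(\xi_r)^\sharp\cdot\xi_s$; that is, $\xi$ carries relation composition to the extension operator.

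First I would record two compatibility identities tying the concrete definitions of $\pow{\!}$ and $\mu$ to this operator. Since $\pow{\id_X}$ classifies the image of $\belong{X}$ along $\product{\id}{\id}=\id$, which is $\belong{X}$ itself, we get $\pow{\id_X}=\xi_{\belong{X}}=\id_{\pow{X}}$, and hence $\mu_X=(\id_{\pow{X}})^\sharp$. Second, $\pow{f}$ classifies the image of $\belong{X}$ along $\product{f}{\id}$; by the allegorical fact that direct image along a map equals composition with the converse of its graph, this image is $\rcomp{f^*}{\belong{X}}$ with $f^*=\pair{f}{\id}$, so $\pow{f}=\xi_{\rcomp{f^*}{\belong{X}}}=(\xi_{f^*})^\sharp=(\eta_Y\cdot f)^\sharp$, using $\xi_{\belong{X}}=\id$ and the earlier remark that $\eta_Y\cdot f=\xi_{f^*}$.

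The heart of the argument is then to verify three extension laws by pushing the identity and associativity laws of $\Rel{\CC}$ through $\xi$ and peeling off a generic membership relation. From $\rcomp{\Delta_X}{s}=s$ we get $(\eta_X)^\sharp\cdot\xi_s=\xi_s$ for all $s$, whence $(\eta_X)^\sharp=\id$ on taking $s=\belong{X}$ (call this (K1)); from $\rcomp{r}{\Delta_Y}=r$ we get $(\xi_r)^\sharp\cdot\eta_Y=\xi_r$, i.e. $g^\sharp\cdot\eta_Y=g$ for every $g=\xi_r$ (call this (K2)); and from $\rcomp{(\rcomp{r}{s})}{t}=\rcomp{r}{(\rcomp{s}{t})}$, expanding both sides with Proposition~\ref{prop:composition} gives $(g^\sharp\cdot h)^\sharp\cdot\xi_t=g^\sharp\cdot h^\sharp\cdot\xi_t$ for all $t$, and taking $t=\belong{Z}$ yields $(g^\sharp\cdot h)^\sharp=g^\sharp\cdot h^\sharp$ (call this (K3)). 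Each step uses only the surjectivity of $\xi$ to specialise a free parameter to an identity.

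From (K1)--(K3) together with the two matching identities, every assertion of the lemma drops out by the standard extension-system-to-monad translation, which I would spell out in the stated form: naturality of $\eta$ is (K2) at $g=\eta_Y\cdot f$, since $(\eta_Y\cdot f)^\sharp=\pow{f}$; naturality of $\mu$ is (K3) at $g=\eta_B\cdot f$, $h=\id$, giving $\pow{f}\cdot\mu_A=(\pow{f})^\sharp=\mu_B\cdot\pow{\pow{f}}$; preservation of composition by $\pow{\!}$ is (K3) combined with naturality of $\eta$; the right unit $\mu_X\cdot\pow{\eta_X}=\id$ is exactly (K1); the left unit $\mu_X\cdot\eta_{\pow{X}}=\id$ is (K2) at $g=\id_{\pow{X}}$; and $\mu_X\cdot\pow{\mu_X}=\mu_X\cdot\mu_{\pow{X}}$ is (K3) at $g=\id_{\pow{X}}$, $h=\id_{\pow{\pow{X}}}$. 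Finally, $\xi$ being a hom-bijection that sends $\Delta_X$ to $\eta_X$ and $\rcomp{}{}$ to $\sharp$-composition exhibits the Kleisli category as isomorphic to $\Rel{\CC}^{op}$, hence to $\Rel{\CC}$ by auto-duality of the allegory. I expect the main obstacle to be the second matching identity $\pow{f}=(\eta_Y\cdot f)^\sharp$: it is what reconciles the image-factorisation definition of $\pow{\!}$ with the purely relational extension operator, and it rests on the allegorical identity that direct image along a map equals precomposition with its converse, which must be justified in $\Rel{\CC}$ rather than merely read off from $\Set$. Everything else is formal transport of structure.
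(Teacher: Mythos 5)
Your proof is correct, and it runs on the same engine as the paper's --- Proposition~\ref{prop:composition} together with the unit and associativity laws of composition in $\Rel{\CC}$ and the fact that $\xi_{\belong{X}}=\id_{\pow{X}}$ --- but it decomposes the argument differently. The three coherence axioms are obtained exactly as in the paper (your (K1)--(K3) instantiated at identities reproduce its three bullet computations), and the paper's proof of naturality of $\mu$ implicitly uses your matching identity $(\eta_Y\cdot f)^\sharp=\pow{f}$ in the step $\mu_Y\cdot\pow{(\eta_Y\cdot f)}\cdot\mu_X=\pow{f}\cdot\mu_X$. Where you genuinely diverge is naturality of $\eta$: the paper proves it by a direct two-stage pullback chase using preservation of images by pullbacks, whereas you derive it formally from (K2) once $(\eta_Y\cdot f)^\sharp=\pow{f}$ is established; packaging everything as an extension system makes the logical dependencies explicit and avoids that separate diagram chase. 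The one obstacle you flag, namely $\monop{(\product{f}{\id})\cdot\belong{X}}\equiv\rcomp{\pair{f}{\id}}{\belong{X}}$, is in fact immediate from the definition of relational composition: since $\pi_2\cdot\pair{f}{\id}=\id_X$, the pullback in the composite is trivial (it is $E_X$ itself), so the representing mono of $\rcomp{\pair{f}{\id}}{\belong{X}}$ is exactly $\monop{(\product{f}{\id})\cdot\belong{X}}=m_f$; no general allegorical direct-image law is needed. With that observation supplied, your argument is complete and, if anything, tidier than the paper's.
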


\begin{proof}[Proof of Lemma~\ref{lem:pow-mon-naturality}]
\begin{itemize}
	\item \textbf{$\eta$ is natural:} Let $\map{f}{X}{Y}$. We have seen that 
	$\eta_Y\cdot f = \xi_{\pair{f}{\id}}$, and by unicity, it is enough to prove 
	that $\pow{f}\cdot\eta_X = \xi_{\pair{f}{\id}}$. We have the following composition of pullbacks:
\begin{center}
\begin{tikzpicture}[scale=1.5]
		
	\node (ef) at (3,1) {\scriptsize{$X$}};
	\node (ypx) at (3,0) {\scriptsize{$X\times X$}};
	\node (zpx) at (3,-1) {\scriptsize{$Y\times X$}};
	\node (ey) at (8,1) {\scriptsize{$X$}};
	\node (ypy) at (8,0) {\scriptsize{$X\times\pow{X}$}};
	\node (zpy) at (8,-1) {\scriptsize{$Y\times\pow{X}$}};
	
	\path[->,font=\scriptsize]
		(ef) edge node[left]{$\pair{\id}{\id}$} (ypx)
		(ypx) edge node[left]{$\product{f}{\id}$} (zpx)
		(ef) edge node[above]{$\theta_X$} (ey)
		(zpx) edge node[below]{$\product{\id}{\eta_X}$} (zpy)
		(ey) edge node[right]{$\belong{X}$} (ypy)
		(ypy) edge node[right]{$\product{f}{\id}$} (zpy)
		(ypx) edge node[above]{$\product{\id}{\eta_X}$} (ypy);
		
	\draw (3.2,0.55) -- (3.7,0.55) -- (3.7,0.8);
	\draw (3.2,-0.45) -- (3.7,-0.45) -- (3.7,-0.2);
			
\end{tikzpicture}
\end{center}
Since $\product{f}{\id}\cdot\pair{\id}{\id} = \pair{f}{\id}$ is a mono, and the (epi, mono)-factorisation of $\product{f}{\id}\cdot\belong{X}$ is given by $m_f\cdot e_f$, then by preservation of the image by pullback, we have the following composition of pullbacks:
\begin{center}
\begin{tikzpicture}[scale=1.5]
		
	\node (r1r2) at (3,1) {\scriptsize{$X$}};
	\node (r1) at (3,0) {\scriptsize{$Y\times X$}};
	\node (r2) at (5.5,1) {\scriptsize{$E_f$}};
	\node (y) at (5.5,0) {\scriptsize{$Y\times\pow{X}$}};
	\node (ez) at (8,1) {\scriptsize{$Y$}};
	\node (zpz) at (8,0) {\scriptsize{$Y\times\pow{Y}$}};
	
	\path[->,font=\scriptsize]
		(r1r2) edge node[left]{$\pair{f}{\id}$} (r1)
		(r1r2) edge node[above]{$\theta$} (r2)
		(r1) edge node[below]{$\product{\id}{\eta_X}$} (y)
		(r2) edge node[left]{$m_f$} (y)
		(r2) edge node[above]{$\theta_f$} (ez)
		(y) edge node[below]{$\product{\id}{\pow{f}}$} (zpz)
		(ez) edge node[right]{$\belong{Y}$} (zpz);
		
	\draw (3.2,0.55) -- (3.45,0.55) -- (3.45,0.8);
	\draw (5.7,0.55) -- (5.95,0.55) -- (5.95,0.8);
			
\end{tikzpicture}
\end{center}
for some $\theta$.
	\item \textbf{$\mu$ is natural:} Let $\map{f}{X}{Y}$. Observe that we have the 
	following monos representing the same relations:
	\[
		\rcomp{\pair{f}{\id}}{\belongsq{X}}\,
		\equiv \rcomp{\pair{f}{\id}}{\rcomp{\belong{X}}{\belong{\pow{X}}}}
		\equiv \rcomp{m_f}{\belong{\pow{X}}}.
	\]
	Then, by Proposition~\ref{prop:composition}, we have:
	\begin{itemize}
		\item $\xi_{\rcomp{\pair{f}{\id}}{\belongsq{X}}} 
		= \mu_Y\cdot\pow{\xi_{\pair{f}{\id}}}\cdot\xi_{\belongsq{X}}
		= \mu_Y\cdot\pow{(\eta_Y\cdot f)}\cdot\mu_X
		=\pow{f}\cdot\mu_X$, by a coherence axiom that we prove next.
		\item $\xi_{\rcomp{m_f}{\belong{\pow{X}}}}
		= \mu_Y\cdot\pow{\xi_{m_f}}\cdot\xi_{\belong{\pow{X}}}
		= \mu_Y\cdot\pow{\pow{f}}$, which uses the fact that 
		$\xi_{\belong{X}} = \id_{\pow{X}}$.
	\end{itemize}
\item \textbf{coherence axioms:}
\begin{itemize}
	\item $\id_{\pow{X}} 
	= \xi_{\belong{X}} 
	= \xi_{\rcomp{\belong{X}}{\Delta_{\pow{X}}}} 
	= \mu_X\cdot\pow{\xi_{\belong{X}}}\cdot\xi_{\Delta_{\pow{X}}} 
	= \mu_X\cdot\eta_{\pow{X}}$.
	\item $\id_{\pow{X}} 
	= \xi_{\belong{X}} 
	= \xi_{\rcomp{\Delta_X}{\belong{X}}} 
	= \mu_X\cdot\pow{\xi_{\Delta_X}}\cdot\xi_{\belong{X}} 
	= \mu_X\cdot\pow{\eta_X}$.
	\item $\xi_{\rcomp{\belong{X}}{\rcomp{\belong{\pow{X}}}{\belong{\pow{\pow{X}}}}}} 
	= \mu_X\cdot\pow{\xi_{\rcomp{\belong{X}}{\belong{\pow{X}}}}}\cdot\xi_{\belong{\pow{\pow{X}}}} 
	= \mu_X\cdot\pow{\xi_{\belong{X}^2}} 
	= \mu_X\cdot\pow{\mu_X}$ 
	and
	$\xi_{\rcomp{\belong{X}}{\rcomp{\belong{\pow{X}}}{\belong{\pow{\pow{X}}}}}} 
	= \mu_X\cdot\pow{\xi_{\belong{X}}}\cdot\xi_{\rcomp{\belong{\pow{X}}}{\belong{\pow{\pow{X}}}}} 
	= \mu_X\cdot\xi_{\belong{\pow{X}}^2} 
	= \mu_X\cdot\mu_{\pow{X}}$.\qedhere
\end{itemize}
\end{itemize}
\end{proof}
\noindent 
The remaining part of Theorem~\ref{thm:pow-monad} is about the strength of the monad. 
This will be explained as a particular case of 
proto-distributive laws later on.

\subsection{Pseudo-Inverse of a Morphism}

Let us continue this section with another useful consequence of Lemma~\ref{prop:composition}.
A morphism $\map{f}{X}{Y}$ induces a Kleisli morphism 
$\map{\eta_Y\cdot f}{X}{\pow{Y}}$, or a relation as the right adjoint 
$\pair{f}{\id}$.
This relation has a converse which is given by its left adjoint (or map) 
$\pair{\id}{f}$.
This relation then corresponds to a unique morphism 
$\map{f^\dagger = \xi_{\pair{\id}{f}}}{Y}{\pow{X}}$.
This pseudo-inverse has nice properties when $f$ is a mono or an epi:
\begin{prop}
\label{prop:pseudo-inverse}
When $f$ is a mono, $f^\dagger\cdot f = \eta_X$ and $\pow{f}$ is a split 
mono.
When $f$ is an epi, $\pow{f}\cdot f^\dagger = \eta_Y$ and $\pow{f}$ is a 
split epi.
\end{prop}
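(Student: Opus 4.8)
The plan is to reduce both equalities to the composition formula of Proposition~\ref{prop:composition} together with the monad laws of Lemma~\ref{lem:pow-mon-naturality}, applied to the two ``round trips'' built from the map $\pair{\id}{f}$ and its right adjoint $\pair{f}{\id}$. First I would record two translations. Applying Proposition~\ref{prop:composition} to $\pair{\id}{f}$ (a relation from $X$ to $Y$) followed by $\pair{f}{\id}$ (from $Y$ to $X$), and using $\xi_{\pair{\id}{f}} = f^\dagger$ together with $\xi_{\pair{f}{\id}} = \eta_Y\cdot f$, gives
\[
\xi_{\rcomp{\pair{\id}{f}}{\pair{f}{\id}}} = \mu_X\cdot\pow{f^\dagger}\cdot\eta_Y\cdot f.
\]
Rewriting $\pow{f^\dagger}\cdot\eta_Y = \eta_{\pow{X}}\cdot f^\dagger$ by naturality of $\eta$ and then using $\mu_X\cdot\eta_{\pow{X}} = \id$, this collapses to $f^\dagger\cdot f$. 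Symmetrically, applying the proposition to $\pair{f}{\id}$ followed by $\pair{\id}{f}$, using functoriality of $\pow{\!}$ and $\mu_Y\cdot\pow{\eta_Y} = \id$, gives
\[
\xi_{\rcomp{\pair{f}{\id}}{\pair{\id}{f}}} = \mu_Y\cdot\pow{\eta_Y}\cdot\pow{f}\cdot f^\dagger = \pow{f}\cdot f^\dagger.
\]

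The heart of the argument is then to identify the two composite relations. By the very definition of relational composition, $\rcomp{\pair{\id}{f}}{\pair{f}{\id}}$ is the image in $X\times X$ of the pair of projections out of the kernel pair of $f$; when $f$ is a mono these projections coincide and the image is exactly $\Delta_X$. Dually, $\rcomp{\pair{f}{\id}}{\pair{\id}{f}}$ is the image of $\pair{f}{f}=\Delta_Y\cdot f$ in $Y\times Y$; when $f$ is an epi---hence a cover, since we work in a topos---this image is $\Delta_Y$. Feeding these identities into the two displays above and using $\eta_X = \xi_{\Delta_X}$ yields $f^\dagger\cdot f = \eta_X$ in the mono case and $\pow{f}\cdot f^\dagger = \eta_Y$ in the epi case. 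I expect this identification of the composites with a kernel pair and an image to be the main obstacle, since it is the only step that leaves the purely formal manipulation of $\xi$, $\eta$, and $\mu$; everything else is monad bookkeeping.

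Finally, the split statements follow by exhibiting $\mu_X\cdot\pow{f^\dagger}$ as the required one-sided inverse of $\pow{f}$ in both cases. When $f$ is mono, functoriality gives $\pow{f^\dagger}\cdot\pow{f} = \pow{(f^\dagger\cdot f)} = \pow{\eta_X}$, so $\mu_X\cdot\pow{f^\dagger}\cdot\pow{f} = \mu_X\cdot\pow{\eta_X} = \id_{\pow{X}}$, exhibiting $\pow{f}$ as a split mono. When $f$ is epi, naturality of $\mu$ gives $\pow{f}\cdot\mu_X = \mu_Y\cdot\pow{\pow{f}}$, whence $\pow{f}\cdot\mu_X\cdot\pow{f^\dagger} = \mu_Y\cdot\pow{(\pow{f}\cdot f^\dagger)} = \mu_Y\cdot\pow{\eta_Y} = \id_{\pow{Y}}$, exhibiting $\pow{f}$ as a split epi. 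Thus the single morphism $\mu_X\cdot\pow{f^\dagger}$ serves as a retraction in the mono case and a section in the epi case, which closes the proof.
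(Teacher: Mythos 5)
Your proof is correct, and apart from one half it follows the paper's route exactly: the epi case (identifying $\rcomp{\pair{f}{\id}}{\pair{\id}{f}}$ with $\Delta_Y$ and feeding it into Proposition~\ref{prop:composition}) and both splitting arguments (using $\mu_X\cdot\pow{f^\dagger}$ as the one-sided inverse, via functoriality in the mono case and naturality of $\mu$ in the epi case) are word-for-word the computations in the paper. Where you differ is the mono case: the paper establishes $f^\dagger\cdot f=\eta_X$ by pasting the pullback square witnessing $\Delta_X$ as the pullback of $\pair{\id}{f}$ along $\product{\id}{f}$ (which is a pullback precisely because $f$ is mono) onto the defining pullback of $f^\dagger$, and then invoking uniqueness of $\xi_{\Delta_X}$; you instead run Proposition~\ref{prop:composition} on $\rcomp{\pair{\id}{f}}{\pair{f}{\id}}$ and identify that composite with $\Delta_X$ via the degeneracy of the kernel pair of a mono. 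The two arguments use the same underlying fact (the kernel-pair pullback collapsing), but yours has the small advantage of treating the mono and epi cases symmetrically through the single lemma $\xi_{\rcomp{r}{s}}=\mu\cdot\pow{\xi_r}\cdot\xi_s$, at the cost of one extra naturality step for $\eta$; the paper's pasting is marginally more direct but bespoke to the mono case.
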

\begin{cor}
$\pow{\!}$ preserves epis and monos, and so (epi, mono)-factorisations.
\end{cor}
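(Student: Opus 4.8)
The plan is to read off both preservation properties immediately from Proposition~\ref{prop:pseudo-inverse}, using only the elementary facts that a split mono is in particular a mono and a split epi is in particular an epi. First I would observe that if $\map{f}{X}{Y}$ is a mono, then Proposition~\ref{prop:pseudo-inverse} asserts that $\pow{f}$ is a split mono, and hence a mono; dually, if $f$ is an epi, the same proposition gives that $\pow{f}$ is a split epi, and hence an epi. This already establishes that $\pow{\!}$ preserves monos and preserves epis.

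For the preservation of factorisations, I would invoke the functoriality of $\pow{\!}$, which is part of Theorem~\ref{thm:pow-monad} (and used throughout Lemma~\ref{lem:pow-mon-naturality}). Given any morphism $f$ with its (epi,mono)-factorisation $f = \monop{f}\cdot e$ in the topos $\CC$, functoriality yields $\pow{f} = \pow{\monop{f}}\cdot\pow{e}$. By the two preservation results just proved, $\pow{e}$ is an epi and $\pow{\monop{f}}$ is a mono, so this is an (epi,mono)-factorisation of $\pow{f}$. Since factorisations in a topos (being regular) are unique up to unique isomorphism, this is \emph{the} (epi,mono)-factorisation of $\pow{f}$, i.e.\ $\pow{\!}$ preserves (epi,mono)-factorisations.

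There is essentially no obstacle here: the entire content is carried by Proposition~\ref{prop:pseudo-inverse}, and the corollary is a formal consequence. The only point requiring a little care is the appeal to uniqueness of the (epi,mono)-factorisation, which is available because $\CC$ is a topos and therefore regular, so that the identity $\pow{f} = \pow{\monop{f}}\cdot\pow{e}$ genuinely computes the factorisation of $\pow{f}$ rather than merely exhibiting one.
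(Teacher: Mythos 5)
Your proof is correct and follows exactly the route the paper intends: the corollary is stated as an immediate consequence of Proposition~\ref{prop:pseudo-inverse} (split monos/epis are monos/epis), combined with functoriality of $\pow{\!}$ and uniqueness of (regular epi, mono)-factorisations in a topos. The paper offers no separate proof, treating the corollary as formal, and your filling-in of the details matches that reading.
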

\noindent 
When translating the proof of Proposition~\ref{prop:composition-reg} to toposes, 
the main argument becomes
the fact that $\pow{\!}$ maps epis to split epis.

\begin{proof}[Proof of Proposition~\ref{prop:pseudo-inverse}]
We have the following composition of pullbacks:
\begin{center}
\begin{tikzpicture}[scale=1.5]
		
	\node (r1r2) at (3,1) {\scriptsize{$X$}};
	\node (r1) at (3,0) {\scriptsize{$X\times X$}};
	\node (r2) at (5.5,1) {\scriptsize{$X$}};
	\node (y) at (5.5,0) {\scriptsize{$X\times Y$}};
	\node (ez) at (8,1) {\scriptsize{$E_X$}};
	\node (zpz) at (8,0) {\scriptsize{$X\times\pow{X}$}};
	
	\path[->,font=\scriptsize]
		(r1r2) edge node[left]{$\pair{\id}{\id}$} (r1)
		(r1r2) edge node[above]{$\id$} (r2)
		(r1) edge node[below]{$\product{\id}{f}$} (y)
		(r2) edge node[left]{$\pair{\id}{f}$} (y)
		(r2) edge node[above]{$\theta_X$} (ez)
		(y) edge node[below]{$\product{\id}{f^\dagger}$} (zpz)
		(ez) edge node[right]{$\belong{X}$} (zpz);
		
	\draw (3.2,0.55) -- (3.45,0.55) -- (3.45,0.8);
	\draw (5.7,0.55) -- (5.95,0.55) -- (5.95,0.8);
			
\end{tikzpicture}
\end{center}
Observe that the left one is a pullback only when $f$ is a mono.
Then the equality holds by unicity of $\xi_{\pair{\id}{\id}} = \eta_X$.
From this equality, we deduce that 
\[(\mu_X\cdot\pow{f^\dagger})\cdot\pow{f} = \mu_X\cdot\pow{\eta_X} = \id,\]
and so that $\pow{f}$ is a split mono.

When $f$ is an epi, 
\[\rcomp{\pair{f}{\id}}{\pair{\id}{f}} = \pair{\id}{\id}.\]
Indeed, $\rcomp{\pair{f}{\id}}{\pair{\id}{f}}$ is reprsented by 
the mono part of the (epi, mono)-factorisation of $\pair{f}{f}$, 
which is given by $\pair{\id}{\id}\cdot\,f$ when $f$ is epi.
Consequently, from Proposition~\ref{prop:composition}:
\begin{align*}
\eta_Y & = \xi_{\pair{\id}{\id}}
	= \xi_{\rcomp{\pair{f}{\id}}{\pair{\id}{f}}}
	= \mu_Y\cdot\pow{\xi_{\pair{f}{\id}}}\cdot\xi_{\pair{\id}{f}}\\
	& = \mu_Y\cdot\pow{\eta_Y}\cdot\pow{f}\cdot f^\dagger
	= \pow{f}\cdot f^\dagger.
\end{align*}

From this equality, we deduce that 
\[\pow{f}\cdot(\mu_X\cdot\pow{f^\dagger}) 
	= \mu_Y\cdot\pow{(\pow{f}\cdot f^\dagger)} 
	= \mu_Y\cdot\pow{\eta_Y} 
	= \id,\]
and so that $\pow{f}$ is a split epi.
\end{proof}

\subsection{Proto-Distributive Laws}
\label{sec:distributive}

As a side remark, we can easily derive some candidates for (weak) distributive laws
for every functor, also called cross-operator in~\cite{demoor94}.
Everything written here already appears in some form in~\cite{goy21}, but proved in a purely
relational way.

The power-set monad (and more generally, the 
power-object monad) is often combined with other functors 
to model the non-determinism of a system. Having weak distributive laws then 
allows to simplify the analysis by transferring it from the original category 
to the Kleisli category (which we know well in the case of the 
power-object monad). See for example \cite{urabe18}.

The interesting observation behind the definition of power-objects is that 
there is a canonical way to define a candidate for a distributive law of 
$\pow{\!}$ over \emph{any functor $F$}. We will see that these canonical 
candidates give rise to well-known (weak) distributive laws in the literature.

Given a functor $F$ on the topos and any object $X$ of the topos, we
define $\map{\distr{F}{X}}{F\pow{X}}{\pow{FX}}$ as the usual pattern
(epi, mono)-factorisation followed by unique morphism from the definition.
In this case, we consider the (epi, mono)-factorisation of 
$\pair{F\pi_1}{F\pi_2}\cdot F\belong{X}$:
\begin{center}
\begin{tikzpicture}[scale=1.5]
		
	\node (r1r2) at (0,0) {\scriptsize{$FE_X$}};
	\node (xz) at (6,0) {\scriptsize{$FX\times F\pow X$}};
	\node (r1sqr2) at (3,-1) {\scriptsize{$E_{F,X}$}};
	
	\path[->,font=\scriptsize]
		(r1r2) edge node[above]{$\pair{F\pi_1}{F\pi_2}\cdot F\belong{X}$} (xz);
		
	\path[->>,font=\scriptsize]
		(r1r2) edge node[below]{$e_{F,X}$} (r1sqr2);
		
	\path[>->,font=\scriptsize]
		(r1sqr2) edge node[below]{$m_{F,X}$} (xz);
			
\end{tikzpicture}
\end{center}
and $\distr{F}{X}$ is defined as $\xi_{m_{F,X}}$.

Those proto-distributive laws are related to liftings of functors to the Kleisli 
category, here to the category of relations. In the case of $\Set$, several 
papers \cite{goy20,garner20} investigate this connection, and particularly, some 
conditions are given for the existence of (weak) distributive laws.
We can prove a similar theorem in any topos, as already stated in~\cite{goy21}:
\begin{prop}
\label{prop:distr_nat_1}
If $F$ preserves weak pullbacks and epis, 
then $\distr{F}{X}$ is natural in $X$. Furthermore, we have: 
$\distr{F}{X}\cdot F\eta_X = \eta_{FX}$ and 
$\distr{F}{X}\cdot F\mu_X = 
	\mu_{FX}\cdot\pow{\distr{F}{X}}\cdot\distr{F}{\pow{X}}$.
\end{prop}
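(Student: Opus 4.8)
The plan is to run everything through the dictionary set up by Proposition~\ref{prop:composition}, which turns composition of relations into Kleisli composition. Write $\overline{F}r$ for the relation from $FX$ to $FY$ obtained as the image of $\splitf\cdot Fr$ (the relational lifting already used for HJ-bisimulations). The relevant dictionary entries are $\eta_X=\xi_{\Delta_X}$, $\mu_X=\xi_{\belongsq{X}}=\xi_{\rcomp{\belong{X}}{\belong{\pow{X}}}}$, $\xi_{\belong{X}}=\id_{\pow{X}}$, $\xi_{\pair{g}{\id}}=\eta\cdot g$, and $\distr{F}{X}=\xi_{\overline{F}\belong{X}}$ by definition. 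From Proposition~\ref{prop:composition} together with the unit laws of Lemma~\ref{lem:pow-mon-naturality} I first record three routine manipulation rules, each a one-line computation: (A) $\pow{g}\cdot\xi_\rho=\xi_{\rcomp{\pair{g}{\id}}{\rho}}$; (C) $\mu_X\cdot\xi_\tau=\xi_{\rcomp{\belong{X}}{\tau}}$; and (D) every relation is recovered from its classifier as $\rho=\rcomp{\belong{X}}{\pair{\xi_\rho}{\id}}$ (the relational form of the defining pullback of $\xi_\rho$, using that composing with a right adjoint is reindexing, cf. Theorem~\ref{th:maps}).

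The single piece of real content is that $\overline{F}$ is a dagger-preserving functor on $\Rel{\CC}$ that sends graphs to graphs. Preservation of the dagger, of identities ($\overline{F}\Delta_X=\Delta_{FX}$), and of graphs ($\overline{F}\pair{\id}{f}=\pair{\id}{Ff}$, hence $\overline{F}\pair{f}{\id}=\pair{Ff}{\id}$) is formal: the swap commutes with images, and $\splitf\cdot F\pair{\id}{f}=\pair{\id}{Ff}$ is already monic. Preservation of composition, $\overline{F}(\rcomp{r}{s})=\rcomp{\overline{F}r}{\overline{F}s}$, is the crux and is exactly where the hypotheses are consumed: forming $\rcomp{r}{s}$ as an image over the pullback $R\star S$, one compares $F(R\star S)$ with the pullback formed from the legs of $\overline{F}r$ and $\overline{F}s$; this comparison is a regular epi because $F$ preserves weak pullbacks and epis, and one concludes with the preservation of images by weak pullbacks recalled in Section~\ref{sec:relations}. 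I expect this to be the main obstacle, the rest being bookkeeping.

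With functoriality of $\overline{F}$ in hand, I prove the Key Lemma $\distr{F}{X}\cdot F\xi_r=\xi_{\overline{F}r}$, which interprets $\distr{F}{X}$ precomposed with $F$ of an arbitrary Kleisli morphism. Pasting the defining pullback of $\distr{F}{X}=\xi_{\overline{F}\belong{X}}$ (the pullback of $\belong{FX}$ along $\product{\id}{\distr{F}{X}}$, whose left vertical leg is $\overline{F}\belong{X}$) with the pullback along $\product{\id}{F\xi_r}$, the claim reduces to $\overline{F}r\equiv(\product{\id}{F\xi_r})^\ast(\overline{F}\belong{X})$. By (D) the reindexing on the right is $\rcomp{\overline{F}\belong{X}}{\pair{F\xi_r}{\id}}$, while $r=\rcomp{\belong{X}}{\pair{\xi_r}{\id}}$; applying the functor $\overline{F}$ to the latter and using $\overline{F}\pair{\xi_r}{\id}=\pair{F\xi_r}{\id}$ gives exactly the former.

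Finally the three assertions follow by feeding suitable relations into the Key Lemma. For the unit, $\distr{F}{X}\cdot F\eta_X=\distr{F}{X}\cdot F\xi_{\Delta_X}=\xi_{\overline{F}\Delta_X}=\xi_{\Delta_{FX}}=\eta_{FX}$. For naturality, apply the Key Lemma to $r=\rcomp{\pair{f}{\id}}{\belong{X}}$ (the relation classified by $\pow{f}$): then $\distr{F}{Y}\cdot F\pow{f}=\xi_{\overline{F}(\rcomp{\pair{f}{\id}}{\belong{X}})}=\xi_{\rcomp{\pair{Ff}{\id}}{\overline{F}\belong{X}}}$, which by rule (A) equals $\pow{Ff}\cdot\distr{F}{X}$. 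For the multiplication, the Key Lemma with $r=\belongsq{X}$ and functoriality give $\distr{F}{X}\cdot F\mu_X=\xi_{\rcomp{\overline{F}\belong{X}}{\overline{F}\belong{\pow{X}}}}$; expanding the right-hand side $\mu_{FX}\cdot\pow{\distr{F}{X}}\cdot\distr{F}{\pow{X}}$ by rules (A) and (C) yields $\xi_{\rcomp{\belong{FX}}{\rcomp{\pair{\distr{F}{X}}{\id}}{\overline{F}\belong{\pow{X}}}}}$, and rule (D) applied to $\overline{F}\belong{X}$ (whose classifier is $\distr{F}{X}$) rewrites $\rcomp{\belong{FX}}{\pair{\distr{F}{X}}{\id}}$ as $\overline{F}\belong{X}$; the two sides then agree by associativity of composition and injectivity of $\xi$.
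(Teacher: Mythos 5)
Your proof is correct, but it takes a genuinely different route from the paper's. The paper proves the three identities by three separate diagram chases, each pasting (weak) pullbacks with image factorisations and invoking preservation of images by (weak) pullbacks; the multiplication axiom in particular requires an intricate computation of a cospan limit via four auxiliary pullbacks. You instead factor everything through a single Key Lemma, $\distr{F}{X}\cdot F\xi_r=\xi_{\overline{F}r}$, which in turn rests on the functoriality of the relation lifting $\overline{F}$ on $\Rel{\CC}$; once that is in place, naturality and both coherence axioms become short equational computations in the allegory, using exactly the dictionary of Proposition~\ref{prop:composition} (your rules (A), (C), (D) are all instances of it together with the unit laws of Lemma~\ref{lem:pow-mon-naturality}). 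What your approach buys is conceptual clarity: it isolates the genuine content of the proposition as the statement that $\overline{F}$ is a functor on relations extending $F$ on graphs and cographs, and it proves the hard step once rather than three times in disguise. What it does not buy is a shortcut on that hard step: the verification that $\overline{F}(\rcomp{r}{s})=\rcomp{\overline{F}r}{\overline{F}s}$, which you only sketch, requires factoring the comparison map $F(R\star S)\to P$ through the strict pullback $Q$ of $F(\pi_2\cdot r)$ and $F(\pi_1\cdot s)$ (split epi onto $Q$ by weak-pullback preservation, then $Q\to P$ epi by pullback-stability of epis), together with $Fe_{\rcomp{r}{s}}$ being epi to identify $\overline{F}(\rcomp{r}{s})$ with the image over $F(R\star S)$ --- this is precisely the ``four small pullbacks'' argument the paper carries out inside its proof of the multiplication axiom, so you should spell it out rather than compress it into one sentence. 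With that expansion, your argument is complete, and arguably the cleaner presentation.
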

\begin{rem}
In $\Set$, there is no need for the second condition, as any functor 
preserves epis: every epi is split in $\Set$ 
by the axiom of choice.
\end{rem}

\begin{proof}
Let $\map{f}{X}{Y}$. We want to prove that the following square commutes:
\begin{center}
\begin{tikzpicture}[scale=1.5]
		
	\node (r1r2) at (3,1) {\scriptsize{$F\pow{X}$}};
	\node (r1) at (3,0) {\scriptsize{$F\pow{Y}$}};
	\node (r2) at (8,1) {\scriptsize{$\pow{FX}$}};
	\node (y) at (8,0) {\scriptsize{$\pow{FY}$}};
	
	\path[->,font=\scriptsize]
		(r1r2) edge node[left]{$F\pow{f}$} (r1)
		(r1r2) edge node[above]{$\distr{F}{X}$} (r2)
		(r1) edge node[below]{$\distr{F}{Y}$} (y)
		(r2) edge node[right]{$\pow{Ff}$} (y);
			
\end{tikzpicture}
\end{center}
On one side, we have the following composition of pullbacks:
\begin{center}
\begin{tikzpicture}[scale=1.5]
		
	\node (efcx) at (3,1) {\scriptsize{$E_{F,X}$}};
	\node (fxfpx) at (3,0) {\scriptsize{$FX\times F\pow{X}$}};
	\node (efx) at (8,1) {\scriptsize{$E_{FX}$}};
	\node (fxpfx) at (8,0) {\scriptsize{$FX\times\pow{FX}$}};
	\node (fyfpx) at (3,-1) {\scriptsize{$FY\times F\pow{X}$}};
	\node (fypfx) at (8,-1) {\scriptsize{$FY\times\pow{FX}$}};
	
	\path[->,font=\scriptsize]
		(efcx) edge node[above]{$\theta_{F,X}$} (efx)
		(efcx) edge node[left]{$m_{F,X}$} (fxfpx)
		(fxfpx) edge node[left]{$\product{Ff}{\id}$} (fyfpx)
		(efx) edge node[right]{$\belong{FX}$} (fxpfx)
		(fxpfx) edge node[right]{$\product{Ff}{\id}$} (fypfx)
		(fxfpx) edge node[above]{$\product{\id}{\distr{F}{X}}$} (fxpfx)
		(fyfpx) edge node[below]{$\product{\id}{\distr{F}{X}}$} (fypfx);
		
	\draw (3.2,0.55) -- (3.45,0.55) -- (3.45,0.8);
	\draw (3.2,-0.45) -- (3.45,-0.45) -- (3.45,-0.2);
			
\end{tikzpicture}
\end{center}
Since $\monop{(\product{Ff}{\id})\cdot\belong{FX}} = m_{Ff}$, then 
by preservation of image by pullbacks, we have 
a pullback of the following shape:
\begin{center}
\begin{tikzpicture}[scale=1.5]
		
	\node (u) at (3,1) {\scriptsize{$U$}};
	\node (fyfpx) at (3,0) {\scriptsize{$FY\times F\pow{X}$}};
	\node (efy) at (8,1) {\scriptsize{$E_{FY}$}};
	\node (fypfy) at (8,0) {\scriptsize{$FY\times\pow{FY}$}};
	
	\path[->,font=\scriptsize]
		(u) edge (efy)
		(u) edge node[left]{$m_1$} (fyfpx)
		(efy) edge node[right]{$\belong{FY}$} (fypfy)
		(fyfpx) edge node[below]{$\product{\id}{(\pow{Ff}\cdot\distr{F}{X})}$} (fypfy);
		
	\draw (3.2,0.55) -- (3.45,0.55) -- (3.45,0.8);
			
\end{tikzpicture}
\end{center}
where $m_1 = \monop{(\product{Ff}{\id})\cdot m_{F,X}}$.
So, $\pow{Ff}\cdot\distr{F}{X}$ is the unique morphism associated to 
$m_1$, and it is enough to prove that $\distr{F}{Y}\cdot F\pow{f}$ is 
also associated to $m_1$ to conclude.

On the other side, we have the following composition of weak pullbacks:
\begin{center}
\begin{tikzpicture}[scale=1.5]
		
	\node (fef) at (3,1) {\scriptsize{$FE_f$}};
	\node (fypx) at (3,0) {\scriptsize{$F(Y\times \pow{X})$}};
	\node (fey) at (8,1) {\scriptsize{$FE_Y$}};
	\node (fypy) at (8,0) {\scriptsize{$F(Y\times\pow{Y})$}};
	\node (fyfypx) at (3,-1) {\scriptsize{$FY\times F(Y\times \pow{X})$}};
	\node (fyfypy) at (8,-1) {\scriptsize{$FY\times F(Y\times \pow{Y})$}};
	\node (fyfpx) at (3,-2) {\scriptsize{$FY\times F\pow{X}$}};
	\node (fyfpy) at (8,-2) {\scriptsize{$FY\times F\pow{Y}$}};
	
	\path[->,font=\scriptsize]
		(fef) edge node[above]{$F\theta_f$} (fey)
		(fypx) edge node[above]{$F(\product{\id}{\pow{f}})$} (fypy)
		(fyfypx) edge node[below]{$\product{\id}{F(\product{\id}{\pow{f}})}$} (fyfypy)
		(fyfpx) edge node[below]{$\product{\id}{F\pow{f}}$} (fyfpy)
		(fef) edge node[left]{$Fm_f$} (fypx)
		(fypx) edge node[left]{$\pair{F\pi_1}{\id}$} (fyfypx)
		(fyfypx) edge node[left]{$\product{\id}{F\pi_2}$} (fyfpx)
		(fey) edge node[right]{$F\belong{Y}$} (fypy)
		(fypy) edge node[right]{$\pair{F\pi_1}{\id}$} (fyfypy)
		(fyfypy) edge node[right]{$\product{\id}{F\pi_2}$} (fyfpy);
		
	\draw[dashed] (3.2,0.55) -- (3.45,0.55) -- (3.45,0.8);
	\draw (3.2,-0.45) -- (3.45,-0.45) -- (3.45,-0.2);
	\draw[dashed] (3.2,-1.45) -- (3.45,-1.45) -- (3.45,-1.2);
			
\end{tikzpicture}
\end{center}
The upper weak pullback comes from the preservation of weak pullbacks 
by $F$ and by the definition of $\pow{f}$. The middle pullback is easy.
The lower weak pullback comes from the preservation of weak pullbacks by 
$F$ and product functors.
Then by preservation of images by weak pullbacks, there is a 
weak pullback of the following shape:
\begin{center}
\begin{tikzpicture}[scale=1.5]
		
	\node (v) at (3,1) {\scriptsize{$V$}};
	\node (fyfpx) at (3,0) {\scriptsize{$FY\times F\pow{X}$}};
	\node (efy) at (8,1) {\scriptsize{$E_{FY}$}};
	\node (fypfy) at (8,0) {\scriptsize{$FY\times\pow{FY}$}};
	
	\path[->,font=\scriptsize]
		(v) edge (efy)
		(v) edge node[left]{$m_2$} (fyfpx)
		(efy) edge node[right]{$\belong{FY}$} (fypfy)
		(fyfpx) edge node[below]{$\product{\id}{(\distr{F}{Y}\cdot F\pow{f})}$} (fypfy);
		
	\draw[dashed] (3.2,0.55) -- (3.45,0.55) -- (3.45,0.8);
			
\end{tikzpicture}
\end{center}
where $m_2 = \monop{\pair{F\pi_1}{F\pi_2}\cdot Fm_f}$. To conclude, 
it is enough to prove that $m_1 \equiv m_2$:

\begin{center}
\begin{tabular}{rclcr}
    $m_2$ & $\equiv$ & $\monop{\pair{F\pi_1}{F\pi_2}\cdot Fm_f}$
    & & \hfill (definition)\\
    & $\equiv$ & $\monop{\pair{F\pi_1}{F\pi_2}\cdot Fm_f \cdot Fe_f}$
    & & \hfill ($F$ preserves epis)\\
    & $\equiv$ & $\monop{\pair{F\pi_1}{F\pi_2}\cdot F(\product{f}{\id})\cdot F\belong{X}}$
    & & \hfill (definition)\\
    & $\equiv$ & $\monop{(\product{Ff}{\id})\cdot\pair{F\pi_1}{F\pi_2}\cdot F\belong{X}}$
    & & \hfill (calculation)\\
    & $\equiv$ & $\monop{(\product{Ff}{\id})\cdot m_{F,X}\cdot e_{F,X}} $
    & & \hfill (definition)\\
    & $\equiv$ & $\monop{(\product{Ff}{\id})\cdot m_{F,X}}$
    & & \hfill ($e_{F,X}$ epi)\\
    & $\equiv$ & $m_1$
    & & \hfill (definition)
\end{tabular}
\end{center}

Now, let us prove the first coherence axiom:
\begin{center}
\begin{tikzpicture}[scale=1.5]
		
	\node (fpx) at (0,-0.25) {\scriptsize{$F\pow{X}$}};
	\node (pfx) at (5,-0.25) {\scriptsize{$\pow{FX}$}};
	\node (fx) at (2.5,-1.25) {\scriptsize{$FX$}};
	
	\path[->,font=\scriptsize]
		(fpx) edge node[above]{$\distr{F}{X}$} (pfx)
		(fx) edge node[below]{$F\eta_X$} (fpx)
		(fx) edge node[below]{$\eta_{FX}$} (pfx);
\end{tikzpicture}
\end{center}
Similarly to the above proof, we have the following composition of 
weak pullbacks:
\begin{center}
\begin{tikzpicture}[scale=1.5]
		
	\node (fef) at (3,1) {\scriptsize{$FX$}};
	\node (fypx) at (3,0) {\scriptsize{$F(X\times X)$}};
	\node (fey) at (8,1) {\scriptsize{$FE_X$}};
	\node (fypy) at (8,0) {\scriptsize{$F(X\times\pow{X})$}};
	\node (fyfypx) at (3,-1) {\scriptsize{$FX\times F(X\times X)$}};
	\node (fyfypy) at (8,-1) {\scriptsize{$FY\times F(X\times \pow{X})$}};
	\node (fyfpx) at (3,-2) {\scriptsize{$FX\times FX$}};
	\node (fyfpy) at (8,-2) {\scriptsize{$FX\times F\pow{X}$}};
	
	\path[->,font=\scriptsize]
		(fef) edge node[above]{$F\theta_X$} (fey)
		(fypx) edge node[above]{$F(\product{\id}{\eta_X})$} (fypy)
		(fyfypx) edge node[below]{$\product{\id}{F(\product{\id}{\eta_X})}$} (fyfypy)
		(fyfpx) edge node[below]{$\product{\id}{F\eta_X}$} (fyfpy)
		(fef) edge node[left]{$F\pair{\id}{\id}$} (fypx)
		(fypx) edge node[left]{$\pair{F\pi_1}{\id}$} (fyfypx)
		(fyfypx) edge node[left]{$\product{\id}{F\pi_2}$} (fyfpx)
		(fey) edge node[right]{$F\belong{X}$} (fypy)
		(fypy) edge node[right]{$\pair{F\pi_1}{\id}$} (fyfypy)
		(fyfypy) edge node[right]{$\product{\id}{F\pi_2}$} (fyfpy);
		
	\draw[dashed] (3.2,0.55) -- (3.45,0.55) -- (3.45,0.8);
	\draw (3.2,-0.45) -- (3.45,-0.45) -- (3.45,-0.2);
	\draw[dashed] (3.2,-1.45) -- (3.45,-1.45) -- (3.45,-1.2);
			
\end{tikzpicture}
\end{center}
By preservation of images by weak pullbacks, this implies that 
$\distr{F}{X}\cdot F\eta_X$ is the unique morphism associated to 
\[\monop{\pair{F\pi_1}{F\pi_2}\cdot F\pair{\id}{\id}} = \pair{\id_{FX}}{\id_{FX}},\] 
and so is $\eta_{FX}$.

Finally, let us prove the second coherence axiom:
\begin{center}
\begin{tikzpicture}[scale=1.5]

	\node (fppx) at (4,0) {\scriptsize{$F\pow{\pow{X}}$}};
	\node (fpx) at (9,0) {\scriptsize{$F\pow{X}$}};
	\node (pfpx) at (4,-0.75) {\scriptsize{$\pow{F\pow{X}}$}};
	\node (ppfx) at (4,-1.5) {\scriptsize{$\pow{\pow{FX}}$}};
	\node (pfx) at (9,-1.5) {\scriptsize{$\pow{FX}$}};
	
	\path[->,font=\scriptsize]
		(fppx) edge node[above]{$F\mu_X$} (fpx)
		(fppx) edge node[left]{$\distr{F}{\pow{X}}$} (pfpx)
		(pfpx) edge node[left]{$\pow{\distr{F}{X}}$} (ppfx)
		(fpx) edge node[right]{$\distr{F}{X}$} (pfx)
		(ppfx) edge node[below]{$\mu_{FX}$} (pfx);
			
\end{tikzpicture}
\end{center}
First, by Proposition~\ref{prop:composition}, 
$\mu_{FX}\cdot\pow{\distr{F}{X}}\cdot\distr{F}{\pow{X}}$
is the unique morphism associated with $\rcomp{m_{F,X}}{m_{F,\pow{X}}}$.
On the other side, with the same kind of composition of weak pullbacks, 
we have that $\distr{F}{X}\cdot F\mu_X$ is the unique morphism associated 
to 
\[\monop{\pair{F\pi_1}{F\pi_2}\cdot F\belongsq{X}},\] so it is enough to 
prove that both monos are the same. Using the fact that $F$ preserves 
epis, we can observe that 
\begin{equation}
\label{eq:kappa}
	\monop{\pair{F\pi_1}{F\pi_2}\cdot F\belongsq{X}} \equiv 
	\monop{\pair{F(\pi_1\cdot\belong{X}\cdot\kappa_1)}
		{F(\pi_2\cdot\belong{\pow{X}}\cdot\kappa_2)}},
\end{equation}
where $\kappa_1$ and $\kappa_2$ are obtained with the following 
pullback:
\begin{center}
\begin{tikzpicture}[scale=1.5]
		
	\node (r1r2) at (3,1) {\scriptsize{$E^3_X$}};
	\node (r1) at (3,0) {\scriptsize{$E_X$}};
	\node (r2) at (8,1) {\scriptsize{$E_{\pow{X}}$}};
	\node (y) at (8,0) {\scriptsize{$\pow{X}$}};
	
	\path[->,font=\scriptsize]
		(r1r2) edge node[left]{$\kappa_{1}$} (r1)
		(r1r2) edge node[above]{$\kappa_{2}$} (r2)
		(r1) edge node[below]{$\pi_2\cdot\belong{X}$} (y)
		(r2) edge node[right]{$\pi_1\cdot\belong{\pow{X}}$} (y);
		
	\draw (3.2,0.55) -- (3.7,0.55) -- (3.7,0.8);
			
\end{tikzpicture}
\end{center}
By preservation of weak pullbacks by $F$, the following is then a 
weak pullback:
\begin{center}
\begin{tikzpicture}[scale=1.5]
		
	\node (r1r2) at (3,1) {\scriptsize{$FE^3_X$}};
	\node (r1) at (3,0) {\scriptsize{$FE_X$}};
	\node (r2) at (8,1) {\scriptsize{$FE_{\pow{X}}$}};
	\node (y) at (8,0) {\scriptsize{$F\pow{X}$}};
	
	\path[->,font=\scriptsize]
		(r1r2) edge node[left]{$F\kappa_{1}$} (r1)
		(r1r2) edge node[above]{$F\kappa_{2}$} (r2)
		(r1) edge node[below]{$F(\pi_2\cdot\belong{X})$} (y)
		(r2) edge node[right]{$F(\pi_1\cdot\belong{\pow{X}})$} (y);
		
	\draw[dashed] (3.2,0.55) -- (3.7,0.55) -- (3.7,0.8);
			
\end{tikzpicture}
\end{center}
If we analyse the strict pullback of the same diagram, then we realise that 
it is also the limit of the following cospan:
\begin{center}
\begin{tikzpicture}[scale=1.5]
		
	\node (fpx) at (8,0) {\scriptsize{$F\pow{X}$}};
	\node (efx) at (5.5,0) {\scriptsize{$E_{F,X}$}};
	\node (fex) at (3,0) {\scriptsize{$FE_X$}};
	\node (efpx) at (8,1) {\scriptsize{$E_{F,\pow{X}}$}};
	\node (fepx) at (8,2) {\scriptsize{$FE_{\pow{X}}$}};
	
	\path[->>,font=\scriptsize]
		(fex) edge node[below]{$e_{F,X}$} (efx)
		(fepx) edge node[left]{$e_{F,\pow{X}}$} (efpx);
		
	\path[->,font=\scriptsize]
		(efx) edge node[below]{$\pi_2\cdot m_{F,X}$} (fpx)
		(efpx) edge node[left]{$\pi_1\cdot m_{F,\pow{X}}$} (fpx);
		
	\path[->,font=\scriptsize,bend left=40]
		(fepx) edge node[right]{$F(\pi_1\cdot\belong{\pow{X}})$} (fpx);
		
	\path[->,font=\scriptsize,bend right=30]
		(fex) edge node[below]{$F(\pi_2\cdot\belong{X})$} (fpx);
			
\end{tikzpicture}
\end{center}
Now, we can compute this pullback by computing four smaller pullbacks, 
which gives us the following situation, using the preservation of epis by 
pullbacks:
\begin{center}
\begin{tikzpicture}[scale=1.5]
		
	\node (fpx) at (8,0) {\scriptsize{$F\pow{X}$}};
	\node (efx) at (5.5,0) {\scriptsize{$E_{F,X}$}};
	\node (fex) at (3,0) {\scriptsize{$FE_X$}};
	\node (efpx) at (8,1) {\scriptsize{$E_{F,\pow{X}}$}};
	\node (fepx) at (8,2) {\scriptsize{$FE_{\pow{X}}$}};
	\node (u) at (5.5,1) {\scriptsize{$U$}};
	\node (ana) at (5.5,2) {\scriptsize{$\bullet$}};
	\node (anl) at (3,1) {\scriptsize{$\bullet$}};
	\node (v) at (3,2) {\scriptsize{$V$}};
	
	\path[->>,font=\scriptsize]
		(fex) edge node[below]{$e_{F,X}$} (efx)
		(fepx) edge node[right]{$e_{F,\pow{X}}$} (efpx)
		(v) edge (ana)
		(v) edge (anl)
		(ana) edge (u)
		(anl) edge (u);
		
	\path[->,font=\scriptsize]
		(efx) edge node[below]{$\pi_2\cdot m_{F,X}$} (fpx)
		(efpx) edge node[right]{$\pi_1\cdot m_{F,\pow{X}}$} (fpx)
		(u) edge node[left]{$\rho_1$} (efx)
		(u) edge node[above]{$\rho_2$} (efpx)
		(ana) edge (fepx)
		(anl) edge (fex);

	\draw (3.2,0.55) -- (3.7,0.55) -- (3.7,0.8);
	\draw (3.2,1.55) -- (3.7,1.55) -- (3.7,1.8);
	\draw (5.7,0.55) -- (5.95,0.55) -- (5.95,0.8);
	\draw (5.7,1.55) -- (5.95,1.55) -- (5.95,1.8);
			
\end{tikzpicture}
\end{center}
Since $FE^3_X$ is a weak pullback of this cospan, the unique morphism 
of cones from $FE^3_X$ to $V$ is a split epi. In total, this means that there 
is an epi $\epi{u}{FE^3_X}{U}$ such that 
\begin{equation}
\label{eq:u}
	\rho_1\cdot u = e_{F,X}\cdot F\kappa_1 \quad \text{and} \quad
	\rho_2\cdot u = e_{F,\pow{X}}\cdot F\kappa_2.
\end{equation}
Now, the lower-right pullback is the one used to define the composition
$\rcomp{m_{F,X}}{m_{F,\pow{X}}}$, which means that:
\[
	\rcomp{m_{F,X}}{m_{F,\pow{X}}} = 
	\monop{\pair{\pi_1\cdot m_{F,X}\cdot\rho_1}{\pi_2\cdot m_{F,\pow{X}}\cdot\rho_2}}.
\]
To conclude, it is enough to observe:
\begin{center}
\begin{align*}
     & \mkern9mu \monop{\pair{\pi_1\cdot m_{F,X}\cdot\rho_1}{\pi_2\cdot m_{F,\pow{X}}\cdot\rho_2}} \\
     \equiv \mkern9mu & \mkern9mu \monop{\pair{\pi_1\cdot m_{F,X}\cdot\rho_1}{\pi_2\cdot m_{F,\pow{X}}\cdot\rho_2}\cdot u}
     & \hfill \text{($u$ is epi)}\\
     \equiv \mkern9mu & \mkern9mu \monop{\pair{\pi_1\cdot m_{F,X}\cdot e_{F,X}\cdot F\kappa_1}{\pi_2\cdot m_{F,\pow{X}}\cdot e_{F,\pow{X}}\cdot F\kappa_2}}
     & \hfill \text{(by \eqref{eq:u})}\\
     \equiv \mkern9mu &\mkern9mu  \monop{\pair{F(\pi_1\cdot\belong{X})\cdot F\kappa_1}{F(\pi_2\cdot\belong{\pow{X}})\cdot F\kappa_2}}
     & \hfill \text{(definition)}\\
     \equiv \mkern9mu & \mkern9mu \monop{\pair{F\pi_1}{F\pi_2}\cdot F\belongsq{X}}
     & \hfill \text{(by \eqref{eq:kappa})} \tag*{\qedhere}
\end{align*}
\end{center}
\end{proof}
\noindent 
In \cite{garner20}, some conditions are also given to get (weak) distributive laws.
Those results can be encompassed in a result about naturality of 
$\distr{F}{X}$ with respect to $F$ in the following sense:
\begin{prop}
\label{prop:distr_nat_2}
Assume given a natural transformation $\natt{\tau}{F}{G}$ such that 
its naturality squares are weak pullbacks. Then the following diagram 
commutes for any $X$:
\begin{center}
\begin{tikzpicture}[scale=1.5]
		
	\node (r1r2) at (3,1) {\scriptsize{$F\pow{X}$}};
	\node (r1) at (3,0) {\scriptsize{$G\pow{X}$}};
	\node (r2) at (8,1) {\scriptsize{$\pow{FX}$}};
	\node (y) at (8,0) {\scriptsize{$\pow{GX}$}};
	
	\path[->,font=\scriptsize]
		(r1r2) edge node[left]{$\tau_{\pow{X}}$} (r1)
		(r1r2) edge node[above]{$\distr{F}{X}$} (r2)
		(r1) edge node[below]{$\distr{G}{X}$} (y)
		(r2) edge node[right]{$\pow{\tau_X}$} (y);
			
\end{tikzpicture}
\end{center}
\end{prop}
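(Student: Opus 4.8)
The plan is to show that both legs of the square, namely $\pow{\tau_X}\cdot\distr{F}{X}$ and $\distr{G}{X}\cdot\tau_{\pow X}$, are the morphism $\xi_m$ classifying one and the same relation $m$ from $GX$ to $F\pow{X}$. Since $\xi$ is a bijection (this is the defining property of a topos), it then suffices to prove that the two sides classify the \emph{same} subobject of $GX\times F\pow{X}$, which reduces the whole statement to a comparison of two monos.

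First I would identify each side as a $\xi$ of an explicit mono. For the top--right leg, the same composition-of-pullbacks computation as in the first half of the proof of Proposition~\ref{prop:distr_nat_1} applies verbatim, with $\map{\tau_X}{FX}{GX}$ playing the role of $Ff$ and using only that $\pow{\tau_X}=\xi_{m_{\tau_X}}$ for $m_{\tau_X}=\monop{(\product{\tau_X}{\id})\cdot\belong{FX}}$; it yields $\pow{\tau_X}\cdot\distr{F}{X}=\xi_{m_1}$ with $m_1=\monop{(\product{\tau_X}{\id})\cdot m_{F,X}}$. Since $m_{F,X}\cdot e_{F,X}=\splitf\cdot F\belong{X}$ and $e_{F,X}$ is epi, we get $m_1=\monop{a}$ where $\map{a=(\product{\tau_X}{\id})\cdot\splitf\cdot F\belong{X}}{FE_X}{GX\times F\pow{X}}$. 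For the left--bottom leg, precomposing the pullback defining $\distr{G}{X}=\xi_{m_{G,X}}$ with $h:=\product{\id}{\tau_{\pow X}}$ and pasting shows $\distr{G}{X}\cdot\tau_{\pow X}=\xi_{m'}$, where $m'$ is the pullback of $m_{G,X}$ along $h$. Thus everything reduces to proving $m_1\equiv m'$, i.e. that $\monop{a}$ is the pullback of $m_{G,X}=\monop{\splitg{G}\cdot G\belong{X}}$ along $h$.

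The geometric core is the following. Set $\map{b=\splitg{G}\cdot G\belong{X}}{GE_X}{GX\times G\pow{X}}$. Naturality of $\tau$ at $\belong{X}$, $\pi_1$ and $\pi_2$ gives $h\cdot a=b\cdot\tau_{E_X}$, so the square $(T)$ with top $\tau_{E_X}$, bottom $h$, left $a$, right $b$ commutes; I claim $(T)$ is a \emph{weak} pullback. I would present $(T)$ as the vertical pasting of the naturality square $(N)$ of $\tau$ at $\belong{X}$ (a weak pullback by hypothesis) on top of the square $(P)$ with top $\tau_{X\times\pow{X}}$, bottom $h$, left $(\product{\tau_X}{\id})\cdot\splitf$, right $\splitg{G}$. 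As vertical pasting of weak pullbacks is again a weak pullback, it remains to check $(P)$, and this is the main obstacle. The point is that the genuine pullback $Q$ of $h$ and $\splitg{G}$ is isomorphic to the pullback $P_2$ of $\tau_{\pow X}$ and $G\pi_2$ occurring in the naturality square of $\tau$ at $\pi_2$: the first ($GX$-) coordinate of $Q$ is forced by naturality of $\tau$ at $\pi_1$, so the only genuine constraint is on the second coordinate. Under this isomorphism the comparison map $F(X\times\pow{X})\to Q$ becomes the comparison map into $P_2$, which is a split epi precisely because the naturality square at $\pi_2$ is a weak pullback. Hence $(P)$, and therefore $(T)$, is a weak pullback.

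Finally I would conclude. Because $(T)$ is a weak pullback, the comparison $\map{u}{FE_X}{Q}$ into the genuine pullback $Q$ of $b$ along $h$ is a split epi, with pullback projection $\map{q_1}{Q}{GX\times F\pow{X}}$ satisfying $q_1\cdot u=a$; as $u$ is epi, $\im{a}=\im{q_1\cdot u}=\im{q_1}$. By preservation of images under pullbacks, $\im{q_1}$ is the pullback of $m_{G,X}$ along $h$, that is $\im{q_1}\equiv m'$. Therefore $m_1=\im{a}=m'$, the two legs classify the same subobject, and the square commutes. The delicate step throughout is the weak-pullback claim for $(P)$: recognising that the real content is the weak-pullback naturality of $\tau$ at $\pi_2$ while the first product coordinate is rigidly determined by naturality at $\pi_1$, so that the isomorphism $Q\cong P_2$ transports the required split epi.
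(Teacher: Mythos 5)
Your proof is correct and follows essentially the same route as the paper: both legs are identified with the classifying maps $\xi$ of explicit monos (the top--right one via a composition of strict pullbacks yielding $\monop{(\product{\tau_X}{\id})\cdot m_{F,X}}$, the left--bottom one via a pasting of weak pullbacks built from the naturality squares of $\tau$ at $\belong{X}$ and at $\pi_2$, combined with pullback-stability of image factorisations). The only difference is presentational: where you prove your square $(P)$ is a weak pullback by identifying its strict pullback with that of the naturality square at $\pi_2$, the paper decomposes $(P)$ into a trivial strict pullback (absorbing the rigid first coordinate via $\pair{\tau_X\cdot F\pi_1}{\id}$ and $\pair{G\pi_1}{\id}$) stacked on the product of that naturality square with $\id_{GX}$.
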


\begin{proof}
On one side, we have the following composition of pullbacks:
\begin{center}
\begin{tikzpicture}[scale=1.5]
		
	\node (efcx) at (3,1) {\scriptsize{$E_{F,X}$}};
	\node (fxfpx) at (3,0) {\scriptsize{$FX\times F\pow{X}$}};
	\node (efx) at (8,1) {\scriptsize{$E_{FX}$}};
	\node (fxpfx) at (8,0) {\scriptsize{$FX\times\pow{FX}$}};
	\node (gxfpx) at (3,-1) {\scriptsize{$GX\times F\pow{X}$}};
	\node (gxpfx) at (8,-1) {\scriptsize{$GX\times \pow{GX}$}};
	
	\path[->,font=\scriptsize]
		(efcx) edge node[above]{$\theta_{F,X}$} (efx)
		(fxfpx) edge node[above]{$\product{\id}{\distr{F}{X}}$} (fxpfx)
		(gxfpx) edge node[below]{$\product{\id}{\distr{F}{X}}$} (gxpfx)
		(efcx) edge node[left]{$m_{F,X}$} (fxfpx)
		(fxfpx) edge node[left]{$\product{\tau_X}{\id}$} (gxfpx)
		(efx) edge node[right]{$\belong{FX}$} (fxpfx)
		(fxpfx) edge node[right]{$\product{\tau_X}{\id}$} (gxpfx);
		
	\draw (3.2,0.55) -- (3.45,0.55) -- (3.45,0.8);
	\draw (3.2,-0.45) -- (3.45,-0.45) -- (3.45,-0.2);
			
\end{tikzpicture}
\end{center}
which implies that $\pow{\tau_X}\cdot\distr{F}{X}$ is the unique morphism
corresponding to $\monop{(\product{\tau_X}{\id})\cdot m_{F,X}}$.
On the other side, we have the following composition of weak pullbacks:
\begin{center}
\begin{tikzpicture}[scale=1.5]
		
	\node (fex) at (3,1) {\scriptsize{$FE_X$}};
	\node (fxpx) at (3,0) {\scriptsize{$F(X\times \pow{X})$}};
	\node (gex) at (8,1) {\scriptsize{$GE_X$}};
	\node (gxpx) at (8,0) {\scriptsize{$G(X\times\pow{X})$}};
	\node (gxfxpx) at (3,-1) {\scriptsize{$GX\times F(X\times \pow{X})$}};
	\node (gxgxpx) at (8,-1) {\scriptsize{$GX\times G(X\times \pow{X})$}};
	\node (gxfpx) at (3,-2) {\scriptsize{$GX\times F\pow{X}$}};
	\node (gxgpx) at (8,-2) {\scriptsize{$GX\times G\pow{X}$}};
	
	\path[->,font=\scriptsize]
		(fex) edge node[above]{$\tau_{E_X}$} (gex)
		(fxpx) edge node[above]{$\tau_{X\times\pow{X}}$} (gxpx)
		(gxfxpx) edge node[below]{$\product{\id}{\tau_{X\times\pow{X}}}$} (gxgxpx)
		(gxfpx) edge node[below]{$\product{\id}{\tau_{\pow{X}}}$} (gxgpx)
		(fex) edge node[left]{$F\belong{X}$} (fxpx)
		(fxpx) edge node[left]{$\pair{\tau_X\cdot F\pi_1}{\id}$} (gxfxpx)
		(gxfxpx) edge node[left]{$\product{\id}{F\pi_2}$} (gxfpx)
		(gex) edge node[right]{$G\belong{X}$} (gxpx)
		(gxpx) edge node[right]{$\pair{G\pi_1}{\id}$} (gxgxpx)
		(gxgxpx) edge node[right]{$\product{\id}{G\pi_2}$} (gxgpx);
		
	\draw[dashed] (3.2,0.55) -- (3.45,0.55) -- (3.45,0.8);
	\draw (3.2,-0.45) -- (3.45,-0.45) -- (3.45,-0.2);
	\draw[dashed] (3.2,-1.45) -- (3.45,-1.45) -- (3.45,-1.2);
			
\end{tikzpicture}
\end{center}
Indeed, the upper and lower weak pullbacks come from the naturality 
squares of $\tau$, and the middle pullback is easy.
This means that $\distr{G}{X}\cdot\tau_{\pow{X}}$ is the unique morphism 
corresponding to $\monop{\product{\id}{F\pi_2}\cdot\pair{\tau_X\cdot F\pi_1}{\id}\cdot F\belong{X}}$.
It is easy to check that both monos are the same.
\end{proof}

As stated in~\cite{goy21}:
\begin{cor}
\label{coro:distr}
If $(T,\mu^T,\eta^T)$ is a monad which preserves weak pullbacks and 
epis, and 
for which the naturality squares of $\mu^T$ are weak pullbacks, then 
$\distr{T}{X}$ is a weak distributive law. If the 
naturality squares of $\eta^T$ are also weak pullbacks, then 
$\distr{T}{X}$ is a distributive law.
\end{cor}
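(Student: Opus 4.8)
The plan is to assemble the corollary from Propositions~\ref{prop:distr_nat_1} and~\ref{prop:distr_nat_2}, using that a (weak) distributive law of the monad $T$ over the monad $\pow{\!}$ amounts to naturality together with four coherence equations, two concerning the $\pow{\!}$-structure and two concerning the $T$-structure. First I would apply Proposition~\ref{prop:distr_nat_1} to $F=T$: since $T$ preserves weak pullbacks and epis, $\distr{T}{X}$ is natural in $X$ and satisfies the two axioms relating it to the unit and multiplication of $\pow{\!}$, namely $\distr{T}{X}\cdot T\eta_X = \eta_{TX}$ and $\distr{T}{X}\cdot T\mu_X = \mu_{TX}\cdot\pow{\distr{T}{X}}\cdot\distr{T}{\pow{X}}$. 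These are exactly the two $\pow{\!}$-coherences, and they are required for both the weak and the full version.

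Next I would obtain the two $T$-coherences from Proposition~\ref{prop:distr_nat_2}, whose hypothesis (naturality squares that are weak pullbacks) is precisely what the corollary assumes. For the multiplication I would instantiate at $\natt{\mu^T}{TT}{T}$, noting that $TT$ again preserves weak pullbacks and epis, which yields $\pow{\mu^T_X}\cdot\distr{TT}{X} = \distr{T}{X}\cdot\mu^T_{\pow{X}}$; this is not yet in the standard shape of the multiplication axiom, a gap I address below. For the unit I would instantiate at $\natt{\eta^T}{\mathrm{Id}}{T}$, obtaining $\pow{\eta^T_X}\cdot\distr{\mathrm{Id}}{X} = \distr{T}{X}\cdot\eta^T_{\pow{X}}$; since the canonical factorisation for the identity functor is $\belong{X}$ itself and $\xi_{\belong{X}}=\id$, we have $\distr{\mathrm{Id}}{X}=\id$, so this reads $\distr{T}{X}\cdot\eta^T_{\pow{X}} = \pow{\eta^T_X}$, the $T$-unit coherence. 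This is the only place the hypothesis on $\eta^T$ enters, which explains why dropping it leaves exactly a weak distributive law and restoring it gives a full one.

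The one identity these applications do not directly furnish is the \emph{compositionality} $\distr{TT}{X} = \distr{T}{TX}\cdot T\distr{T}{X}$, needed to rewrite the multiplication equation above into its standard form $\distr{T}{X}\cdot\mu^T_{\pow{X}} = \pow{\mu^T_X}\cdot\distr{T}{TX}\cdot T\distr{T}{X}$. This is where I expect the real work to lie, and it is the main obstacle. I would prove it by uniqueness of the correspondence $\xi$: by construction $\distr{F}{X}=\xi_{m_{F,X}}$, where $m_{F,X}$ represents the canonical lifting $\overline{F}(\belong{X})$ of the membership relation in the sense of the $\overline{F}$-construction from Section~\ref{sec:reg-cat}. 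It then suffices to establish two sublemmas, both by the pullback-plus-image-preservation technique already used in the proofs of Propositions~\ref{prop:distr_nat_1} and~\ref{prop:distr_nat_2}: (i) a reindexing identity $\xi_{\overline{F}(r)} = \distr{F}{X}\cdot F\xi_r$ for every relation $r$ from $X$ to $Y$; and (ii) functoriality of the canonical lifting, $\overline{TT} = \overline{T}\,\overline{T}$, which holds because $T$ preserves weak pullbacks. Granting these, $\distr{TT}{X}=\xi_{\overline{TT}(\belong{X})} = \xi_{\overline{T}(\overline{T}(\belong{X}))} = \distr{T}{TX}\cdot T\xi_{\overline{T}(\belong{X})} = \distr{T}{TX}\cdot T\distr{T}{X}$, using (ii), then (i), then $\xi_{\overline{T}(\belong{X})}=\distr{T}{X}$.

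Finally I would collect the pieces: naturality and the $\pow{\!}$-coherences (Proposition~\ref{prop:distr_nat_1}) together with the $T$-multiplication coherence (Proposition~\ref{prop:distr_nat_2} at $\mu^T$, combined with compositionality) give a weak distributive law, and adjoining the $T$-unit coherence (Proposition~\ref{prop:distr_nat_2} at $\eta^T$) under the extra hypothesis on $\eta^T$ upgrades it to a distributive law.
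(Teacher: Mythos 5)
Your proposal is correct and follows essentially the same route as the paper: Proposition~\ref{prop:distr_nat_1} supplies naturality and the $\pow{\!}$-coherences, Proposition~\ref{prop:distr_nat_2} applied to $\mu^T$ and $\eta^T$ supplies the $T$-coherences, and the compositionality identity $\distr{TT}{X}=\distr{T}{TX}\cdot T\distr{T}{X}$ that you isolate as the main obstacle is exactly the paper's Lemma~\ref{lem:distr-func} (together with $\distr{\text{Id}}{X}=\id$), proved there, as you propose, by identifying the representing monos using preservation of epis and weak pullbacks. Your detour through the relation lifting $\overline{F}$ and the reindexing identity $\xi_{\overline{F}(r)}=\distr{F}{X}\cdot F\xi_r$ is only a mild repackaging of that same mono comparison.
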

\noindent 
Before proving Corollary~\ref{coro:distr}, let us prove an easy lemma 
about $\distr{F}{X}$:
\begin{lem}
\label{lem:distr-func}
We have the following equalities:
\begin{itemize}
	\item $\distr{\text{Id}}{X} = \id_{\pow{X}}$,
	\item if $G$ preserves weak pullbacks and epis, then 
	$\distr{G\cdot F}{X} = \distr{G}{FX}\cdot G\distr{F}{X}$.
\end{itemize}
\end{lem}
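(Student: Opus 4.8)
The first equality is immediate from the definition of $\distr{F}{X}$: for $F=\text{Id}$ the morphism to be factorised is $\pair{\pi_1}{\pi_2}\cdot\belong{X}=\belong{X}$, since $\pair{\pi_1}{\pi_2}=\id$. As $\belong{X}$ is already a mono, $m_{\text{Id},X}\equiv\belong{X}$, and hence $\distr{\text{Id}}{X}=\xi_{\belong{X}}=\id_{\pow{X}}$, the last step holding by uniqueness of the classifying map since $\id_{\pow{X}}$ evidently makes the membership pullback commute.

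For the second equality, the plan is to express both sides as $\xi$ of a relation from $GFX$ and to show these relations agree. On the left, $\distr{GF}{X}=\xi_{m_{GF,X}}$ with $m_{GF,X}=\monop{\pair{GF\pi_1}{GF\pi_2}\cdot GF\belong{X}}$, by definition. On the right, I would first record the reindexing principle: post-composing a classifying map $\xi_r$ with a morphism $h$ corresponds to pulling the relation $r$ back along $\product{\id}{h}$, which one sees by stacking the reindexing pullback under the pullback defining $\xi_r$. Applied to $\distr{G}{FX}=\xi_{m_{G,FX}}$ and $h=G\distr{F}{X}$, this gives $\distr{G}{FX}\cdot G\distr{F}{X}=\xi_{B}$, where $B$ is the pullback of $m_{G,FX}$ along $\product{\id}{G\distr{F}{X}}$. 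It then remains to prove $B\equiv m_{GF,X}$.

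Writing $g=\pair{G\pi_1}{G\pi_2}\cdot G\belong{FX}$, so that $m_{G,FX}=\monop{g}$, I would invoke the preservation of images under pullback (Section~\ref{sec:relations}): pulling the image $m_{G,FX}$ back along $\product{\id}{G\distr{F}{X}}$ yields the image of the pullback of $g$, hence $B\equiv\monop{\tilde g}$ with $\map{\tilde g}{Q}{GFX\times GF\pow{X}}$ the pullback of $g$ along $\product{\id}{G\distr{F}{X}}$. To compare this with $m_{GF,X}$, I would build a morphism $\map{u}{GFE_X}{Q}$ with $\tilde g\cdot u=\pair{GF\pi_1}{GF\pi_2}\cdot GF\belong{X}$, obtained from the universal property of $Q$ out of the legs $G(\theta_{F,X}\cdot e_{F,X})$ into $GE_{FX}$ and $\pair{GF\pi_1}{GF\pi_2}\cdot GF\belong{X}$ into $GFX\times GF\pow{X}$; their compatibility is a routine computation from the pullback $\belong{FX}\cdot\theta_{F,X}=\product{\id}{\distr{F}{X}}\cdot m_{F,X}$ defining $\distr{F}{X}$, the factorisation identity $m_{F,X}\cdot e_{F,X}=\pair{F\pi_1}{F\pi_2}\cdot F\belong{X}$, and the product laws $\pi_1\cdot(\product{\id}{k})=\pi_1$ and $\pi_2\cdot(\product{\id}{k})=k\cdot\pi_2$ for any $k$. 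Once $u$ is epi, images are insensitive to precomposition with epis, so $B\equiv\monop{\tilde g}=\monop{\tilde g\cdot u}=m_{GF,X}$, finishing the proof.

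The heart of the argument, and the step I expect to cost the most, is showing that $u$ is epi, and this is exactly where the hypotheses on $G$ are used. I would factor $u=u_1\cdot Ge_{F,X}$, where $Ge_{F,X}$ is epi because $e_{F,X}$ is epi and $G$ preserves epis, and $u_1$ is the comparison out of $GE_{F,X}$. To see $u_1$ is (split) epi, I would realise the square defining $u_1$ as a paste of weak pullbacks, following the template of Proposition~\ref{prop:distr_nat_1}: the image under $G$ of the pullback defining $\distr{F}{X}$ is a weak pullback since $G$ preserves weak pullbacks, and the passage from $\pair{G\pi_1}{G\pi_2}$ to the product projections is handled by decomposing $\pair{G\pi_1}{G\pi_2}=(\product{\id}{G\pi_2})\cdot\pair{G\pi_1}{\id}$, where the $\pair{G\pi_1}{\id}$-square is a strict pullback and the $\product{\id}{G\pi_2}$-square is a weak pullback (being $GFX\times(-)$ applied to the image under $G$ of the pullback of $\pi_2$ along $\distr{F}{X}$, and product functors preserve weak pullbacks). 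Pasting these, the comparison from the weak-pullback apex $GE_{F,X}$ to the strict pullback $Q$ is a split epi, which yields $u_1$ epi and completes the plan.
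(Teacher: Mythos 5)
Your proof is correct and follows essentially the same route as the paper's: the identity $\distr{G\circ F}{X}=\distr{G}{FX}\cdot G\distr{F}{X}$ is reduced, via the weak-pullback pasting template of Proposition~\ref{prop:distr_nat_1} together with the fact that $Ge_{F,X}$ is epi, to the equality of images $\monop{\pair{G\pi_1}{G\pi_2}\cdot Gm_{F,X}}\equiv\monop{\pair{GF\pi_1}{GF\pi_2}\cdot GF\belong{X}}$. Your write-up merely unpacks in more detail what the paper dismisses as ``easy to check'', routing through the strict pullback $Q$ and the explicit comparison $u=u_1\cdot Ge_{F,X}$ rather than directly through images of weak pullbacks.
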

\begin{proof}
\begin{itemize}
	\item By definition, $\distr{\text{Id}}{X}$ correspond to the mono 
	$\belong{X}$, which is also the case of $\id_{\pow{X}}$.
	\item By definition, $\distr{G\cdot F}{X}$ corresponds to the mono
	$m_{G\cdot F,X}$. Also, by the same kind of composition of 
	weak pullbacks as previous proofs, $\distr{G}{FX}\cdot G\distr{F}{X}$
	corresponds to the mono 
	$\monop{\pair{G\pi_1}{G\pi_2}\cdot Gm_{F,X}}$. Using the fact 
	that $G$ preserves epi and the definition of $m_{F,X}$, it is easy to 
	check that both monos are the same.\qedhere
\end{itemize}
\end{proof}

\begin{proof}[Proof of Corollary~\ref{coro:distr}]
For the first part, we can apply Proposition~\ref{prop:distr_nat_2} with 
$\mu^T$ which is a natural transformation from $TT$ to $T$. 
We then obtain:
\[
	\distr{T}{X}\cdot\mu^T_{\pow{X}} = \pow{\mu^T_X}\cdot\distr{TT}{X}.
\]
Then by Lemma~\ref{lem:distr-func}:
\[
	\distr{T}{X}\cdot\mu^T_{\pow{X}} = \pow{\mu^T_X}\cdot\distr{T}{TX}\cdot T\distr{T}{X}
\]
which is the coherence axiom to prove.
Similarly, the second part consists in using 
Proposition~\ref{prop:distr_nat_2} with $\eta^T$ and then the first point of 
Lemma~\ref{lem:distr-func}.
\end{proof}

\begin{exa}
\label{ex:distr}
In $\Set$, when $F$ is $\pow{\!}$ itself, 
$\distr{F}{X}$ is a weak distributive law, as described in 
\cite{goy20}, namely,
$
	\distr{\pow{\!}}{X}:~U \in \pow{\pow{X}}\mapsto 
	\{V \subseteq X \mid V \subseteq \bigcup U \wedge
		\forall W\in U.\, W\cap V \neq \emptyset\}.
$
A similar, analysis can be done for the distribution monad $\mathcal{D}$.
More generally (see \cite{goy21}), in any topos, $\pow{\!}$ satisfies the assumptions of  
Proposition~\ref{prop:distr_nat_1} and the first part of Corollary~\ref{coro:distr}, 
meaning that 
$\distr{\pow{\!}}{X}$ 
is a weak distributive law. 
However, it satisfies the second part only when the topos is trivial.
\end{exa}

\begin{proof}[Proof of Example~\ref{ex:distr}]
Here, we want to prove that in any topos $\distr{\pow{\!}}{X}$ satisfies the assumptions of
Proposition~\ref{prop:distr_nat_1}. We already know that $\pow{\!}$ preserves epis. Let us 
prove that it preserves weak pullbacks.

Assume given a weak pullback of the form:
\begin{center}
\begin{tikzpicture}[scale=1.5]
		
	\node (r1r2) at (3,1) {\scriptsize{$X$}};
	\node (r1) at (3,0) {\scriptsize{$Y_1$}};
	\node (r2) at (8,1) {\scriptsize{$Y_2$}};
	\node (y) at (8,0) {\scriptsize{$Z$}};
	
	\path[->,font=\scriptsize]
		(r1r2) edge node[left]{$\epsilon_1$} (r1)
		(r1r2) edge node[above]{$\epsilon_2$} (r2)
		(r1) edge node[below]{$\mu_1$} (y)
		(r2) edge node[right]{$\mu_2$} (y);
			
	\draw[dashed] (3.2,0.55) -- (3.7,0.55) -- (3.7,0.8);
\end{tikzpicture}
\end{center}
We want to prove that we have the following weak pullback:
\begin{center}
\begin{tikzpicture}[scale=1.5]
		
	\node (r1r2) at (3,1) {\scriptsize{$\pow{X}$}};
	\node (r1) at (3,0) {\scriptsize{$\pow{Y_1}$}};
	\node (r2) at (8,1) {\scriptsize{$\pow{Y_2}$}};
	\node (y) at (8,0) {\scriptsize{$\pow{Z}$}};
	
	\path[->,font=\scriptsize]
		(r1r2) edge node[left]{$\pow{\epsilon_1}$} (r1)
		(r1r2) edge node[above]{$\pow{\epsilon_2}$} (r2)
		(r1) edge node[below]{$\pow{\mu_1}$} (y)
		(r2) edge node[right]{$\pow{\mu_2}$} (y);
			
	\draw[dashed] (3.2,0.55) -- (3.7,0.55) -- (3.7,0.8);
\end{tikzpicture}
\end{center}
So we assume given another commutative square of the form:
\begin{center}
\begin{tikzpicture}[scale=1.5]
		
	\node (r1r2) at (3,1) {\scriptsize{$W$}};
	\node (r1) at (3,0) {\scriptsize{$\pow{Y_1}$}};
	\node (r2) at (8,1) {\scriptsize{$\pow{Y_2}$}};
	\node (y) at (8,0) {\scriptsize{$\pow{Z}$}};
	
	\path[->,font=\scriptsize]
		(r1r2) edge node[left]{$\phi_1$} (r1)
		(r1r2) edge node[above]{$\phi_2$} (r2)
		(r1) edge node[below]{$\pow{\mu_1}$} (y)
		(r2) edge node[right]{$\pow{\mu_2}$} (y);
			
\end{tikzpicture}
\end{center}

We want to construct a morphism $\map{\phi}{W}{\pow{X}}$, and the trick is to play with the correspondence with relations. First, let us form the following pullbacks:
\begin{center}
\begin{tikzpicture}[scale=1.5]
		
	\node (r1r2) at (3,1) {\scriptsize{$R_i$}};
	\node (r1) at (3,0) {\scriptsize{$Y_i\times W$}};
	\node (r2) at (5.3,1) {\scriptsize{$E_{Y_i}$}};
	\node (y) at (5.3,0) {\scriptsize{$Y_i\times\pow{Y_i}$}};
	
	\path[->,font=\scriptsize]
		(r1r2) edge node[left]{$r_i$} (r1)
		(r1r2) edge node[above]{$\theta_i$} (r2)
		(r1) edge node[below]{$\product{\id}{\phi_i}$} (y)
		(r2) edge node[right]{$\belong{Y_i}$} (y);
		
	\draw (3.2,0.55) -- (3.45,0.55) -- (3.45,0.8);
			
\end{tikzpicture}
\qquad\qquad
\begin{tikzpicture}[scale=1.5]
		
	\node (r1r2) at (3,1) {\scriptsize{$R$}};
	\node (r1) at (3,0) {\scriptsize{$R_1$}};
	\node (r2) at (5.3,1) {\scriptsize{$R_2$}};
	\node (y) at (5.3,0) {\scriptsize{$Z\times W$}};
	
	\path[->,font=\scriptsize]
		(r1r2) edge node[left]{$\rho_1$} (r1)
		(r1r2) edge node[above]{$\rho_2$} (r2)
		(r1) edge node[below]{$(\product{\mu_1}{\id})\cdot r_1$} (y)
		(r2) edge node[right]{$(\product{\mu_2}{\id})\cdot r_2$} (y);
		
	\draw (3.2,0.55) -- (3.45,0.55) -- (3.45,0.8);
			
\end{tikzpicture}
\end{center}

So by construction, we have the following commutative square:
\begin{center}
\begin{tikzpicture}[scale=1.5]
		
	\node (r1r2) at (3,1) {\scriptsize{$R$}};
	\node (r1) at (3,0) {\scriptsize{$Y_1$}};
	\node (r2) at (8,1) {\scriptsize{$Y_2$}};
	\node (y) at (8,0) {\scriptsize{$Z$}};
	
	\path[->,font=\scriptsize]
		(r1r2) edge node[left]{$\pi_1\cdot r_1\cdot\rho_1$} (r1)
		(r1r2) edge node[above]{$\pi_1\cdot r_2\cdot\rho_2$} (r2)
		(r1) edge node[below]{$\mu_1$} (y)
		(r2) edge node[right]{$\mu_2$} (y);
			
\end{tikzpicture}
\end{center}
and by the universal property of $X$, there is a (non necessarily unique) morphism $\map{\widehat{\phi}}{R}{X}$ such that $$\epsilon_i\cdot\widehat{\phi} = \pi_1\cdot r_i\cdot\rho_i.$$
Since we have $\pi_2\cdot r_1\cdot\rho_1 = \pi_2\cdot r_2\cdot \rho_2$, we have the following unique (epi, mono)-factorisation:
\begin{center}
\begin{tikzpicture}[scale=1.5]
		
	\node (r1r2) at (0,0) {\scriptsize{$R$}};
	\node (xz) at (6,0) {\scriptsize{$X\times W$}};
	\node (r1sqr2) at (3,-1) {\scriptsize{$\widehat{R}$}};
	
	\path[->,font=\scriptsize]
		(r1r2) edge node[above]{$\pair{\widehat{\phi}}{\pi_2\cdot r_i\cdot\rho_i}$} (xz);
		
	\path[->>,font=\scriptsize]
		(r1r2) edge node[below]{$e$} (r1sqr2);
		
	\path[>->,font=\scriptsize]
		(r1sqr2) edge node[below]{$m$} (xz);
			
\end{tikzpicture}
\end{center}

Define then $\phi = \xi_m$. To conclude, we need to prove that $\phi_i = \pow{\epsilon_i}\cdot\phi$. But we know that:
\begin{center}
\begin{tabular}{rclcl}
    $\phi_i$ & $=$ & $\xi_{r_i}$ & ~~~~~ & (definition of $\phi_i$)\\
    $\pow{\epsilon_i}\cdot\phi$& $=$ & $\mu_{Y_i}\cdot\pow{\eta_{Y_i}}\cdot\pow{\epsilon_i}\cdot\phi$ & ~~~~~ & (unit coherence axiom)\\
    & $=$ & $\mu_{Y_i}\cdot\pow{\xi_{\pair{\epsilon_i}{\id}}}\cdot\phi$& ~~~~~ & (calculation)\\
    & $=$ & $\mu_{Y_i}\cdot\pow{\xi_{\pair{\epsilon_i}{\id}}}\cdot\xi_m$& ~~~~~ & (definition of $m$)\\
    & $=$ & $\xi_{\rcomp{\pair{\epsilon_i}{\id}}{m}}$ & ~~~~~ & (Proposition~\ref{prop:composition})
\end{tabular}
\end{center}
So we need to prove that $r_i \equiv \rcomp{\pair{\epsilon_i}{\id}}{m}$. 
We know by definition of composition that $\rcomp{\pair{\epsilon_i}{\id}}{m}$
is $\monop{\pair{\epsilon_i\cdot\pi_1\cdot m}{\pi_2\cdot m}}$. 
Since $e$ is an epi, this is also 
$\monop{\pair{\epsilon_i\cdot\pi_1\cdot m}{\pi_2\cdot m}\cdot e} \equiv \monop{r_i\cdot\rho_i}$. 
So to conclude, it is enough to prove that $\rho_i$ is an epi.
By assumption, we know that 
$\pow{\mu_1}\cdot\phi_1 = \pow{\mu_2}\cdot\phi_2$. 
By using again the same trick, this implies that 
$\monop{\rcomp{\pair{\mu_1}{\id}}{r_1}} = \monop{\rcomp{\pair{\mu_2}{\id}}{r_2}}$.
Let us write $e_1$ and $e_2$ their corresponding epic parts. 
But we also know that we have the following pullback:
\begin{center}
\begin{tikzpicture}[scale=1.5]
		
	\node (r1r2) at (3,1) {\scriptsize{$R$}};
	\node (r1) at (3,0) {\scriptsize{$R_1$}};
	\node (r2) at (8,1) {\scriptsize{$R_2$}};
	\node (y) at (8,0) {\scriptsize{$Z\times W$}};
	
	\path[->,font=\scriptsize]
		(r1r2) edge node[left]{$\rho_1$} (r1)
		(r1r2) edge node[above]{$\rho_2$} (r2)
		(r1) edge node[below]{$(\product{\mu_1}{\id})\cdot r_1$} (y)
		(r2) edge node[right]{$(\product{\mu_2}{\id})\cdot r_2$} (y);
		
	\draw (3.2,0.55) -- (3.7,0.55) -- (3.7,0.8);
			
\end{tikzpicture}
\end{center}
which means we have the following pullback:
\begin{center}
\begin{tikzpicture}[scale=1.5]
		
	\node (r1r2) at (3,1) {\scriptsize{$R$}};
	\node (r1) at (3,0) {\scriptsize{$R_1$}};
	\node (r2) at (8,1) {\scriptsize{$R_2$}};
	\node (y) at (8,0) {\scriptsize{}};
	
	\path[->,font=\scriptsize]
		(r1r2) edge node[left]{$\rho_1$} (r1)
		(r1r2) edge node[above]{$\rho_2$} (r2)
		(r1) edge node[below]{$e_1$} (y)
		(r2) edge node[right]{$e_2$} (y);
		
	\draw (3.2,0.55) -- (3.7,0.55) -- (3.7,0.8);
			
\end{tikzpicture}
\end{center}
Since $e_i$ is epi, $\rho_i$ is epi by preservation of epis by pullbacks.
\end{proof}

As a consequence, let us look at the strength and costrength of $\pow{\!}$. 
Indeed, define the strength as:
$
	\map{\strength{X}{Y}=\distr{X\times\_}{Y}}
		{X\times\pow{Y}}{\pow{(X\times Y)}}.
$
\begin{prop}
$\strength{X}{Y}$ is the strength of $\pow{\!}$.
\end{prop}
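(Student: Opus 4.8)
The plan is to read off the proposition from the general machinery already in place. By definition $\strength{X}{Y} = \distr{F}{Y}$ for the functor $F = \product{X}{\_}$, so every axiom of a (tensorial) strength of the monad $\pow{\!}$ can be extracted from Proposition~\ref{prop:distr_nat_1}, Proposition~\ref{prop:distr_nat_2} and Lemma~\ref{lem:distr-func} applied to this particular $F$. The only genuinely new work is to check that these results are applicable, that is, that $\product{X}{\_}$ preserves weak pullbacks and epis, and that a handful of naturality squares are weak pullbacks.

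First I would establish the two preservation properties of $F = \product{X}{\_}$. A direct computation shows that $\product{X}{\_}$ preserves pullbacks: the pullback of $\product{X}{A}\to\product{X}{C}\leftarrow\product{X}{B}$ is $\product{X}{P}$, where $P$ is the pullback of the underlying cospan. Preservation of pullbacks forces preservation of weak pullbacks: if $P$ is a weak pullback and $u$ its comparison map to the genuine pullback $P_0$, then $u$ is a split epi, and applying $F$ keeps $Fu$ a split epi over the genuine pullback $FP_0$, whence $FP$ is again a weak pullback. Preservation of epis is immediate because, in a topos, $\product{X}{\_}$ is left adjoint to the exponential $(\_)^X$, and left adjoints preserve epimorphisms (they preserve the cokernel pair that detects them).

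With these two facts most axioms drop out. Naturality of $\strength{X}{Y}$ in $Y$, together with the unit and multiplication coherences $\strength{X}{Y}\cdot(\product{\id}{\eta_Y}) = \eta_{\product{X}{Y}}$ and $\strength{X}{Y}\cdot(\product{\id}{\mu_Y}) = \mu_{\product{X}{Y}}\cdot\pow{\strength{X}{Y}}\cdot\strength{X}{\pow{Y}}$, are exactly the statements of Proposition~\ref{prop:distr_nat_1} for $F=\product{X}{\_}$. For naturality in $X$, given $\map{f}{X}{X'}$ I would apply Proposition~\ref{prop:distr_nat_2} to $\natt{\tau}{\product{X}{\_}}{\product{X'}{\_}}$ with $\tau_Y = \product{f}{\id_Y}$; its naturality square at any $\map{g}{Y}{Y'}$ is in fact an honest pullback, so the proposition yields $\pow{(\product{f}{\id})}\cdot\strength{X}{Y} = \strength{X'}{Y}\cdot(\product{f}{\id})$.

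The two monoidal coherences are obtained in the same spirit, now using iso-valued natural transformations whose naturality squares are automatically weak pullbacks. For the unit law, apply Proposition~\ref{prop:distr_nat_2} to the left unitor $\lambda_Y\colon\product{\terminal}{Y}\to Y$, viewed as $\natt{\lambda}{\product{\terminal}{\_}}{\text{Id}}$; combined with $\distr{\text{Id}}{Y}=\id$ from Lemma~\ref{lem:distr-func} this gives $\pow{\lambda_Y}\cdot\strength{\terminal}{Y} = \lambda_{\pow{Y}}$. For associativity, Lemma~\ref{lem:distr-func} (applicable since $\product{X}{\_}$ preserves weak pullbacks and epis) computes $\distr{(\product{X}{\_})\circ(\product{X'}{\_})}{Y} = \strength{X}{\product{X'}{Y}}\cdot(\product{\id}{\strength{X'}{Y}})$, while Proposition~\ref{prop:distr_nat_2} applied to the associator, viewed as $\natt{\alpha}{\product{(\product{X}{X'})}{\_}}{\product{X}{(\product{X'}{\_})}}$, relates this composite to $\strength{\product{X}{X'}}{Y}$; together they give the hexagon $\pow{\alpha_{X,X',Y}}\cdot\strength{\product{X}{X'}}{Y} = \strength{X}{\product{X'}{Y}}\cdot(\product{\id}{\strength{X'}{Y}})\cdot\alpha_{X,X',\pow{Y}}$. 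The hard part is not any single diagram but the bookkeeping for this associativity coherence, where Lemma~\ref{lem:distr-func} and Proposition~\ref{prop:distr_nat_2} must be combined with the correct associators; everything else is a direct instantiation, the only conceptual point being the (easy) verification that the naturality squares used above are weak pullbacks.
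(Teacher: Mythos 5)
Your proof is correct and takes essentially the same route as the paper, which simply observes that all the strength axioms are instances of Proposition~\ref{prop:distr_nat_1} (naturality in the object argument $Y$ plus the unit and multiplication coherences) and Proposition~\ref{prop:distr_nat_2} (naturality in $X$ and the monoidal coherences). You additionally spell out the hypotheses the paper leaves implicit — that $\product{X}{\_}$ preserves weak pullbacks and epis, and that the relevant naturality squares are pullbacks — which is a worthwhile completion of the terse official argument rather than a different approach.
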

\begin{proof}
Naturality in $X$ comes from Proposition~\ref{prop:distr_nat_1}.
Naturality in $Y$ comes from Proposition~\ref{prop:distr_nat_2}.
The coherence axioms are also consequences of either Proposition.
\end{proof}

Dually, the costrength can be defined as
$
	\map{\costrength{X}{Y} = \distr{\_\times Y}{X}}{\pow{X}\times Y}{\pow{(X\times Y)}}.
$
By the naturality of Proposition~\ref{prop:distr_nat_2}, we indeed have the
expected equality
$
	\costrength{X}{Y} = \pow{\lambda_{Y,X}}\cdot\strength{Y}{X}
		\cdot\lambda_{\pow{X},Y},
$
where $\map{\lambda_{X,Y}}{X\times Y}{Y\times X}$ is the symmetry of 
the product.
\begin{thm}
\label{th:com-str-mon}
$\pow{\!}$ is a commutative strong monad.
\end{thm}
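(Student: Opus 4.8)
The preceding Proposition already shows that $\pow{\!}$ is a strong monad with strength $\strength{X}{Y}$, and the canonical costrength $\costrength{X}{Y}$ derived from it is the one recorded just before the theorem. So the only thing left is to establish \emph{commutativity}, i.e. that the two canonical double strengths agree:
\[
\mu_{\product X Y}\cdot\pow{(\costrength X Y)}\cdot\strength{\pow X}{Y}
= \mu_{\product X Y}\cdot\pow{(\strength X Y)}\cdot\costrength{X}{\pow Y}
\]
as morphisms $\product{\pow X}{\pow Y}\to\pow{(\product X Y)}$. The plan is to read both sides through the bijection $\xi$ and collapse the identity to a one-line computation in $\Rel{\CC}$, exploiting that $\mu\cdot\pow{(\_)}\cdot(\_)$ is exactly the shape of relation composition governed by Proposition~\ref{prop:composition}.

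First I would identify the strength and costrength with tensors of elementary relations. Writing $\product{r}{r'}$ for the evident tensor of relations in $\Rel\CC$ (the relation on $\product{(\product A {A'})}{(\product B {B'})}$ obtained from $\product{m_r}{m_{r'}}$ after the canonical shuffle isomorphism $\zeta$), I claim
\[
\strength X Y = \xi_{\product{\Delta_X}{\belong Y}}
\qquad\text{and}\qquad
\costrength X Y = \xi_{\product{\belong X}{\Delta_Y}}.
\]
To see the first, recall $\strength X Y = \distr{X\times\_}{Y} = \xi_{m_{F,Y}}$ for $F = X\times\_$, where $m_{F,Y}$ is the mono part of $\pair{F\pi_1}{F\pi_2}\cdot F\belong{Y}$. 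A direct rewriting shows this morphism equals $\zeta\cdot(\product{\pair{\id}{\id}}{\belong Y})$; since a product of monos is a mono, it is already monic, so its $(\mathrm{epi},\mathrm{mono})$-factorisation is trivial and $m_{F,Y}$ represents $\product{\Delta_X}{\belong Y}$. The statement for $\costrength X Y$ is symmetric.

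With these identifications in hand, Proposition~\ref{prop:composition} turns each double strength into $\xi$ of a composite of relations. For the left-hand side, taking $r = \product{\belong X}{\Delta_Y}$ (from $\product X Y$ to $\product{\pow X}{Y}$) and $s = \product{\Delta_{\pow X}}{\belong Y}$ (from $\product{\pow X}{Y}$ to $\product{\pow X}{\pow Y}$), the left-hand side is $\mu_{\product X Y}\cdot\pow{\xi_r}\cdot\xi_s = \xi_{\rcomp r s}$. Using the interchange law for the monoidal structure on $\Rel\CC$, namely $\rcomp{(\product a b)}{(\product c d)} = \product{(\rcomp a c)}{(\rcomp b d)}$, together with the fact that $\Delta$ is the unit of composition, I get $\rcomp r s = \product{(\rcomp{\belong X}{\Delta_{\pow X}})}{(\rcomp{\Delta_Y}{\belong Y})} = \product{\belong X}{\belong Y}$. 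The right-hand side is handled identically with $r' = \product{\Delta_X}{\belong Y}$ and $s' = \product{\belong X}{\Delta_{\pow Y}}$, giving $\rcomp{r'}{s'} = \product{(\rcomp{\Delta_X}{\belong X})}{(\rcomp{\belong Y}{\Delta_{\pow Y}})} = \product{\belong X}{\belong Y}$ as well. Both double strengths therefore equal $\xi_{\product{\belong X}{\belong Y}}$, and since $\xi$ is a bijection they coincide.

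I expect the main obstacle to be the bookkeeping around the tensor of relations, specifically justifying that $\product{(\_)}{(\_)}$ is genuinely a functor on $\Rel\CC$, which is what the interchange law expresses. This is precisely where regularity is used: a product of regular epis is again a regular epi and a product of monos is a mono, so both the defining pullback and the $(\mathrm{epi},\mathrm{mono})$-factorisation used to compose relations are preserved by $\product{(\_)}{(\_)}$, yielding the interchange law. All remaining naturality and coherence conditions of a (commutative) strong monad are already supplied by the preceding Proposition, so no further verification is needed there.
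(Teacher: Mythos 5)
Your proof is correct and follows the same route the paper intends: the paper's entire argument is ``the commutation axiom is a consequence of Proposition~\ref{prop:composition}'', and your computation — identifying $\strength{X}{Y}$ and $\costrength{X}{Y}$ with $\xi$ of the tensor relations $\product{\Delta_X}{\belong{Y}}$ and $\product{\belong{X}}{\Delta_Y}$, then using Proposition~\ref{prop:composition} and the interchange law in $\Rel{\CC}$ to reduce both double strengths to $\xi_{\product{\belong{X}}{\belong{Y}}}$ — is exactly the elaboration of that one-liner. Your justification of the interchange law via pullback-stability of regular epis is also the right (and standard) way to see that the tensor of relations is functorial.
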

\begin{proof}
The commutation axiom is a consequence of 
Proposition~\ref{prop:composition}.
\end{proof}

\section{AM-Bisimulations in a Topos}
\label{sec:toposes}
Since toposes are regular categories, the notion of regular AM-bisimulations makes sense.
We show here that they can be reformulated as follows.
\begin{defi}
\label{def:tobi}
We say that a relation is a \emph{toposal AM-bisimulation} from the coalgebra 
$\map{\alpha}{X}{FX}$ to $\map{\beta}{Y}{FY}$, if for any mono 
$\mono{r}{R}{X\times Y}$ representing it, there is a morphism 
$\map{W}{R}{\pow{FR}}$ such that:
\begin{center}
\begin{tikzpicture}[scale=1.5]
		
	\node (r) at (-2,0) {\scriptsize{$R$}};
	\node (xy) at (0.5,0.7) {\scriptsize{$X\times Y$}};
	\node (pfr) at (0.5,-0.7) {\scriptsize{$\pow{FR}$}};
	\node (fxfy) at (3.5,0.7) {\scriptsize{$F(X)\times F(Y)$}};
	\node (pfxy) at (3.5,-0.7) {\scriptsize{$\pow{F(X\times Y)}$}};
	\node (pfxfy) at (6,0) {\scriptsize{$\pow{F(X)}\times \pow{F(Y)}$}};
	\node (oi) at (5.6,-0.5) {\scriptsize{$\splitpf$}};
	\node (boi) at (5.6,0.5) {\scriptsize{$\product{\eta_{F(X)}}{\eta_{F(Y)}}$}};
	
	\path[->,font=\scriptsize]
		(xy) edge node[above]{$\product{\alpha}{\beta}$} (fxfy)
		(r) edge node[below]{$W$} (pfr)
		(pfxy) edge (pfxfy)
		(fxfy) edge (pfxfy)
		(r) edge node[above]{$r$} (xy)
		(pfr) edge node[below]{$\pow{Fr}$} (pfxy);
			
\end{tikzpicture}
\end{center}
\end{defi}
In other words, an $F$-toposal AM-bisimulation between $\alpha$ and $\beta$ is 
a $\pow{F}$-AM-bisimulation between $\eta\cdot\alpha$ and 
$\eta\cdot\beta$. Intuitively, this means that toposal bisimulations look at 
systems as non-deterministic. This allows us to \emph{collect} witnesses 
as a morphism $\map{W}{R}{\pow{FR}}$ instead of picking some, very much like regular AM-bisimulations.

We have to make sure that toposal and regular AM-bisimulations coincide.
\begin{prop}
\label{prop:lobi-tobi}
Assume that $\CC$ is a topos. 
Then for every relation $U$ from 
$X$ to $Y$, 
every coalgebra $\map{\alpha}{X}{FX}$ and $\map{\beta}{Y}{FY}$, 
$U$ is a toposal AM-bisimulation from $\alpha$ to $\beta$ if and only 
if it is a regular AM-bisimulation between them.
\end{prop}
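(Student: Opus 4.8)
The plan is to exploit the bijection, coming from the definition of a topos, between relations from $FR$ to $R$ (that is, subobjects of $FR\times R$) and morphisms $\map{W}{R}{\pow{FR}}$, given by $W=\xi_w$. Since both notions quantify over the same representing mono $r$ and then assert existence of a witness of the respective shape, it suffices to fix $r$ and to prove that a relation $\mono{w}{W'}{FR\times R}$ witnesses $r$ as a regular AM-bisimulation if and only if the associated morphism $\xi_w$ witnesses $r$ as a toposal AM-bisimulation. The one technical ingredient I will need is the relational reading of precomposing a power-object morphism with a functor action: for any $\map{g}{FR}{FX}$ one has $\pow{g}\circ\xi_w = \xi_{\rcomp{\pair{g}{\id}}{w}}$, where $\pair{g}{\id}$ denotes the right-adjoint relation from $FX$ to $FR$. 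This is immediate from Proposition~\ref{prop:composition} together with the identity $\xi_{\pair{g}{\id}}=\eta_{FX}\circ g$ and the coherence law $\mu_{FX}\circ\pow{\eta_{FX}}=\id$ of Lemma~\ref{lem:pow-mon-naturality}: indeed $\xi_{\rcomp{\pair{g}{\id}}{w}}=\mu_{FX}\circ\pow{\xi_{\pair{g}{\id}}}\circ\xi_w=\mu_{FX}\circ\pow{\eta_{FX}}\circ\pow{g}\circ\xi_w=\pow{g}\circ\xi_w$.

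Next I would split both squares into their two product components. Composing the regular square with the projections of $F(X)\times F(Y)$ shows it is equivalent to the conjunction of $\alpha\circ\pi_1\circ r\circ\pi_2\circ w = F(\pi_1\circ r)\circ\pi_1\circ w$ and $\beta\circ\pi_2\circ r\circ\pi_2\circ w = F(\pi_2\circ r)\circ\pi_1\circ w$. On the toposal side, composing with the projections of $\pow{F(X)}\times\pow{F(Y)}$ and using $\pi_1\circ\splitpf=\pow{F\pi_1}$ and functoriality, the first component reads $\eta_{FX}\circ\alpha\circ\pi_1\circ r = \pow{F(\pi_1\circ r)}\circ\xi_w$, symmetrically for the second. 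By the lemma above the right-hand side is $\xi_{\rcomp{\pair{F(\pi_1\circ r)}{\id}}{w}}$, while $\eta_{FX}\circ(\alpha\circ\pi_1\circ r)=\xi_{\pair{\alpha\circ\pi_1\circ r}{\id}}$; since $\xi$ is injective on relations, the first toposal component is equivalent to the equality of subobjects $\rcomp{\pair{F(\pi_1\circ r)}{\id}}{w}\equiv\pair{\alpha\circ\pi_1\circ r}{\id}$, and likewise for the second. I would then evaluate the composite by the definition of relational composition with a right adjoint, getting $\rcomp{\pair{F(\pi_1\circ r)}{\id}}{w}=\im{\pair{F(\pi_1\circ r)\circ\pi_1\circ w}{\pi_2\circ w}}$.

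With these reformulations the equivalence of the two components is short. From regular to toposal, the first regular equation lets me rewrite $\pair{F(\pi_1\circ r)\circ\pi_1\circ w}{\pi_2\circ w}=\pair{\alpha\circ\pi_1\circ r}{\id}\circ(\pi_2\circ w)$; as $\pi_2\circ w$ is a regular epi and $\pair{\alpha\circ\pi_1\circ r}{\id}$ is a mono, this already is a (regular epi, mono)-factorisation, so its image is $\pair{\alpha\circ\pi_1\circ r}{\id}$, which is the first toposal component. Conversely, the equality of subobjects supplies an isomorphism $\theta$ identifying the image mono with $\pair{\alpha\circ\pi_1\circ r}{\id}$; reading off the second coordinate gives $\pi_2\circ w=\theta\circ e$ for $e$ the regular epi part of the image factorisation, whence $\pi_2\circ w$ is a regular epi, and reading off the first coordinate returns the first regular equation. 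Running the identical argument on the second components then completes the equivalence, and hence the proof.

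The main obstacle, and the only place where the topos structure is genuinely used, is the middle step: transporting the toposal square, which lives among power objects, into an honest equality of subobjects of $F(X)\times R$. This rests entirely on Proposition~\ref{prop:composition} (identifying Kleisli composition with relational composition) and on the injectivity of $\xi$. Once both conditions sit in the same lattice of subobjects, what remains is routine image-factorisation bookkeeping, in which the nonemptiness hidden in the singletons produced by the $\eta$'s on the left of the toposal square is exactly what yields the regular-epi requirement on $\pi_2\circ w$ demanded by the regular definition.
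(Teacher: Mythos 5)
Your proposal is correct and follows essentially the same route as the paper's proof: both translate the toposal square into the relational equality $\rcomp{\pair{F(\pi_i\circ r)}{\id}}{w}\equiv\pair{(\cdot)\circ\pi_i\circ r}{\id}$ via Proposition~\ref{prop:composition}, the identity $\xi_{\pair{g}{\id}}=\eta\circ g$, and the unit coherence law, and then compare (regular epi, mono)-factorisations of $\pair{F(\pi_i\circ r)\circ\pi_1\circ w}{\pi_2\circ w}$ to extract, in one direction, the image mono from the regular-epi condition on $\pi_2\circ w$, and in the other, both the commutation and the epi-ness of $\pi_2\circ w$ from the coordinates of the factorisation. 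The only difference is presentational: you isolate the key computation as a standalone lemma ($\pow{g}\circ\xi_w=\xi_{\rcomp{\pair{g}{\id}}{w}}$) and run both implications as a single equivalence of subobject identities, where the paper inlines the same calculation separately in each direction.
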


\begin{proof}
Assume that $\CC$ is a topos.
\begin{itemize}
	\item Assume that we have a regular AM-bisimulation
    	\begin{center}
    \begin{tikzpicture}[scale=1.5]
    	
    	\node (frr) at (0,0) {\scriptsize{$W$}};	
    	\node (r) at (1.5,0.7) {\scriptsize{$R$}};
    	\node (xy) at (3.5,0.7) {\scriptsize{$X\times Y$}};
    	\node (pfr) at (1.5,-0.7) {\scriptsize{$FR$}};
    	\node (fxfy) at (6,0) {\scriptsize{$F(X)\times F(Y)$}};
    	\node (pfxy) at (3.5,-0.7) {\scriptsize{$F(X\times Y)$}};
    	\node (ai) at (0.6,0.55) {\scriptsize{$\pi_2\cdot w$}};
    	\node (bai) at (0.6,-0.55) {\scriptsize{$\pi_1\cdot w$}};
    	\node (oi) at (5.2,-0.6) {\scriptsize{$\splitf$}};
    	\node (boi) at (4.8,0.6) {\scriptsize{$\product{\alpha}{\beta}$}};
    	
    	\path[->,font=\scriptsize]
    		(frr) edge (r)
    		(xy) edge (fxfy)
    		(frr) edge (pfr) 
    		(pfxy) edge (fxfy)
    		(pfxy) edge (fxfy)
    		(r) edge node[above]{$r$} (xy)
    		(pfr) edge node[below]{$Fr$} (pfxy);
    			
    \end{tikzpicture}
    \end{center}
    The relation $\mono{w}{W}{FR\times R}$ uniquely corresponds to a morphism
    $\map{\xi_w}{R}{\pow{FR}}$. Let us prove that this witnesses $r$ as a toposal bisimulation
\begin{center}
\begin{tikzpicture}[scale=1.5]
		
	\node (r) at (0,0) {\scriptsize{$R$}};
	\node (xy) at (1.5,0.7) {\scriptsize{$X\times Y$}};
	\node (pfr) at (1.5,-0.7) {\scriptsize{$\pow{FR}$}};
	\node (fxfy) at (3.5,0.7) {\scriptsize{$F(X)\times F(Y)$}};
	\node (pfxy) at (3.5,-0.7) {\scriptsize{$\pow{F(X\times Y)}$}};
	\node (pfxfy) at (5,0) {\scriptsize{$\pow{F(X)}\times \pow{F(Y)}$}};
	\node (oi) at (5.2,-0.5) {\scriptsize{$\splitpf$}};
	\node (boi) at (5.2,0.5) {\scriptsize{$\product{\eta_{F(X)}}{\eta_{F(Y)}}$}};
	
	\path[->,font=\scriptsize]
		(xy) edge node[above]{$\product{\alpha}{\beta}$} (fxfy)
		(r) edge node[below]{$\xi_w$} (pfr)
		(pfxy) edge (pfxfy)
		(fxfy) edge (pfxfy)
		(r) edge node[above]{$r$} (xy)
		(pfr) edge node[below]{$\pow{Fr}$} (pfxy);
			
\end{tikzpicture}
\end{center}
Let us then prove that
\[
	\eta_{FX}\cdot\alpha\cdot\pi_1\cdot r = \pow{F(\pi_1\cdot r)}\cdot\xi_w,
\]
the statement for $Y$ and $\beta$ being similar.
To prove this equality, since they are both morphisms from $R$ to $\pow{FX}$, it is enough to prove they correspond to 
the same relation on $FX\times R$.
First,         
\begin{center}
\begin{tabular}{rclcr}
    $\pow{F(\pi_1\cdot r)}\cdot\xi_w$ & $=$ & $\mu_{FX}\cdot\pow{(\eta_{FX}\cdot F(\pi_1\cdot r))}\cdot\xi_w$
    & & \hfill (coherence axiom)\\
    & $=$ & $\mu_{FX}\cdot\pow{\xi_{\langle F(\pi_1\cdot r),\id\rangle}}\cdot\xi_w$
    & & \hfill (*)\\
    & $=$ & $\xi_{\langle F(\pi_1\cdot r),\id\rangle;w}$
    & & \hfill (Lemma~\ref{prop:composition})
\end{tabular}
\end{center}
Here $(*)$ comes from the fact we have the following composition of pullbacks:
		\begin{center}
			\begin{tikzpicture}[scale=1.4]
		
				\node (r1r2) at (2,1) {\scriptsize{$FR$}};
				\node (r1) at (2,0) {\scriptsize{$FX\times FR$}};
				\node (r2) at (6,1) {\scriptsize{$FX$}};
				\node (y) at (6,0) {\scriptsize{$FX\times FX$}};
				\node (r2p) at (10,1) {\scriptsize{$E_{FX}$}};
				\node (yp) at (10,0) {\scriptsize{$FX\times \pow{FX}$}};
	 
				\path[->,font=\scriptsize]
					(r1r2) edge node[left]{$\langle F(\pi_1\cdot r), \id\rangle$} (r1)
					(r1r2) edge node[above]{$F(\pi_1\cdot r)$} (r2)
					(r1) edge node[below]{$\product{\id}{F(\pi_1\cdot r)}$} (y)
					(r2) edge node[left]{$\langle\id,\id\rangle$} (y)
					(r2) edge node[above]{$\theta_{FX}$} (r2p)
					(y) edge node[below]{$\product{\id}{\eta_{FX}}$} (yp)
					(r2p) edge node[right]{$\belong{FX}$} (yp);
		
				\draw (2.2,0.55) -- (2.7,0.55) -- (2.7,0.8);
				\draw (6.2,0.55) -- (6.7,0.55) -- (6.7,0.8);
			
			\end{tikzpicture}
		\end{center}
		where the left pullback is by simple computation and the right one is by definition of $\eta_{FX}$.
		Now, by definition, the composition of relations $\langle F(\pi_1\cdot r),\id\rangle;w$ is given by 
		the monic part of the (epi, mono)-factorisation of 
		\[
			\langle F(\pi_1\cdot r)\cdot\pi_1\cdot w,\pi_2\cdot w\rangle = \langle \alpha\cdot\pi_1\cdot r\cdot\pi_2\cdot w,\pi_2\cdot w\rangle.
		\]
		Since $\pi_2\cdot w$ is epi, and $\langle\alpha\cdot\pi_1\cdot r,\id\rangle$ is mono, then
		the monic part of $\langle F(\pi_1\cdot r)\cdot\pi_1\cdot w,\pi_2\cdot w\rangle$ is 
		$\langle\alpha\cdot\pi_1\cdot r,\id\rangle$, which corresponds to the 
		morphism $\eta_{FX}\cdot\alpha\cdot\pi_1\cdot r$ (similarly to $*$).

	\item Now assume we have a toposal bisimulation
	\begin{center}
\begin{tikzpicture}[scale=1.5]
		
	\node (r) at (0,0) {\scriptsize{$R$}};
	\node (xy) at (1.5,0.7) {\scriptsize{$X\times Y$}};
	\node (pfr) at (1.5,-0.7) {\scriptsize{$\pow{FR}$}};
	\node (fxfy) at (3.5,0.7) {\scriptsize{$F(X)\times F(Y)$}};
	\node (pfxy) at (3.5,-0.7) {\scriptsize{$\pow{F(X\times Y)}$}};
	\node (pfxfy) at (5,0) {\scriptsize{$\pow{F(X)}\times \pow{F(Y)}$}};
	\node (oi) at (5.2,-0.5) {\scriptsize{$\splitpf$}};
	\node (boi) at (5.2,0.5) {\scriptsize{$\product{\eta_{F(X)}}{\eta_{F(Y)}}$}};
	
	\path[->,font=\scriptsize]
		(xy) edge node[above]{$\product{\alpha}{\beta}$} (fxfy)
		(r) edge node[below]{$w$} (pfr)
		(pfxy) edge (pfxfy)
		(fxfy) edge (pfxfy)
		(r) edge node[above]{$r$} (xy)
		(pfr) edge node[below]{$\pow{Fr}$} (pfxy);
			
\end{tikzpicture}
\end{center}
	Then $w$ corresponds to a unique relation represented by a mono
	$\mono{m_w}{W}{FR\times R}$. Let us prove that this witnesses $r$ as a 
	regular AM-bisimulation, that is, that the following diagram commutes
    	\begin{center}
    \begin{tikzpicture}[scale=1.5]
    	
    	\node (frr) at (0,0) {\scriptsize{$W$}};	
    	\node (r) at (1.5,0.7) {\scriptsize{$R$}};
    	\node (xy) at (3.5,0.7) {\scriptsize{$X\times Y$}};
    	\node (pfr) at (1.5,-0.7) {\scriptsize{$FR$}};
    	\node (fxfy) at (6,0) {\scriptsize{$F(X)\times F(Y)$}};
    	\node (pfxy) at (3.5,-0.7) {\scriptsize{$F(X\times Y)$}};
    	\node (ai) at (0.4,0.55) {\scriptsize{$\pi_2\cdot m_w$}};
    	\node (bai) at (0.4,-0.55) {\scriptsize{$\pi_1\cdot m_w$}};
    	\node (oi) at (5.2,-0.6) {\scriptsize{$\splitf$}};
    	\node (boi) at (4.8,0.6) {\scriptsize{$\product{\alpha}{\beta}$}};
    	
    	\path[->,font=\scriptsize]
    		(frr) edge (r)
    		(xy) edge (fxfy)
    		(frr) edge (pfr) 
    		(pfxy) edge (fxfy)
    		(pfxy) edge (fxfy)
    		(r) edge node[above]{$r$} (xy)
    		(pfr) edge node[below]{$Fr$} (pfxy);
    			
    \end{tikzpicture}
    \end{center}
    and that $\pi_2\cdot m_w$ is epi.
    Using the same calculation as the previous point, the diagram of $r$ being a toposal bisimulation
    can be translated in terms of relations as
    \[
    	\langle F(\pi_1\cdot r),\id\rangle;m_w = \langle\alpha\cdot\pi_1\cdot r,\id\rangle \quad \text{and} \quad
	\langle F(\pi_2\cdot r),\id\rangle;m_w = \langle\beta\cdot\pi_2\cdot r,\id\rangle.
    \]
	Let's concentrate on $\alpha$ ($\beta$ will be similar).
	The composition $\langle F(\pi_1\cdot r),\id\rangle;m_w$ is again given by the monic part of 
	$\langle F(\pi_1\cdot r)\cdot\pi_1\cdot m_w,\pi_2\cdot m_w\rangle$, which is equal to 
	$\langle\alpha\cdot\pi_1\cdot r,\id\rangle$. This means that there is an epi $e$ such that 
	\[
		\langle F(\pi_1\cdot r)\cdot\pi_1\cdot m_w,\pi_2\cdot m_w\rangle = 
		\langle\alpha\cdot\pi_1\cdot r,\id\rangle\cdot e.
	\]
	Consequently, $\pi_2\cdot m_w = e$ and $\pi_2\cdot m_w$ is an epi.
	Furthermore, 
	\[
		F(\pi_1\cdot r)\cdot\pi_1\cdot m_w = 
		\alpha\cdot\pi_1\cdot r\cdot e =
		\alpha\cdot\pi_1\cdot r\cdot\pi_2\cdot m_w.\qedhere
	\]
\end{itemize}
\end{proof}
\noindent 
This nicer formulation allows us to prove a much nicer tabularity property, 
which could only be informally described for regular AM-bisimulations:
\begin{prop}
\label{prop:tobi-i-cat} 
Assume that $\CC$ is a topos and that $F$ 
covers pullbacks.
Then the following is an I-category:
	objects are coalgebras on $F$,
	morphisms are toposal AM-bisimulations,
	$\sqsubseteq$, identities, composition, and 
	$(\_)^\dagger$ are defined as in $\Rel{\CC}$.
\end{prop}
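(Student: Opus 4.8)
The plan is to reduce everything to regular AM-bisimulations and reuse the closure arguments already established. By Proposition~\ref{prop:lobi-tobi}, in a topos the toposal AM-bisimulations from $\alpha$ to $\beta$ are exactly the regular AM-bisimulations from $\alpha$ to $\beta$, so it suffices to check that regular AM-bisimulations form an I-category with the structure inherited from $\Rel{\CC}$. As in the proof of Proposition~\ref{prop:bisim-i-cat}, this comes down to three closure facts: that diagonals are regular AM-bisimulations, that they are closed under the inverse $(\_)^\dagger$, and that they are closed under composition. Once these hold, the I-category structure (the local order $\sqsubseteq$, the anti-involution $(\_)^\dagger$, and their compatibility with composition) is inherited directly from the allegory $\Rel{\CC}$, exactly as in the proof of Proposition~\ref{prop:bisim-i-cat}, since identities, composition, order and dagger are all computed in $\Rel{\CC}$ and the local posets of bisimulations are sub-posets of the corresponding local posets of relations.

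For the diagonal, recall from the proof of Proposition~\ref{prop:bisim-i-cat} that $\map{\alpha}{X}{FX}$ is a witness for $\Delta_X$ as a plain AM-bisimulation. The second half of the proof of Proposition~\ref{prop:reg-bisim-instances} turns any plain AM-bisimulation with witness $w$ into a regular one by taking $\pair{w}{\id}$, and this direction does not use the axiom of choice (indeed $\pi_2\circ\pair{w}{\id} = \id$ is trivially a regular epi); hence $\Delta_X$ is a regular, and thus toposal, AM-bisimulation. For closure under inverse, I would mimic the inverse computation of Proposition~\ref{prop:bisim-i-cat}: given a regular AM-bisimulation $\mono{r}{R}{X\times Y}$ with witness $\mono{w}{W}{FR\times R}$, the very same $W$ and $w$ witness $r^\dagger=\pair{\pi_2}{\pi_1}\cdot r$, because $F(r^\dagger)=F\pair{\pi_2}{\pi_1}\cdot Fr$ and $\splitf$ absorbs the swap exactly as in the plain case; crucially $\pi_2\circ w$ is unchanged and hence still a regular epi, so no extra hypothesis is needed here.

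Closure under composition is where the hypothesis that $F$ covers pullbacks enters, and this is precisely the content of Proposition~\ref{prop:composition-reg}, which I would simply invoke. The main obstacle is therefore already discharged by that proposition: the only place the axiom of choice was used in the choice-based Proposition~\ref{prop:bisim-i-cat} was in the closure under composition, and that usage is now replaced by the covering-of-pullbacks argument. What remains is purely organisational, namely confirming that the three closure properties together with the inheritance from $\Rel{\CC}$ yield the I-category axioms in the sense of \cite{freyd90}; I expect no new difficulty. One small point to double-check is that the class of regular AM-bisimulations is genuinely a class of \emph{relations}, independent of the chosen representing mono, which is guaranteed by Definition~\ref{def:regAMbi} quantifying over all representing monos, so that the order on subobjects restricts correctly to this subcategory.
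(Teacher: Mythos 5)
Your proposal is correct, and it is in fact the route the paper itself flags in one sentence at the start of its proof (``We could directly conclude from Propositions~\ref{prop:composition-reg} and~\ref{prop:lobi-tobi}'') before choosing to do something else. The paper's displayed argument instead adapts the proof of Proposition~\ref{prop:bisim-i-cat} directly to toposal witnesses $\map{W}{R}{\pow{FR}}$: it keeps the weak-pullback lifting $\phi$ from the choice-based proof and replaces the section $s$ of $e_{\rcomp{r_1}{r_2}}$ (the sole use of the axiom of choice) by the pseudo-inverse $(e_{\rcomp{r_1}{r_2}})^{\dagger}$ of Proposition~\ref{prop:pseudo-inverse}, gluing everything with the multiplication $\mu$ of the power-object monad. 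That direct argument is only carried out under the stronger hypothesis that $F$ preserves weak pullbacks, so for the proposition as actually stated (with $F$ merely covering pullbacks) the paper is relying on exactly your reduction; in that sense your proof is the complete one for the stated hypotheses, while the paper's version buys an explicit illustration of how the Kleisli structure of $\pow{\!}$ substitutes for choice. Your treatment of the two easy closure facts is also sound: the diagonal is a plain AM-bisimulation witnessed by $\alpha$, and the passage from a plain witness $w$ to the regular witness $\pair{w}{\id}$ in Proposition~\ref{prop:reg-bisim-instances} is indeed choice-free since $\pi_2\circ\pair{w}{\id}=\id$; and for the inverse, the same relation $\mono{w}{W}{FR\times R}$ works because both sides of the defining equation are merely post-composed with the swap, leaving $\pi_2\circ w$ untouched.
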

\begin{rem}
Remark that this Proposition is similar to Proposition~\ref{prop:bisim-i-cat},
without the axiom of choice and assuming only that $F$ covers pullbacks, but 
by replacing plain AM-bisimulations by toposal AM-bisimulations.
\end{rem}

\begin{proof}
	We could directly conclude this from Propositions~\ref{prop:composition-reg} and~\ref{prop:lobi-tobi}, but let us 
	show that the proof of Proposition~\ref{prop:bisim-i-cat} can be adapted more easily in the case when $F$ 
	preserves weak pullbacks.
	
	The only thing to prove is that toposal bisimulations are closed under 
	composition, without using the regular axiom of choice.
	The proof starts the same way as Proposition~\ref{prop:bisim-i-cat}. We 
	have two witnesses $\map{W_i}{R_i}{\pow{F}R_i}$ and we want to 
	construct a witness 
	$\map{W}{\rcomp{R_1}{R_2}}{\pow{F}(\rcomp{R_1}{R_2})}$.
	Since $F$ and $\pow{\!}$ preserve weak pullbacks 
	and by definition of composition, we
	have the following weak pullback and (epi, mono)-factorisation:
	\begin{center}
	\begin{tikzpicture}[scale=1.5]
			
		\node (r1r2) at (2,1) {\scriptsize{$\pow{F}(R_1\star R_2)$}};
		\node (r1) at (2,0) {\scriptsize{$\pow{F}R_1$}};
		\node (r2) at (5,1) {\scriptsize{$\pow{F}R_2$}};
		\node (y) at (5,0) {\scriptsize{$\pow{F}Y$}};
		
		\path[->,font=\scriptsize]
			(r1r2) edge node[left]{$\pow{F}\mu_1$} (r1)
			(r1r2) edge node[above]{$\pow{F}\mu_2$} (r2)
			(r1) edge node[below]{$\pow{F}(\pi_2\cdot r_1)$} (y)
			(r2) edge node[right]{$\pow{F}(\pi_1\cdot r_2)$} (y);

		\draw[dashed] (2.2,0.55) -- (2.45,0.55) -- (2.45,0.8);

		\node (r1r2) at (7,1) {\scriptsize{$R_1\star R_2$}};
		\node (xz) at (11,1) {\scriptsize{$X\times Z$}};
		\node (r1sqr2) at (9,0) {\scriptsize{$\rcomp{R_1}{R_2}$}};
		
		\path[->,font=\scriptsize]
			(r1r2) edge node[above]{$\pair{\pi_1\cdot r_1 \cdot \mu_1}{\pi_2\cdot r_2 \cdot \mu_2}$} (xz);
			
		\path[->>,font=\scriptsize]
			(r1r2) edge node[below]{$e_{\rcomp{r_1}{r_2}}$} (r1sqr2);
			
		\path[>->,font=\scriptsize]
			(r1sqr2) edge node[below]{$\rcomp{r_1}{r_2}$} (xz);
				
	\end{tikzpicture}
	\end{center}
	By the universal property of weak pullbacks, 
	we have $\map{\phi}{R_1\star R_2}{\pow{F}(R_1\star R_2)}$, such that
	\begin{center}
	\begin{tikzpicture}[scale=1.5]
			
		\node (r1r2) at (3,1) {\scriptsize{$\pow{F}(R_1\star R_2)$}};
		\node (r1) at (3,0) {\scriptsize{$\pow{F}R_1$}};
		\node (r2) at (6,1) {\scriptsize{$\pow{F}R_2$}};
		\node (y) at (6,0) {\scriptsize{$\pow{F}Y$}};
		\node (z) at (1.8,1.7) {\scriptsize{$R_1\star R_2$}};
		
		\path[->,font=\scriptsize]
			(r1r2) edge node[left]{$\pow{F}\mu_1$} (r1)
			(r1r2) edge node[above]{$\pow{F}\mu_2$} (r2)
			(r1) edge node[below]{$\pow{F}(\pi_2\cdot r_1)$} (y)
			(r2) edge node[right]{$\pow{F}(\pi_1\cdot r_2)$} (y);
			
		\path[->,font=\scriptsize, bend right =30]
			(z) edge node[left]{$W_1\cdot\mu_1$} (r1);
			
		\path[->,font=\scriptsize, bend left =20]
			(z) edge node[right]{$W_2\cdot\mu_2$} (r2);
			
		\path[->,font=\scriptsize, dotted]
			(z) edge node[above]{$\phi$} (r1r2);
				
	\end{tikzpicture}
	\end{center}
	Now $W = \pow{F}(e_{\rcomp{r_1}{r_2}})\cdot\mu_{F(R_1\star R_2)}
		\cdot\pow{\phi}\cdot(e_{\rcomp{r_1}{r_2}})^{\dagger}$ 
		is the expected witness:
	\begin{center}
	\begin{tabular}{rclcr}
		& & $\pow{F\pi_1}\cdot\pow{F(\rcomp{r_1}{r_2})}\cdot W$\\ 
		& $=$ & $\pow{F(\pi_1\cdot \rcomp{r_1}{r_2}\cdot e_{\rcomp{r_1}{r_2}})}\cdot\mu_{F(R_1\star R_2)}\cdot\pow{\phi}\cdot(e_{\rcomp{r_1}{r_2}})^{\dagger}$
		& & \hfill (definition of $W$)\\
		& $=$ & $\mu_{FX}\cdot\pow{\pow{F(\pi_1\cdot \rcomp{r_1}{r_2}\cdot e_{\rcomp{r_1}{r_2}})}}\cdot\pow{\phi}\cdot(e_{\rcomp{r_1}{r_2}})^{\dagger}$
		& & \hfill (naturality of $\mu$)\\
		& $=$ & $\mu_{FX}\cdot\pow{\pow{F(\pi_1\cdot r_1\cdot\mu_1)}}\cdot\pow{\phi}\cdot(e_{\rcomp{r_1}{r_2}})^{\dagger}$
		& & \hfill (definition of $\rcomp{r_1}{r_2}$)\\
		& $=$ & $\mu_{FX}\cdot\pow{(\pow{F(\pi_1\cdot r)}\cdot W_1\cdot\mu_1)}\cdot(e_{\rcomp{r_1}{r_2}})^{\dagger}$
		& & \hfill (definition of $\phi$)\\
		& $=$ & $\mu_{FX}\cdot\pow{(\eta_{FX}\cdot\alpha\cdot\pi_1\cdot r_1\cdot\mu_1)}\cdot(e_{\rcomp{r_1}{r_2}})^{\dagger}$
		& & \hfill (assumption on $W_1$)\\
		& $=$ & $\pow{(\alpha\cdot\pi_1\cdot r_1\cdot\mu_1)}\cdot(e_{\rcomp{r_1}{r_2}})^{\dagger}$
		& & \hfill (unit coherence axiom)\\
		& $=$ & $\pow{(\alpha\cdot\pi_1\cdot \rcomp{r_1}{r_2}\cdot e_{\rcomp{r_1}{r_2}})}\cdot(e_{\rcomp{r_1}{r_2}})^{\dagger}$
		& & \hfill (definition of $\rcomp{r_1}{r_2}$)\\
		& $=$ & $\pow{(\alpha\cdot\pi_1\cdot \rcomp{r_1}{r_2})}\cdot\eta_{\rcomp{R_1}{R_2}}$
		& & \hfill ($e_{\rcomp{r_1}{r_2}}$ is epi)\\
		& $=$ & $\eta_{FX}\cdot\alpha\cdot\pi_1\cdot \rcomp{r_1}{r_2}$
		& & \hfill (naturality of $\eta$)
	\end{tabular}
	\end{center}
	
	Similarly, we can prove that $\pow{F\pi_2}\cdot\pow{F(\rcomp{r_1}{r_2})}\cdot W = \eta_{FZ}\cdot\gamma\cdot\pi_2\cdot\rcomp{r_1}{r_2}$, which completes the proof.
	\end{proof}

Obviously, the category of maps of the I-category of toposal 
bisimulations is then not isomorphic to $\Coal{F}$, but to the category of 
$F$-coalgebras with $\pow{F}$-coalgebra homomorphisms 
between them. Then tabularity can be formulated as follows:
\begin{prop}
If $U$ is a 
toposal bisimulation from the $F$-coalgebra $\alpha$ to the 
$F$-coalgebra $\beta$, 
and if $\map{f}{Z}{X}$, $\map{g}{Z}{Y}$ is a tabulation of $U$, 
then there is a $\pow{F}$-coalgebra structure $\gamma$ on $Z$ 
such that $f$ is a 
$\pow{F}$-coalgebra homomorphism from $\gamma$ to $\eta_X\cdot\alpha$ and 
$g$ is a $\pow{F}$-coalgebra homomorphism from $\gamma$ to 
$\eta_Y\cdot\beta$.
\end{prop}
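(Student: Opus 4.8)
The plan is to reduce the statement to Proposition~\ref{prop:lobi-tab} applied to the composite functor $\pow{F}$, rather than proving anything from scratch. Recall from the discussion immediately following Definition~\ref{def:tobi} that a toposal AM-bisimulation from $\alpha$ to $\beta$ is nothing but a plain AM-bisimulation for the endofunctor $\pow{F}$ from the coalgebra $\eta_X\cdot\alpha$ to the coalgebra $\eta_Y\cdot\beta$ (where $\eta$ is taken at $FX$ and $FY$ respectively). Unwinding the two definitions, the witness $\map{W}{R}{\pow{F}R}$ demanded in Definition~\ref{def:tobi} is exactly a witness that $r$ is a $\pow{F}$-AM-bisimulation: the top leg of the toposal square is $(\eta_{FX}\times\eta_{FY})\cdot(\product{\alpha}{\beta})\cdot r$, and since $(\eta_{FX}\times\eta_{FY})\cdot(\product{\alpha}{\beta}) = \product{(\eta_{FX}\cdot\alpha)}{(\eta_{FY}\cdot\beta)}$, this coincides with the arrow $\product{(\eta_X\cdot\alpha)}{(\eta_Y\cdot\beta)}\cdot r$ appearing in the AM-bisimulation square for $\pow{F}$. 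This identification is precisely the bookkeeping already carried out in the proof of Proposition~\ref{prop:lobi-tobi}, so I would simply cite it.

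First I would note that $\pow{F}$ is a genuine endofunctor on $\CC$, being the composite of the covariant power-object functor of Theorem~\ref{thm:pow-monad} with $F$, and that Proposition~\ref{prop:lobi-tab} requires no hypothesis on the functor beyond being an endofunctor. Next I would observe that the hypothesis that $\map{f}{Z}{X}$, $\map{g}{Z}{Y}$ is a tabulation of $U$ is a purely relational statement about $U$, independent of which functor we regard $U$ as a bisimulation for; hence $(f,g)$ remains a tabulation of $U$ viewed as a $\pow{F}$-bisimulation. Then I would invoke Proposition~\ref{prop:lobi-tab} with $F$ instantiated to $\pow{F}$, $\alpha$ to $\eta_X\cdot\alpha$, and $\beta$ to $\eta_Y\cdot\beta$. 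It applies verbatim and produces a $\pow{F}$-coalgebra structure $\map{\gamma}{Z}{\pow{F}Z}$ for which $f$ is a $\pow{F}$-coalgebra morphism from $\gamma$ to $\eta_X\cdot\alpha$ and $g$ is a $\pow{F}$-coalgebra morphism from $\gamma$ to $\eta_Y\cdot\beta$, which is exactly the desired conclusion.

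There is no genuine obstacle here, as the argument is a direct transport of an earlier result along the reformulation of Proposition~\ref{prop:lobi-tobi}; the only point requiring care is the verification that the toposal square and the $\pow{F}$-AM-bisimulation square literally coincide, i.e.\ that the coalgebra with transition map $\product{\alpha}{\beta}$ post-composed with the units is correctly matched with $\eta_X\cdot\alpha$ and $\eta_Y\cdot\beta$. Since this matching is already established, the proposition follows.
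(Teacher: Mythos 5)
Your proof is correct and follows essentially the route the paper intends: the remark immediately after Definition~\ref{def:tobi} identifies a toposal AM-bisimulation with a plain $\pow{F}$-AM-bisimulation between $\eta_{FX}\cdot\alpha$ and $\eta_{FY}\cdot\beta$, and the proposition is then just Proposition~\ref{prop:lobi-tab} instantiated at the endofunctor $\pow{F}$ (which needs no hypotheses on the functor). The only small slip is attributing the square-matching to Proposition~\ref{prop:lobi-tobi}, which actually compares toposal with \emph{regular} AM-bisimulations; the identification you need is the one stated directly after Definition~\ref{def:tobi}, and it holds by the elementary product computations you give.
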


\begin{cor}
\label{coro:tobi-spans}
Assume given two coalgebras $\map{\alpha}{X}{F(X)}$ and $\map{\beta}{Y}{F(Y)}$, 
and two points $\map{p}{\ast}{X}$ and $\map{q}{\ast}{Y}$. 
the following 
two statements are equivalent:
\begin{enumerate}
	\item 
	There is a toposal bisimulation $\mono{r}{R}{X\times Y}$ from 
	$\alpha$ to $\beta$, and a point $\map{c}{\ast}{R}$ such that 
	$r\cdot c = \pair{p}{q}$ if and only if
	\item 
	there is a span $X\,\xleftarrow{~f~}\,Z\,\xrightarrow{~g~}\,Y$, 
	a $\pow{F}$-coalgebra structure 
	$\gamma$ on $Z$, 
	and a point $\map{w}{\ast}{Z}$ such that
	$f$ is a $\pow{F}$-coalgebra homomorphism
	from $\gamma$ to $\eta_X\cdot\alpha$, $g$ from $\gamma$
	to $\eta_Y\cdot\beta$, $f\cdot w = p$, and 
	$g\cdot w = q$.
\end{enumerate}
\end{cor}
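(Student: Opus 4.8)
The plan is to transport the proof of Corollary~\ref{coro:lobi-spans} into the setting of toposal bisimulations, where the relevant closure properties hold without the axiom of choice, assuming as in Proposition~\ref{prop:tobi-i-cat} that $\CC$ is a topos and that $F$ covers pullbacks. The guiding observation is the one recorded just after Definition~\ref{def:tobi}: a toposal AM-bisimulation from $\alpha$ to $\beta$ is exactly a $\pow{F}$-AM-bisimulation between the lifted coalgebras $\eta_X\cdot\alpha$ and $\eta_Y\cdot\beta$, so all the reasoning about graphs, inverses, and compositions can be performed for the functor $\pow{F}$.

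For the direction $1\Rightarrow2$ I would argue exactly as in Corollary~\ref{coro:lobi-spans}. Given a toposal bisimulation $\mono{r}{R}{X\times Y}$ and a point $\map{c}{\ast}{R}$ with $r\cdot c = \pair{p}{q}$, the pair $(\pi_1\cdot r,\pi_2\cdot r)$ is a tabulation of $r$, since $r$ is a jointly monic span. The preceding proposition on tabulations of toposal bisimulations then furnishes a $\pow{F}$-coalgebra structure $\gamma$ on $Z:=R$ making $f:=\pi_1\cdot r$ and $g:=\pi_2\cdot r$ into $\pow{F}$-coalgebra morphisms to $\eta_X\cdot\alpha$ and $\eta_Y\cdot\beta$ respectively. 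Taking $w:=c$ gives $f\cdot w = \pi_1\cdot r\cdot c = p$ and $g\cdot w = q$, which is statement 2.

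For $2\Rightarrow1$ the point is that graphs of $\pow{F}$-coalgebra morphisms are toposal bisimulations. By Proposition~\ref{prop:bisim-maps} applied to the functor $\pow{F}$, the hypotheses that $f$ and $g$ are $\pow{F}$-coalgebra morphisms mean that $\pair{\id}{f}$ and $\pair{\id}{g}$ represent $\pow{F}$-AM-bisimulations, from $\gamma$ to $\eta_X\cdot\alpha$ and from $\gamma$ to $\eta_Y\cdot\beta$. Closure under inverse, which never uses the axiom of choice, turns $\pair{\id}{f}^\dagger$ into a $\pow{F}$-AM-bisimulation from $\eta_X\cdot\alpha$ to $\gamma$, and closure under composition in the form established by Proposition~\ref{prop:tobi-i-cat} shows that $r:=\rcomp{\pair{\id}{f}^\dagger}{\pair{\id}{g}}$ is a toposal bisimulation from $\alpha$ to $\beta$. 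This composite relation is precisely the image of $\map{\pair{f}{g}}{Z}{X\times Y}$; note that the span $(f,g)$ need not itself be jointly monic, which is exactly why the image must be taken. Writing $\pair{f}{g}=r\cdot e$ with $e$ the regular epi of the image factorisation, the point $c:=e\cdot w$ satisfies $r\cdot c = \pair{f}{g}\cdot w = \pair{p}{q}$, giving statement 1.

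The main obstacle is the composition step in $2\Rightarrow1$. Corollary~\ref{coro:lobi-spans} spends its axiom of choice precisely here, whereas I must obtain closure under composition for free; this is where the specific structure of $\pow{F}$ intervenes, via the fact that $\pow{\!}$ sends epis to split epis (Proposition~\ref{prop:pseudo-inverse}), as exploited in the proof of Proposition~\ref{prop:tobi-i-cat}. A secondary subtlety is that the intermediate object of the composition carries the \emph{general} $\pow{F}$-coalgebra $\gamma$ rather than a lifted one; I would point out that the composition argument of Proposition~\ref{prop:tobi-i-cat} is insensitive to this, since the intermediate coalgebra structure enters only in the gluing step, where the two legs agree over the pullback $R_1\star R_2$ regardless of its form, while it is the lifted structures $\eta_X\cdot\alpha$ and $\eta_Y\cdot\beta$ at the two endpoints that yield the required toposal-bisimulation witness.
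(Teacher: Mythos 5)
Your proof is correct and follows essentially the same route the paper intends: direction $1\Rightarrow2$ is the tabulation proposition for toposal bisimulations, and direction $2\Rightarrow1$ combines Proposition~\ref{prop:bisim-maps} (applied to $\pow{F}$) with closure under inverse and the choice-free closure under composition of Proposition~\ref{prop:tobi-i-cat}, exactly mirroring the proof of Corollary~\ref{coro:lobi-spans} with the axiom of choice replaced by the splitting of $\pow{\!}$ on epis. The identification of the composite with the image of $\pair{f}{g}$ and the handling of the basepoint are both accurate.
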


\section{From Bisimulations to Simulations}
\label{sec:simulations}

In this section, we would like to extend the analysis of the previous sections
to deal with \emph{simulations}. Classically, simulations for coalgebras 
require a notion of order on morphisms of the form 
$X\,\longrightarrow\,FY$, to allow one to define that there is fewer transitions 
coming out of a state than another. This allows one to easily modify the 
definition of AM-bisimulations to obtain \emph{AM-simulations}. 
We will show that toposal bisimulations can also 
be extended to simulations in a nice way to mitigate these issues.
The only reason we chose to stay in a topos and not in a general regular category 
is because theorems have a nicer formulation there, but most of the discussion here 
can be done in a regular category.

\subsection{Order-Structure on Functors, and Lax Coalgebra Homomorphisms}
\label{sec:good-order}

We want to be able to 
compare two morphisms of the form $X\,\longrightarrow\,FY$. So, 
assuming a preorder $\leq$ on each Hom-set $\CC(X,FY)$, we can 
define \emph{lax homomorphisms of coalgebras}, as follows:
\begin{defi}
A lax homomorphism of coalgebras from $\map{\alpha}{X}{FX}$ to 
$\map{\beta}{Y}{FY}$
is a morphism $\map{f}{X}{Y}$ of $\CC$ such that 
the following 
diagram laxly commutes,
\begin{center}
\begin{tikzpicture}[scale=1.5]
		
	\node (r1r2) at (3,1) {\scriptsize{$X$}};
	\node (r1) at (8,1) {\scriptsize{$Y$}};
	\node (r2) at (3,0) {\scriptsize{$FX$}};
	\node (y) at (8,0) {\scriptsize{$FY$}};
	\node (ord) at (5.5,0.5) {\scriptsize{$\leq$}};
	
	\path[->,font=\scriptsize]
		(r1r2) edge node[above]{$f$} (r1)
		(r1r2) edge node[left]{$\alpha$} (r2)
		(r1) edge node[right]{$\beta$} (y)
		(r2) edge node[below]{$Ff$} (y);
			
\end{tikzpicture}
\end{center}
meaning that 
$Ff\cdot\alpha\leq\beta\cdot f$ in $\CC(X,FY)$.
\end{defi}
Unfortunately, coalgebras and lax homomorphisms of coalgebras do not form a 
category in general, and some axioms are required for the 
interaction of $\leq$ with the composition.
\begin{defi}
A \emph{good order structure on $F$} is a preorder $\leq$ on each 
Hom-set 
of the form 
$\CC(X,FY)$ such that:
\begin{enumerate}
	\item 
	if $\alpha \leq \beta$ in $\CC(X,FY)$, $\map{f}{X'}{X}$, and
	$\map{g}{Y}{Y'}$, then $Fg\cdot\alpha\cdot f \leq Fg\cdot\beta\cdot f$ 
	in $\CC(X',FY')$;
	\item 
	if $\map{h}{X}{FZ}$, $\map{k}{X}{FY}$, $\map{g}{Y}{Z}$, and
	$h \leq Fg\cdot k$ in $\CC(X,FZ)$, then there is $\map{k'}{X}{FY}$ 
	such that $k' \leq k$ in $\CC(X,FY)$ and $h = Fg\cdot k'$.
\end{enumerate}
\end{defi}
\begin{lem}
When $\leq$ is a good order structure on $F$, then coalgebras and lax 
homomorphisms of coalgebras form a category, denoted by $\laxCoal{F}$.
\end{lem}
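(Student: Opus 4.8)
The plan is to observe first that a lax morphism of coalgebras is, by definition, nothing but a morphism of $\CC$ equipped with an extra inequality condition. Consequently the associativity and unit laws for the would-be category $\laxCoal{F}$ are inherited verbatim from $\CC$, since composition of lax morphisms is taken to be ordinary composition in $\CC$. So the only genuine obligations are that identities are lax morphisms and that lax morphisms are closed under composition.

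For identities, I would simply note that $F\id_X\cdot\alpha = \alpha = \alpha\cdot\id_X$, so the required inequality $F\id_X\cdot\alpha\leq\beta\cdot\id_X$ (with $\beta=\alpha$) is just reflexivity of $\leq$ on $\CC(X,FX)$. The identity of $X$ in $\CC$ therefore serves as the identity lax endomorphism of $\map{\alpha}{X}{FX}$.

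The main step is closure under composition. Given lax morphisms $\map{f}{X}{Y}$ from $\alpha$ to $\beta$ and $\map{g}{Y}{Z}$ from $\beta$ to $\gamma$, we have $Ff\cdot\alpha\leq\beta\cdot f$ in $\CC(X,FY)$ and $Fg\cdot\beta\leq\gamma\cdot g$ in $\CC(Y,FZ)$, and we want $F(g\cdot f)\cdot\alpha\leq\gamma\cdot(g\cdot f)$. Applying the first axiom of a good order structure to $Ff\cdot\alpha\leq\beta\cdot f$, postcomposing with $Fg$ (and precomposing with $\id_X$), gives $Fg\cdot Ff\cdot\alpha\leq Fg\cdot\beta\cdot f$; applying the same axiom to $Fg\cdot\beta\leq\gamma\cdot g$, precomposing with $f$ (and postcomposing with $\id_{FZ}$), gives $Fg\cdot\beta\cdot f\leq\gamma\cdot g\cdot f$. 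Both inequalities now live in $\CC(X,FZ)$, so chaining them by transitivity of $\leq$ and using functoriality $Fg\cdot Ff=F(g\cdot f)$ yields the desired $F(g\cdot f)\cdot\alpha\leq\gamma\cdot(g\cdot f)$.

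I do not expect any real obstacle: the argument uses only reflexivity and transitivity of $\leq$, the first axiom of a good order structure, and functoriality of $F$. It is worth remarking that the second axiom is not needed for this lemma (it will become relevant only when relating lax morphisms to simulations later on). The one point requiring care is invoking axiom (1) in its two complementary forms—once to postcompose with $Fg$ and once to precompose with $f$—so that both resulting inequalities are comparable in the common Hom-set $\CC(X,FZ)$ before transitivity is applied.
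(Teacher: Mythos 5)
Your proof is correct and is the standard argument the paper leaves implicit (the lemma is stated without proof in the text): identities are lax by reflexivity, and closure under composition follows from two applications of axiom (1) of a good order structure — once postcomposing with $Fg$, once precomposing with $f$ — followed by transitivity and functoriality of $F$, with associativity and unit laws inherited from $\CC$. Your observation that axiom (2) is not needed here is also accurate; it only enters later, e.g.\ in Proposition~\ref{prop:ord-sim}.
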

\begin{exa}
When $F$ is the functor modelling non-deterministic labelled transition systems and 
$\leq$ is given by point-wise inclusion,
lax homomorphisms of coalgebras are exactly morphisms 
of systems in the sense of \cite{joyal96}. 
Those 
morphisms are intuitively morphisms whose graphs are simulations.
More generally, we will see that lax homomorphisms are simulation maps.
In this picture, it can be proved in some cases that coalgebra homomorphisms
are precisely open maps, that is, lax homomorphisms with 
some lifting properties (see~\cite{wissmann19}, from which the notion
of good order is adapted).
\end{exa}

\subsection{AM-Simulations}
\label{sec:am-sim}

\begin{defi}
We say that a relation is an \emph{AM-simulation} from the coalgebra 
$\map{\alpha}{X}{FX}$ to $\map{\beta}{Y}{FY}$, if for any mono 
$\mono{r}{R}{X\times Y}$ representing it, there is a morphism 
$\map{W}{R}{FR}$ such that:
\begin{center}
\begin{tikzpicture}[scale=2]
		
	\node (r) at (0,0) {\scriptsize{$R$}};
	\node (xy) at (3,0.5) {\scriptsize{$X\times Y$}};
	\node (cr) at (3,-0.5) {\scriptsize{$FR$}};
	\node (fxfy) at (6,0.5) {\scriptsize{$F(X)\times F(Y)$}};
	\node (fxy) at (6,-0.5) {\scriptsize{$F(X\times Y)$}};
	\node (ord) at (4.5,0) {\scriptsize{$\leq\times\geq$}};
	
	\path[->,font=\scriptsize]
		(xy) edge node[above]{$\product{\alpha}{\beta}$} (fxfy)
		(r) edge node[below]{$W$} (cr)
		(fxy) edge node[right]{$\splitf$} (fxfy)
		(r) edge node[above]{$r$} (xy)
		(cr) edge node[below]{$Fr$} (fxy);
			
\end{tikzpicture}
\end{center}
meaning that 
$
	\alpha\cdot\pi_1\cdot r \leq F\pi_1\cdot Fr\cdot W$  
	and 
	$\beta\cdot\pi_2\cdot r \geq F\pi_2\cdot Fr\cdot W.
$
\end{defi}
The definition can be simplified:
\begin{prop}
\label{prop:ord-sim}
When $\leq$ is a good order structure, it is equivalent to require that 
the left inequality is actually an equality 
$\alpha\cdot\pi_1\cdot r = F\pi_1\cdot Fr\cdot W.$
\end{prop}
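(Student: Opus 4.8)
The plan is to prove the two directions of the equivalence separately, the backward one being trivial and the forward one being a direct application of the two axioms of a good order structure. For the direction ``equality $\Rightarrow$ original definition'', there is nothing to do: an equality $\alpha\cdot\pi_1\cdot r = F\pi_1\cdot Fr\cdot W$ is in particular the inequality $\alpha\cdot\pi_1\cdot r \leq F\pi_1\cdot Fr\cdot W$, so the very same $W$ witnesses the AM-simulation condition.

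For the converse, I would start from a witness $\map{W}{R}{FR}$ satisfying both $\alpha\cdot\pi_1\cdot r \leq F\pi_1\cdot Fr\cdot W$ and $\beta\cdot\pi_2\cdot r \geq F\pi_2\cdot Fr\cdot W$, and produce a new witness $\map{W'}{R}{FR}$ that turns the left inequality into an equality while keeping the right one. Observing that $F\pi_1\cdot Fr = F(\pi_1\cdot r)$, the left inequality reads $\alpha\cdot\pi_1\cdot r \leq F(\pi_1\cdot r)\cdot W$. The key step is to apply the second axiom of a good order structure with $h = \alpha\cdot\pi_1\cdot r$ (a morphism $R\to FX$), $k = W$ (a morphism $R\to FR$), and $g = \pi_1\cdot r$ (a morphism $R\to X$); here the roles of the axiom's ``$Y$'' and ``$Z$'' are played by $R$ and $X$. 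This yields $\map{W'}{R}{FR}$ with $W' \leq W$ in $\CC(R,FR)$ and $\alpha\cdot\pi_1\cdot r = F(\pi_1\cdot r)\cdot W' = F\pi_1\cdot Fr\cdot W'$, which is exactly the required equality.

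It then remains to check that $W'$ still satisfies the right inequality. From $W' \leq W$, the first axiom of a good order structure applied with $f = \id_R$ and $g = \pi_2\cdot r$ gives $F(\pi_2\cdot r)\cdot W' \leq F(\pi_2\cdot r)\cdot W$; combining this with the original right inequality $F(\pi_2\cdot r)\cdot W \leq \beta\cdot\pi_2\cdot r$ and transitivity of $\leq$ yields $F\pi_2\cdot Fr\cdot W' \leq \beta\cdot\pi_2\cdot r$, that is, $\beta\cdot\pi_2\cdot r \geq F\pi_2\cdot Fr\cdot W'$. Hence $W'$ witnesses the equality version of the definition. I do not expect any serious obstacle here: the proof is a bookkeeping exercise in matching the two axioms. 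The only point requiring care is the correct identification of the objects in the second axiom (so that $h$ really lands in $FX$ and $g$ really is $\pi_1\cdot r$), together with the remark that passing to the smaller witness $W'$ can only decrease $F(\pi_2\cdot r)\cdot W$, which is precisely what preserves the right-hand inequality.
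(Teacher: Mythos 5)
Your proof is correct and follows exactly the paper's argument: the backward direction is immediate, and for the forward direction the paper likewise applies the second axiom of a good order structure with $h = \alpha\cdot\pi_1\cdot r$, $g = \pi_1\cdot r$, $k = W$ to obtain $W' \leq W$ realizing the equality, then uses monotonicity of composition to preserve the right-hand inequality. No gaps.
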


\begin{proof}
We start with $W$ such that 
\[\alpha\cdot\pi_1\cdot r \leq F\pi_1\cdot Fr\cdot W  
	\quad\text{and}\quad 
	\beta\cdot\pi_2\cdot r \geq F\pi_2\cdot Fr\cdot W.\]
Use the second assumption of a good order structure with 
$h = \alpha\cdot\pi_1\cdot r$, $g = \pi_1\cdot r$ and $k = W$. 
We then obtain $W' \leq W$ with 
\[\alpha\cdot\pi_1\cdot r = F\pi_1\cdot Fr\cdot W'.\] 
Then since composition is monotone, 
\[\beta\cdot\pi_2\cdot r \geq F\pi_2\cdot Fr\cdot W \geq F\pi_2\cdot Fr\cdot W'.\qedhere\]
\end{proof}

\begin{exa}
When $F~:~X\,\mapsto\,\pow{(\Sigma\times X)}$, AM-simulations 
correspond to strong simulations. The left part of the commutativity means
that for every $(x,y) \in R$ and $(a,x') \in \alpha(x)$, there is $y'$ such that 
$(a, (x',y')) \in W(x,y)$. The right part then implies that necessarily 
$(a,y') \in \beta(y)$.
\end{exa}

Much as in the case of AM-bisimulations, diagonals 
(and actually all AM-bisimulations) are AM-simulations, and 
AM-simulations are closed under composition only 
under some conditions. However, they are not closed under converse. 
These observations can be encompassed as follows:
\begin{prop}
\label{prop:am-2-cat}
When $\CC$ has the regular axiom of choice and $F$ preserves weak pullbacks, then the following is a locally ordered 2-category:
\begin{itemize}
	\item 
	objects are $F$-coalgebras,
	\item 
	morphisms are AM-simulations,
	\item 
	identitites, compositions, and $\sqsubseteq$ are given 
	by $\Rel{\CC}$.
\end{itemize}
We denote this category by $\Sim{F}$.
\end{prop}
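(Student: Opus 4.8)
The plan is to observe first that almost all of the required structure is inherited from $\Rel{\CC}$, which is already a locally ordered 2-category. The objects, identities, composition, and the local order $\sqsubseteq$ of $\Sim{F}$ are by definition those of $\Rel{\CC}$; the AM-simulations from $\alpha$ to $\beta$ form a sub-poset of the relations from $X$ to $Y$. Consequently associativity, unitality, and the monotonicity of composition in each variable follow from the corresponding facts in $\Rel{\CC}$, provided we know that the structure restricts, i.e. that identities are AM-simulations and that AM-simulations are closed under composition. So the only two things to prove are these.

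The identities are immediate. Every AM-bisimulation is an AM-simulation: the defining equality $\product{\alpha}{\beta}\cdot r = \splitf\cdot Fr\cdot W$ gives both $\alpha\cdot\pi_1\cdot r = F\pi_1\cdot Fr\cdot W$ and $\beta\cdot\pi_2\cdot r = F\pi_2\cdot Fr\cdot W$, which in particular yield the required inequalities. Since diagonals are AM-bisimulations by Proposition~\ref{prop:bisim-i-cat}, they are AM-simulations.

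For closure under composition I would follow the construction of Proposition~\ref{prop:bisim-i-cat}, adapted to the lax setting. Given AM-simulations $r_1$ from $\alpha$ to $\beta$ and $r_2$ from $\beta$ to $\gamma$, first invoke Proposition~\ref{prop:ord-sim} to choose witnesses $\map{W_1}{R_1}{FR_1}$ and $\map{W_2}{R_2}{FR_2}$ whose left-hand conditions are equalities, namely $\alpha\cdot\pi_1\cdot r_1 = F(\pi_1\cdot r_1)\cdot W_1$ and $\beta\cdot\pi_1\cdot r_2 = F(\pi_1\cdot r_2)\cdot W_2$, together with the right-hand inequalities $F(\pi_2\cdot r_1)\cdot W_1 \leq \beta\cdot\pi_2\cdot r_1$ and $F(\pi_2\cdot r_2)\cdot W_2 \leq \gamma\cdot\pi_2\cdot r_2$. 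Let $s$ be a section of the factorisation epi $e_{\rcomp{r_1}{r_2}}$, which exists by the regular axiom of choice, and set $A = W_1\cdot\mu_1\cdot s$ and $B = W_2\cdot\mu_2\cdot s$. On $R_1\star R_2$ the two legs meet at $Y$ (that is, $\pi_2\cdot r_1\cdot\mu_1 = \pi_1\cdot r_2\cdot\mu_2$), so combining the right-hand inequality for $r_1$ with the left-hand equality for $r_2$ gives $F(\pi_2\cdot r_1)\cdot A \leq F(\pi_1\cdot r_2)\cdot B$ in $\CC(\rcomp{R_1}{R_2},FY)$.

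The one genuinely new point, and the main obstacle, is that the candidate legs $A$ and $B$ no longer agree strictly over $FY$ but only up to $\leq$, so the weak-pullback step cannot be applied verbatim. This is resolved by the second axiom of a good order structure: applied with $h = F(\pi_2\cdot r_1)\cdot A$, $g = \pi_1\cdot r_2$, and $k = B$, it produces $\map{B'}{\rcomp{R_1}{R_2}}{FR_2}$ with $B' \leq B$ and $F(\pi_1\cdot r_2)\cdot B' = F(\pi_2\cdot r_1)\cdot A$. Now $A$ and $B'$ agree on the nose over $FY$, so since $F$ preserves weak pullbacks the universal property of $F(R_1\star R_2)$ yields $\map{\phi}{\rcomp{R_1}{R_2}}{F(R_1\star R_2)}$ with $F\mu_1\cdot\phi = A$ and $F\mu_2\cdot\phi = B'$. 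Setting $W = F(e_{\rcomp{r_1}{r_2}})\cdot\phi$, the left-hand condition is an equality by exactly the computation of Proposition~\ref{prop:bisim-i-cat}, using $\pi_1\cdot\rcomp{r_1}{r_2}\cdot e_{\rcomp{r_1}{r_2}} = \pi_1\cdot r_1\cdot\mu_1$, the equality for $W_1$, and $e_{\rcomp{r_1}{r_2}}\cdot s = \id$. For the right-hand side one computes $F(\pi_2\cdot\rcomp{r_1}{r_2})\cdot W = F(\pi_2\cdot r_2)\cdot B'$, and then chains $F(\pi_2\cdot r_2)\cdot B' \leq F(\pi_2\cdot r_2)\cdot B \leq \gamma\cdot\pi_2\cdot\rcomp{r_1}{r_2}$, where the first step uses $B'\leq B$ and the second uses the right-hand inequality for $r_2$, both via the monotonicity of the first axiom. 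Thus $W$ witnesses $\rcomp{r_1}{r_2}$ as an AM-simulation. The remainder is pure bookkeeping: checking that each application of the first axiom is made with the correct pre- and post-composition morphisms (notably $F(\pi_2\cdot r_2)\cdot(-)$ and $(-)\cdot\mu_2\cdot s$), which is routine and parallels the bisimulation computation.
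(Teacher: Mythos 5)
Your proof is correct and follows the route the paper intends: the paper omits the proof of this proposition, indicating only that it is obtained ``much as in the case of AM-bisimulations,'' and your argument is precisely that adaptation of Proposition~\ref{prop:bisim-i-cat}, with the identity and inheritance parts handled exactly as expected. In particular you correctly isolate the one genuinely new step --- the candidate cone over $FY$ built from $A = W_1\cdot\mu_1\cdot s$ and $B = W_2\cdot\mu_2\cdot s$ only laxly commutes, so the second axiom of a good order structure must be invoked to replace $B$ by some $B'\leq B$ with $F(\pi_1\cdot r_2)\cdot B' = F(\pi_2\cdot r_1)\cdot A$ before the weak-pullback property of $F(R_1\star R_2)$ can be applied --- and the concluding chain $F(\pi_2\cdot r_2)\cdot B' \leq F(\pi_2\cdot r_2)\cdot B \leq \gamma\cdot\pi_2\cdot(\rcomp{r_1}{r_2})$ via monotonicity is sound.
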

We can formalise the relationship between lax coalgebra homomorphisms 
and simulation maps:
\begin{thm}
Maps in $\Rel{\CC}$ that are AM-simulations are precisely lax 
homomorphisms of coalgebras.
\end{thm}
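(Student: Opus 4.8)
The plan is to mirror the proof of Proposition~\ref{prop:bisim-maps}, replacing the two equalities there by the two inequalities coming from the order structure. First I would invoke Theorem~\ref{th:maps} to reduce the statement to a concrete one about morphisms of $\CC$: a map in $\Rel{\CC}$ is exactly a relation represented by a mono of the form $\mono{\pair{\id}{h}}{X}{X\times Y}$, so it suffices to show that $\pair{\id}{h}$ represents an AM-simulation from $\alpha$ to $\beta$ if and only if $h$ is a lax morphism of coalgebras, i.e.\ $Fh\cdot\alpha \leq \beta\cdot h$ in $\CC(X,FY)$. Taking $r = \pair{\id}{h}$ (so $R = X$), the key simplification is the computation $\pi_1\cdot r = \id$ and $\pi_2\cdot r = h$; applying $F$ and using functoriality, the defining diagram of an AM-simulation collapses to the requirement that there be a $\map{W}{X}{FX}$ with $\alpha \leq W$ and $Fh\cdot W \leq \beta\cdot h$.

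For the ``if'' direction I would simply take $W = \alpha$. By Proposition~\ref{prop:ord-sim} the left condition may be taken as an equality, and indeed $\alpha\cdot\pi_1\cdot r = \alpha = F\pi_1\cdot Fr\cdot\alpha$ holds on the nose; the right condition $Fh\cdot\alpha \leq \beta\cdot h$ is then precisely the laxness of $h$. Conversely, for the ``only if'' direction, suppose $W$ witnesses $\pair{\id}{h}$ as an AM-simulation, so $\alpha \leq W$ and $Fh\cdot W \leq \beta\cdot h$. Post-composing the first inequality with $Fh$ using axiom~(1) of a good order structure (with $f = \id$ and $g = h$) yields $Fh\cdot\alpha \leq Fh\cdot W$, and transitivity with $Fh\cdot W \leq \beta\cdot h$ gives $Fh\cdot\alpha \leq \beta\cdot h$, so $h$ is a lax morphism.

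The computation is routine and, as in the bisimulation case, the only genuine content lies in verifying that the good order structure axioms are exactly what is needed: monotonicity of composition (axiom~(1)) drives the ``only if'' direction, while reflexivity of $\leq$ makes $W = \alpha$ work for ``if''. The main point to be careful about is the bookkeeping of the two inequalities --- that the left component is controlled by $\leq$ and the right by $\geq$ --- and that, exactly as for Proposition~\ref{prop:bisim-maps}, choosing the canonical representative $\pair{\id}{h}$ is harmless since all representing monos of a map are isomorphic, so the existence of a witness $W$ transports between them. I do not expect any serious obstacle here, this being the simulation analogue of the equivalence between coalgebra morphisms and bisimulation maps.
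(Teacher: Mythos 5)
Your proof is correct and is exactly the intended argument: the paper states this theorem without proof, as the evident simulation analogue of Proposition~\ref{prop:bisim-maps}, and your adaptation (reduce via Theorem~\ref{th:maps} to $r=\pair{\id}{h}$, take $W=\alpha$ for one direction, and use monotonicity plus transitivity of the good order structure for the other) fills it in correctly, with the inequality bookkeeping handled properly.
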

\noindent 
Note that this theorem cannot have a form as nice as Theorem~\ref{th:maps-coalgebra} 
because AM-simulations are not closed under converse, and the 
right adjoint of a map has to be its converse.
At this point, we can also describe the tabulations of AM-simulations:
\begin{prop}
If $U$ is an AM-simulation from $\alpha$ to $\beta$, and if 
$\map{f}{Z}{X}$, $\map{g}{Z}{Y}$ is a tabulation of $U$ then, 
there is a coalgebra structure $\gamma$ on $Z$ such that $f$ is a 
coalgebra homomorphism from $\gamma$ to $\alpha$ and 
$g$ is a lax coalgebra homomorphism from $\gamma$ to $\beta$.
\end{prop}
\begin{cor}
Assume $\CC$ has the regular axiom of choice.
Assume given two coalgebras $\map{\alpha}{X}{F(X)}$ and $\map{\beta}{Y}{F(Y)}$, 
and two points $\map{p}{\ast}{X}$ and $\map{q}{\ast}{Y}$. 
The following 
two statements are equivalent:
\begin{enumerate}
	\item 
	There is an AM-simulation $\mono{r}{R}{X\times Y}$ from 
	$\alpha$ to $\beta$, and a point $\map{c}{\ast}{R}$ with 
	$r\cdot c = \pair{p}{q}$.
	\item 
	There is a span $X\,\xleftarrow{~f~}\,Z\,\xrightarrow{~g~}\,Y$, 
	an $F$-coalgebra structure $\gamma$ 
	on $Z$ such that $f$ is a coalgebra homomorphism from $\gamma$ to 
	$\alpha$ and $g$ is a lax coalgebra homomorphism from 
	$\gamma$ to $\beta$, 
	and a point $\map{w}{\ast}{Z}$ such that $f\cdot w = p$ and 
	$g\cdot w = q$.
\end{enumerate}
\end{cor}
\noindent 
This formalises some observations that simulations are spans of a 
bisimulations map and a simulation map (see \cite{tabuada04} for examples 
of this fact in the context of open maps).

\subsection{Extending the Order-Structure}

In Section~\ref{sec:good-order}, we started by assuming a relation $\leq$ on the 
Hom-sets of the form $\CC(X,FY)$ satisfying some properties. This 
good order structure was necessary to prove the properties of 
Section~\ref{sec:am-sim}. In the coming section, we will pass again from plain to 
toposal, by considering $F$-coalgebras as $\pow{F}$-coalgebras. It is then 
necessary to extend good order structures on $F$ to good order structures 
on $\pow{F}$. 

Assume a relation $\leq$ is given on all Hom-sets of the form $\CC(X,FY)$.
We define $\leqp$ on $\CC(X,\pow{F}Y)$ as follows. A 
morphism $\map{f}{X}{\pow{F}Y}$ uniquely (up to isos) corresponds to a 
mono of the form $\map{m_f}{U_f}{FY\times X}$ by definition of $\pow{\!}$.
Then, given two morphisms $\map{f,g}{X}{\pow{F}Y}$, $f \leqp g$ if 
there exist a morphism $\map{u}{Z}{U_g}$ and an epi $\epi{e}{Z}{U_f}$ 
such that:
$
	\pi_1\cdot m_f \cdot e \leq \pi_1\cdot m_g \cdot u$ 
	and
	$\pi_2\cdot m_f \cdot e = \pi_2\cdot m_g \cdot u.
$
\begin{exa}
The order $\leqp$ might appear complicated, but it can be interpreted easily in 
$\Set$, especially when the order structure on $\CC(X,FY)$ is a point-wise order, 
assuming that $FY$ itself is preordered. 
Indeed, given two functions $\map{f,g}{X}{\pow{F}Y}$, $f\leqp g$ if and 
only if for every $x \in X$, and every $a \in f(x) \subseteq FY$ there is 
$b \in g(x)$ such that $a \leq b$ in $F(Y)$.
\end{exa}

To make it consistent with the previous section, we show that this 
preserves goodness:
\begin{prop}
\label{prop:ord-pow}
$\leqp$ is a good order structure if $\leq$ is.
\end{prop}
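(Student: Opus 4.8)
The plan is to verify directly that $\leqp$ is a preorder and that it satisfies the two conditions in the definition of a good order structure, now with respect to the functor $\pow{F}$. I will constantly reduce statements about $\leqp$ to statements about the representing monos and the given order $\leq$, using two relational descriptions of the building blocks. First, precomposition is a pullback: if $f=\xi_{m_f}$ with $\mono{m_f}{U_f}{FY\times X}$ and $\map{a}{X'}{X}$, then pasting the defining pullback of $\xi_f$ with $\product{\id}{a}$ shows that $f\cdot a$ is represented by the pullback $m_{f\cdot a}$ of $m_f$ along $\product{\id}{a}$, with projection $\map{p_f}{U_{f\cdot a}}{U_f}$ satisfying $\pi_1 m_f p_f=\pi_1 m_{f\cdot a}$ and $\pi_2 m_f p_f=a\cdot\pi_2 m_{f\cdot a}$. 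Second, postcomposition by $\pow{(Fb)}$ is a direct image: writing $\pow{c}=\mu\cdot\pow{(\eta\cdot c)}$ for a morphism $c$, using $\eta\cdot c=\xi_{\pair{c}{\id}}$, and applying Proposition~\ref{prop:composition}, I get $\pow{(Fb)}\cdot\xi_{m_f}=\xi_{\rcomp{\pair{Fb}{\id}}{m_f}}$; unfolding the composition of relations gives $\rcomp{\pair{Fb}{\id}}{m_f}\equiv\monop{(\product{Fb}{\id})\cdot m_f}$, so $\pow{F}b\cdot f$ is represented by the mono part of $(\product{Fb}{\id})\cdot m_f$.

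\emph{Preorder.} Reflexivity is witnessed by $(U_f,\id,\id)$. For transitivity, given $f\leqp g$ via $(Z,\map{e}{Z}{U_f},\map{u}{Z}{U_g})$ and $g\leqp k$ via $(Z',\map{e'}{Z'}{U_g},\map{u'}{Z'}{U_k})$, I form the pullback $P$ of $u$ and $e'$ over $U_g$; since $e'$ is epi and epis are pullback-stable in a topos, the projection $\map{p}{P}{Z}$ is epi, so $e\cdot p$ is epi. The witness $(P,\,e\cdot p,\,u'\cdot q)$ works: the equalities chain through $\pi_2 m_g u p=\pi_2 m_g e' q$, and the inequalities chain by first precomposing each given inequality (condition~(1) for $\leq$ with identity post-functor) and then using transitivity of $\leq$.

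\emph{Condition~(1).} I prove it in the two independent steps $f\cdot a$ and $\pow{F}b\cdot f$ and then compose. Postcomposition is immediate: keeping the index $Z$ of a witness for $f\leqp g$ and the epis $\epsilon_f\colon U_f\to U_{f''}$, $\epsilon_g\colon U_g\to U_{g''}$ from the image factorisations, the witness $(Z,\epsilon_f\cdot e,\epsilon_g\cdot u)$ works, since $\pi_1 m_{f''}\epsilon_f=Fb\cdot\pi_1 m_f$ and $\pi_2 m_{f''}\epsilon_f=\pi_2 m_f$ convert the inequality into the one supplied by condition~(1) for $\leq$ with post-functor $Fb$, and leave the equality intact. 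Precomposition is handled by pulling the witness epi $e$ back along $p_f$; the resulting map is epi, and the universal property of the pullback $U_{g\cdot a}$ produces the required $u$, with the order and equality conditions following from condition~(1) for $\leq$ (identity post-functor) and the recorded pullback identities for $m_{(-)\cdot a}$.

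\emph{Condition~(2), the main obstacle.} I am given $\map{h}{X}{\pow{F}Z}$, $\map{k}{X}{\pow{F}Y}$, $\map{b}{Y}{Z}$ and $h\leqp\pow{F}b\cdot k$, and must build $k'$ with $k'\leqp k$ and $h=\pow{F}b\cdot k'$. Write $(\product{Fb}{\id})\cdot m_k=m_{k''}\cdot\epsilon$ with $\epsilon$ epi ($m_{k''}$ representing $\pow{F}b\cdot k$), and let $(Z_0,\map{e_0}{Z_0}{U_h},\map{u_0}{Z_0}{U_{k''}})$ witness the hypothesis. The crucial move is to pull $u_0$ back along the epi $\epsilon$, obtaining $P$ with epi $\map{p}{P}{Z_0}$ and $\map{q}{P}{U_k}$, so that over $P$ the witnessing data now refers to honest elements of $U_k$: $\pi_1 m_h e_0 p\leq Fb\cdot\pi_1 m_k q$ and $\pi_2 m_h e_0 p=\pi_2 m_k q$. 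The first inequality is exactly the hypothesis of condition~(2) for the good order structure $\leq$ (with $g=b$), which yields $\map{k'_0}{P}{FY}$ with $k'_0\leq\pi_1 m_k q$ and $\pi_1 m_h e_0 p=Fb\cdot k'_0$. I then define $k'=\xi_{m_{k'}}$ via the image factorisation $\pair{k'_0}{\pi_2 m_h e_0 p}=m_{k'}\cdot\epsilon'$ with $\epsilon'$ epi. The identity $(\product{Fb}{\id})\cdot m_{k'}\cdot\epsilon'=m_h\cdot(e_0 p)$ exhibits the same morphism with two $(\text{epi},\text{mono})$-factorisations, so by uniqueness of images $\monop{(\product{Fb}{\id})\cdot m_{k'}}\equiv m_h$, that is $\pow{F}b\cdot k'=h$; and the witness $(P,\epsilon',q)$ gives $k'\leqp k$, its inequality being $k'_0\leq\pi_1 m_k q$ and its equality being $\pi_2 m_h e_0 p=\pi_2 m_k q$. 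The difficulty is concentrated here precisely because one cannot choose a single lift of $h$ through $\pow{F}b\cdot k$; pulling back along $\epsilon$ and then taking an image lets me collect all lifts at once and apply the elementwise condition~(2) of $\leq$ to the generic element $P$, in the same ``collecting vs.\ picking'' spirit as the rest of the paper.
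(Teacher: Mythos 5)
Your proof is correct and follows essentially the same route as the paper's: the same pullback argument for transitivity, the same split of monotonicity into pre- and post-composition with the same relational descriptions of $f\cdot a$ and $\pow{F}b\cdot f$, and for condition~(2) the same key move of pulling the witness back along the epi part of the factorisation of $(\product{Fb}{\id})\cdot m_k$ before applying condition~(2) of $\leq$ and taking an image to define $k'$. No gaps.
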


\begin{proof}
Let us prove that $\leqp$ is a good order structure on $\pow{F}$.
\begin{itemize}
	\item \textbf{$\leqp$ is a preorder.}
	\begin{itemize}
		\item \textbf{reflexivity.} To prove $f \leqp f$, 
		take $u = e = \id$.
		\item \textbf{transitivity}: Assume $f\leqp g \leqp h$. 
		So there are a morphism $\map{u}{Z}{U_g}$, 
		and an epi $\epi{e}{Z}{U_f}$ such that:
\begin{itemize}
	\item $\pi_1\cdot m_f \cdot e \leq \pi_1\cdot m_g \cdot u$,
	\item $\pi_2\cdot m_f \cdot e = \pi_2\cdot m_g \cdot u$.
\end{itemize}
and there a morphism $\map{u'}{Z'}{U_h}$ and an epi 
$\epi{e'}{Z'}{U_g}$ such that:
\begin{itemize}
	\item $\pi_1\cdot m_g \cdot e' \leq \pi_1\cdot m_h \cdot u'$,
	\item $\pi_2\cdot m_g \cdot e' = \pi_2\cdot m_h \cdot u'$.
\end{itemize}
Form the following pullback:
\begin{center}
\begin{tikzpicture}[scale=1.5]
		
	\node (r1r2) at (3,1) {\scriptsize{$Z''$}};
	\node (r1) at (3,0) {\scriptsize{$Z$}};
	\node (r2) at (6,1) {\scriptsize{$Z'$}};
	\node (y) at (6,0) {\scriptsize{$U_g$}};
	
	\path[->,font=\scriptsize]
		(r1r2) edge node[left]{$e''$} (r1)
		(r1r2) edge node[above]{$u''$} (r2)
		(r1) edge node[below]{$u$} (y)
		(r2) edge node[right]{$e'$} (y);
		
	\draw (3.2,0.55) -- (3.7,0.55) -- (3.7,0.8);
			
\end{tikzpicture}
\end{center}
Since in a topos, epis are closed under pullbacks, $e''$ is an epi, 
and so is $e\cdot e''$. So then, $Z''$, $u'\cdot u''$ and $e\cdot e''$ 
witness the fact that $f \leqp h$.
	\end{itemize} 
	\item \textbf{Composition is monotone.} 
	Assume $\map{f \leqp f'\!\!}{\!X}{\pow{F(Y)}}$,
	with witnesses $Z$, $u$ and $e$
	\begin{itemize}
		\item \textbf{composition to the left}: 
		Assume $\map{h}{Y}{Y'}$. 
		Then $\pow{F(h)}\cdot f$ corresponds to the mono part
		of the following (epi, mono)-factorisation:
\begin{center}
\begin{tikzpicture}[scale=1.5]
		
	\node (r1r2) at (0,0) {\scriptsize{$U_f$}};
	\node (xz) at (5,0) {\scriptsize{$FY'\times X$}};
	\node (r1sqr2) at (2.5,-0.75) {\scriptsize{$S_f$}};
	
	\path[->,font=\scriptsize]
		(r1r2) edge node[above]{$\product{Fh}{\id}\cdot m_f$} (xz);
		
	\path[->>,font=\scriptsize]
		(r1r2) edge node[below]{$e_f$} (r1sqr2);
		
	\path[>->,font=\scriptsize]
		(r1sqr2) edge node[below]{\quad\quad$m_{\pow{F(h)}\cdot f}$} (xz);
			
\end{tikzpicture}
\end{center}
Same for $\pow{F(h)}\cdot f'$. Then $Z$, $e_{f'}\cdot u$ and 
$e_f\cdot e$ is a witness of the fact that 
$\pow{F(h)}\cdot f \leqp \pow{F(h)}\cdot f'$.
		\item \textbf{composition to the right}: 
		Assume given $\map{g}{X'}{X}$. 
		Then $f\cdot g$ corresponds to the relation represented by 
		$m_{f\cdot g}$, given by the following composition of pullbacks:
		\begin{center}
			\begin{tikzpicture}[scale=1.5]
		
				\node (r1r2) at (3,1) {\scriptsize{$U_{f\cdot g}$}};
				\node (r1) at (3,0) {\scriptsize{$FY\times X'$}};
				\node (r2) at (6,1) {\scriptsize{$U_f$}};
				\node (y) at (6,0) {\scriptsize{$FY\times X$}};
				\node (r2p) at (9,1) {\scriptsize{$E_{FY}$}};
				\node (yp) at (9,0) {\scriptsize{$FY\times \pow{FY}$}};
	
				\path[->,font=\scriptsize]
					(r1r2) edge node[left]{$m_{f\cdot g}$} (r1)
					(r1r2) edge node[above]{$\tau_f$} (r2)
					(r1) edge node[below]{$\product{\id}{g}$} (y)
					(r2) edge node[left]{$m_f$} (y)
					(r2) edge node[above]{$\theta_f$} (r2p)
					(y) edge node[below]{$\product{\id}{f}$} (yp)
					(r2p) edge node[right]{$\belong{FY}$} (yp);
		
				\draw (3.2,0.55) -- (3.7,0.55) -- (3.7,0.8);
				\draw (6.2,0.55) -- (6.7,0.55) -- (6.7,0.8);
			
			\end{tikzpicture}
		\end{center}
		for some $\tau_f$, $\theta_f$. Same for $m_{f'\cdot g}$. 
		Form the following pullback:
		\begin{center}
			\begin{tikzpicture}[scale=1.5]
		
				\node (r1r2) at (3,1) {\scriptsize{$Z_g$}};
				\node (r1) at (3,0) {\scriptsize{$X'$}};
				\node (r2) at (6,1) {\scriptsize{$Z$}};
				\node (y) at (6,0) {\scriptsize{$X$}};
	
				\path[->,font=\scriptsize]
					(r1r2) edge node[left]{$\alpha$} (r1)
					(r1r2) edge node[above]{$\beta$} (r2)
					(r1) edge node[below]{$g$} (y)
					(r2) edge node[right]{$\pi_2\cdot m_f\cdot e = \pi_2\cdot m_{f'}\cdot u$} (y);
					
				\draw (3.2,0.55) -- (3.7,0.55) -- (3.7,0.8);
			
			\end{tikzpicture}
		\end{center}
		So then, we have that 
		$m_f\cdot e\cdot\beta = \product{\id}{g}\cdot\pair{\pi_1\cdot m_f\cdot e\cdot\beta}{\alpha}$, 
		and by the universal property of $U_{f\cdot g}$, there is a unique
		morphism $\map{e'}{Z_g}{U_{f\cdot g}}$ such that 
		\[\tau_f \cdot e' = e\cdot\beta 
			\quad\text{and}\quad 
			m_{f\cdot g}\cdot e' = \pair{\pi_1\cdot m_f\cdot e\cdot\beta}{\alpha}.\]
		Similarly, there is $\map{u'}{Z_g}{U_{f'\cdot g}}$ such that 
		\[\tau_{f'} \cdot u' = u\cdot\beta 
		\quad\text{and}\quad 
		m_{f\cdot g}\cdot u' = \pair{\pi_1\cdot m_{f'}\cdot u\cdot\beta}{\alpha}.\]
		The only interesting part in proving that $Z_g$, $u'$ and 
		$e'$ is a witness of the fact that $f\cdot g \leqp f'\cdot g$ 
		is the fact that $e'$ is an epi. 
		For that, it is enough to observe that:
		\begin{center}
			\begin{tikzpicture}[scale=1.5]
		
				\node (r1r2) at (3,1) {\scriptsize{$Z_g$}};
				\node (r1) at (3,0) {\scriptsize{$U_{f\cdot g}$}};
				\node (r2) at (6,1) {\scriptsize{$Z$}};
				\node (y) at (6,0) {\scriptsize{$U_f$}};
	
				\path[->,font=\scriptsize]
					(r1r2) edge node[left]{$e'$} (r1)
					(r1r2) edge node[above]{$\beta$} (r2)
					(r1) edge node[below]{$\tau_f$} (y)
					(r2) edge node[right]{$e$} (y);
			
			\end{tikzpicture}
		\end{center}
		is a pullback square, and to use the fact that in a topos, 
		epis are closed under pullback.
	\end{itemize}
	\item \textbf{Last axiom of good order structure.} 
	Assume $h \leqp \pow{F(g)}\cdot k$, with 
	$\map{h}{X}{\pow{FZ}}$, $\map{k}{X}{\pow{FY}}$ and 
	$\map{g}{Y}{Z}$. So we have a morphism 
	$\map{u}{S}{U_{\pow{F(h)}\cdot k}}$ and an epi 
	$\map{e}{S}{U_h}$ such that 
	\[\pi_1\cdot m_h \cdot e \leq \pi_1\cdot m_{\pow{F(h)}\cdot k} \cdot u
	\quad\text{and}\quad 
	\pi_2\cdot m_h \cdot e = \pi_2\cdot m_{\pow{F(h)}\cdot k} \cdot u.\]
	$m_{\pow{F(h)}\cdot k}$ is obtained by the following factorisation:
	\begin{center}
		\begin{tikzpicture}[scale=1.5]
		
			\node (r1r2) at (0,0) {\scriptsize{$U_k$}};
			\node (xz) at (5,0) {\scriptsize{$FZ\times X$}};
			\node (r1sqr2) at (2.5,-0.74) {\scriptsize{$U_{\pow{F(h)}\cdot k}$}};
	
			\path[->,font=\scriptsize]
				(r1r2) edge node[above]{$\product{Fg}{\id}\cdot m_k$} (xz);
						
			\path[->>,font=\scriptsize]
				(r1r2) edge node[below]{$e'$} (r1sqr2);
		
			\path[>->,font=\scriptsize]
				(r1sqr2) edge node[below]{$m_{\pow{F(h)}\cdot k}$} (xz);
			
		\end{tikzpicture}
	\end{center}
	Form the following pullback:
	\begin{center}
		\begin{tikzpicture}[scale=1.5]
		
			\node (r1r2) at (3,1) {\scriptsize{$T$}};
			\node (r1) at (3,0) {\scriptsize{$S$}};
			\node (r2) at (6,1) {\scriptsize{$U_k$}};
			\node (y) at (6,0) {\scriptsize{$U_{\pow{F(h)}\cdot k}$}};
	
			\path[->,font=\scriptsize]
				(r1r2) edge node[left]{$e''$} (r1)
				(r1r2) edge node[above]{$v$} (r2)
				(r1) edge node[below]{$u$} (y)
				(r2) edge node[right]{$e'$} (y);
					
			\draw (3.2,0.55) -- (3.7,0.55) -- (3.7,0.8);
			
		\end{tikzpicture}
	\end{center}
	Since epis are closed under pullbacks in a topos, $e''$ is an epi, 
	and we have 
	\[\pi_1\cdot m_h\cdot e\cdot e'' \leq Fg\cdot\pi_1\cdot m_k\cdot v.\]
	By using the fact that $\leq$ is a good order structure, 
	we obtain $w \leq \pi_1\cdot m_k\cdot v$, such that 
	$Fg \cdot w = \pi_1\cdot m_h\cdot e\cdot e''$, and then 
	$m_h\cdot e\cdot e' = \product{Fg}{\id}\cdot\pair{w}{\pi_2\cdot m_h\cdot e\cdot e''}$. 
	Consider the following factorisation:
	\begin{center}
		\begin{tikzpicture}[scale=1.5]
	
			\node (r1r2) at (0,0) {\scriptsize{$T$}};
			\node (xz) at (5,0) {\scriptsize{$FY\times X$}};
			\node (r1sqr2) at (2.5,-0.75) {\scriptsize{$U$}};
	
			\path[->,font=\scriptsize]
				(r1r2) edge node[above]{$\pair{w}{\pi_2\cdot m_h\cdot e\cdot e''}$} (xz);
						
			\path[->>,font=\scriptsize]
				(r1r2) edge node[below]{$\rho$} (r1sqr2);
		
			\path[>->,font=\scriptsize]
				(r1sqr2) edge node[below]{$m$} (xz);
			
		\end{tikzpicture}
	\end{center}
	and define $k'$ as $\xi_m$. Then $m_h$, 
	which is the monic part of $m_h\cdot\rho\cdot\rho'$, is also the monic
	part of $\product{Fg}{\id}\cdot m$. But $m_{\pow{Fg}\cdot k'}$ is also 
	the monic part of $\product{Fg}{\id}\cdot m$, so by unicity of the 
	(epi, mono)-factorisation, $m_{\pow{Fg}\cdot k'} \equiv m_h$, which 
	means that $\pow{Fg}\cdot k' = h$. 
	Furthermore, $T$, $v$ and $\rho$ is a witness of $k' \leqp k$.\qedhere
\end{itemize}
\end{proof}

\subsection{Toposal AM-Simulations}

With all those ingredients, we can easily deduce the right notion of 
\emph{AM toposal-simulations}:
\begin{defi}
We say that a relation is a \emph{toposal AM-simulation} from the coalgebra 
$\map{\alpha}{X}{FX}$ to $\map{\beta}{Y}{FY}$, if for any mono 
$\mono{r}{R}{X\times Y}$ representing it, there is a morphism 
$\map{W}{R}{\pow{FR}}$ such that:
\begin{center}
\begin{tikzpicture}[scale=2]
		
	\node (r) at (-1,0) {\scriptsize{$R$}};
	\node (xy) at (0.5,0.7) {\scriptsize{$X\times Y$}};
	\node (pfr) at (0.5,-0.7) {\scriptsize{$\pow{FR}$}};
	\node (fxfy) at (3.5,0.7) {\scriptsize{$F(X)\times F(Y)$}};
	\node (pfxy) at (3.5,-0.7) {\scriptsize{$\pow{F(X\times Y)}$}};
	\node (pfxfy) at (5,0) {\scriptsize{$\pow{F(X)}\times \pow{F(Y)}$}};
	\node (oi) at (5.2,-0.5) {\scriptsize{$\splitpf$}};
	\node (boi) at (5.2,0.5) {\scriptsize{$\product{\eta_{F(X)}}{\eta_{F(Y)}}$}};
	\node (ord) at (2,0) {\scriptsize{$\leqp\times\geqp$}};
	
	\path[->,font=\scriptsize]
		(xy) edge node[above]{$\product{\alpha}{\beta}$} (fxfy)
		(r) edge node[below]{$W$} (pfr)
		(pfxy) edge (pfxfy)
		(fxfy) edge (pfxfy)
		(r) edge node[above]{$r$} (xy)
		(pfr) edge node[below]{$\pow{Fr}$} (pfxy);
			
\end{tikzpicture}
\end{center}
\end{defi}
Plain and toposal AM-simulations also coincide 
under the axiom of choice:
\begin{prop}
\label{prop:losi-tosi}
Assume that $\CC$ has the regular axiom of choice. 
Then for every relation $U$ from 
$X$ to $Y$, 
every coalgebra $\map{\alpha}{X}{FX}$ and $\map{\beta}{Y}{FY}$, 
$U$ is an AM-simulation from $\alpha$ to $\beta$ if and only 
if it is a toposal AM-simulation between them. 
\end{prop}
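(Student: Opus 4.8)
The plan is to exploit the observation, made after Definition~\ref{def:tobi}, that a toposal AM-simulation between $\alpha$ and $\beta$ is nothing but a $\pow{F}$-AM-simulation between $\eta_X\cdot\alpha$ and $\eta_Y\cdot\beta$ for the order structure $\leqp$, which is good by Proposition~\ref{prop:ord-pow}. Consequently Proposition~\ref{prop:ord-sim} applies on both sides: an AM-simulation may be presented by a witness $\map{W'}{R}{FR}$ with $\alpha\cdot\pi_1\cdot r = F\pi_1\cdot Fr\cdot W'$ and $\beta\cdot\pi_2\cdot r \geq F\pi_2\cdot Fr\cdot W'$, and a toposal AM-simulation by a witness $\map{W}{R}{\pow{F}R}$ with $\eta_{FX}\cdot\alpha\cdot\pi_1\cdot r = \pow{F}\pi_1\cdot\pow{F}r\cdot W$ and $\eta_{FY}\cdot\beta\cdot\pi_2\cdot r \geqp \pow{F}\pi_2\cdot\pow{F}r\cdot W$. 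The first components are equalities exactly as in the bisimulation case, so their analysis can be borrowed from Proposition~\ref{prop:lobi-tobi}; all the new work concerns the second, genuinely lax, component. As in Proposition~\ref{prop:reg-bisim-instances}, only the implication from toposal to plain will use the regular axiom of choice.

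For the easy direction (AM-simulation $\Rightarrow$ toposal AM-simulation), which needs no choice, I would take the singleton witness $W = \eta_{FR}\cdot W'$. Naturality of $\eta$ turns the first condition into $\pow{F}\pi_1\cdot\pow{F}r\cdot\eta_{FR}\cdot W' = \eta_{FX}\cdot F\pi_1\cdot Fr\cdot W' = \eta_{FX}\cdot\alpha\cdot\pi_1\cdot r$. For the second, the key sublemma is that $\eta$ is monotone: if $a\leq b$ in $\CC(R,FY)$ then $\eta_{FY}\cdot a\leqp\eta_{FY}\cdot b$, witnessed trivially by $Z=R$ and $u=e=\id$, since $\eta_{FY}\cdot a = \xi_{\pair{a}{\id}}$ so that $\pi_1\cdot m_{\eta_{FY}\cdot a} = a$ and $\pi_2\cdot m_{\eta_{FY}\cdot a} = \id$. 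Applying this to $F\pi_2\cdot Fr\cdot W'\leq\beta\cdot\pi_2\cdot r$ and using naturality of $\eta$ again gives the required $\geqp$.

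For the hard direction (toposal $\Rightarrow$ plain), write $\map{m_W}{U_W}{FR\times R}$ for the mono corresponding to $W$, and set $p = \pi_1\cdot m_W$, $q = \pi_2\cdot m_W$. First I would transport the first (equality) condition through the correspondence $\xi$ exactly as in Proposition~\ref{prop:lobi-tobi}: it says that $\monop{\pair{F(\pi_1\cdot r)\cdot p}{q}}\equiv\pair{\alpha\cdot\pi_1\cdot r}{\id}$, whence $q$ is an epi and $F(\pi_1\cdot r)\cdot p = \alpha\cdot\pi_1\cdot r\cdot q$. Next I would unpack the second condition. Taking the (epi,mono)-factorisation $U_W\xrightarrow{\ell'}C'\xrightarrow{m_{c'}}FY\times R$ of $\pair{F(\pi_2\cdot r)\cdot p}{q}$, the inequality $\pow{F}\pi_2\cdot\pow{F}r\cdot W\leqp\eta_{FY}\cdot\beta\cdot\pi_2\cdot r$ provides, by definition of $\leqp$, an epi $\epi{e}{Z}{C'}$ and a map $\map{u}{Z}{R}$ with $\pi_1\cdot m_{c'}\cdot e\leq\beta\cdot\pi_2\cdot r\cdot u$ and $u = \pi_2\cdot m_{c'}\cdot e$. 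Pulling $e$ back along $\ell'$ yields an epi $\epi{b}{P}{U_W}$ (epis are pullback-stable in a topos) through which, using the first axiom of a good order structure to precompose the inequality, I obtain $F(\pi_2\cdot r)\cdot p\cdot b\leq\beta\cdot\pi_2\cdot r\cdot q\cdot b$.

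To finish I would choose the section carefully. Since $q$ and $b$ are epis, so is $q\cdot b$, and the regular axiom of choice provides a section $\map{t}{R}{P}$ with $q\cdot b\cdot t = \id_R$. Setting $W' = p\cdot b\cdot t$, the identity $F(\pi_1\cdot r)\cdot p = \alpha\cdot\pi_1\cdot r\cdot q$ gives $F\pi_1\cdot Fr\cdot W' = \alpha\cdot\pi_1\cdot r$, while precomposing $F(\pi_2\cdot r)\cdot p\cdot b\leq\beta\cdot\pi_2\cdot r\cdot q\cdot b$ with $t$ gives $F\pi_2\cdot Fr\cdot W'\leq\beta\cdot\pi_2\cdot r$, so $W'$ witnesses an AM-simulation. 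The main obstacle, and the reason a naive section of $q$ does not work, is precisely the transfer of the lax second component: the good order structure only lets one precompose inequalities, never cancel an epi from them, so one cannot first establish $F(\pi_2\cdot r)\cdot p\leq\beta\cdot\pi_2\cdot r\cdot q$ on $U_W$ and then split $q$. Splitting $q\cdot b$ instead routes the chosen witness through the object $P$ on which the inequality already holds, making both conditions transfer simultaneously.
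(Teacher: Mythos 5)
Your proof is correct and follows essentially the same route as the paper: the easy direction is identical ($W=\eta_{FR}\cdot W'$ plus the monotonicity lemma $f\leq g\Rightarrow\eta\cdot f\leqp\eta\cdot g$), and your hard direction is an inlined two-component instance of the paper's Lemma~\ref{lem:choice-pow}, which likewise translates both conditions into relations, uses the strict component to show that the second projection of the witness relation is epi, splits it by the regular axiom of choice, and transfers the lax component by precomposition. One small quibble: the ``obstacle'' you describe at the end is not genuine --- under the regular axiom of choice the epi $e$ (equivalently $b$) is itself split, so one may cancel it from the inequality by precomposing with a section and then split $q$ alone, which is exactly what the paper does inside Lemma~\ref{lem:choice-pow}; your alternative of splitting $q\cdot b$ in one go is equally valid.
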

\noindent 
The proof of this Proposition 
relies on the following lemma, relating 
the regular axiom of choice and picking elements in a power-object:
\begin{lem}
\label{lem:choice-pow}
Assume that every epi is split and assume given $F$ with a good order 
structure $\leq$.
Assume also given a square:
\begin{center}
\begin{tikzpicture}[scale=1.5]
		
	\node (r1r2) at (3,1) {\scriptsize{$X$}};
	\node (r1) at (3,0) {\scriptsize{$\pow{Y}$}};
	\node (r2) at (8,1) {\scriptsize{$\prod\limits_{i\in I} Z_i$}};
	\node (y) at (8,0) {\scriptsize{$\prod\limits_{i\in I} \pow{Z_i}$}};
	
	\path[->,font=\scriptsize]
		(r1r2) edge node[left]{$f$} (r1)
		(r1r2) edge node[above]{$\langle h_i \mid i\in I\rangle$} (r2)
		(r1) edge node[below]{$\langle\pow{g_i} \mid i\in I\rangle$} (y)
		(r2) edge node[right]{$[\eta_{Z_i} \mid i \in I]$} (y);
			
\end{tikzpicture}
\end{center}
such that $I$ is finite and:
\begin{itemize}
	\item for all $i \in I$, either:
		\begin{itemize}
			\item $\pow{g_i}\cdot f = \eta_{Z_i}\cdot h_i$, or
			\item $g_i$ is of the form $Fg_i'$ and 
			$\pow{g_i}\cdot f \leqp \eta_{Z_i}\cdot h_i$,
		\end{itemize}
	\item there is $i_0 \in I$ satisfying the first case of the first point.
\end{itemize} 
Then there is $\map{f'}{X}{Y}$ such that:
\begin{itemize}
	\item if $\mono{m}{R}{Y\times X}$ is the relation corresponding to 
	$f$, then 
	$\pair{f'}{\id} \sqsubseteq m$,
	\item for all $i$ satisfying the first case, 
	$g_i \cdot f' = h_i$, and
	\item for all $i$ satisfying the second case, 
	$g_i \cdot f' \leq h_i$.
\end{itemize}
\end{lem}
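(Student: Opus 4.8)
The plan is to transport everything into relations through the isomorphism between the Kleisli category of $\pow{\!}$ and $\Rel{\CC}$, and to perform the single ``choice'' only at the very end, using that every epi splits. Write $\mono{m}{R}{Y\times X}$ for the relation with $f=\xi_m$. By the Kleisli--relation correspondence together with Proposition~\ref{prop:composition}, for each $i$ the morphism $\pow{g_i}\cdot f$ equals $\xi_{\rho_i}$, where $\rho_i=\rcomp{\pair{g_i}{\id}}{m}$ is the composite relation (from $Z_i$ to $X$), while $\eta_{Z_i}\cdot h_i=\xi_{\pair{h_i}{\id}}$ as recorded in the proof of Lemma~\ref{lem:pow-mon-naturality}. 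Hence the first-case hypothesis $\pow{g_i}\cdot f=\eta_{Z_i}\cdot h_i$ reads $\rho_i\equiv\pair{h_i}{\id}$, and the second-case hypothesis reads $\xi_{\rho_i}\leqp\xi_{\pair{h_i}{\id}}$.

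First I produce $f'$; the key point, and the main obstacle, is to show that $\map{\pi_2\cdot m}{R}{X}$ is an epi, which is where $i_0$ enters. Write $\mono{m_0}{R_0}{Z_{i_0}\times X}$ for the mono representing $\rho_{i_0}$, let $P_0$ be the pullback used to define the composite $\rho_{i_0}=\rcomp{\pair{g_{i_0}}{\id}}{m}$, with projection $\map{q}{P_0}{R}$ onto the $m$-leg and image-epi $\epi{e_0}{P_0}{R_0}$, so that $\pi_2\cdot m_0\cdot e_0=\pi_2\cdot m\cdot q$. From $\rho_{i_0}\equiv\pair{h_{i_0}}{\id}$ the map $\pi_2\cdot m_0$ is an iso, since its composite with the comparison isomorphism is $\pi_2\cdot\pair{h_{i_0}}{\id}=\id_X$; hence $\pi_2\cdot m_0\cdot e_0$ is epi, so $\pi_2\cdot m\cdot q$ is epi, and by right-cancellation of epis $\pi_2\cdot m$ is epi. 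By the regular axiom of choice it admits a section $\map{s}{X}{R}$, and I set $f'=\pi_1\cdot m\cdot s$; then $m\cdot s=\pair{f'}{\id}$, giving $\pair{f'}{\id}\sqsubseteq m$, the first required property.

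Next, for every $i$ I observe that, since $\pair{f'}{\id}\sqsubseteq m$ and composition in $\Rel{\CC}$ is monotone, $\pair{g_i\cdot f'}{\id}=\rcomp{\pair{g_i}{\id}}{\pair{f'}{\id}}\sqsubseteq\rcomp{\pair{g_i}{\id}}{m}=\rho_i$, using that right adjoints compose to the right adjoint of the composite map (Theorem~\ref{th:maps}). Let $\map{t}{X}{R_i}$ be the mono witnessing this inclusion, where $\mono{m_{\rho_i}}{R_i}{Z_i\times X}$ represents $\rho_i$, so that $m_{\rho_i}\cdot t=\pair{g_i\cdot f'}{\id}$. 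In the first case $\rho_i\equiv\pair{h_i}{\id}$ yields $\pair{g_i\cdot f'}{\id}\sqsubseteq\pair{h_i}{\id}$, and since both are graphs of maps the connecting mono must be the identity (compare second projections), whence $g_i\cdot f'=h_i$.

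Finally, for the second case I unfold $\xi_{\rho_i}\leqp\xi_{\pair{h_i}{\id}}$: it provides $\map{u}{P}{X}$ and an epi $\epi{e}{P}{R_i}$ with $\pi_1\cdot m_{\rho_i}\cdot e\leq h_i\cdot u$ and $\pi_2\cdot m_{\rho_i}\cdot e=u$. Splitting $e$ by the axiom of choice gives $\map{\sigma}{R_i}{P}$ with $e\cdot\sigma=\id$; precomposing both relations by $\sigma$ (monotonicity of precomposition, axiom~(1) of a good order structure) yields $\pi_1\cdot m_{\rho_i}\leq h_i\cdot u\cdot\sigma$ together with $\pi_2\cdot m_{\rho_i}=u\cdot\sigma$. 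Precomposing once more with $t$ and using $m_{\rho_i}\cdot t=\pair{g_i\cdot f'}{\id}$ gives $g_i\cdot f'=\pi_1\cdot m_{\rho_i}\cdot t\leq h_i\cdot u\cdot\sigma\cdot t=h_i\cdot\pi_2\cdot m_{\rho_i}\cdot t=h_i$, i.e.\ $g_i\cdot f'\leq h_i$. The only two appeals to choice are the splittings of $\pi_2\cdot m$ and of $e$, which is exactly the \emph{picking} phenomenon the lemma isolates.
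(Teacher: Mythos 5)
Your proposal is correct and follows essentially the same route as the paper's proof: translate the hypotheses into relational equalities $\rcomp{\pair{g_i}{\id}}{m}\equiv\pair{h_i}{\id}$ (resp.\ $\leqp$) via Proposition~\ref{prop:composition}, use the case $i_0$ to show $\pi_2\cdot m$ is epi, split it to define $f'$, and handle the second case by splitting the epi coming from the definition of $\leqp$ and invoking the good-order axioms. The only cosmetic difference is that you verify the first and second cases by passing through the subobject inclusion $\pair{g_i\cdot f'}{\id}\sqsubseteq\rcomp{\pair{g_i}{\id}}{m}$, where the paper chases the factorisation epi of the composite directly; the underlying computation is the same.
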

\begin{proof}
The conclusion $\pair{f'}{\id} \sqsubseteq m$ means that we are looking 
for $f'$ such that there is $\map{u}{X}{R}$ such that 
$\pair{f'}{\id} = m\cdot u$, that is:
\[
	f' = \pi_1\cdot m\cdot u \quad\text{and}\quad \id = \pi_2\cdot m\cdot u.
\]
To obtain this $u$ is then enough to prove that $\pi_2\cdot m$ is an epi and 
using the regular axiom of choice. Observe that we have the following:
\begin{center}
\begin{tabular}{rclcr}
    $\xi_{\pair{h_i}{\id}}$ & $=$ & $\eta_{Z_i}\cdot h_i$
    & & \hfill (definition of $\xi_{\pair{h_i}{\id}}$)\\
    & $\simeq_i$ & $\pow{g_i}\cdot f$
    & & \hfill (assumption)\\
    & $=$ & $\mu_{Z_i}\cdot\pow{\eta_{Z_i}}\cdot\pow{g_i}\cdot f$
    & & \hfill (unit coherence axiom)\\
    & $=$ & $\mu_{Z_i}\cdot\pow{\eta_{Z_i}\cdot g_i}\cdot f$
    & & \hfill ($\text{Pow}$ is a functor)\\
    & $=$ & $\mu_{Z_i}\cdot\pow{\xi_{\pair{g_i}{\id}}}\cdot \xi_{m}$
    & & \hfill (definition of $m$ and $\xi_{\pair{g_i}{\id}}$)\\
    & $=$ & $\xi_{\rcomp{\pair{g_i}{\id}}{m}}$
    & & \hfill (Proposition~\ref{prop:composition})
\end{tabular}
\end{center}
where $\simeq_i$ is either $=$ if $i$ satisfies the first case, or $\geqp$ 
otherwise.

This means that for all $i$ satisfying the first case (and there is at least 
one), $\pair{h_i}{\id} \equiv \rcomp{\pair{g_i}{\id}}{m}$, which means that 
there is an iso $v_i$ such that 
$\rcomp{\pair{g_i}{\id}}{m} = \pair{h_i}{\id}\cdot v_i$. Unfolding the 
definition of the composition, there is an epi $e_i$ such that:
\[
	\rcomp{\pair{g_i}{\id}}{m}\cdot e_i = \pair{g\cdot\pi_1\cdot m}{\pi_2\cdot r}.
\]
In total, $\pi_2\cdot m = v_i\cdot e_i$ which is an epi.

It then remains to prove that $f' = \pi_1\cdot m\cdot u$ satisfies all the statements in the
conclusion. The first one is by construction.
Now, for $i$ satisfying the first case, 
\begin{center}
\begin{tabular}{rclcr}
    $g_i \cdot \pi_1\cdot m\cdot u$ & $=$ & $\pi_1\cdot(\rcomp{\pair{g_i}{\id}}{m})\cdot e_i\cdot u$
    & & \hfill (definition of $\rcomp{\pair{g_i}{\id}}{m}$)\\
    & $=$ & $\pi_1\cdot\pair{h_i}{\id}\cdot v_i\cdot e_i\cdot u$
    & & \hfill (definition of $v_i$)\\
    & $=$ & $h_i\cdot v_i\cdot e_i\cdot u$
    & & \hfill (computation on products)\\
    & $=$ & $h_i\cdot \pi_2\cdot m\cdot u$
    & & \hfill (see previously)\\
    & $=$ & $h_i$
    & & \hfill (definition of $u$)
\end{tabular}
\end{center}
Now, for $i$ in the second case, we have proved that 
$\xi_{\rcomp{\pair{g_i}{\id}}{m}} \leqp \xi_{\pair{h_i}{\id}}$. By definition, 
this means that there is an epi $\rho_i$ and a morphism $u_i$ such that:
\[
	\pi_1\cdot\rcomp{\pair{g_i}{\id}}{m}\cdot\rho \leq 
		h_i\cdot u_i \quad\text{and}\quad
	\pi_2\cdot\rcomp{\pair{g_i}{\id}}{m}\cdot\rho = u_i.
\]
Since every is split and $\leq$ is a good order structure, this implies that:
\[
	\pi_1\cdot\rcomp{\pair{g_i}{\id}}{m} \leq 
		h_i\cdot\pi_2\cdot\rcomp{\pair{g_i}{\id}}{m},
\]
from which it is easy to deduce that $g_i \cdot \pi_1\cdot m\cdot u \leq h_i$.
\end{proof}
\begin{rem}
Actually, the converse of the previous lemma also holds. Assume given an 
epi $\epi{e}{X}{Y}$. Since $e$ is an epi, the following diagram 
commutes:
\begin{center}
\begin{tikzpicture}[scale=1.5]
		
	\node (r1r2) at (3,1) {\scriptsize{$Y$}};
	\node (r1) at (3,0) {\scriptsize{$\pow{X}$}};
	\node (r2) at (8,1) {\scriptsize{$\ast$}};
	\node (y) at (8,0) {\scriptsize{$\pow{\ast}$}};
	
	\path[->,font=\scriptsize]
		(r1r2) edge node[left]{$e^\dagger$} (r1)
		(r1r2) edge node[above]{$!_Y$} (r2)
		(r1) edge node[below]{$\pow{!}$} (y)
		(r2) edge node[right]{$\eta_\ast$} (y);
			
\end{tikzpicture}
\end{center}
Then using the conclusion of the previous lemma, we obtain 
$\map{s}{Y}{X}$ such that $\pair{s}{\id} \sqsubseteq \pair{\id}{e}$.
This means that there is a mono $u$ such that $s = u$ and $\id = e\cdot u$, 
and $e$ is split.
\end{rem}

An additional argument is needed, namely:
\begin{lem}
If $f \leq g$ then $\eta\cdot f \leqp \eta\cdot g$.
\end{lem}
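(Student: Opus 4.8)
The plan is to unwind both sides of the desired relation $\leqp$ to monos of the simplest possible shape and then exhibit the identity as the required witness. First I would recall, from the discussion following Theorem~\ref{th:maps} (and reused in Proposition~\ref{prop:pseudo-inverse}), that for any morphism $\map{h}{X}{A}$ the Kleisli morphism $\eta_A\cdot h$ equals $\xi_{\pair{h}{\id}}$, where $\pair{h}{\id}\,:\,X\to A\times X$ is the mono representing the right-adjoint relation induced by $h$. Applying this with $A=FY$ and with $h$ equal to $f$ and to $g$ respectively, the monos $m_{\eta_{FY}\cdot f}$ and $m_{\eta_{FY}\cdot g}$ occurring in the definition of $\leqp$ can be taken to be $\pair{f}{\id}$ and $\pair{g}{\id}$, both with domain $X$. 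Since $\leqp$ only depends on the iso-class of these monos, choosing these particular representatives (so that $U_{\eta_{FY}\cdot f}=U_{\eta_{FY}\cdot g}=X$) is harmless.

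With these representatives in hand the witness is immediate: take $Z=X$ and $e=u=\id_X$. Then $e$ is trivially an epi, and the two conditions required by the definition of $\leqp$ become
\[
\pi_1\cdot\pair{f}{\id}\;\leq\;\pi_1\cdot\pair{g}{\id}
\quad\text{and}\quad
\pi_2\cdot\pair{f}{\id}\;=\;\pi_2\cdot\pair{g}{\id}.
\]
By the computation rules on products these reduce to $f\leq g$ in $\CC(X,FY)$ and $\id=\id$ in $\CC(X,X)$; the former is exactly the hypothesis, and the latter holds trivially. Hence $\eta_{FY}\cdot f\leqp\eta_{FY}\cdot g$.

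There is no genuine obstacle here: the only point that needs care is the identification of $m_{\eta_{FY}\cdot f}$ with $\pair{f}{\id}$, which is precisely the statement that the unit of the power-object monad corresponds, through $\xi$, to the right-adjoint (inverse-image) relation $\pair{f}{\id}$. Once that identification is made the lemma is witnessed by identities alone, so in particular no use of the regular axiom of choice, nor any hypothesis on $F$, is required.
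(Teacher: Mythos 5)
Your proof is correct and follows essentially the same route as the paper: identify $\eta\cdot f$ and $\eta\cdot g$ with the monos $\pair{f}{\id}$ and $\pair{g}{\id}$, then observe that the identity morphisms witness the two conditions defining $\leqp$, reducing them to $f\leq g$ and $\id=\id$. The only difference is that you make the witness $Z=X$, $e=u=\id$ explicit, which the paper leaves implicit.
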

\begin{proof}
$\eta\cdot f$ (resp. $\eta\cdot g$) corresponds to the mono $\pair{f}{\id}$ 
(resp. $\pair{g}{\id}$). So we have:
\[
	\pi_1\cdot\pair{f}{\id} = f \leq g = \pi_1\cdot\pair{g}{\id}
\]
and
\[
	\pi_2\cdot\pair{f}{\id} = \id = \pi_2\cdot\pair{g}{\id},
\]
which means that $\eta\cdot f \leqp \eta\cdot g$.
\end{proof}

\begin{proof}[Proof of Proposition~\ref{prop:losi-tosi}]
Let us prove both implications:
\begin{itemize}
	\item If $U$ is AM-simulation and $\map{W}{R}{FR}$ is 
	a witness, then $\map{\eta_{FR}\cdot W}{R}{FR}$ is 
	a toposal witness by naturality of $\eta$ and the previous Lemma.
	\item If $U$ is a toposal AM-simulation, then we obtain a  
	witness to prove it is a plain AM-simulation directly by 
	Lemma~\ref{lem:choice-pow}.\qedhere
\end{itemize}
\end{proof}
\noindent 
Finally, we can prove the closure under composition and the 
characterisation with spans without the axiom of choice:
\begin{prop}
Proposition~\ref{prop:am-2-cat} holds without regular axiom of choice 
when replacing AM-simulations by toposal AM-simulations.
\end{prop}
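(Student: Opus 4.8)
The plan is to establish the three structural facts required of a locally ordered 2-category---that diagonals are toposal AM-simulations, that toposal AM-simulations are closed under composition, and that composition is monotone for $\sqsubseteq$---reusing as much as possible of the axiom-of-choice-free machinery developed for toposal bisimulations in Proposition~\ref{prop:tobi-i-cat}. The guiding observation is the one underlying the whole section: a toposal AM-simulation from $\alpha$ to $\beta$ is exactly a $\pow{F}$-AM-simulation from $\eta_X\cdot\alpha$ to $\eta_Y\cdot\beta$ with respect to the order $\leqp$, which is a good order structure by Proposition~\ref{prop:ord-pow}. Monotonicity for $\sqsubseteq$ and the fact that diagonals qualify are immediate: the local order is inherited from $\Rel{\CC}$, and every toposal AM-bisimulation is a toposal AM-simulation since the bisimulation square holds on the nose, hence \emph{a fortiori} up to $\leqp\times\geqp$; in particular the diagonals are.

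The real work is closure under composition, and here the strategy is to transport the construction from the proof of Proposition~\ref{prop:tobi-i-cat} and add the order bookkeeping. Given witnesses $\map{W_i}{R_i}{\pow{F}R_i}$, I would first normalise both using Proposition~\ref{prop:ord-sim} applied to the good order structure $\leqp$, so that each left leg is a genuine equality while each right leg is only a $\leqp$-inequality toward the corresponding coalgebra. As in Proposition~\ref{prop:tobi-i-cat}, since $F$ and $\pow{\!}$ preserve weak pullbacks, I would form the weak pullback of $\pow{F}(\pi_2\cdot r_1)$ and $\pow{F}(\pi_1\cdot r_2)$ over $\pow{F}Y$ and produce a mediating $\map{\phi}{R_1\star R_2}{\pow{F}(R_1\star R_2)}$ from the two legs; the witness for the composite is then assembled exactly as before, namely $W = \pow{F}(e_{\rcomp{r_1}{r_2}})\cdot\mu_{F(R_1\star R_2)}\cdot\pow{\phi}\cdot(e_{\rcomp{r_1}{r_2}})^{\dagger}$, the split epi $(e_{\rcomp{r_1}{r_2}})^{\dagger}$ of Proposition~\ref{prop:pseudo-inverse} playing the role the section did in the choice-based argument.

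The step I expect to be the main obstacle is that the cone over $\pow{F}Y$ no longer commutes on the nose: on the intermediate object $Y$ the $R_2$-leg contributes an equality (its left leg) while the $R_1$-leg contributes only a $\leqp$-inequality (its right leg), so one obtains $\pow{F}(\pi_2\cdot r_1)\cdot W_1\cdot\mu_1 \leqp \pow{F}(\pi_1\cdot r_2)\cdot W_2\cdot\mu_2$ rather than an identity. This is precisely where the second axiom of the good order structure $\leqp$ is needed: applying it with $g = \pi_1\cdot r_2$ and $k = W_2\cdot\mu_2$ lifts the inequality to some $k'\leqp W_2\cdot\mu_2$ for which $\pow{F}(\pi_2\cdot r_1)\cdot W_1\cdot\mu_1 = \pow{F}(\pi_1\cdot r_2)\cdot k'$, yielding a genuinely commuting cone $(W_1\cdot\mu_1, k')$ from which $\phi$ is obtained. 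With this correction in place, the $X$-side of the composite inherits the equality from $W_1$ exactly as in the bisimulation computation, while the $Z$-side inherits a $\leqp$-inequality toward $\gamma$ by chaining $k'\leqp W_2\cdot\mu_2$ with the right leg of $W_2$ through monotonicity of $\leqp$-composition (Proposition~\ref{prop:ord-pow}). One final remark: rather than redoing all of this, one could alternatively deduce the result from Proposition~\ref{prop:losi-tosi} together with Proposition~\ref{prop:am-2-cat}, but that route quietly reintroduces the axiom of choice, so the direct construction above is what actually removes it.
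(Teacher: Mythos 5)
The paper does not actually write out a proof of this proposition, so there is nothing to compare line by line; your proposal is, as far as I can tell, the intended argument. Treating a toposal AM-simulation as a $\pow{F}$-AM-simulation for the good order structure $\leqp$ of Proposition~\ref{prop:ord-pow}, normalising the left legs to genuine equalities via Proposition~\ref{prop:ord-sim}, and rerunning the composition construction of Proposition~\ref{prop:tobi-i-cat} with the pseudo-inverse $(e_{\rcomp{r_1}{r_2}})^\dagger$ in place of the choice-supplied section is exactly the right plan. Your diagnosis of the one genuinely new difficulty --- the cone over $\pow{F}Y$ commutes only up to $\leqp$, and the second good-order axiom must be invoked to replace $W_2\cdot\mu_2$ by some $k'\leqp W_2\cdot\mu_2$ for which the cone commutes strictly --- is correct, as is your warning that routing through Propositions~\ref{prop:losi-tosi} and~\ref{prop:am-2-cat} would quietly reintroduce the axiom of choice.

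There is, however, one step you pass over too quickly. On the $Z$-side the computation reduces to showing $\mu_{FZ}\cdot\pow{(\pow{F(\pi_2\cdot r_2)}\cdot k')}\cdot(e_{\rcomp{r_1}{r_2}})^\dagger \leqp \eta_{FZ}\cdot\gamma\cdot\pi_2\cdot(\rcomp{r_1}{r_2})$, whose right-hand side equals $\mu_{FZ}\cdot\pow{(\eta_{FZ}\cdot\gamma\cdot\pi_2\cdot r_2\cdot\mu_2)}\cdot(e_{\rcomp{r_1}{r_2}})^\dagger$. Chaining $k'\leqp W_2\cdot\mu_2$ with the right leg of $W_2$ is a legitimate use of the first good-order axiom, but what you then need is that the Kleisli-style operation $f\mapsto\mu_{FZ}\cdot\pow{f}\cdot(e_{\rcomp{r_1}{r_2}})^\dagger$ is itself monotone for $\leqp$. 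This is \emph{not} delivered by Proposition~\ref{prop:ord-pow}: the good-order axioms only guarantee monotonicity under pre- and post-composition with plain morphisms of $\CC$ (and $\pow{F}$-images of such), not under relational composition with the relation $(e_{\rcomp{r_1}{r_2}})^\dagger$, which is a right adjoint rather than a map. The needed fact is true --- unfolding $\leqp$ in terms of the representing relations, one checks that $f\leqp f'$ implies that the composite of the relation for $f$ with any fixed relation is $\leqp$-below the corresponding composite for $f'$, using stability of epis under pullback in a topos --- but it is an additional lemma in the same spirit as, and of comparable length to, the verification in Proposition~\ref{prop:ord-pow}, and it must be stated and proved rather than cited. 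With that lemma supplied, your argument is complete.
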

\begin{thm}
Assume given two coalgebras $\map{\alpha}{X}{F(X)}$ and $\map{\beta}{Y}{F(Y)}$, 
and two points $\map{p}{\ast}{X}$ and $\map{q}{\ast}{Y}$. 
	There is a toposal AM-simulation $\mono{r}{R}{X\times Y}$ from 
	$\alpha$ to $\beta$, and a point $\map{c}{\ast}{R}$ such that 
	$r\cdot c = \pair{p}{q}$ if and only if
	there is a span $X\,\xleftarrow{~f~}\,Z\,\xrightarrow{~g~}\,Y$, 
	a $\pow{F}$-coalgebra structure 
	$\gamma$ on $Z$ such that $f$ is a $\pow{F}$-coalgebra homomorphism 
	from $\gamma$ to $\eta_X\cdot\alpha$ and $g$ a lax 
	$\pow{F}$-coalgebra homomorphism from $\gamma$ to 
	$\eta_Y\cdot\beta$,
	and a point $\map{w}{\ast}{Z}$ such that $f\cdot w = p$ and 
	$g\cdot w = q$.
\end{thm}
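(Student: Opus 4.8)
The statement is the simulation analogue of Corollary~\ref{coro:tobi-spans}, and the plan is to prove the two implications separately, reusing the machinery built for toposal bisimulations together with the order-theoretic results of the previous subsections. For the first implication I would exploit that a mono $r$ representing the relation is already a tabulation, so that a witness of the simulation becomes a $\pow{F}$-coalgebra structure on $R$; for the converse I would realise the desired relation as a composite of the (reversed) graph of $f$ with the graph of $g$, exactly as in the proof of Corollary~\ref{coro:lobi-spans}, but carried out without the axiom of choice.

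For the implication $1\Rightarrow 2$, assume a toposal AM-simulation $r$ and a point $c$ with $r\cdot c=\pair{p}{q}$, and let $\map{W}{R}{\pow{F}R}$ be a witness. Since $\leqp$ is a good order structure (Proposition~\ref{prop:ord-pow}), Proposition~\ref{prop:ord-sim} lets me assume that the $X$-component of the defining diagram is an equality while the $Y$-component stays lax. Setting $Z=R$, $f=\pi_1\cdot r$, $g=\pi_2\cdot r$ and $\gamma=W$, these two conditions say precisely that $f$ is a strict $\pow{F}$-coalgebra morphism from $\gamma$ to $\eta_X\cdot\alpha$ and that $g$ is a lax $\pow{F}$-coalgebra morphism from $\gamma$ to $\eta_Y\cdot\beta$; taking $w=c$ yields $f\cdot w=p$ and $g\cdot w=q$.

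For the implication $2\Rightarrow 1$, start from the span together with $\gamma$. Because $f$ is a strict $\pow{F}$-coalgebra morphism, Proposition~\ref{prop:bisim-maps} applied to $\pow{F}$ (whose proof uses no choice) shows that $\pair{\id}{f}$ represents a toposal bisimulation from $\gamma$ to $\eta_X\cdot\alpha$; its inverse $\pair{\id}{f}^\dagger$ is again a toposal bisimulation, hence a toposal simulation, from $\eta_X\cdot\alpha$ to $\gamma$. Dually, the easy direction of the characterisation of simulation maps as lax coalgebra morphisms shows that the map $\pair{\id}{g}$ is a toposal simulation from $\gamma$ to $\eta_Y\cdot\beta$. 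I would then form the composite $\rcomp{\pair{\id}{f}^\dagger}{\pair{\id}{g}}$: this is the relation represented by the mono part $r$ of the image factorisation $\pair{f}{g}=r\cdot e$, and the point $c=e\cdot w$ satisfies $r\cdot c=\pair{f}{g}\cdot w=\pair{p}{q}$.

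The main obstacle is to justify that this last composite really is a toposal simulation, since the theorem assumes neither the regular axiom of choice nor that $F$ covers pullbacks. Here I would adapt the composition argument used for toposal bisimulations (as in Proposition~\ref{prop:tobi-i-cat} and the toposal strengthening of Proposition~\ref{prop:am-2-cat}) to this special case. Two features make it go through: first, we are composing a right adjoint with a map, so the intermediate object $R_1\star R_2$ of the relation composition reduces to a pullback of isomorphisms through the common apex $Z$, and is therefore preserved by $\pow{F}$ for free, removing any need for weak-pullback preservation; second, the section of the epi $\epi{e}{Z}{R}$ required to build the witness is supplied by the pseudo-inverse $e^\dagger$, using that $\pow{\!}$ sends epis to split epis (the corollary to Proposition~\ref{prop:pseudo-inverse}) in place of the axiom of choice. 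The only remaining bookkeeping is to check that the strict inequality survives the composition on the $X$-component and the lax one on the $Y$-component, which follows from monotonicity of composition for $\leqp$.
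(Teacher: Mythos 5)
Your overall route is the one the paper intends: the theorem is stated without an explicit proof, as the simulation analogue of Corollaries~\ref{coro:lobi-spans} and~\ref{coro:tobi-spans}, and you assemble exactly the right ingredients --- for $(1)\Rightarrow(2)$ the tabulation $\pi_1\cdot r,\pi_2\cdot r$ together with Propositions~\ref{prop:ord-pow} and~\ref{prop:ord-sim} to make the $X$-side strict, and for $(2)\Rightarrow(1)$ taking $r$ to be the image of $\pair{f}{g}$ and building the witness $W=\pow{Fe}\cdot\mu_{FZ}\cdot\pow{\gamma}\cdot e^{\dagger}$, with the pseudo-inverse $e^{\dagger}$ and the identity $\pow{e}\cdot e^{\dagger}=\eta$ of Proposition~\ref{prop:pseudo-inverse} replacing the section that the axiom of choice would supply, and the triviality of the span $Z\star Z\cong Z$ replacing weak-pullback preservation. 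The strict $X$-component then computes to an equality exactly as in the proof of Proposition~\ref{prop:tobi-i-cat}.

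The one step you should not wave off as ``bookkeeping'' is the lax $Y$-component of the composite. What remains there is: from $\pow{Fg}\cdot\gamma\leqp\eta_{FY}\cdot\beta\cdot g$ deduce $\mu_{FY}\cdot\pow{(\pow{Fg}\cdot\gamma)}\cdot e^{\dagger}\leqp\mu_{FY}\cdot\pow{(\eta_{FY}\cdot\beta\cdot g)}\cdot e^{\dagger}$. This is monotonicity of \emph{Kleisli} post-composition with $e^{\dagger}$, which is not an instance of the two good-order-structure axioms verified in Proposition~\ref{prop:ord-pow} (those only give monotonicity of $\pow{Fh}\cdot(-)$ and of $(-)\cdot k$ for a plain morphism $k$). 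The statement is nevertheless true and has a short relational proof: by Proposition~\ref{prop:composition} both sides are $\xi$ of the composite of the relations $U_{h_i}$ (for $h_1=\pow{Fg}\cdot\gamma$, $h_2=\eta_{FY}\cdot\beta\cdot g$) with the graph $\pair{\id}{e}$, i.e.\ of the images of $(\id\times e)\cdot m_{h_i}$; given a span $(u,\epsilon)$ with $\epsilon$ epi witnessing $h_1\leqp h_2$, post-composing with the epi parts of these two image factorisations yields a witnessing span for the composites, since the first projections are unchanged (so the $\leq$ survives) and the second projections are both multiplied by $e$ (so the equality survives), and the relevant leg stays epi. Once this lemma is recorded, your argument is complete.
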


\section{Examples}
\label{sec:examples}

In this section, let us develop some examples in different regular categories.

\subsection{Vietoris Bisimulations}

In~\cite{bezhanishvili10}, the authors study bisimulations for the Vietoris functor, which maps 
a topological space to its set of closed subspaces equipped with a suitable topology, in the 
category $\Stone$ of Stone spaces and continuous functions. 
More specifically, they show that so-called descriptive models coincide with coalgebras of the form
$X\rightarrow\viet{X}\times A$ where $\mathcal{V}$ is the Vietoris functor and $A$ is some fixed Stone space 
(such as a finite set of sets of propositions equipped with the discrete topology).
They are interested in 
describing relation liftings (similar to those defining HJ-bisimulations) that 
coincide with behavioural equivalences. They actually proved that in this 
case AM-bisimilarity does not coincide with
behavioural equivalence. The main reason for discrepancy is that the Vietoris 
functor does not preserves weak pullbacks.
In \cite{staton11}, Staton proved that the Vietoris functor is a so-called 
$\mathcal{S}$-powerset functor, and that it, in particular, covers pullbacks.
Combining this with the (well-known) fact that the category of Stone spaces 
is regular and has pushouts, Theorem~\ref{theo:reg-equivalences} 
holds in this case, and all three notions-regular AM-bisimulations, HJ-bisimulations, 
and behavioural equivalences-coincide.

\begin{center}
\begin{tikzpicture}[scale=1.05]

	\draw [dashed] (5,-1.5) -- (5,7.5);

	\node (0) at (-1.5,0) {$0$};
	\node (1) at (-1.5,1) {$1$};
	\node (2) at (-1.5,2) {$2$};
	\node [rotate=90] (i) at (-1.5,3) {$\cdots$};
	\node [rotate=90] (i11) at (0,3) {$\cdots$};
	\node (n) at (-1.5,4) {$2n$};
	\node (p) at (-1.5,5) {$2n+1$};
	\node [rotate=90] (j) at (-1.5,6) {$\cdots$};
	\node [rotate=90] (j11) at (0,6) {$\cdots$};
	
	\node (inf) at (-1.5,7) {$\infty$};
	
	\node [rotate=90] (i12) at (1.5,3) {$\cdots$};
	\node [rotate=90] (j12) at (1.5,6) {$\cdots$};
	\node [rotate=90] (i13) at (3,3) {$\cdots$};
	\node [rotate=90] (j13) at (3,6) {$\cdots$};
	
	\node (11) at (0,-1) {$1$};
	\node (12) at (1.5,-1) {$2$};
	\node (13) at (3,-1) {$3$};
	
	\node (0p) at (11.5,0) {$0$};
	\node (1p) at (11.5,1) {$1$};
	\node (2p) at (11.5,2) {$2$};
	\node [rotate=90] (ip) at (11.5,3) {$\cdots$};
	\node [rotate=90] (i21) at (7,3) {$\cdots$};
	\node (np) at (11.5,4) {$2n$};
	\node (pp) at (11.5,5) {$2n+1$};
	\node [rotate=90] (jp) at (11.5,6) {$\cdots$};
	\node [rotate=90] (j21) at (7,6) {$\cdots$};
	\node (infp) at (11.5,7) {$\infty$};
	\node [rotate=90] (i22) at (8.5,3) {$\cdots$};
	\node [rotate=90] (j22) at (8.5,6) {$\cdots$};
	\node [rotate=90] (i23) at (10,3) {$\cdots$};
	\node [rotate=90] (j23) at (10,6) {$\cdots$};
	\node (21) at (7,-1) {$1$};
	\node (22) at (8.5,-1) {$2$};
	\node (23) at (10,-1) {$3$};

	\node (011) at (0,0) {\tiny{$\emptyset$}};
	\node (111) at (0,1) {\tiny{$\emptyset$}};
	\node (211) at (0,2) {\tiny{$\emptyset$}};
	\node (n11) at (0,4) {\tiny{$\emptyset$}};
	\node (p11) at (0,5) {\tiny{$\emptyset$}};
	\node (inf11) at (0,7) {\tiny{$\emptyset$}};
	\node (inf12) at (1.5,7) {\tiny{$\emptyset$}};
	\node (inf13) at (3,7) {\tiny{$\emptyset$}};
	\node (021) at (7,0) {\tiny{$\emptyset$}};
	\node (121) at (7,1) {\tiny{$\emptyset$}};
	\node (221) at (7,2) {\tiny{$\emptyset$}};
	\node (n21) at (7,4) {\tiny{$\emptyset$}};
	\node (p21) at (7,5) {\tiny{$\emptyset$}};
	\node (inf21) at (7,7) {\tiny{$\emptyset$}};
	\node (inf22) at (8.5,7) {\tiny{$\emptyset$}};
	\node (inf23) at (10,7) {\tiny{$\emptyset$}};
	
	\node (112) at (1.5,1) {\tiny{$\{1+\}$}};
	\node (p12) at (1.5,5) {\tiny{$\{(2n+1)+\}$}};
	\node (122) at (8.5,1) {\tiny{$\{1+\}$}};
	\node (p22) at (8.5,5) {\tiny{$\{(2n+1)+\}$}};
	\node (113) at (3,1) {\tiny{$\{1-\}$}};
	\node (p13) at (3,5) {\tiny{$\{(2n+1)-\}$}};
	\node (123) at (10,1) {\tiny{$\{1-\}$}};
	\node (p23) at (10,5) {\tiny{$\{(2n+1)-\}$}};
	\path[->]
		(111) edge (112)
		(p11) edge (p12)
		(121) edge (122)
		(p21) edge (p22);
	\path[->,bend left = 15] 
		(111) edge (113)
		(p11) edge (p13)
		(121) edge (123)
		(p21) edge (p23);
	\path[->]
		(inf11) edge (inf12)
		(inf21) edge (inf22);
	\path[->,bend left = 15] 
		(inf11) edge (inf13)
		(inf21) edge (inf23);
	
	\node(012) at (1.5,0) {\tiny{$\{0+\}$}};
	\node(212) at (1.5,2) {\tiny{$\{2+\}$}};
	\node(n12) at (1.5,4) {\tiny{$\{(2n)+\}$}};
	\node(023) at (10,0) {\tiny{$\{0+\}$}};
	\node(223) at (10,2) {\tiny{$\{2+\}$}};
	\node(n23) at (10,4) {\tiny{$\{(2n)+\}$}};
	\node(013) at (3,0) {\tiny{$\{0-\}$}};
	\node(213) at (3,2) {\tiny{$\{2-\}$}};
	\node(n13) at (3,4) {\tiny{$\{(2n)-\}$}};
	\node(022) at (8.5,0) {\tiny{$\{0-\}$}};
	\node(222) at (8.5,2) {\tiny{$\{2-\}$}};
	\node(n22) at (8.5,4) {\tiny{$\{(2n)-\}$}};
	\path[->]
		(011) edge (012)
		(211) edge (212)
		(n11) edge (n12)
		(021) edge (022)
		(221) edge (222)
		(n21) edge (n22);
	\path[->,bend left = 15] 
		(011) edge (013)
		(211) edge (213)
		(n11) edge (n13)
		(021) edge (023)
		(221) edge (223)
		(n21) edge (n23);

\end{tikzpicture}
\end{center}

We now develop the counter-examples described in \cite{bezhanishvili10}.
Consider the set $\natb = \nat\cup\{\infty\}$, which is obtained as the Alexandroff-compactification 
of $\nat$ equipped with the discrete topology. Specifically, the open sets of $\natb$ are 
$
	\{U \subseteq \nat\}\cup\{U\cup\{\infty\} \mid U \subseteq \nat \wedge \exists n \in U. \forall m \geq n. m \in u\}.
$
Denote $\natb\oplus\natb\oplus\natb$, the coproduct of three copies of $\natb$, by $3\natb$.
Let us also consider $A = \pow{(\nat\times\{+,-\})}$ 
(with the product topology on $2^{\nat\times\{+,-\}}$, which is 
compact by Tychonoff's theorem).
Define the continuous function $\map{\tau}{3\natb}{\viet{3\natb}}$ as follows:
$
	\tau(i_1) = \{i_2,i_3\} ~~~ \text{and} ~~~ \tau(i_2) = \tau(i_3) = \emptyset,
$
where $i_j$ denotes the $j$-th copy of $i \in \natb$.
Define two continuous functions $\map{\lambda, \lambda'}{3\natb}{A}$
$\lambda(i_1) = \lambda'(i_1) = \emptyset$ for all $i \in \natb$; 
$\lambda(\infty_j) = \lambda'(\infty_j) = \emptyset$ for $j \in \{2,3\}$;
$\lambda(i_2) = \lambda'(i_2) = \{i+\}$, $\lambda(i_3) = \lambda'(i_3) = \{i-\}$, for $i$ odd;
$\lambda(i_2) = \lambda'(i_3) = \{i+\}$,  $\lambda(i_2) = \lambda'(i_3) = \{i-\}$ for $i$ even. 
Altogether, this defines two coalgebras $\alpha = \langle\tau,\lambda\rangle$ and 
$\beta = \langle\tau,\lambda'\rangle$.

In~\cite{bezhanishvili10}, they proved that the following relation (for Stone spaces, relations coincide 
with closed subspaces of a product):
\begin{align*}
	R = \{(i_1,i_1) \mid i \in \natb\}&\cup\{(i_2,i_2), (i_3,i_3) \mid i \in\nat \text{ odd}\}
		\cup\{(i_2,i_3), (i_3,i_2) \mid i \in\nat \text{ even}\}\\
		&\cup\{(\infty_j,\infty_k) \mid j, k\in\{2,3\}\}
\end{align*}
is a Vietoris bisimulation but not an AM-bisimulation.
We can reformulate this as:
\begin{thm}
$R$ is a regular AM-bisimulation but not an AM-bisimulation.
\end{thm}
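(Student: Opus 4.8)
The plan is to handle the two assertions separately, reusing the theory already in place for the Vietoris functor $F = \viet{\_}\times A$ on $\Stone$. For the positive direction I would avoid building a witness $\mono{w}{W}{FR\times R}$ by hand. Instead I would observe that the relation-lifting structure exhibiting $R$ as a Vietoris bisimulation in \cite{bezhanishvili10} is exactly the data of an HJ-bisimulation in our sense: the closed-set (Egli--Milner) lifting of $\mathcal{V}$ agrees with the categorical lifting $\overline{F}$ coming from the (regular epi,mono)-factorisation of $\splitf\cdot Fr$. Since $\Stone$ is regular, the first bullet of Theorem~\ref{theo:reg-equivalences} applies verbatim, so HJ-bisimulations and regular AM-bisimulations coincide, and $R$ is a regular AM-bisimulation. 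Equivalently, one may recall that $\mathcal{V}$ covers pullbacks and that $R$ is a behavioural equivalence, and invoke the third bullet of the same theorem.

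For the negative direction I would argue by contradiction, and this is where the failure of weak-pullback preservation by $\mathcal{V}$ surfaces. Suppose $\map{W}{R}{\viet{R}\times A}$ is a witness and write $W(p) = (C_p,a_p)$ with $C_p$ a closed subset of $R$. Commutation of the AM-bisimulation square forces, for each finite $i$, that the two projections of $C_{(i_1,i_1)}$ onto the state spaces of $\alpha$ and of $\beta$ both equal $\tau(i_1) = \{i_2,i_3\}$. Reading off which pairs of $R$ lie over level $i$, this determines $C_{(i_1,i_1)}$ uniquely: it must be $\{(i_2,i_2),(i_3,i_3)\}$ when $i$ is odd and $\{(i_2,i_3),(i_3,i_2)\}$ when $i$ is even, since no other pairs of $R$ have both coordinates among the copies of $i$.

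The contradiction then comes from the point at infinity. Continuity of $W$ at $(\infty_1,\infty_1)$ requires $C_{(i_1,i_1)}$ to converge in the Vietoris topology of $R$ as $i_1\to\infty_1$. However, along odd $i$ the sets converge to $\{(\infty_2,\infty_2),(\infty_3,\infty_3)\}$, whereas along even $i$ they converge to $\{(\infty_2,\infty_3),(\infty_3,\infty_2)\}$; these are distinct closed subsets of $R$, so the full sequence has no Vietoris limit and $W$ cannot be continuous. Hence no witness exists, which recovers the computation of \cite{bezhanishvili10}, and $R$ is not an AM-bisimulation.

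I expect the main obstacle to be the positive direction, and specifically the bookkeeping identifying the Vietoris relation lifting of \cite{bezhanishvili10} with the factorisation lifting $\overline{F}$ used here; once this is settled the conclusion is immediate, the genuinely hard content being packaged inside Theorem~\ref{theo:reg-equivalences} and inside the fact (due to Staton) that $\mathcal{V}$ covers pullbacks. The negative direction is comparatively routine, the only care being to make the Vietoris convergence argument precise on the Stone space $R$.
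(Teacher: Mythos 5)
Your proof is correct, but it takes a genuinely different route from the paper in both directions. For the positive half, the paper does not invoke Theorem~\ref{theo:reg-equivalences}: it writes down the witness explicitly, namely the closed relation $W \subseteq R\times(\viet{R}\times A)$ that pairs each finite $(i_1,i_1)$ with its (forced) set of successor pairs, pairs $(\infty_1,\infty_1)$ with \emph{both} candidate sets $\{(\infty_2,\infty_2),(\infty_3,\infty_3)\}$ and $\{(\infty_2,\infty_3),(\infty_3,\infty_2)\}$, and observes that the second projection is surjective and the defining square commutes. Your reduction to the first (or third) bullet of Theorem~\ref{theo:reg-equivalences} is legitimate --- the surrounding text already records that $\Stone$ is regular with pushouts and that $\mathcal{V}$ covers pullbacks, so regular AM-bisimulations, HJ-bisimulations and behavioural equivalences all coincide here --- but it does delegate the identification of the Egli--Milner lifting of \cite{bezhanishvili10} with the factorisation lifting $\overline{F}$ (or, for the third bullet, the fact that $R$ is presented as a pullback of coalgebra homomorphisms), which you rightly flag as the real bookkeeping. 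The explicit witness has the expository advantage of exhibiting exactly where choice fails: $(\infty_1,\infty_1)$ carries two witnesses, and deleting either one destroys closedness of $W$. For the negative half you actually do more than the paper, which simply cites \cite{bezhanishvili10}: your observation that $C_{(i_1,i_1)}$ is uniquely determined at every finite level (only two pairs of $R$ sit over level $i$, and both are needed for the projections to be onto $\{i_2,i_3\}$), together with the fact that the odd and even subsequences have distinct Vietoris limits, is a complete self-contained proof that no continuous witness exists; sequential continuity suffices since all the spaces involved are metrisable Stone spaces.
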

\noindent 
For the second part of this statement, this means that there is no continuous function
$\map{W}{R}{\viet{R}\times A}$ satisfying the requirement of an AM-bisimulation.
However, there is a relation $W \subseteq (\viet{R}\times A)\times R$ that satisfies the requirement 
of a regular AM-bisimulation as:
\begin{align*}
	W  =&~ \{((\{(i_2,i_2),(i_3,i_3)\}, \emptyset), (i_1,i_1)) \mid i \in\nat \text{ odd}\}\\
		&\cup\{((\{(i_2,i_3),(i_3,i_2)\}, \emptyset), (i_1,i_1)) \mid i \in\nat \text{ even}\}\\
		&\cup\{((\{(\infty_2,\infty_2),(\infty_3,\infty_3)\}, \emptyset), (\infty_1,\infty_1)),
			((\{(\infty_2,\infty_3),(\infty_3,\infty_2)\}, \emptyset), (\infty_1,\infty_1))\}\\
		&\cup\{((\emptyset,\lambda(i_j)), (i_j,i_k)) \mid i \in\natb \wedge (i_j,i_k) \in R\}
\end{align*}
The interesting part is that $(\infty_1,\infty_1)$ is related to two elements, and that  if one of them is removed, 
then $W$ is not closed anymore, and so not a relation in $\Stone$. This explains why this relation cannot 
be restricted to the graph of a continuous function.

\subsection{Toposes for Name-Passing}

In \cite{staton11}, the author studies models of name-passing and their bisimulations.
Three toposes and functors are presented to model different parts of the theory.
The first topos is the category of name substitution, which is the category of presheaves 
over non-empty finite subsets of a fixed countable set, together with all functions between them. 
It comes with a functor combining non-determinism and name-binding.
This functor satisfies strong properties: in particular, AM-bisimulations coincide with 
HJ-bisimulations, and the largest AM-bisimulation coincide with the largest behavioural equivalence.
This framework is already nice as AM-bisimulations describe precisely open bisimulations~\cite{sangiorgi96}.

The second topos is a refinement of the first one, as the category of functors over all finite subsets of 
the given countable set, together with injections. The proposed functor in this case is not as nice:
it does not preserve weak pullbacks, and AM-bisimulations no longer coincide with HJ-bisimulations
anymore. However, it is sufficiently well-behaved in our theory: it covers pullbacks, and the category is a topos, 
thus regular and with pushouts. Consequently, HJ-bisimulations coincide with regular AM-bisimulations, and 
their existence coincides with the existence of a behavioural equivalence.

For this topos, it is noted in \cite{staton11} that if a relation is a HJ-bisimulation (so a regular/toposal 
AM-bisimulation), 
then its $\neg\neg$-completion is an AM-bisimulation. This means, in particular, that this framework for 
name-passing behaves much more nicely when restricted to $\neg\neg$-sheaves. One main reason for this is that 
the sheaf topos for the $\neg\neg$-topology satisfies the axiom of choice when the base topos 
is a presheaf topos over a poset~\cite{maclane92}, which is the case here. 

\subsection{Weighted Linear Systems}

In \cite{bonchi12}, the authors study linear weighted systems, that is, coalgebras for the 
endofunctor $X \mapsto K\times X^A$ on $\Vect{K}$, in the category of 
$K$-vector spaces, with $K$ a field, and $A$ a set. 
The following discussion can also be carried out in the category of modules over a ring.
The category $\Vect{K}$ is 
abelian, and thus regular and with pushouts. The endofunctor in question actually 
preserves pullbacks, so the three notions of bisimilarity coincide by Theorem~\ref{theo:reg-equivalences}. In this 
paper, the focus is on linear bisimulations, which coincide with behavioural 
equivalence, and so to the other two notions of bisimilarities. 

In perspective, 
usual weighted systems are described in the category $\Set$, with the functor
$X \mapsto A \Rightarrow K^{(X)}$ where $K^{(X)}$ is the set of functions from 
$X$ to $K$ that take finitely many non-zero values. In this context, this functor does 
not even cover pullbacks in general, and they actually prove that AM-bisimilarity 
(and so regular 
AM-bisimilarity since $\Set$ has the regular axiom of choice) does not 
coincide with behavioural equivalence.

\section{Conclusion}

This paper introduces some foundations for the theory of bisimulations and 
simulations in a general regular category, mitigating some known issues with 
Aczel-Mendler bisimulations. Relations and power-objects are the key ingredients 
in this mitigation: while the axiom of choice allows for picking some 
witnesses of bisimilarity, relations and power-objects enable us to collect them  
without the need to choose. This paves the way for studying such 
bisimulations in more exotic regular categories and toposes.

One direction for future work is to investigate regular AM-bisimulations for probabilistic systems, in comparison to 
what is done in~\cite{desharnais02,danos06} for behavioural equivalences. 
The main challenge lies in identifying a suitable \emph{regular} category
of ``probabilistic space'' and a ``probabilistic distribution functor'' that \emph{covers pullbacks}.
For the first property, the work on Quasi-Borel spaces~\cite{heunen17}, which yields a quasi-topos, is of interest.
For the second, one possible avenue is to explore categories of $\sigma$-frames (see for example~\cite{simpson12}), 
in which pullbacks do not coincide with those 
in the category of measurable spaces--a solution under investigation.

Another avenue would be to explore other general properties of 
bisimulations, for instance those related to the largest bisimulation or 
to up-to techniques \cite{sangiorgi92}. These approaches require considering (finite or 
infinite) unions of bisimulations, and hence of relations, 
which necessitates working within coherent categories.

\bibliographystyle{alphaurl}
\bibliography{special}

\end{document}